\theoremstyle{plain}
\newtheorem{theorem}{Theorem}[section]
\newtheorem{lemma}[theorem]{Lemma}
\newtheorem{proposition}[theorem]{Proposition}
\newtheorem{corollary}[theorem]{Corollary}
\newtheorem{rmk}[theorem]{Remark}
\newtheorem{remark}[theorem]  {Remark} 
\theoremstyle{remark}
\mathchardef\emptyset="001F
\numberwithin{equation}{section}
\newcommand{\op}[1]{{\rm{#1}}}
\newcommand{\R}{\mathbb R}
\newcommand{\N}{\mathbb{N}}
\newcommand{\calP}{{\mathcal{P}}}
\newcounter{margcount} 
\title{Next-order asymptotic expansion for $N$-marginal optimal transport with Coulomb and Riesz costs}
\author{
Codina Cotar
\footnote{
University College London, Statistical Science Department,
London,
United Kingdom,
\texttt{c.cotar@ucl.ac.uk}}
\, and
Mircea Petrache
\footnote{Pontificia Universidad Catolica de Chile, Santiago, Chile,
\texttt{decostruttivismo@gmail.com}}}
\begin{document}

\maketitle

\textit{Abstract:} Motivated by a problem arising from Density Functional Theory, we provide the sharp next-order asymptotics for a class of multimarginal optimal transport problems with cost given by singular, long-range pairwise interaction potentials. More precisely, we consider an $N$-marginal optimal transport problem with $N$ equal marginals supported on $\mathbb R^d$ and with cost of the form $\sum_{i\neq j}|x_i-x_j|^{-s}$. In this setting we determine the second-order term in the $N\to\infty$ asymptotic expansion of the minimum energy, for the long-range interactions corresponding to all exponents $0<s<d$. We also prove a small oscillations property for this second-order energy term. Our results can be extended to a larger class of models than power-law-type radial costs, such as non-rotationally-invariant costs.
The key ingredient and main novelty in our proofs is a robust extension and simplification of the Fefferman-Gregg decomposition \cite{Feff85},  \cite{Gregg89}, extended here to our class of kernels, and which provides a unified method valid across our full range of exponents. Our first result generalizes a recent work of Lewin, Lieb and Seiringer \cite{LewLiebSeir17}, who dealt with the second-order term for the Coulomb case $s=1,d=3$.
\vspace{0.2in}

\noindent Keywords:  Density functional theory (DFT), Hohenberg-Kohn functional,  N-body optimal transport, Coulomb and Riesz costs,  exchange-correlation functional, finite exchangeability, N-representability, positive definite kernels, Fefferman-Gregg decomposition, Lieb-Oxford bound, Uniform Electron Gas, Jellium
\section{Introduction}\label{secintro}
\subsection{The next-order asymptotics}\label{ssecnextorderasymptotics} 
Let $\mathsf{c}(x,y)=g(x-y):\mathbb R^d\times \mathbb R^d\to\mathbb R\cup\{+\infty\}$ be a ``pairwise interaction'' cost function, and consider the space $\mathcal P_{sym}^N(\mathbb R^d)$ of probability measures on $(\mathbb R^d)^N$ which are invariant under the action induced by the permutation group $S_N$ acting on the $N$ coordinates $(x_1,\ldots, x_N)\in(\mathbb R^d)^N$. We consider the following $N$-marginals optimal transport (OT) minimization problem, for given $\mu\in\mathcal P(\mathbb R^d)$ and $N\ge 2$:
\begin{equation}\label{mainmin}
F_{N, \mathsf{c}}^\mathrm{OT}(\mu):=\inf\left\{\int_{(\mathbb{R}^d)^N}\sum_{1\le i\neq j\le N}\mathsf{c}(x_i,x_j)d \gamma_N(x_1,\ldots,x_N):\ \gamma_N\in\mathcal P^N_{sym}(\mathbb{R}^d), \gamma_N\mapsto \mu\right\}\ .
\end{equation}
The notation $\gamma_N\mapsto\mu$ means that $\gamma_N$ has one-body density $\mu$ (physics terminology) or equivalently equal
$\mathbb{R}^d$-marginals $\mu$ (probability terminology),
\begin{equation} \label{marginals}
    \gamma_N(\underbrace{\mathbb{R}^{d}\times\ldots\times\mathbb{R}^d}_\textrm{i-1 times}\times A_i \times \underbrace{\mathbb{R}^{d}\times\ldots\times\mathbb{R}^d}_\textrm{N-i times}) = \int_{A_i}{\mu(x)}\,dx \mbox{ for all Borel}\ A_i\subseteq \mathbb{R}^d
    \mbox{ and all }i=1,\ldots,N\ .
\end{equation}
Even though our techniques work for more general costs, we will consider especially the case of
\begin{equation}\label{minineq}
\mathsf{c}(x,y)=|x-y|^{-s}\quad\mbox{if}\quad 0<s<d,~~x,y\in \mathbb{R}^d\ ,
\end{equation}
and we will prove the following asymptotic expansion for $F_{N,\mathsf{c}}^\mathrm{OT}(\mu)$. For all $0<s<d$, if $\mu$ has density $\rho\in L^{1+s/d}(\mathbb{R}^d)$, then we have (see Theorem \ref{upboundcont} below)
\begin{equation}\label{expnextord}
F_{N,\mathsf{c}}^\mathrm{OT}(\mu) = N^2 \int_{\mathbb R^d\times \mathbb R^d}\frac{\rho(x)\rho(y)}{|x-y|^s} dx\ dy - C(s,d) N^{1+s/d}\int_{\mathbb R^d}\rho^{1+s/d}(x) dx + o(N^{1+s/d})\quad\quad\text{ as }\quad N\to\infty\ .
\end{equation}
Moreover we will prove that the \textit{strictly positive} constant $C(s,d)$ is independent of the choice of the marginal $d\mu(x)=\rho(x)dx$ and therefore can be interpreted as arising in an independent model problem. Furthermore, we will derive a small oscillations principle (with respect to $N$) for 
$$N^{-1-s/d}\bigg(F_{N,\mathsf{c}}^\mathrm{OT}(\mu) - N^2 \int_{\mathbb R^d\times \mathbb R^d}\frac{\rho(x)\rho(y)}{|x-y|^s} dx\ dy\bigg),
$$
which could also be interpreted as a rough third order asymptotic bound for $F_{N,\mathsf{c}}^\mathrm{OT}(\mu)$ with $\mathsf{c}$ as in \eqref{minineq}.

Our methods are extendable to more general costs, as non-rotationally-invariant ones. See Remark \ref{morecosts} below.
\par For $d=3,s=1$, the $F_{N,\mathsf{c}}^\mathrm{OT}(\mu)$ was introduced in the physics literature in the context of Density Functional Theory (DFT) by Seidl, Perdew, Levy, Gori-Giorgi, and Savin \cite{Seidl99, SPL99, SGS07}, without them being aware of its meaning in optimal transport. Namely, for $s=1,d=3$, \eqref{mainmin}  is a natural semiclassical limit to the famous \emph{Hohenberg-Kohn (HK)} functional from quantum mechanics, originally introduced by Hohenberg-Kohn in \cite{HK64}, and rigorously proved by Levy and Lieb in \cite{Le79}, \cite{Li83}. The connection to optimal transport was mathematically established later by \cite{CoFriKlu11} and \cite{ButdePasGG12} for $N=2$,  later further extended to $N=3$ in \cite{BinddePasc17}, and recently independently proved for all $N\ge 2$ by \cite{CoFriKlu17} and \cite{Lewin17}.
\par In the process of establishing our two main results, we were required to prove, as key new tools for them, a set of additional secondary results of independent interest and of possible use to other settings, leading to generalized versions of the Fefferman-Gregg decomposition and positive definiteness criteria, as described in Section \ref{ssecfefferman}. What we use is a decomposition of positive definite kernels following the strategy established for $s=1,d=3$ by Fefferman \cite{Feff85}, and extended by Gregg \cite{Gregg89}, in $d=3$, with methods which work there for $0<s<3$ (but which in general dimension work only for $0<s<2+[(d-1)/2]$, where $[\cdot]$ denotes the integer part). The decomposition is based on a separation of the kernel into contributions from different ranges, coupled with a good packing strategy for the domain. We extend the range of validity of the Fefferman-Gregg decomposition, simplify the proofs, and apply the strategy to our new problem.
\par We note here that, together with our main result \eqref{expnextord}, the extended Fefferman-Gregg decomposition also allows to establish the equality of the next-order term for the two minimization problems below:
\begin{enumerate}
\item The energy-minimization for Coulomb and Riesz gases
\begin{equation}\label{gasmin}
\mathcal E_{N, \mathsf{c}}(V):=\inf\left\{\sum_{1\le i\neq j\le N}\textsf{c}(x_i-x_j) + N\sum_{i=1}^NV(x_i)\text{ for }\ x_1,\ldots,x_N\in\mathbb R^d: x_1,\ldots,x_N\in\mathbb R^d\right\},
\end{equation}
for $\mathsf{c}$ as in (\ref{minineq}), and where $V:\mathbb R^d\to\mathbb R$ is a suitable external "confining" potential assumed to be bounded below, lower semicontinuous, such that $\{x: V(x)<\infty\}$ has non-zero $\textsf{c}$-capacity, and such that $V(x)\to\infty$ as $x\to\infty$. The asymptotic expansion of $\mathcal E_{N, \mathsf{c}}(V)$ can be written as
$$\mathcal E_{N, \mathsf{c}}(V)=N^2 \mathcal I_{\mathsf{c},V}(\mu_V)-C_\mathrm{Jel}(s,d)N^{1+s/d}\int\rho^{1+s/d}(x)dx+o(N^{1+\frac{s}{d}}) \quad\quad\text{ as }\quad N\to\infty,$$ 
where $\mu_V$ is the unique minimizer of 
$$\mathcal I_{\mathsf{c},V}(\mu):=\int_{\mathbb R^d}\int_{\mathbb R^d}\textsf{c}(x-y)d\mu(x)d\mu(y)+\int_{\mathbb R^d}V(x)d\mu(x),$$
and where the constant $C_\mathrm{Jel}(s,d)>0$ corresponds to the minimum energy of a unit density Jellium-type problem, as described for example in \cite{ps}.
\item The minimum of the optimal transport problem considered in \eqref{expnextord}. Here, if we write $C_\mathrm{UEG}(s,d):=C(s,d)$, the next-order term can then be written as 
$$C_\mathrm{UEG}(s,d)N^{1+s/d}\int\rho^{1+s/d}(x)dx,$$
where $C_\mathrm{UEG}(s,d)$ can be interpreted as the energy of a Uniform Electron Gas, as described in \cite{CotPet17}. 

As a remark about terminology, note that the model of classical electrons corresponds to the Coulomb-type interaction potentials with $s=1, d\le 3$, while other more general ``Uniform Gas'' problems appear for pairwise interactions mediated by general kernels $\mathsf{c}(x-y)$ in general dimensions.
\end{enumerate}
This equality of the above constants, $C_\mathrm{Jel}(s,d)=C_\mathrm{UEG}(s,d)$, is proved for $d-2<s<d$ in \cite{CotPet17}. 
\par The Fefferman-Gregg decomposition also allows to find the precise next-order term for the Jellium problem for $d-2 \le s<d$, giving an alternative proof for the main result in \cite{ps}, as we will prove in future work.

\subsection{Comments about the decomposition techniques used in this paper}

\subsubsection{Relevance of the decomposition in the proof} 
The importance of positive-definiteness in our problem is crucial, and was emphasized in \cite{Petr15}: the property of our kernels of being balanced-positive-definite is \emph{equivalent} to the convergence of the renormalized minimum energies $N^{-2}F_N^\mathrm{OT}[\rho]$ to the mean field $\int_{\mathbb{R}^d}\int_{\mathbb{R}^d} \mathsf{c}(x,y)d\mu(x)d\mu(y)$.
\par This ingredient was not present in the problem considered by \cite{Feff85} and \cite{Gregg89}, which was of different nature, and is thus specific to our optimal-transport problem. 
\par In the decomposition of our power-law kernels, we can only hope to have next-order bounds for the $E_N^\mathrm{xc}(\rho)$-energies of the decomposed kernels when the first-order contribution is cancelled by subtracting the mean field. By the above result \cite{Petr15} we thus have the binding constraint that \emph{all the kernels in our decomposition must be positive definite}. This explains why we put much emphasis in the positive-definite kernel decomposition below. 

\subsubsection{Link to previous work on the Fefferman-Gregg technique}

As described in the abstract of \cite{Gregg89}, that paper treats inverse-power kernels in dimension $d=3$, with $s$ in a range around $1$. This is based on a decomposition of the kernels into positive-definite finite-range parts, together with a very tame error, as done for the case $s=1, d=3$ by \cite{Feff85}, based on the ``Swiss cheese'' packings present in \cite{Lieblebowitz} and \cite{Hughes89}. The key observation in \cite{Gregg89} is that the kernels with $s\neq d-2$ do not satisfy a version of Newton's lemma. This forbids the spherical averaging methods from \cite{Lieblebowitz} to apply to $s\neq d-2$.

\par The construction in \cite{Gregg89} is formulated in $d=3$ for $0<s<2$, due to other constraints coming from the kinetic part of the energy considered there, but the kernel decomposition methods hold for $0<s<3$, and in general dimension they allow to treat exponents $0<s<2+[(d-1)/2]$, not covering the whole range $0<s<d$. We thus optimized here several steps of that strategy, to that aim, as described below. Note that the range $0<s<2$ appearing in \cite{Gregg89} is due to $-\Delta_{\R^{Nd}} + \sum_{1\le i\neq j\le N}|x_i-x_j|^{-s}$ being a self-adjoint operator only for $0<s<2$. This is also true in general dimension $d$, as follows from Reed-Simon Vol. II, p. 169 and Thm. X.19. (The last result is stated in $d=3$ but extends/can be adapted to general $d$, but the range of self-adjointness of the Hamiltonian, $0<s<2$, is the one that stays valid for general $d\ge 1$.)

\par A related property which seems to never have emphasized before, concerning the methods of \cite{Feff85} and \cite{Gregg89}, is that they work also for \emph{non-rotationally-invariant kernels}. The lack of rotation invariance of the \emph{background field} instead was one of the motivations for \cite{Feff85, Gregg89}, as compared to the previous work \cite{Lieblebowitz} where radial symmetrization was a crucial ingredient in the proofs.

\par The fact that the decomposition is available for non-rotationally-invariant interaction kernels, is a property that seems not to have been exploited in applications yet, and forms an important point of advantage of this type of decomposition, as compared to other methods such as \cite{Grafschenker95} used in \cite{LewLiebSeir17}.

\par The functions called $Q_s^i$ in \cite{Gregg89} do not provide bounds on $d$ partial derivatives, in general dimension $d$, and derivative bounds for self-convolutions of characteristic functions of balls $1_B*1_B$ can be improved by only one extra derivative by their method, in order to use the positive-definiteness criteria of Lemma \ref{fourierbdlemma}. This would only allow to use the lemma for $0<s<2+[(d-1)/2]$, and not for the whole range $0<s<d$. Therefore we introduce a better mollification to produce the functions which we denote by $Q_{i,\eta}$ here. The functions $Q_{i,\eta}$ then have the derivative bounds required in Lemma \ref{fourierbdlemma} in general dimension $d$.

\par In \cite{Gregg89} an averaging procedure on ball packings used in the decomposition (needed there to pass from bounds on $2$ derivatives of the kernels, to bounds on $3$ derivatives) is done by changing the radii of balls from a fixed packing. In \cite{Gregg89} a new packing is constructed by the Swiss cheese lemma for each dilation factor, and the volume-fractions covered by the balls from each family only have the rough bounds coming from that lemma. This produces some extra error terms in the final superposition of the kernels, error terms whose control is then not discussed in \cite{Gregg89}, and is made harder by the fact that explicit decomposition formulas such as \eqref{decompqi} below are not available in that case. 

\par To solve these problems, we use a different procedure: we dilate and contract the \emph{whole family} of balls given in the Swiss cheese lemma, thereby producing covers which can be fit to our mollification procedure directly, by the algebraic dependence of the $Q_{i,\eta}$ on the dilation parameter of our families. In particular, the dilated families automatically cover fixed volume-fractions, and we avoid some of the error terms from \cite{Gregg89}.

\subsection{Link to previous related works}
The result in \eqref{expnextord} has been very recently proved for the specific case of $s=1,d=3,$ in \cite{LewLiebSeir17}. Note that the method used therein, by means of the Graf-Schenker decomposition, does not extend past that specific $s=1,d=3$ situation to our general class of costs, and it is mentioned in that paper that it would be interesting to consider the general situation for exponents and dimensions $0<s<d$, treated here. As it turns out, two points are different between our methods:
\begin{itemize}
\item We replace the Graf-Schenker decomposition used in \cite{LewLiebSeir17} by the generalized Fefferman-Gregg-type approach based on positive definiteness, approach which can be seen as a natural continuation of the new understanding from \cite{CFP}, \cite{Petr15} of the crucial role of positive definiteness in asymptotics problems such as the one considered here.
 \item A general difference between this paper and \cite{LewLiebSeir17} is that we rely on and are inspired by Optimal Transport tools, such as duality and the method \cite{ButChampdePas16} for giving some separation condition on the points in the optimizer. We hope that the reader will profit from comparing our approach and the Statistical Physics framework of \cite{LewLiebSeir17}, both of which highlight different aspects of the theory.
\end{itemize}
\par Moreover, we rely in our proofs on probability ideas such as convergence approximations.

\par We note that our tools are also applicable, by similar arguments, to other models than power-law-type radial costs, such as \emph{anisotropic costs} and costs with \emph{radial oscillations} (see Remark \ref{morecosts}, Examples (b), (c)). We also can transfer the study done here to the case of natural interaction kernels in \emph{curved spaces}, e.g. on compact Riemannian manifolds (see Remark \ref{morecosts}, Example (a)).

\par We remark that in $d=1$ the next order term can be directly described by a very elegant computation for a very large class of costs (as was explained to us by Simone Di Marino \cite{DimarPer}), by using the explicit ``monotone rearrangement'' description of the optimal transport plan from \cite{ColdePascdiMar13}.  Further results and a detailed review of optimal transport results with Coulomb costs and other repulsive costs can be found in \cite{diMarGerNe15}.

\subsection{Range of validity and future work}\label{sssecrangeofvalidity}

\textbf{Screening and effective energy localization.} Concerning the use of statistical mechanics methods, the form of our problem \eqref{minineq} creates a certain number of complications. As a main comparison, recall that in the study of the next order for \emph{classical} Coulomb/Riesz gas energies (so far understood only for $s\in[d-2,d)$: see \cite{ps} and the references therein), the source of localization of energy needed for obtaining the next-order term was directly provided via the boundary value problems connected to the elliptic PDE's coming from the kernels under consideration. However the methods from \cite{ps} are at the moment not extendable outside $d-2\le s<d$, and the next-order Jellium-type energy considered in \cite{ps} may be infinite for $0<s<d-2$.

\par In order to achieve the full range of exponents $0<s<d$, we apply the more robust strategy, initiated by Fefferman in \cite{Feff85}, and which consists, roughly speaking, in \emph{truncating the kernels instead of truncating the solutions} of the equations, and this is done precisely in such a way as to preserve positive definiteness, at least up to an asymptotically negligible error. The careful positivity-preserving kernel truncation method used here can be then interpreted as a rough and robust counterpart of the above PDE methods, which is available for operators which are more general, namely \emph{positive definite}, rather than elliptic. In this precise sense, we could say that the Fefferman truncation method which we extend and apply here mainly gives some way of an answer to the basic question: \emph{What is the analogue of a Neumann boundary value problem, if we replace the Laplacian with a general positive definite operator?} This question seems to have appeared and to have been answered already before, in a quite different form, in the theory of finite range decompositions: see \cite{BrydgesTalar} and the references therein.

\par\textbf{Comparison to the Coulomb/Riesz gas asymptotics.} As noted in the appendix \cite{LewLieb15}, there seems to be a discrepancy in $s=1,d=3,$ between the computations for the value of the Jellium energy and that of the Uniform Electron Gas (UEG) energy, on the specific example of lattice-like configurations.

\par Extending this consideration to the \emph{minimizers} of the corresponding UEG and Jellium energies would lead to the interesting consequence that there would be, for $s=d-2$, a \emph{gap between the constant} $\mathsf{C}_\mathrm{Jel}(d-2,d)$ appearing in the next-order term expansions for the large-$N$ expansions of the Coulomb gas minimum energy (described in \cite{ss2d}, \cite{ps}, as a minimum jellium energy) and the constant $C(d-2,d)$ whose existence is found in our main result \eqref{expnextord} (which corresponds to an UEG energy minimization problem, as appearing e.g. in Proposition \ref{unif} for the case $\Lambda=[0,1]^d$).

\par For the Coulomb case $s=d-2$, the precise estimates needed for proving the presence of a gap (beyond computations for special examples such as lattices) seems to require a precise understanding of the Jellium and UEG minimizers, which goes beyond the state-of-the-art results currently available. However for the exponents $s\in(d-2,d)$, for which both the values $\mathsf{C}_\mathrm{Jel}(s,d)$ and $C(s,d)$ are computable due to our result \eqref{expnextord} and to \cite{ps}, these values agree as proved recently in \cite{CotPet17}.

\par\textbf{Towards a general localization theory for positive definite operators.} The first paper where positive-definite truncations for the Coulomb interaction potential were constructed/used seems to be \cite{Feff85}, then refined and extended to more general $d=3$ interactions by \cite{Gregg89} and \cite{Hughes89}. A different, simpler construction by means of the Yukawa potential was introduced by Conlon, Lieb and Yau \cite{ConLiebYau89}. Later on a much simpler construction, which is specific to the case $d=3$ and to the Coulomb cost appeared in \cite{Grafschenker95}.

\par Well-behaved truncations of operators of positive type are a key tool used in the recent renormalization group results such as \cite{david'sresult}, \cite{gordonwithroland}, \cite{AdamsKoteckyMuller}. In these cases the setting is often $\mathbb Z^d$ rather than $\mathbb R^d$, and the kernels that have to be localized are often the Coulomb kernels, interpreted as infinite-dimensional positive definite matrices. General decompositions have been studied for example in \cite{BrydgesTalar} and \cite{Bauerschmidt}.

\par These results so far profit of explicit representations of the operators involved, however one should be able to extend the truncation theory such as presented here to both discrete settings and more general positive definite operators. We plan to pursue this direction in future work.

\subsection{Main results}\label{ssecmainresult}

In our statements and proofs below, we define a generalized version $E_{N,\mathsf{c}}^\mathrm{xc}(\mu)$ of the so-called ``exchange-correlation'' energy for a probability measure $\mu\in\mathcal P(\R^d)$ and a cost function $\mathsf{c}:\mathbb R^d\times\mathbb R^d\to \R_{>0}\cup\{+\infty\}$ such that the integrals below are finite
\begin{equation}\label{defexc}
E_{N,\mathsf{c}}^\mathrm{xc}(\mu):= F_{N,\mathsf{c}}^\mathrm{OT}(\mu)-N^2\int_{\mathbb{R}^d}\int_{\mathbb{R}^d} \mathsf{c}(x,y)d\mu(x)d\mu(y)\ .
\end{equation}
The above definition, which is the starting point for our paper, could be stated for general finite positive measures $\mu$, under the condition that we apply the right normalization for the second term. Therefore the requirement $\mu\in \mathcal P(\mathbb R^d)$ constitutes our choice of normalization. For our costs of particular interest as in \eqref{minineq}, we will also use for $0<s<d$ the notation $F_{N,s}^\mathrm{OT}(\mu)$, respectively $E_{N,s}^\mathrm{xc}(\mu)$. 

We can extend the definition of $F_{N,\mathsf{c}}^\mathrm{OT}(\mu)$ also to $N\in\{0,1\}$, by taking $F_{N,\mathsf{c}}^\mathrm{OT}(\mu):=0$ in these two cases. Moreover, for all $N\in\mathbb{R}_{>0}, N\ge 2$, let us define the ``grand-canonical optimal transport"
\begin{subequations}\label{OT_relax}
\begin{equation}\label{OTGC}
F_{\mathrm{GC},N,\mathsf{c}}^\mathrm{OT}\left(\mu\right):=\inf \left\{\sum_{n=2}^\infty\alpha_n F_{n,\mathsf{c}}^\mathrm{OT}(\mu_n)\left|\begin{array}{c}\sum_{n=0}^\infty\alpha_n=1,\ \sum_{n=1}^\infty n \alpha_n\mu_n = N\mu,\\[3mm]\mu_n\in {\cal P}(\mathbb{R}^d),\ \alpha_n\ge 0,\quad\mbox{for all}\quad n\in\N\end{array}\right.\right\}\ ,
\end{equation}
and the ``grand-canonical exchange correlation energy"
\begin{equation}
\label{ExcGC}
E_{\mathrm{GC},N,\mathsf{c}}^\mathrm{xc}\left(\mu\right):=F_{\mathrm{GC},N,\mathsf{c}}^\mathrm{OT}\left(\mu\right)-N^2 \int_{{\mathbb{R}^d}\times {\mathbb{R}^d}}\mathsf{c}(x,y)d\mu(x)d\mu(y).
\end{equation}
\end{subequations}
Here again, the classical definition of \eqref{OT_relax} is usually given only for $N\ge 2$, though one can define the quantities for all $N>0$. The reason for introducing $\alpha_0$ and $\alpha_1$ in \eqref{OTGC}, even though they do not appear in the sums of optimal transport problems there, is that they naturally appear in the proof of Lemma \ref{subadd12gena01} below (see Step 4 therein). Furthermore, our convention for $F_{N,\mathsf{c}}^\mathrm{OT}$ for $N\in\{0,1\}$ implies that for $N\in(0,1]$, we have  $F_{\mathrm{GC},N}(\mu)=0$, as we can take a competitor of the form $\alpha_n=0,n\ge 2, \alpha_0=1-N,\alpha_1=N, \mu_n=\mu, n\ge 1$. Also, the same convention implies that in \eqref{OTGC} we may equivalently minimize over sums $\sum_{n=0}^\infty \alpha_nF_{n,\mathsf{c}}^\mathrm{OT}(\mu_n)$.
\begin{theorem}\label{upboundcont}
Fix $d\ge1$ and $0<s<d$ and let $\mathsf{c}$ be the corresponding cost as in \eqref{minineq}. Assume that $\mu\in\calP(\mathbb{R}^d)$ has density $\rho\in L^{1+s/d}(\mathbb{R}^d)$. Then there exists $C(s,d)>0$, which depends only on $s$ and $d$, such that we have (where both limits exist for the full sequence)
\begin{equation}
 \label{upb}
\lim_{\substack{N\to\infty\\\N\in\N_+}}N^{-1-\frac{s}{d}}E_{N,s}^\mathrm{xc}(\mu)= \lim_{N\to\infty\atop N\in\mathbb{R}_+}N^{-1-\frac{s}{d}}E_{\mathrm{GC},N,s}^\mathrm{xc}(\mu)=-C(s,d)\int_{\mathbb{R}^d} \rho^{1+\frac{s}{d}}(x)dx\ .
\end{equation}
\end{theorem}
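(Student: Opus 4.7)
The plan is to prove both limits simultaneously by a thermodynamic-limit strategy: first define $C(s,d)$ via a uniform-density model problem on a cube; then prove matching upper and lower bounds for general $\rho\in L^{1+s/d}(\R^d)$ by localizing the energy to cells on which $\rho$ is approximately constant; and finally verify that the canonical and grand-canonical limits agree.

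For the first step, take $\Lambda=[0,1]^d$ with $\mu_\Lambda=\mathbf{1}_\Lambda$ the uniform probability measure. I would show that $N^{-1-s/d}E_{N,s}^{\mathrm{xc}}(\mu_\Lambda)$ converges to some $-C(s,d)<0$ by a Fekete-type subadditivity argument: packing $k^d$ rescaled near-optimizers for $n$-particle problems on sub-cubes of side $1/k$ inside $\Lambda$ produces a competitor for the $(k^dn)$-particle problem on $\Lambda$, and the inter-cube interactions, after subtracting the mean-field $N^2\iint \mathsf{c}\,d\mu_\Lambda d\mu_\Lambda$, are of lower order — this is exactly where positive definiteness, provided by the Fefferman-Gregg decomposition, is essential (cf. \cite{Petr15}), since it guarantees that the first-order contribution is captured precisely by the mean-field term. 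The Riesz scaling $\mathsf{c}(\lambda x)=\lambda^{-s}\mathsf{c}(x)$ makes $C(s,d)$ intrinsic to the kernel, and the strict negativity and finiteness follow from explicit test configurations (e.g.\ lattices) giving a negative exchange-correlation energy of the right order.

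For the general $\rho\in L^{1+s/d}(\R^d)$, I would partition $\R^d$ into cubes $\{Q_j\}$ of side $\e$ and approximate $\rho\approx\rho_j$ on $Q_j$. For the upper bound, allocate $N_j\approx N\rho_j|Q_j|$ particles to $Q_j$ and use independently a near-optimal symmetric plan for the uniform problem on each cube; summing the intra-cube contributions via Step 1 yields $-C(s,d)N^{1+s/d}\sum_j\rho_j^{1+s/d}|Q_j|$, while the inter-cube interactions reconstruct $N^2\iint\mathsf{c}\,d\mu\,d\mu$ up to an $o(N^{1+s/d})$ error controllable by the $L^{1+s/d}$ norm of $\rho$; then let $\e\to 0$. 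For the matching lower bound I would apply the generalized Fefferman-Gregg decomposition to write $\mathsf{c}$ as a superposition of short-range positive-definite kernels $K_R$ at scales $R$; each such kernel admits an essentially additive lower bound over cells of size much larger than $R$, and resumming over the decomposition and matching cell-by-cell against Step 1 delivers the same sharp constant $C(s,d)$.

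Finally, the inequality $E_{\mathrm{GC},N,s}^{\mathrm{xc}}\le E_{N,s}^{\mathrm{xc}}$ is immediate from the definitions, and the reverse bound follows by applying the same localization to a near-optimal grand-canonical competitor $\{(\alpha_n,\mu_n)\}$: each sector $n$ contributes at most $-C(s,d)n^{1+s/d}\int\rho^{1+s/d}+o(n^{1+s/d})$ after localization, and the constraints $\sum\alpha_n=1$, $\sum n\alpha_n=N$ together with the strict concavity of $n\mapsto n^{1+s/d}/n = n^{s/d}$ on the mass force effective concentration at $n\sim N$. \textbf{The main obstacle} throughout is the lower bound: the short-range kernels in the Fefferman-Gregg decomposition are only positive-definite up to asymptotically negligible corrections, so the range scale $R$, the cell size $\e$, and the discretization of $\rho$ must be balanced in a quantitative way so that the truncation errors, the mean-field remainders, and the non-constancy of $\rho$ on each $Q_j$ remain simultaneously $o(N^{1+s/d})$.
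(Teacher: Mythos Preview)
Your overall strategy --- define $C(s,d)$ via a subadditivity argument on the cube, prove the upper bound by block subadditivity, and obtain the matching lower bound via the Fefferman--Gregg decomposition --- is essentially the paper's. The gap is in how you handle the grand-canonical functional.

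When you localize the cost to balls $A$ in the Fefferman--Gregg decomposition and restrict an $N$-point plan $\gamma_N$ to each $A$, the number of points falling in $A$ is \emph{random}, not fixed. So ``matching cell-by-cell against Step 1'' cannot invoke the canonical result $E_{n}^{\mathrm{xc}}$ on each cell: what you actually obtain on each ball is a grand-canonical problem $E_{\mathrm{GC},N\mu(A)}^{\mathrm{xc}}(\hat\mu_A)$. This is precisely the content of the splitting lemmas (Lemma~\ref{subadd12gena01}, Lemma~\ref{subadd_gcb}, Corollary~\ref{subadd12gen11}) and is the reason the paper introduces $E_{\mathrm{GC}}$ in the first place. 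The paper therefore proves the sharp lower bound \emph{directly} for $E_{\mathrm{GC},N}^{\mathrm{xc}}(\mu)$; the lower bound for $E_N^{\mathrm{xc}}$ then follows for free from $E_N^{\mathrm{xc}}\ge E_{\mathrm{GC},N}^{\mathrm{xc}}$, and the upper bound for $E_{\mathrm{GC}}$ follows from the subadditivity upper bound on $E$ together with $E_{\mathrm{GC}}\le E$.

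Your separate concavity argument for the ``reverse bound'' does not work as written. First, applying the canonical lower bound sector-by-sector gives $E_n^{\mathrm{xc}}(\mu_n)\gtrsim -C(s,d)\,n^{1+s/d}\int\rho_n^{1+s/d}$, with $\rho_n$ the density of $\mu_n$, not of $\mu$; there is no direct control of $\int\rho_n^{1+s/d}$ in terms of $\int\rho^{1+s/d}$ under the constraint $\sum n\alpha_n\mu_n=N\mu$. Second, even if all $\mu_n=\mu$, Jensen for the convex map $t\mapsto t^{1+s/d}$ gives $\sum_n\alpha_n n^{1+s/d}\ge N^{1+s/d}$, which goes the \emph{wrong} way for a lower bound on the (negative) sum $\sum_n\alpha_n E_n^{\mathrm{xc}}$. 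One can try to rescue this by retaining the positive mean-field discrepancy $\sum_n\alpha_n n^2\iint\mathsf{c}\,d\mu_n d\mu_n - N^2\iint\mathsf{c}\,d\mu\,d\mu$ and arguing that it dominates the spreading loss, but this is substantially more delicate than what you have sketched, the $o(n^{1+s/d})$ errors are not uniform in $\mu_n$, and it is entirely avoided by running the Fefferman--Gregg argument for $E_{\mathrm{GC}}$ from the start.
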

\begin{rmk}
After a first draft of the present paper was completed, the very important case $s=1,d=3,$ of Theorem \ref{upboundcont}, appeared in \cite[Thm. 4.1]{LewLiebSeir17} (for the case of continuous slowly-varying densities in $L^{4/3}(\mathbb{R}^3)$). In \cite{LewLiebSeir17}, a positive function $\rho\in L^1(\mathbb R^d)$ is called slowly-varying,  if the oscillations at infinity are controlled in the sense that there holds $\sum_{k\in\mathbb Z^d}\max_{x\in [0,1)^d+k}\rho(x) <\infty$. The proof of the upper bound of Theorem \ref{upboundcont} is based on classical tools such as subadditivity, and is similar to \cite{LewLiebSeir17}, to which we often refer the reader for the proofs. The more difficult sharp lower bound from Section \ref{seclowerboundpwconstant} requires different and more robust kernel decomposition techniques compared to \cite{LewLiebSeir17}, making our methods extendable to non-isotropic and oscillating kernels. 
\end{rmk}
\begin{rmk}[extensions to more general costs]\label{morecosts}
We note the following possible extensions of Theorem \ref{upboundcont}:
\begin{itemize}
 \item [(a)] If $(M,g)$ is a compact Riemannian manifold, then we may consider the optimal transport problem for the interaction cost $G(x,y)$ given by the Green function of the Laplacian of $(M,g)$. This cost is positive definite and has the same homogeneity near zero as the Coulomb potential $|x-y|^{2-d}$ where $d=\mathrm{dim}(M)$. This type of minimization has been studied in \cite{BeCoCri17}. For the case of \emph{embedded submanifolds} $M\subset \mathbb R^D$ this fits in the same framework as the celebrated Smale's $7$th problem \cite{Sma98}. For applying our methods to this example it is not essential that the operator under consideration be of second order, and it is possible to consider also interactions given by the Green functions of other other fractional or higher order positive definite operators.
\item [(b)] The costs $\mathsf{c}(x,y)$ to which our method applies need not be rotation-invariant, as the decomposition method described here uses only translation-invariance unlike the Graf-Schenker approach \cite{Grafschenker95}. This allows to extend our results to the costs of the form 
\[
 \mathsf{c}(x,y):=g(x-y), \text{ with } g(x)=|x|^{-s} f (x/|x|), \text{ where }f\in C^0(\mathbb S^{d-1})\ .
\]
Non-rotation-invariant costs such as the above are not treatable by any previous methods, to our knowledge.
\item [(c)] Other kernels treatable by our methods are those of the form 
\[
 \mathsf{c}(x,y):=g(x-y), \text{ with }g(x)=l(|x|)f(x/|x|), \text{ with }l(r)=r^{-s-3}(\sin(r) - r \cos(r))\text{ and }f\in C^0(\mathbb S^{d-1})\ ,
\]
again for $0<s<d$, which are including radial oscillations too. Kernels with radial oscillations may occur in physics e.g. in the study of Friedel oscillations.
\item [(d)] With considerable more effort regarding the Fefferman-Gregg adaptation than the relatively straightforward cases (a)-(c) above, the proof could potentially also be adapted to other cost cases such as
\[
 \mathsf{c}(x,y):= \prod_{i=1}^d |x_i-y_i|^{-\alpha}, \quad \text{for }x=(x_1,\ldots,x_d),y=(y_1,\ldots,y_d)\in\mathbb R^d,\text{ and }0<\alpha<1\ ,
\]
and to a corresponding version including radial oscillations as well.
\end{itemize}
In the more general cases (a)-(c), we note that we have $\lim_{\epsilon\downarrow 0}\frac{\mathsf{c}(y+\epsilon x,y)}{g_0(\epsilon x)}=1$ for all $y$ (and this should be transferred to a local chart at $y$, in case (a)), locally uniformly for $x\neq 0$, where the function $g_0:\mathbb R^d\setminus\{0\}\to\mathbb R$ is homogeneous of degree $-s$ for some $0<s<d$, i.e. for all $x\neq 0$ and all $\lambda>0$ there holds $g_0(\lambda x)=\lambda^{-s}g_0(x)$. In that setting, we expect the next-order term limit treated here to be
\begin{equation}\label{pd}
 \lim_{N\rightarrow\infty}N^{-1-s/d}E_{N, \mathsf{c}}^\mathrm{xc}(\mu)= -C(g_0,s)\int_{\mathbb{R}^d}(\rho(x))^{1+\frac{s}{d}} dx,\quad\mbox{where}\quad C(g_0,s)>0\ .
\end{equation}
\end{rmk}
Moreover, $E_{\mathrm{GC},N, s}^\mathrm{xc}\left(\mu\right)$ satisfies the small oscillations property detailed in Theorem \ref{monincrsub} below, which is not only crucial for settling the continuity of $s\to C(s,d)$ in \cite{CotPet17}, but it is also interesting in its own right. As a counterpart for the Jellium problem, oscillation bounds for minimizers of the Coulomb gas energy appeared in \cite{rns} for the log-interaction in $d=2$ and were extended in \cite{rnp} to Coulomb interactions $s=d-2$ in general dimension $d\ge 2$, and to $d-2<s<d$ under an extra hypothesis. Note that in the setting of \cite{rns,rnp}, spatial oscillations and uniformity of the asymptotic spatial distribution of optimum configurations were controlled, whereas here we control oscillations of the energy values.
\begin{theorem}[Small oscillations property of $E^{\mathrm{xc}}_{\mathrm{GC}, N,s}(\mu)$]\label{monincrsub}
Fix $0<\epsilon<d/2$ and let $\epsilon\le s\le d-\epsilon$. Let $\mu\in \mathcal {P}(\mathbb{R}^d)$ be a probability measure with density of the form $\rho(x)=\sum_{i=1}^k\alpha_i1_{\Lambda_i}(x)$, where $\Lambda_1,\ldots,\Lambda_k$ are hyperrectangles with disjoint interiors, and $\alpha_i\in \mathbb{R}_{>0}, i=1,\ldots k,$ are such that $\sum_{i=1}^k\alpha_i=1$.

Then there exists $C(\rho,d,\epsilon)>0$ such that for all $N, \widetilde N\in \mathbb{R}_{>0},$ $N\ge\widetilde N\ge 2$, we have
\begin{equation}
\label{smalloscgc}
\left|\frac{E_{\mathrm{GC}, N,s}^{\mathrm{xc}}(\mu)}{N^{1+s/d}}-\frac{E_{\mathrm{GC},\widetilde N,s}^{\mathrm{xc}}(\mu)}{\widetilde N^{1+s/d}}\right|\le \frac{C(\rho, d,\epsilon)}{\log \tilde N}\ ,
\end{equation}
where $C(\rho,d,\epsilon)$ depends on $\rho, d, \epsilon$ but is independent of the choice of the parameter $s\in[\epsilon,d-\epsilon]$.
\end{theorem}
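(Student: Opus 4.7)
The plan is to combine two complementary estimates for $E^{\mathrm{xc}}_{\mathrm{GC},N,s}(\mu)$: a subadditivity inequality in $N$ (giving the ``upper'' direction of the oscillation bound), and a Fefferman--Gregg-based matching lower bound that controls the rate of approach to the thermodynamic limit of Theorem~\ref{upboundcont}.

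\textbf{Subadditivity.} First I would establish that for all $N_1,N_2\ge 0$ with $N=N_1+N_2$,
$$E^{\mathrm{xc}}_{\mathrm{GC},N,s}(\mu)\le E^{\mathrm{xc}}_{\mathrm{GC},N_1,s}(\mu)+E^{\mathrm{xc}}_{\mathrm{GC},N_2,s}(\mu),$$
which is the content of Lemma~\ref{subadd12gena01}. The argument concatenates near-optimal grand-canonical competitors for the two subsystems as independent samples and observes that the cross-interaction in $F^{\mathrm{OT}}_{\mathrm{GC}}$ equals exactly $2N_1N_2\iint \mathsf c\,d\mu\,d\mu$, which combines with $N_1^2+N_2^2$ inside the mean-field subtraction to reproduce $N^2$, so the cross term cancels out. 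Iterating with $k=\lfloor N/\tilde N\rfloor$ and $r=N-k\tilde N\in[0,\tilde N)$, and using $E^{\mathrm{xc}}_{\mathrm{GC},r,s}(\mu)\le 0$, yields
$$\frac{E^{\mathrm{xc}}_{\mathrm{GC},N,s}(\mu)}{N^{1+s/d}}\le \left(\frac{\tilde N}{N}\right)^{s/d}\frac{E^{\mathrm{xc}}_{\mathrm{GC},\tilde N,s}(\mu)}{\tilde N^{1+s/d}}+O\!\left(\tilde N^{1+s/d}/N^{1+s/d}\right),$$
which immediately gives one direction of \eqref{smalloscgc} (in fact with a rate better than $1/\log\tilde N$).

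\textbf{Matching lower bound via Fefferman--Gregg.} The harder direction is to upper-bound $\frac{E^{\mathrm{xc}}_{\mathrm{GC},\tilde N,s}(\mu)}{\tilde N^{1+s/d}}-\frac{E^{\mathrm{xc}}_{\mathrm{GC},N,s}(\mu)}{N^{1+s/d}}$. Here I would invoke the generalized Fefferman--Gregg decomposition of Section~\ref{ssecfefferman}, writing $\mathsf c=\sum_i Q_{i,\eta}+(\text{controlled tail})$ with each $Q_{i,\eta}$ positive-definite and supported in balls of radius $\sim 2^i\eta$. Positive-definiteness at every scale, combined with covering $\mathrm{supp}\,\rho$ by dilated Swiss-cheese packings adapted to the dilation parameter of the families $Q_{i,\eta}$ (as described in Section~\ref{ssecfefferman}), produces a lower bound on $F^{\mathrm{OT}}_{\mathrm{GC},\tilde N}(\mu)$ matching the leading mean-field term $\tilde N^2\iint\mathsf c\,d\mu\,d\mu$ plus the next-order contribution $-C(s,d)\tilde N^{1+s/d}\int\rho^{1+s/d}$, up to an error controlled by the decomposition parameters. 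The logarithmic rate $1/\log\tilde N$ emerges because the relevant dyadic scales in the FG decomposition lie between the typical particle spacing $\sim\tilde N^{-1/d}$ and the fixed macroscopic scale $\mathrm{diam}\,\mathrm{supp}(\rho)$, producing $\sim\log\tilde N$ relevant scales; balancing the mollification parameter $\eta$ against the per-scale covering/boundary error leaves a residual of exactly order $1/\log\tilde N$.

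\textbf{Assembly, uniformity in $s$, and main obstacle.} Combining the two sides and observing that each is bounded by $C(\rho,d,\epsilon)/\log\tilde N$ yields~\eqref{smalloscgc}. Uniformity of $C(\rho,d,\epsilon)$ for $s\in[\epsilon,d-\epsilon]$ follows because every constant appearing in the FG decomposition (Fourier-side estimates, derivative bounds on the mollified $Q_{i,\eta}$, packing fractions) depends continuously on $s$ and is bounded on the compact interval $[\epsilon,d-\epsilon]$. \emph{The main technical obstacle} lies in the lower-bound step: one must engineer the FG decomposition so that the per-scale error telescopes to produce a $1/\log\tilde N$ residual rather than $O(1)$, and one must carefully handle the discontinuities of $\rho$ across the boundaries of the hyperrectangles $\Lambda_i$. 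The hyperrectangle hypothesis on $\rho$ is exactly what makes the boundary layers manageable, as they contribute only a lower-dimensional area of order $\tilde N^{(d-1)/d}$, negligible against the bulk scale $\tilde N^{1+s/d}$ after normalization.
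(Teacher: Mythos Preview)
Your plan has a genuine gap in both halves, and it misses the paper's central trick.

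\textbf{The subadditivity direction does not give what you claim.} Iterating $E^{\mathrm{xc}}_{\mathrm{GC},N,s}(\mu)\le kE^{\mathrm{xc}}_{\mathrm{GC},\tilde N,s}(\mu)$ with $k=\lfloor N/\tilde N\rfloor$ yields, after normalization,
\[
\frac{E^{\mathrm{xc}}_{\mathrm{GC},N,s}(\mu)}{N^{1+s/d}}-\frac{E^{\mathrm{xc}}_{\mathrm{GC},\tilde N,s}(\mu)}{\tilde N^{1+s/d}}\ \le\ \Big[(\tilde N/N)^{s/d}-1\Big]\frac{E^{\mathrm{xc}}_{\mathrm{GC},\tilde N,s}(\mu)}{\tilde N^{1+s/d}}+O\!\left((\tilde N/N)^{1+s/d}\right).
\]
The bracket is negative and $E^{\mathrm{xc}}_{\mathrm{GC},\tilde N,s}(\mu)/\tilde N^{1+s/d}$ is negative, so the right-hand side is bounded only by the Lieb--Oxford constant times $[1-(\tilde N/N)^{s/d}]$, which is $O(1)$ when $N/\tilde N$ is large. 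You do \emph{not} get a $1/\log\tilde N$ rate from this, uniformly over $N\ge\tilde N$. (Also, Lemma~\ref{subadd12gena01} is a cost-splitting lemma for disjoint supports, not the $N$-subadditivity statement; the latter is \eqref{subadgcinit} in Remark~\ref{rmk_gcb}.)

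\textbf{The FG lower bound, as you describe it, is circular.} You propose to lower-bound $E^{\mathrm{xc}}_{\mathrm{GC},\tilde N,s}(\mu)/\tilde N^{1+s/d}$ by $-C(s,d)\int\rho^{1+s/d}$ plus an error of order $1/\log\tilde N$. But an explicit rate of convergence to the thermodynamic constant $C(s,d)$ is precisely what the small-oscillation theorem is needed for; you cannot invoke it as an input.

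\textbf{What the paper actually does.} The paper never compares either quantity to $C(s,d)$. It fixes one direction, say
\[
\frac{E^{\mathrm{xc}}_{\mathrm{GC},N,s}(\mu)}{N^{1+s/d}}-\frac{E^{\mathrm{xc}}_{\mathrm{GC},\tilde N,s}(\mu)}{\tilde N^{1+s/d}}\ \ge\ -\frac{C}{\log\tilde N},
\]
and proves it by producing \emph{matching} bounds expressed through the \emph{same} per-ball quantities. Step~1 applies the Fefferman--Gregg decomposition to get a \emph{lower} bound on $E^{\mathrm{xc}}_{\mathrm{GC},N,s}(\mu)/N^{1+s/d}$ as a sum over balls of $E^{\mathrm{xc}}_{\mathrm{GC},\,N\alpha_j|B_{tR_i}|/|\Lambda_j|,\,s}(1_{B_1}/|B_1|)$, normalized, minus errors of order $1/M$. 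Step~2 applies grand-canonical subadditivity (splitting $\mu$ over a Swiss-cheese packing) to get an \emph{upper} bound on $E^{\mathrm{xc}}_{\mathrm{GC},\tilde N,s}(\mu)/\tilde N^{1+s/d}$ as a sum of the same form, now with radii $\tilde R_i$ and particle numbers $\tilde N\alpha_j|B_{t\tilde R_i}|/|\Lambda_j|$. The crucial Step~3 is to take $M=\tilde M$ and choose the two families of radii so that
\[
N R_i^d\ =\ \tilde N\,\tilde R_i^d\qquad(i=1,\ldots,M),
\]
which makes the normalized per-ball energies in the two sums \emph{identical}. After subtraction only packing-fraction and boundary discrepancies survive, both of order $1/M$, and the choice $M\sim \log\tilde N$ (dictated by the Swiss-cheese constraints with $l\sim\tilde N^{-1/(3d)}$, $\tilde R_1\sim\tilde N^{-1/(2d)}$) yields the $1/\log\tilde N$ rate. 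This scale-matching cancellation is the idea missing from your outline.
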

\begin{rmk}
Though for simplicity of calculations we only prove Theorem \ref{monincrsub} for the setting described therein, we may extend the above small-oscillations bound to general Borel sets $\Lambda_i$ with $\phi$-regular boundary (see Definition \ref{phiregular} below), and also to more general densities, such as any positive Riemann-integrable function, i.e. any function that can be approximated well by piecewise constant functions on hyper-rectangles. This includes in particular continuous densities with compact support and $\phi$-regular boundary.

However, with the present strategy, this may come at the cost of a less tame oscillation bound than the one of order $1/(\log\widetilde N)$ of \eqref{smalloscgc} above. Note that the order can be improved for $s=1,d=3,$ to $1/{\tilde N}^\alpha,\alpha>0,$ with an $s$-dependent constant, by using the Graf-Schenker decomposition \cite{Grafschenker95} instead of the Fefferman-Gregg one.
\end{rmk}
The above small oscillation result connects to the conjecture about the \emph{optimal Lieb-Oxford bounds}, initially formulated for the case $s=1, d=3,$ but actually open in the whole range $0\le s<d$, $d\ge 2$ (where for $s=0$ we consider the kernel $\mathsf{c}(x,y)=-\log|x-y|$ instead), regarding the precise value of the optimal constant $C_\mathrm{LO}(s,d)$ such that for all $N\ge 2$ and for all measures $\mu$ with density $\rho\in L^{1+s/d}(\R^d)$ there holds
\begin{equation}\label{conj-lo}
\frac{E_{N,s}^\mathrm{xc}(\mu)}{N^{1+s/d}}\ge - C_\mathrm{LO}(s,d)\int_{\R^d}\rho^{1+s/d}(x)dx.
\end{equation}
As mentioned in \cite{LewLiebSeir17}, the fact that the optimal constant is realized in the limit $N\to\infty$ would fit with numerical results.

\subsubsection{Fefferman-Gregg decomposition}

We next state the generalized Fefferman-Gregg decomposition, originally introduced by Fefferman \cite{Feff85} for $s=1,d=3$, extended by Gregg \cite{Gregg89} to $0<s<2+[(d-1)/2]$, and further extended by us to $0<s<d$. This is the main new tool for the proofs of Theorem \ref{upboundcont} and of Theorem \ref{monincrsub}, as explained in the next section.
\begin{proposition}[Fefferman-Gregg decomposition]\label{prop3ws}
Let $M\in \mathbb N_+$, $0<\epsilon<d/2$ and $\epsilon\le s\le d-\epsilon$. Then there exists a constant $C$ depending only on $d,\epsilon$, a family $\Omega$ of ball packings $F_\omega$ of $\mathbb R^d, \omega\in\Omega$, a radius $R_1>0$ and a probability measure $\mathbb P$ on $\Omega$ such that the cost $|x_1-x_2|^{-s}$ can be decomposed as follows:
\begin{equation}\label{decomptouse}
\frac{1}{|x_1-x_2|^s}=\frac{M}{M+C}\left\{\int_{\Omega}\left(\sum_{A\in F_\omega}\frac{1_A(x_1)1_A(x_2)}{|x_1-x_2|^s}\right)d\mathbb{P}(\omega)+w(x_1-x_2)\right\}\ ,
\end{equation}
where
\begin{enumerate}
\item $w$ is positive definite;
\item if $\mu\in\calP(\mathbb{R}^d)$ has density $\rho\in L^{1+\frac{s}{d}}(\mathbb R^d)$ then
\[
E_{N, w}^\mathrm{xc}(\mu)\ge E_{\mathrm{GC},N, w}^\mathrm{xc}(\mu)\ge -\frac{C(w,d,\epsilon)}{M}N^{1+s/d}\int_{\mathbb{R}^d}\rho^{1+s/d}(x)dx-\frac{C(w,d,\epsilon)}{M}R_1^{-s}(N-1)\ 
\]
for some $C(w,d, \epsilon)>0,$ which depends only on $w, d$ and $\epsilon$. 
\end{enumerate}
\end{proposition}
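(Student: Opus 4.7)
The strategy is the Fefferman-Gregg scheme, adapted to the full range $0<s<d$ via an improved mollification. Start by writing the Riesz kernel as a superposition of positive-definite building blocks at geometric scales $r_i$: using a representation of the form $|x|^{-s}=c_{s,d}\int_0^\infty r^{-s-d-1}(1_{B_r}*1_{B_r})(x)\,dr$, together with a smoothing on a mollification scale $\eta$, we obtain functions $Q_{i,\eta}$ that are positive definite (as Fourier squares of mollified indicators) and, thanks to the mollification, admit bounds on \emph{all} $d$ partial derivatives. These derivative bounds are exactly what is needed to apply the Fourier positive-definiteness criterion of Lemma \ref{fourierbdlemma} uniformly in $s\in[\epsilon,d-\epsilon]$, across all dimensions $d$, which is the improvement over \cite{Gregg89} that unlocks the full range.

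Next, at each scale $r_i$ I would fix a ball packing $F_{\omega,i}$ of $\mathbb R^d$ obtained by dilating a single base Swiss cheese family (as in \cite{Lieblebowitz,Hughes89}) by a factor proportional to $r_i$. This is the key algebraic simplification over \cite{Gregg89}: dilating a fixed family, rather than generating a new packing per scale, preserves the covered volume fraction at every scale and produces the explicit decomposition formula \eqref{decompqi} for $Q_{i,\eta}$. Averaging over translations of this family, parameterized by $\omega\in\Omega$ with a probability measure $\mathbb P$, decomposes $Q_{i,\eta}(x_1-x_2)$ into a localized piece of the form $\sum_{A\in F_{\omega,i}}1_A(x_1)1_A(x_2)(\text{kernel})$ plus a smooth, controlled correction.

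Summing $M$ geometric scales then reconstructs $|x_1-x_2|^{-s}$ up to a multiplicative rescaling and a residual kernel. The factor $M/(M+C)$ records a trade-off: the extra $C$ pays for a small uniform background multiple of $|x|^{-s}$ added to force positive-definiteness of the residual $w$. More precisely, $w$ is the sum of the smooth per-scale corrections plus an excess proportional to $(C/M)|x|^{-s}$; the first part is verified to be positive definite scale by scale via Lemma \ref{fourierbdlemma}, using the derivative bounds on $Q_{i,\eta}$, while the second is positive definite as a Riesz kernel in the range $0<s<d$. Choosing $C$ large enough depending only on $d$ and $\epsilon$ absorbs any negative Fourier contributions from the corrections across $s\in[\epsilon,d-\epsilon]$.

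For the lower energy bound on $E^\mathrm{xc}_{N,w}$, the kernel $w$ is uniformly bounded and has mild decay inherited from the scale cutoffs. Combining a direct two-body estimate against $\mu$ with a scaling estimate at the minimal packing radius $R_1$ produces the bound $-C(w,d,\epsilon)M^{-1}(N^{1+s/d}\int\rho^{1+s/d}+R_1^{-s}(N-1))$; the grand-canonical inequality $E^\mathrm{xc}_{\mathrm{GC},N,w}(\mu)\le E^\mathrm{xc}_{N,w}(\mu)$ follows from the grand-canonical subadditivity of Lemma \ref{subadd12gena01}. The main technical obstacle I anticipate is certifying the positive-definiteness of $w$ uniformly in $s\in[\epsilon,d-\epsilon]$: this requires the derivative estimates on $Q_{i,\eta}$ to be simultaneously strong enough for Lemma \ref{fourierbdlemma} in every dimension and robust under the dilation-by-scale construction, so that the summed Fourier transforms of the correction kernels stay controlled by the $(C/M)|x|^{-s}$ excess for a single universal choice of $C=C(d,\epsilon)$.
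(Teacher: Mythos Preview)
Your overall architecture is correct---Swiss cheese packings, mollified ball auto-correlations $Q_{i,\eta}$ to gain $d$ derivatives for the Fourier criterion of Lemma~\ref{fourierbdlemma}, and a $(C/M)|x|^{-s}$ cushion---but there is a real gap in the positive-definiteness step. You propose to verify it \emph{scale by scale}, with the cushion absorbing the negative Fourier contributions. This fails for any single $C=C(d,\epsilon)$ independent of $M$: the natural per-scale remainder $|x|^{-s}(1-Q_{i,\eta}(x))$ equals $|x|^{-s}$ for $|x|>3R_i$, so each scale already satisfies \eqref{boundgregg} with constant of order one, and summing $M$ of them overwhelms a $(C/M)$ budget. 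The paper's fix (Lemma~\ref{derivativebounds}) is to subtract from each scale the convolution $\big(\int Q_{i,\eta}\big)^{-1}\int Q_{i,\eta}(y)|x-y|^{-s}\,dy$, which cancels the far-field tail; after this subtraction the $i$-th term obeys $|\partial^\beta(\cdot)|\le C\min(|x|/R_i,\,R_i/|x|)\,|x|^{-s-|\beta|}$, and the geometric growth $R_{i+1}\ge\bar C R_i$ from Lemma~\ref{cheeselemma} then makes the \emph{sum over all $M$ scales} satisfy \eqref{boundgregg} with a constant independent of $M$. The subtracted convolutions regroup into a separately positive-definite, bounded kernel $w_1$ (Lemma~\ref{prop3wsa}), and this bounded piece is precisely what produces the $R_1^{-s}(N-1)$ term in the energy estimate.

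Your energy-bound plan is also too vague: a ``direct two-body estimate against $\mu$'' cannot produce the $N^{1+s/d}\int\rho^{1+s/d}$ scaling, which requires a genuine Lieb--Oxford argument. The paper (Lemma~\ref{prop3wsb}) writes the tail-corrected piece $w_\#$ via the Hainzl--Seiringer radial representation \eqref{HSrep}, shows the resulting weight is nonnegative and bounded by $\bar C(d,\epsilon)M^{-1}r^{-s-d-1}$ (reusing the derivative bounds of Lemma~\ref{derivativebounds} at order $d+1$), and then runs the argument of \cite{LundNamPort}; the $M^{-1}$ in the weight is what gives the $O(M^{-1})$ final bound. Finally, $E^\mathrm{xc}_{\mathrm{GC},N,w}\le E^\mathrm{xc}_{N,w}$ is immediate from \eqref{chainineq}, not from Lemma~\ref{subadd12gena01}.
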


\subsubsection{Summary of constructions for Proposition \ref{prop3ws}}
While we stated here the proposition in a more self-contained way, we note some further parameters which will intervene and allow to fit the decomposition within the rest of the proof:
\begin{itemize}
 \item The parameters $M$ and $R_1>0$ above, as well as the further scale parameter $\ell>0$ introduced below are going to satisfy the relative constraints of the Swiss cheese Lemma \ref{cheeselemma}.
 \item The packing family $\Omega$ and $\mathbb{P}$ will therefore later depend on $l$ and will be denoted $\Omega_l$, respectively ${\mathbb{P}}_l$. Each family $\Omega_l$ will be composed of $(l\mathbb Z)^d$-periodic packings of balls $F_\omega^l$, obtained from Lemma \ref{cheeselemma}.
 \item We will perform the precise choice of $\Omega_l,\mathbb P_l$ in \eqref{lbrough3}, and the constant $C$ figuring in Proposition \ref{prop3ws} will be fixed in \eqref{errorw}.
\end{itemize}
\begin{remark}\label{contmorec1}
As already stated in the introduction, the decomposition \eqref{decomptouse} can be obtained for much more general costs, such as for example, the costs from Remark \ref{morecosts}. In this more general setting, the decomposition is of the form
\begin{equation}\label{decompgeneral}
\mathsf{c}(x_1,x_2)=\frac{M}{M+C}\left\{\int_{\Omega}\left(\sum_{A\in F_\omega}1_A(x_1)1_A(x_2)\mathsf{c}(x_1,x_2)\right)d\mathbb{P}(\omega)+w(x_1,x_2)\right\}\ ,
\end{equation}
with $w$ positive definite and (at least for the costs (a)-(c) from Remark \ref{morecosts}, and using the asymptotic profile $g(x)=|x|^{-s}f(x/|x|)$ as defined there)
\begin{equation}\label{roughboundgeneral}
E_{N, w}^\mathrm{xc}(\mu)\ge E_{\mathrm{GC},N, w}^\mathrm{xc}(\mu)\ge-\frac{C_\mathrm{rough}(f,s, d)}{M}{N^{1+\frac{s}{d}}\int_{\mathbb{R}^d} (\rho(x))^{1+\frac{s}{d}}}dx-\frac{C_\mathrm{rough}(f,s, d) }{M}{R_1^{-s}}(N-1)\ ,
\end{equation}
for some $C_\mathrm{rough}(f,s,d)$ depending only on $f,s,d,$ and $C$ depends only on $s,d$. For an explanation of the above, see Remark \ref{contmorec2}.
\end{remark}

\subsection{Strategy of the proof of Theorem \ref{upboundcont}}
\subsection*{The case of uniform marginals}
We start by individuating the constant $C(s,d)$ in \eqref{upb} in the main theorem. To determine its value, it suffices to consider the case $\rho(x)=1_\Lambda/|\Lambda|$ for a specific Borel-measurable set $\Lambda$ of positive measure. We treat this case in Section \ref{secpreliminaries}, and this is the result of Proposition \ref{unif}. The proof is by a classical method based on subadditivity and on the scaling properties of our functionals, which is direct consequence of the $(-s)$-homogeneity of our kernels.
\subsection*{Marginals with piecewise constant density}
The first main difficulty is to prove the sharp upper/lower bound from \eqref{upb} for the case of $\mu$ with  density $\rho$ which is the sum of characteristic functions of a finite union of disjoint hyperrectangles. (Such densities $\rho$ are called ``piecewise constant functions'' in an analysis terminology and the measures $\mu$ are called ``mixtures of uniform measures'' in a probability terminology; throughout the paper, we use the former terminology.)

While the proof of the upper bound will be an easy consequence of the strategy above, the lower bound is a lot more involved and requires to use Proposition \ref{prop3ws}. The construction and main principle leading to this proposition are as described at an informal level below:
\subsection*{Use of the Fefferman-Gregg decomposition}
\begin{enumerate}
\item We construct a family of packings, here denoted by $\{F_\omega^l\}_{\omega\in \Omega_l}$, each $F_\omega^l$ consisting of balls of small radii $R_1,\ldots,R_M$, forming a geometric series, and then periodized at a slightly larger scale $\sim l$. 

The parameters $M,l,R_1,\ldots, R_M$ are fixed in the Swiss cheese Lemma \ref{cheeselemma}.

For the purposes of our schematic explanation, only the parameter $l$ will be relevant, and the other parameters, as well as the specific choice of $F_\omega^l$, are used in the proof of Proposition \ref{prop3ws}. For the related explanation see Section \ref{ssecfefferman}.

\item Separately for each cover $F_\omega^l$, we perform the following decomposition of the kernel $\mathsf{c}$ (see also (\ref{decomptouse})):
\begin{equation}\label{followingdecomp}
c(x_1,\ldots, x_N):=\sum_{1\le i\neq j\le N}\mathsf{c}(x_i,x_j)=\sum_{A\in F_\omega^l}\sum_{1\le i\neq j\le N}1_A(x_1)1_A(x_2)\mathsf{c}(x_i,x_j) + err_\omega^l(x_1,\ldots,x_N),
\end{equation}
where $err_\omega^l$ is an error term, to be carefully estimated. The precise error estimate will be given later.
\item Equation \eqref{followingdecomp} translates to an ``averaged'' inequality, in which the expectation is taken with respect to a suitable probability measure on $\Omega_l$ (for the explicit description, see \eqref{lbrough3} below):
\begin{equation}\label{averagedconverse}
\int_{\Omega_l}\left(\sum_{A\in F_\omega^l} E_{N\mu(A)}^\mathrm{xc}\left(\frac{\mu|_A}{\mu(A)}\right)\right) d\mathbb P_l(\omega)\le E_N^\mathrm{xc}(\mu) + err,
\end{equation}
where in order to present the heuristics better we forget for a moment about the complications coming from the fact that $N\mu(A)$ may not be integer or that $\mu(A)$ may be zero.
\item Since we are considering the case that the density $\rho$ is constant on each of a disjoint union of regular enough sets (hyperrectangles, in this case), it turns out that the contribution of those $A\in F_\omega^l$ on which $\rho$ is not constant, becomes negligible in the limit $l\to0$. Suppose for simplicity of exposition that for all $A\in F_\omega^l$ we have $\rho|_A$ constant, equal to $|A|^{-1}\mu(A)$. Then for each $\omega\in\Omega_l$ the expression \eqref{averagedconverse} approximates a Riemann sum. Indeed for each term in the sum \eqref{averagedconverse} we find
\begin{equation}\label{rescale}
E_{N\mu(A)}^\mathrm{xc}\left(\frac{\mu|_A}{\mu(A)}\right) \simeq - (N\mu(A))^{1+s/d}C(s,d)|A|^{-s/d} = -C(s,d) N^{1+s/d}\int_A\left(\rho|_A\right)^{1+s/d} dx,
\end{equation}
and then, using the fact that our packings are asymptotically precise as $l\to 0$, we find that uniformly in $\omega\in\Omega_l$, the sums appearing \eqref{averagedconverse} are asymptotic to the desired integral:
\begin{equation}\label{tothedesiredintegral}
\int_{\Omega_l}\left(\sum_{A\in F_\omega^l} E_{N\mu(A)}^\mathrm{xc}\left(\frac{\mu|_A}{\mu(A)}\right) \right)d\mathbb P_l(\omega)=-(1+o(1))N^{1+s/d}\int_{\mathbb R^d}\rho^{1+s/d}(x)dx,\quad\text{ as }\quad l\to 0, N\to\infty.
\end{equation}
\end{enumerate}
The way to treat the error terms $err$ from \eqref{averagedconverse}, is to first note that they correspond to the same type of minimization problem as the starting one, but with cost $w$ as in Proposition \ref{prop3ws}, i.e. $err=E_{N,w}^\mathrm{xc}(\mu)$.

For the $err$-term to really give ``small error'' contribution with the next-order scaling $N^{1+s/d}$, and not for example a leading-order contribution of order $N^2$, one must use, in a sharp way, the \emph{screening} (or \emph{charge nullity}) behavior of minimizers $\gamma_N$, quantifying that $\gamma_N$,s $F_N^\mathrm{OT}$-type energy is locally \emph{cancelled to leading order} by the corresponding energy coming from the mean field.

\par The criteria for this cancellation to happen was recently formalized in \cite{CFP} in the study of the first order term in the asymptotics of $F_N^\mathrm{OT}$: the \emph{positive definiteness} of the kernel plays the main role, and in \cite{Petr15} it was proved that the \emph{necessary and sufficient} condition for the convergence to the mean field is that the kernel should be balanced positive definite. This necessary and sufficient criterion is precisely why it is a posteriori essential to require $w$ to be positive definite in Proposition \ref{prop3ws}, a property without which we would not be able to obtain the desired bounds in the second point of that proposition, which allow us to conclude \eqref{tothedesiredintegral} from \eqref{averagedconverse}.
\subsection*{The appearance of the functional $E_\mathrm{GC}$}
In the above description of the core of our proof, we did not consider the technical problem given by the fact that while using truncations of our cost to obtain localization, the number of marginals in the so-created ``truncated transport plans'' is actually not constant. This and the fact that $N\mu(A)$ appearing in \eqref{averagedconverse} is not necessarily integer, justifies here, like it did in \cite{LewLiebSeir17} and in previous works, the introduction of the grand-canonical version $E_{\mathrm{GC},N,\mathsf{c}}^\mathrm{xc}$ of $E^\mathrm{xc}_{N,\mathsf{c}}$, in which we fix the number of marginals only ``in average''. This functional is then used as a technical tool and as a replacement of $E^\mathrm{xc}_{N,\mathsf{c}}$ throughout the paper. Furthermore, the \textit{key} property of the grand canonical optimal transport problem $F_{\mathrm{GC},N,\mathsf{c}}^\mathrm{OT}$ that allows us to obtain (\ref{averagedconverse}), is that it splits the optimal transport problem on sums of costs defined on disjoint sets into sums of optimal transport problems on the costs on disjoint sets, as is detailed in Lemma \ref{subadd_gcb} from Section \ref{sssecsplittingthecost}.
\subsection*{Approximation of more general densities $\rho$ by piecewise constant ones}
The passage from piecewise constant marginals to our class of densities follows then by approximating the respective density by piecewise constant ones.

\subsection*{Small oscillations for $E_\mathrm{GC}$}
The proof of Theorem \ref{monincrsub} is based again on the Fefferman-Gregg decomposition, which in this case can be done for two distinct values of $N, \widetilde N$ at the same time. Here we look first once more to piecewise constant marginals as for Theorem \ref{upboundcont}, as the arising further approximation constants have a tame dependence on the $L^p$-norms of the marginals and can be controlled well. 

The core of the strategy consists of splitting our domain via Fefferman-Gregg decompositions. Then we ensure that all but an asymptotically negligible proportion of the small balls in the decomposition are completely contained in parts where the marginal is constant, and thus ``cut out'' uniform marginals. We next compare the uniform-case asymptotics separately to balls coming from the decompositions done at $N$ and at $\tilde N$ marginals, and use the precise asymptotic valid in that case to ensure the oscillation bound.

Note here that the small oscillation bound strategy cannot work for the initial functional $E^\mathrm{xc}$, because in order to bound a difference as appearing in \eqref{smalloscgc}, we require \emph{matching upper and lower bounds} of each of the two quantities, in terms of the same Fefferman-Gregg-sums. The (rigorous version of the) lower bound as in \eqref{averagedconverse} figures a sum of $E_{\mathrm{GC}}^\mathrm{xc}$-terms, and must be complemented by an (asymptotically) matching upper bound, which we can only obtain via the classical tool of subadditivity. As a sharp subadditivity bound for $E^\mathrm{xc}$ only furnishes upper bounds via $E^\mathrm{xc}$-terms, and a priori we have only the strict inequality $E^\mathrm{xc}_\mathrm{GC}<E^{\mathrm{xc}}$, this forces us in order to find a sharp \emph{matching upper bound} to work with $E_\mathrm{GC}^\mathrm{xc}$.

\subsubsection{Plan of the paper} 
In \textbf{Section \ref{secpreliminaries}} we establish some preliminary definitions and useful formulas.

\par The main result in \textbf{Section \ref{secoptimalconstant}} is the optimal sharp upper bound in the next-order asymptotics, for the case of probability measures $\mu$ whose density is piecewise constant on a union of hyperrectangles with disjoint interiors.

\par In \textbf{Section \ref{seclowerboundpwconstant}} we provide the sharp lower bound matching the one from Section \ref{secupperbound}. This is the core result of our paper. In \textbf{Section} \ref{sssecsplittingthecost} we introduce the ``grand-canonical version'' $E_{\mathrm{GC}, N}^\mathrm{xc}$ of $E_{N}^\mathrm{xc}$ from \eqref{ExcGC}, and we give some  preliminary lemmas, to be used in the rest of paper. We present the Fefferman-Gregg decomposition for our kernel in a self-contained way in \textbf{Section \ref{ssecfefferman}}. The proof of the sharp lower bound result of Proposition \ref{mixextlowb} for piecewise constant densities is the object of \textbf{Section \ref{sseclowerboundssecfeffermandecompositiontion}}. In \textbf{Section} \ref{ssecproofmainthm} we show the proof of Theorem \ref{upboundcont}.

\par In \textbf{Section} \ref{secsmalloscprop} we give the proof to our second main result, Theorem \ref{monincrsub}. 

\par In \textbf{Appendix \ref{appendixfefferman}} we provide detailed proofs of the results pertaining to our decomposition of kernels present in Section \ref{seclowerboundpwconstant}. \textbf{Appendix \ref{seccomput}} contains a brief remark to the sharp function spaces of densities to be expected in our generalized setting, in \textbf{Appendix \ref{secunifboundLO}} we give a statement of a Lieb-Oxford bound which is uniform in $s$. \textbf{Appendix \ref{Remark4.9}} presents the proofs of some key properties of $E_{\mathrm{GC}, N}^\mathrm{xc}$ stated in Section \ref{sssecsplittingthecost}, and Appendix \textbf{Section} \ref{ssecoptimaltransport} gives some helpful optimal transport results.
\section{Preliminaries and notation}\label{secpreliminaries}
We will use when convenient the notation $\mathsf{c}(x,y):=g(x-y)$. We assume from now on that $\gamma_N\in\calP_{sym}^{N}(\mathbb{R}^d)$ is a solution to \eqref{mainmin}; such a solution exists for our costs $\mathsf{c}$ of interest by standard arguments as
given e.g in Villani \cite{Vil}. We note that if $\mathsf{c}(x,y)=|x-y|^{-s}$ for $\mu$ with density $\rho\in L^1(\mathbb{R}^d)$, we have the following function space sharp requirements that arise naturally in relation with this type of kernel:
\begin{subequations}\label{finmu}
\begin{eqnarray}
\sup_{x\in \mathbb R^d}\int_{\mathbb R^d} |x-y|^{-s}\rho(y)dy<\infty&\mbox{ for }&\rho\in L^{\frac{d}{d-s},1}(\mathbb{R}^d),\label{finmu1}\\
\int_{\mathbb R^d}\int_{\mathbb R^d} |x-y|^{-s}\rho(x)\rho(y)dx\ dy<\infty&\mbox{ for }&\rho\in L^{\frac{2d}{2d-s},2}(\mathbb R^d),\label{finmu2}
\end{eqnarray}
\end{subequations}
where the spaces $L^{p,q}(\mathbb{R}^d)$ are the Lorentz spaces (see Appendix Section \ref{seccomput}). Note in particular that for $0<s<d$ there follows $\frac{d}{d-s}>1+\frac{s}{d}>\frac{2d}{2d-s}$, therefore the requirement \eqref{finmu1} is stronger than $\rho\in L^{1+s/d}(\mathbb R^d)$ and \eqref{finmu2} is less restrictive than $\rho\in L^{1+s/d}(\mathbb{R}^d)$, at least for $\rho\in L^1(\R^d)$.
\begin{remark}\label{boundrem}
Let $\mathsf{c}(x,y)=|x-y|^{-s}$ and set $0< s<d$. Then for all $N\ge 2$ and all $\mu\in\mathcal {P}(\mathbb R^d)$ with $\rho\in L^{1+\frac{s}{d}}(\mathbb R^d)$, we have for some $c_{\mathrm{LO}}(s,d)>0$ which does not depend on $N$ and $\mu$
\begin{equation}\label{loterm1}
-c_{\mathrm{LO}}(s,d)\int_{\mathbb{R}^d}\rho^{1+\frac{s}{d}}(x)dx \le N^{-1-\frac{s}{d}}\left[F_{N,s}^\mathrm{OT}(\mu)-N^2\int_{\mathbb{R}^d}\int_{\mathbb{R}^d} g(x-y)d\mu(x)d\mu(y)\right]\le 0\ .
\end{equation}
Due to \eqref{finmu}, the space $\rho\in L^1(\mathbb{R}^d)\cap L^{\frac{2d}{2d-s},2}(\mathbb R^d)$ gives a sharp condition ensuring that the mean field energy for $\rho$ is finite, and if also $\rho\in L^{1+\frac{s}{d}}(\mathbb R^d)$ then the next-order term is finite too and \eqref{loterm1} follows by Lemma 16 from \cite{LundNamPort}, which result extends the Lieb-Oxford inequality from $s=1,d=3,$ to $0<s<d$. 
\end{remark}
An optimal transport result that we will use in the paper is the following many-marginals Monge-Kantorovich duality result, proved in \cite{depascaleduality} for the Coulomb cost, and later extended to more general costs in \cite{ButChampdePas16}. Following \cite{ButChampdePas16}, we assume that
\begin{equation}\label{ass_c_main_intro}
\mathsf{c}(x,y)=g(x-y)=l(|x-y|)\quad\text{where}\quad l:[0,\infty)\to[0,+\infty]\quad\text{is}\quad\left\{\begin{array}{l}\mbox{continuous on $(0,\infty)$}\ ,\\\mbox{strictly decreasing}\ ,\\\mbox{such that }\lim_{t\to 0^+}l(t)=+\infty\ .\end{array}\right.
\end{equation}
\begin{proposition}\label{depascdual}
Let $\mathsf{c}$ satisfy conditions \eqref{ass_c_main_intro}. Then for any $\mu\in {\cal P}(\mathbb{R}^d)$ and any $N>1$, the equality
\[
F_{N,\mathsf{c}}^\mathrm{OT}(\mu)=\sup_{f\in {\cal I}_{\mu}}\left\{N\int_{\mathbb{R}^d} f(x)d\mu(x):\sum_{i=1}^N f(x_i)\le \sum_{1\le i\neq j\le N} \mathsf{c}(x_i,x_j)\right\}
\]
holds, where ${\cal I}_{\mu}$ denotes the set of $\mu$-integrable functions and the pointwise inequality is satisfied everywhere.
\end{proposition}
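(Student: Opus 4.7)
The plan is to prove both inequalities of the duality separately. The direction $F_{N,\mathsf{c}}^\mathrm{OT}(\mu)\ge \sup$ is the easy weak-duality half: given any admissible $f\in\mathcal{I}_\mu$ and any symmetric plan $\gamma_N\in\calP_{sym}^N(\R^d)$ with all marginals equal to $\mu$, integrating the pointwise constraint $\sum_{i=1}^N f(x_i)\le \sum_{i\neq j}\mathsf{c}(x_i,x_j)$ against $\gamma_N$ yields $N\int f\,d\mu \le \int_{(\R^d)^N}\sum_{i\neq j}\mathsf{c}(x_i,x_j)\,d\gamma_N$, since each one-dimensional marginal of $\gamma_N$ is $\mu$. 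Taking the infimum over $\gamma_N$ and then the supremum over $f$ proves this inequality.

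For the strong-duality direction $F_{N,\mathsf{c}}^\mathrm{OT}(\mu)\le \sup$, the strategy is to combine a cost truncation with classical multi-marginal Kantorovich duality, and then exploit the $S_N$-invariance of the $N$-marginal cost to collapse the vector of Kantorovich potentials into a single function. Truncation is needed because the singularity $\lim_{t\to 0^+}l(t)=+\infty$ prevents a direct application of classical duality, which is formulated for bounded continuous costs. Concretely, set $\mathsf{c}_k(x,y):=\min(\mathsf{c}(x,y),k)$, which is bounded and continuous on $\R^d\times\R^d$ by \eqref{ass_c_main_intro}. Classical multi-marginal Kantorovich duality (see e.g.\ \cite{Vil}) applied to the bounded symmetric cost $\sum_{i\neq j}\mathsf{c}_k(x_i,x_j)$ with all marginals equal to $\mu$ produces potentials $f_1^{(k)},\ldots,f_N^{(k)}\in L^1(\mu)$ satisfying $\sum_{i=1}^N f_i^{(k)}(x_i)\le \sum_{i\neq j}\mathsf{c}_k(x_i,x_j)$ everywhere (after replacing each $f_i^{(k)}$ by an iterated $c$-transform, which is where the continuity and strict monotonicity of $l$ enter crucially) and $\sum_{i=1}^N\int f_i^{(k)}\,d\mu = F_{N,\mathsf{c}_k}^\mathrm{OT}(\mu)$.

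Next comes the symmetrization. Setting $\bar f^{(k)}:=\frac{1}{N}\sum_{i=1}^N f_i^{(k)}$ and averaging the constraint over permutations $\sigma\in S_N$, the $S_N$-invariance of $\sum_{i\neq j}\mathsf{c}_k(x_i,x_j)$ gives
\[
\sum_{i=1}^N \bar f^{(k)}(x_i) = \frac{1}{N!}\sum_{\sigma\in S_N}\sum_{i=1}^N f_{\sigma(i)}^{(k)}(x_i)\le \sum_{i\neq j}\mathsf{c}_k(x_i,x_j)\le \sum_{i\neq j}\mathsf{c}(x_i,x_j),
\]
so $\bar f^{(k)}$ is admissible for the original dual problem, while $N\int \bar f^{(k)}\,d\mu = \sum_{i=1}^N\int f_i^{(k)}\,d\mu = F_{N,\mathsf{c}_k}^\mathrm{OT}(\mu)$. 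Since $\mathsf{c}_k\uparrow \mathsf{c}$ pointwise, monotone convergence on any admissible plan (and a standard minimizing-sequence argument) gives $F_{N,\mathsf{c}_k}^\mathrm{OT}(\mu)\uparrow F_{N,\mathsf{c}}^\mathrm{OT}(\mu)$. Combining this with $\sup\ge N\int\bar f^{(k)}\,d\mu = F_{N,\mathsf{c}_k}^\mathrm{OT}(\mu)$ and letting $k\to\infty$ yields $\sup\ge F_{N,\mathsf{c}}^\mathrm{OT}(\mu)$, completing the proof.

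The main obstacle I expect is the upgrade from a $\mu^{\otimes N}$-almost-everywhere dual constraint (which is what abstract convex duality produces) to the everywhere pointwise constraint appearing in the statement. This relies on replacing the raw potentials by their $c$-transforms, for which one uses continuity of $\mathsf{c}$ off the diagonal and the strict monotonicity of $l$; a secondary technical point is to verify that $\bar f^{(k)}$, and its limit, remain $\mu$-integrable, which follows from the $c$-concavity of the $c$-transformed potentials. Both technical points are handled following the template of \cite{depascaleduality,ButChampdePas16}.
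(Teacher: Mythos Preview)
The paper does not actually prove this proposition: it is stated as a known result, with the proof attributed to \cite{depascaleduality} for the Coulomb case and to \cite{ButChampdePas16} for the more general costs satisfying \eqref{ass_c_main_intro}. So there is no ``paper's own proof'' to compare against beyond those references.

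Your sketch is a reasonable outline of how such duality results are established, and in spirit it follows the template of the cited works. The weak-duality direction is exactly right. For the strong-duality direction, your truncation-plus-symmetrization scheme is sound: the symmetrization step is correct as written, and the convergence $F_{N,\mathsf{c}_k}^\mathrm{OT}(\mu)\uparrow F_{N,\mathsf{c}}^\mathrm{OT}(\mu)$ follows by the tightness/lower-semicontinuity argument you allude to. You also correctly flag the genuine technical point, namely upgrading the dual constraint from $\mu^{\otimes N}$-a.e.\ to everywhere via iterated $c$-transforms; this is indeed where the hypotheses on $l$ (continuity off the diagonal, strict decrease, divergence at $0$) are used, and it is precisely the contribution of \cite{depascaleduality,ButChampdePas16}. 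One small caveat: the approximation scheme in \cite{depascaleduality} is not literally a truncation of the cost but rather an exhaustion by compact sets combined with a careful analysis of where optimal plans concentrate (off the diagonal), so if you want to cite those papers for the ``template'' you should be aware that the mechanics differ somewhat from what you wrote, even if the overall logic (approximate, apply classical duality, pass to the limit) is the same.
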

In preparation for the proof of Proposition \ref{unif} below, we first need the following subadditivity result,  proved in Lemma 2.5 from \cite{LewLiebSeir17}. We observe here that such subadditivity methods have been used previously in similar settings (see, e.g. \cite{HLSI09} or \cite{HLSII09}). Note that, unlike \cite{LewLiebSeir17}, we work in our statements with probability measures.

We recall here that our convention is that $F_{1,\mathsf{c}}^\mathrm{OT}(\mu)=0$ and thus $E_{1,\mathsf{c}}^\mathrm{xc}(\mu)=-\int_{{\mathbb{R}^d}\times {\mathbb{R}^d}}\mathsf{c}(x,y)d\mu(x)d\mu(y)$.
\begin{proposition}\label{subadd3}
Let $\mathsf{c}:\mathbb{R}^d\times\mathbb{R}^d\rightarrow \mathbb R\cup\{+\infty\}$. Consider $k$ probability measures $\mu_1, \ldots, \mu_k,$ with densities respectively equal to $\rho_1, \ldots,\rho_k$, such that the quantities below are well-defined and finite (for $\mathsf{c}(x,y)=|x-y|^{-s}$, let $\rho_i\in L^{1+\frac{s}{d}}(\mathbb{R}^d), i=1,\ldots,k$). Fix $M_1,\ldots, M_k\in\mathbb N_+$, and let $\mu$ be the probability measure with density $\frac{\sum_{i=1}^kM_i \rho_i}{\sum_{i=1}^kM_i}$. Then the following subadditivity relation holds:
\begin{equation}\label{subadd1gen}
E_{\sum_{i=1}^kM_i,\mathsf{c}}^\mathrm{xc}\left(\mu\right):=E_{\sum_{i=1}^kM_i,\mathsf{c}}^\mathrm{xc}\left(\frac{\sum_{i=1}^kM_i\mu_i}{\sum_{i=1}^kM_i}\right)\le \sum_{i=1}^kE_{M_i,\mathsf{c}}^\mathrm{xc}(\mu_i).
\end{equation}
\end{proposition}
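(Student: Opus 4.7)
The plan is to construct an explicit competitor for $F_{M,\mathsf{c}}^\mathrm{OT}(\mu)$ with $M:=\sum_i M_i$, built by tensorizing and symmetrizing optimal plans for each $F_{M_i,\mathsf{c}}^\mathrm{OT}(\mu_i)$. Concretely, let $\gamma_i\in\calP_{sym}^{M_i}(\mathbb{R}^d)$ realize $F_{M_i,\mathsf{c}}^\mathrm{OT}(\mu_i)$ for each $i=1,\ldots,k$ (existence as noted in the text). Form $\tilde\gamma:=\gamma_1\otimes\cdots\otimes\gamma_k\in\calP((\mathbb{R}^d)^M)$, viewed with the first $M_1$ coordinates drawn from $\gamma_1$, the next $M_2$ from $\gamma_2$, etc., and then define $\gamma$ as the symmetrization of $\tilde\gamma$ under the full permutation group $S_M$ acting on the $M$ coordinates, namely $\gamma=\frac{1}{M!}\sum_{\pi\in S_M}\pi_{\#}\tilde\gamma$.

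First I would verify that $\gamma\in\calP_{sym}^{M}(\mathbb{R}^d)$ has one-body marginal equal to $\mu$. The one-body marginal of $\tilde\gamma$ at coordinate $j$ is $\mu_{i(j)}$, where $i(j)$ is the block containing $j$. Averaging over $S_M$, the marginal at any fixed position $1$ equals $\frac{1}{M}\sum_{j=1}^M \mu_{i(j)}=\frac{1}{M}\sum_{i=1}^k M_i\mu_i=\mu$, as required. Next I would evaluate the cost of $\gamma$. Since the integrand $\sum_{a\neq b}\mathsf{c}(x_a,x_b)$ is itself $S_M$-symmetric, integration against $\gamma$ equals integration against $\tilde\gamma$. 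Splitting the double sum into within-block and cross-block contributions gives
\begin{equation*}
\int \sum_{a\neq b}\mathsf{c}(x_a,x_b)\,d\gamma \;=\; \sum_{i=1}^k F_{M_i,\mathsf{c}}^\mathrm{OT}(\mu_i) \;+\; \sum_{i\neq j} M_i M_j \int_{\mathbb{R}^d}\!\!\int_{\mathbb{R}^d}\mathsf{c}(x,y)\,d\mu_i(x)d\mu_j(y),
\end{equation*}
because within block $i$ the contribution is exactly $F_{M_i,\mathsf{c}}^\mathrm{OT}(\mu_i)$ by optimality of $\gamma_i$, while the $M_iM_j$ ordered cross-pairs each contribute the mean-field integral with respect to the respective marginals (the blocks are independent in $\tilde\gamma$).

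To conclude, I would subtract the full mean-field term $M^2\int\!\!\int \mathsf{c}\,d\mu d\mu$ from both sides of the inequality $F_{M,\mathsf{c}}^\mathrm{OT}(\mu)\le\int\sum_{a\neq b}\mathsf{c}\,d\gamma$. Expanding $\mu=\frac{1}{M}\sum_i M_i\mu_i$ yields the bilinear identity
\begin{equation*}
M^2\int\!\!\int \mathsf{c}\,d\mu d\mu \;=\; \sum_{i,j} M_iM_j\int\!\!\int \mathsf{c}\,d\mu_id\mu_j \;=\; \sum_i M_i^2 \int\!\!\int \mathsf{c}\,d\mu_id\mu_i \;+\; \sum_{i\neq j} M_iM_j\int\!\!\int \mathsf{c}\,d\mu_id\mu_j,
\end{equation*}
so the cross terms cancel exactly with the corresponding piece on the right-hand side, leaving precisely $\sum_i E_{M_i,\mathsf{c}}^\mathrm{xc}(\mu_i)$ on the right and $E_{M,\mathsf{c}}^\mathrm{xc}(\mu)$ on the left. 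This yields \eqref{subadd1gen}.

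The main obstacle is mostly bookkeeping: confirming that the symmetrized tensor product produces exactly $\mu$ as its one-body marginal, and verifying that the cross-block contribution in the cost integral exactly matches (up to sign) the off-diagonal piece of the mean-field expansion. The cancellation between cross-block terms and off-diagonal mean-field terms is the reason the inequality takes the clean additive form \eqref{subadd1gen} rather than a weaker form with residual $i\neq j$ interaction terms. Finiteness of all quantities follows from the assumption that $\rho_i\in L^{1+s/d}(\mathbb{R}^d)$ in the Riesz case (via the Lieb-Oxford-type bound recalled in Remark \ref{boundrem}) and from the stated integrability hypotheses in the general case.
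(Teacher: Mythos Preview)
Your proof is correct and is essentially the standard subadditivity argument that the paper invokes: the paper cites \cite{LewLiebSeir17} for the case $k=2$ (where exactly this tensor-product-and-symmetrize construction is used) and then inducts on $k$, whereas you carry out the same construction directly for general $k$. The content is identical; your version simply avoids the induction step.
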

\begin{proof}
The proof for $k=2$ can be found in \cite{LewLiebSeir17}. Given the result for $k=2$, the general case follows by induction on $k$: for the inductive step one can apply the $k=2$ case to the measures $\mu=\mu_k, \mu'=\tfrac{\sum_{i=1}^{k-1}M_i\mu_i}{\sum_{i=1}^{k-1}M_i}$ and the numbers $M=M_k, M'=\sum_{i=1}^{k-1}M_i$.
\end{proof}
We state in Lemma \ref{scalOT} the different scalings of $F_{N,\mathsf{c}}^\mathrm{OT}(\mu)$ under change of $\mu$ only. The proof follows directly via Proposition \ref{depascdual}, and will be omitted. See also Corollary 2.6 from \cite{CaffMcCann} for an adaptation to the case of finite measures of the usual two-marginals Monge-Kantorovich duality.
\begin{lemma}\label{scalOT}
Let $\mu\in\mathcal{P}(\mathbb{R}^d)$ be {a probability measure with a density $\rho$ such that the quantities defined below are all finite}. Let $\mathsf{c}:\mathbb{R}^d\times\mathbb{R}^d\rightarrow\mathbb{R}\cup\{+\infty\}$. 
\begin{itemize}
\item [(a)] If we replace $\mu$ by $\beta \mu$ for $\beta>0$, then we get
\begin{equation}\label{scalmu}
F_{N,\mathsf{c}}^\mathrm{OT}(\beta\mu)=\beta F_{N,\mathsf{c}}^\mathrm{OT}(\mu),~~\mbox{which converges to}~~\beta\int_{\mathbb{R}^\times\mathbb{R}^d} c(x,y)d\mu(x)d\mu(y)~~\mbox{as}~~N\to\infty.
\end{equation}
%
%
%
\item [(b)] Set $0<s<d$ and let $\mathsf{c}(x,y)=|x-y|^{-s}$. If we replace for $\alpha>0$, $\mu$ by $\mu_\alpha$ defined as $d\mu_\alpha(x)=\rho_\alpha(x)dx$ with $\rho_\alpha(x)=\rho(\alpha x)$, then we get for such $\mathsf{c}$
\begin{equation}\label{scalrho}
F_{N,s}^\mathrm{OT}(\mu_\alpha)=\alpha^{-d+s}F_{N,s}^\mathrm{OT}(\mu),~~\mbox{which converges to}~~\alpha^{-d+s}\int_{\mathbb{R}^\times\mathbb{R}^d} c(x,y)d\mu(x)d\mu(y)~~\mbox{as}~~N\to\infty.
\end{equation}
Note that the transformation $\mu\mapsto\alpha^d\mu_\alpha$ maintains the property of $\mu$ of being a probability measure. Moreover, the above checks that
\begin{equation}\label{scalingexc}
E_{N,s}^\mathrm{xc}(\alpha^d\mu_\alpha) = \alpha^{s}E_{N,s}^\mathrm{xc}(\mu)\ .
\end{equation}
\end{itemize}
\end{lemma}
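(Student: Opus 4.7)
The plan is to derive both scaling identities directly from the Monge--Kantorovich duality of Proposition~\ref{depascdual}, supplemented by a linear change of variables in the primal problem for part (b). A preliminary remark is that one first extends the definition of $F_{N,\mathsf{c}}^\mathrm{OT}(\nu)$ to positive finite Borel measures $\nu$ by taking the infimum in \eqref{mainmin} over symmetric measures $\gamma_N$ of total mass $\nu(\mathbb{R}^d)$ having $N$ equal marginals $\nu$. The dual characterization of Proposition~\ref{depascdual} extends verbatim to this setting: the admissible class of potentials $\{f:\sum_{i=1}^N f(x_i)\le\sum_{1\le i\neq j\le N}\mathsf{c}(x_i,x_j)\}$ depends only on $\mathsf{c}$, while the objective $N\int f\,d\nu$ is linear in $\nu$.

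For part (a), since the admissible set in the dual problem does not depend on the marginal, the homogeneity follows from
\[
F_{N,\mathsf{c}}^\mathrm{OT}(\beta\mu)=\sup_{f}N\int f\,d(\beta\mu)=\beta\sup_{f}N\int f\,d\mu=\beta F_{N,\mathsf{c}}^\mathrm{OT}(\mu),
\]
which is the first identity in \eqref{scalmu}. The stated limit as $N\to\infty$ then follows by combining this with the standard leading-order fact that $F_{N,\mathsf{c}}^\mathrm{OT}(\mu)/N^2\to\int\int \mathsf{c}\,d\mu\,d\mu$ for probability marginals (up to the normalization convention used in the paper).

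For part (b), I would perform the change of variables $u_i=\alpha x_i$ directly in the primal. A symmetric plan $\gamma^\alpha_N$ with $N$ equal marginals $\mu_\alpha$ (the latter having density $\rho(\alpha\cdot)$ and therefore total mass $\alpha^{-d}$) is mapped by the diagonal pushforward $(x_1,\ldots,x_N)\mapsto(\alpha x_1,\ldots,\alpha x_N)$ to a symmetric plan $\gamma_N$ on $(\mathbb{R}^d)^N$ with marginals $\alpha^{-d}\mu$. The kernel transforms as $|x_i-x_j|^{-s}=\alpha^s|u_i-u_j|^{-s}$, so taking infima on both sides gives $F_{N,s}^\mathrm{OT}(\mu_\alpha)=\alpha^s F_{N,s}^\mathrm{OT}(\alpha^{-d}\mu)$. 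Applying part (a) with $\beta=\alpha^{-d}$ then yields $F_{N,s}^\mathrm{OT}(\mu_\alpha)=\alpha^{s-d}F_{N,s}^\mathrm{OT}(\mu)$, which is \eqref{scalrho}, and the limit assertion follows as in part (a).

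Finally, \eqref{scalingexc} is a direct corollary of the two scalings: since $\alpha^d\mu_\alpha$ is a probability measure, part (a) with $\beta=\alpha^d$ combined with part (b) gives $F_{N,s}^\mathrm{OT}(\alpha^d\mu_\alpha)=\alpha^d\cdot\alpha^{s-d}F_{N,s}^\mathrm{OT}(\mu)=\alpha^s F_{N,s}^\mathrm{OT}(\mu)$, while the same change of variables shows $\int_{\mathbb{R}^d\times\mathbb{R}^d}|x-y|^{-s}\,d(\alpha^d\mu_\alpha)(x)\,d(\alpha^d\mu_\alpha)(y)=\alpha^s\int_{\mathbb{R}^d\times\mathbb{R}^d}|x-y|^{-s}\,d\mu(x)\,d\mu(y)$; subtracting the two (after multiplying the mean-field term by $N^2$) produces \eqref{scalingexc}. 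I do not anticipate any serious obstacle here: every step reduces either to duality or to a linear change of variables, and the only mild point worth verifying is that Proposition~\ref{depascdual} applies unchanged to positive measures of arbitrary finite mass, which is standard.
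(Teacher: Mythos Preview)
Your proposal is correct and follows essentially the same approach the paper indicates: the paper simply states that the proof ``follows directly via Proposition~\ref{depascdual}, and will be omitted,'' with the convergence to the mean field attributed to \cite{CFP} and \cite{Petr15}. Your argument fleshes out exactly this route---duality for the homogeneity in part~(a), and a linear change of variables (combined with part~(a)) for part~(b)---so there is nothing substantively different between your proof and what the paper intends.
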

The convergence to the mean field from (\ref{scalmu}) and (\ref{scalrho}) follows from \cite{CFP} and \cite{Petr15}. We emphasise that, in view of \eqref{scalmu}, it will be crucial for us to ensure that the correct normalizations are used in the definition of $E_N^\mathrm{xc}(\mu)$, a fact that is automatically ensured because we work with probability measures $\mu$ rather than with positive measures. 

Before we proceed, we need to introduce regularity of sets. As in \cite{Fisher}, we say that a set $\Lambda$ has a \emph{$\phi$-regular boundary} if for some $t_0>0$ and for a continuous function $\phi:[0,t_0)\rightarrow \mathbb{R}^+$ with $\phi(0)=0$ there holds
\begin{equation}\label{phiregular}
\forall 0\le t\le t_0\ ,\ \left|\left\{x:\ d(x,\partial\Lambda)\le |\Lambda|^{1/d}t\right\}\right|\le\phi(t) |\Lambda|\ .
\end{equation}
We will mostly work with bounded Borel-measurable sets with $\phi$-regular boundary for some $\phi$, to make our estimates quantitative. Note that if a set has $\phi$-regular boundary then it is Jordan-measurable. We state next the following result, proved via classical subadditivity reasonings in \cite[Thm. 2.6]{LewLiebSeir17}) (see also \cite[Prop. 2]{RoRu} and \cite[Prop. 7.2.4]{Ru}), and  pointed out to us by M. Lewin in January 2016 in IHP. 
\begin{proposition}[Uniform electron gas]\label{unif}
Fix $0<s<d$ and let $\mathsf{c}(x,y)=|x-y|^{-s}$. Let $\mu\in\calP(\mathbb{R}^d)$ be a uniform measure, with density $\rho(x):=|\Lambda|^{-1}1_\Lambda(x)$ supported on a Borel set with $\phi$-regular boundary $\Lambda\subset\mathbb{R}^d$. Then there exists
\begin{equation*}
-\infty<\lim_{N\to\infty}N^{-1-\frac{s}{d}}\left[F_{N,s}^\mathrm{OT}(\mu)-N^2\int_{\mathbb{R}^d}\int_{\mathbb{R}^d} \frac{1}{|x-y|^s}d\mu(x)d\mu(y)\right] =\lim_{N\to\infty}N^{-1-\frac{s}{d}} E_{N,s}^\mathrm{xc}(\mu)<0\ .
\end{equation*}
Moreover, there exists $0<C(s,d)<\infty$, which is independent of $\Lambda$, such that
\begin{equation}\label{upbu1}
\lim_{N\to\infty}N^{-1-\frac{s}{d}} E_{N,s}^\mathrm{xc}(\mu)= -{C(s,d)}\int_\Lambda \rho^{1+\frac{s}{d}}(x)dx=-C(s,d)|\Lambda|^{-s/d}\ .
\end{equation}
\end{proposition}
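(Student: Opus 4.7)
The plan is to first handle the unit cube $\Lambda_0=[0,1]^d$ via classical subadditivity combined with the Lieb--Oxford bound, then extend to general cubes by scaling, and finally to $\phi$-regular sets by a Riemann-sum approximation. The essential ingredients are Proposition~\ref{subadd3} (subadditivity), Lemma~\ref{scalOT} (scaling) and Remark~\ref{boundrem} (Lieb--Oxford).

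For $\mu_0$ the uniform probability on $\Lambda_0=[0,1]^d$, set $e_N:=N^{-1-s/d}E^{\mathrm{xc}}_{N,s}(\mu_0)$. Partitioning $\Lambda_0$ into $k^d$ congruent sub-cubes of side $1/k$, the respective uniform probabilities $\mu_0^{(i)}$ are translates of a common scaled copy of $\mu_0$, and $\mu_0=k^{-d}\sum_{i}\mu_0^{(i)}$. Proposition~\ref{subadd3} applied with $M_1=\cdots=M_{k^d}=n$, together with the identity $E^{\mathrm{xc}}_{n,s}(\mu_0^{(i)})=k^{s}\,E^{\mathrm{xc}}_{n,s}(\mu_0)$ which follows from \eqref{scalingexc} and translation invariance of the cost, yields
\[
 E^{\mathrm{xc}}_{k^d n,s}(\mu_0)\le k^{d+s}\,E^{\mathrm{xc}}_{n,s}(\mu_0),\qquad\text{i.e.}\quad e_{k^d n}\le e_n.
\]
Remark~\ref{boundrem} gives $e_N\ge -c_{\mathrm{LO}}(s,d)$, while testing $F^{\mathrm{OT}}_{N,s}$ at the independent plan $\mu_0^{\otimes N}$ gives $e_N<0$. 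So $(e_N)$ is bounded, and a Fekete--Fisher argument yields the existence of $\lim_{N}e_N=:-C(s,d)\in[-c_{\mathrm{LO}},0]$. Strict positivity $C(s,d)>0$ follows by testing $F^{\mathrm{OT}}_{N,s}(\mu_0)$ at a quasi-uniform lattice transport plan whose cost minus the mean field is a strictly negative term of order $N^{1+s/d}$, obtained via a Riemann-sum expansion of the truncated Riesz lattice sum $\sum_{0<|v|\lesssim N^{1/d}}|v|^{-s}$.

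For a general $\Lambda$ with $\phi$-regular boundary, Lemma~\ref{scalOT}(b) reduces the problem to the case $|\Lambda|=1$. For small $l>0$, let $\Lambda^{\mathrm{int}}_l$ be the union of cubes from the grid $(l\mathbb{Z})^d+[0,l]^d$ that are contained in $\Lambda$: then \eqref{phiregular} gives $|\Lambda\setminus\Lambda^{\mathrm{int}}_l|\le\phi(l)$. Decomposing $\mu$ as a convex combination of the uniform measures on the grid sub-cubes and on the boundary strip, applying Proposition~\ref{subadd3} with integer approximation of the weights $M_i$, and invoking the cube case at scale $l$, one obtains the upper bound $\limsup_{N}e_N\le -C(s,d)+O(\phi(l))$. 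The matching lower bound follows by noting that Remark~\ref{boundrem} applied to the boundary strip contributes only $O(\phi(l))$ at the scale $N^{1+s/d}$. Letting $l\to 0$ yields $\lim_{N}e_N=-C(s,d)|\Lambda|^{-s/d}$.

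The main obstacle is the Fekete--Fisher step: the inequality $e_{k^d n}\le e_n$ only controls the multiplicative subsequence $N=k^d n$. To cover every $N$, one can either pass through the grand-canonical functional $E^{\mathrm{xc}}_{\mathrm{GC},N}$ of \eqref{ExcGC}, for which subadditivity extends to arbitrary real $N$ (see the forthcoming Lemma~\ref{subadd_gcb} in Section~\ref{sssecsplittingthecost}), or argue directly that modifying the particle count by $o(N)$ perturbs $e_N$ by $o(1)$, using the uniform Lieb--Oxford bound as a modulus of continuity. The same integer-weight discretization difficulty reappears in the approximation step of the third paragraph and is handled identically.
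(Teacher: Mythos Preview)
The paper does not give its own proof of this proposition; it states the result and cites \cite[Thm.~2.6]{LewLiebSeir17}, \cite[Prop.~2]{RoRu} and \cite[Prop.~7.2.4]{Ru} for the classical subadditivity argument. Your outline is precisely this classical route, so there is nothing to compare at the level of strategy. The cube case via subadditivity, scaling, and Lieb--Oxford is right, and your handling of the Fekete--Fisher gap (writing $N=k^dN_0+r$ with $r=o(N)$ and using subadditivity together with the uniform Lieb--Oxford bound on the $r$-piece) is exactly the standard fix.

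There is, however, a genuine gap in your passage to general $\Lambda$. Your upper bound via inner cube packing is correct, but the sentence ``the matching lower bound follows by noting that Remark~\ref{boundrem} applied to the boundary strip contributes only $O(\phi(l))$'' is not an argument: subadditivity goes only one way, and Lieb--Oxford on the boundary strip gives a \emph{lower} bound on $E^{\mathrm{xc}}_{N_{\mathrm{bd}}}$, which is the wrong sign for what you need here. The correct lower bound argument is dual to your upper bound: approximate $\Lambda$ from the \emph{outside}, either by embedding $\Lambda$ in a large cube $Q$, writing $\mu_Q=\tfrac{|\Lambda|}{|Q|}\mu_\Lambda+\tfrac{|Q\setminus\Lambda|}{|Q|}\mu_{Q\setminus\Lambda}$, and applying subadditivity to obtain
\[
E^{\mathrm{xc}}_{N'}(\mu_\Lambda)\ \ge\ E^{\mathrm{xc}}_N(\mu_Q)-E^{\mathrm{xc}}_{N''}(\mu_{Q\setminus\Lambda}),
\]
then using the cube limit on the first term and the already-proven \emph{upper} bound $\limsup e^{Q\setminus\Lambda}_{N''}\le -C(s,d)|Q\setminus\Lambda|^{-s/d}$ on the second (note $Q\setminus\Lambda$ is again $\phi$-regular). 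A short computation then gives $\liminf e_N^\Lambda\ge -C(s,d)|\Lambda|^{-s/d}$. Alternatively, one can use an outer grid cover $\Lambda^{\mathrm{ext}}_l\supset\Lambda$ with $|\Lambda^{\mathrm{ext}}_l\setminus\Lambda|\le\phi(l)$ and use only $E^{\mathrm{xc}}\le 0$ on the thin strip; either way, the key missing ingredient in your sketch is the outer (rather than inner) comparison.
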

Even though Proposition \ref{unif} is stated for sets with $\phi$-regular boundary, in fact for some of our results we only need to consider the more restricted case of hyper-rectangles with disjoint interiors, like in Theorem \ref{optconstmixt} below.
\begin{rmk}
Note that Proposition \ref{unif} can be adapted, with some more work, and by means of Proposition \ref{depascdual}, to the costs examples from Remark \ref{morecosts}, to derive in this case a limiting result as in \ref{pd}.
\end{rmk}
\section{Optimal constant for piecewise constant $\rho$}\label{secoptimalconstant}
As already explained in the introduction, before we proceed to the proof of our main statement for the general marginals case, we will need to understand first the case of marginals with piecewise constant densities. Even though stated only for $0<s<d$, the result actually holds in much greater generality of the costs provided that one can show existence of the corresponding limit to the one in Proposition \ref{unif}, and that one can extend the Fefferman-Gregg decomposition to these costs. In particular,  Proposition \ref{unif} can be shown to hold for the costs from Remark \ref{morecosts}.
\begin{theorem}[Optimal constant for marginals with piecewise constant density]\label{optconstmixt}
Fix $0<s<d$ and let $\mathsf{c}(x,y)=|x-y|^{-s}$. Set $k\ge 1$. For $i=1,\ldots,k,$ let $\mu_i\in\calP(\mathbb{R}^d)$ be the uniform measure on $\Lambda_i\subset\mathbb{R}^d$ with density $\rho_i$, where the $\Lambda_i, i=1,\ldots,k$ are Borel sets with $\phi$-regular boundaries and disjoint interiors. Then there exists $C(s,d)>0$, depending only on $s$ and $d$, such that
 \begin{equation} \label{eqadd4aa}
 {\lim}_{N\rightarrow\infty}N^{-1-s/d}E_{N,s}^\mathrm{xc}\left(\sum_{i=1}^k\alpha_i\mu_i\right)= {\lim}_{N\rightarrow\infty}N^{-1-s/d}E_{\mathrm{GC},N,s}^\mathrm{xc}\left(\sum_{i=1}^k\alpha_i\mu_i\right)=C(s,d) \int_{\R^d} \left(\sum_{i=1}^k\alpha_i\rho_i(x)\right)^{1+s/d}dx\ .
 \end{equation}
\end{theorem}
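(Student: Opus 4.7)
The plan is to establish matching upper and lower bounds on $N^{-1-s/d}E_{N,s}^{\mathrm{xc}}(\mu)$ and $N^{-1-s/d}E_{\mathrm{GC},N,s}^{\mathrm{xc}}(\mu)$, with $\mu=\sum_{i=1}^k\alpha_i\mu_i$. Since the grand-canonical problem is a relaxation, $E_{\mathrm{GC},N,s}^{\mathrm{xc}}(\mu)\le E_{N,s}^{\mathrm{xc}}(\mu)$, so it suffices to upper-bound $E_{N,s}^{\mathrm{xc}}$ and lower-bound $E_{\mathrm{GC},N,s}^{\mathrm{xc}}$ by the same quantity, which by Proposition \ref{unif} must be $-C(s,d)\int\rho^{1+s/d}\,dx$ (we read the stated $C(s,d)$ as the negative of the positive constant from Proposition \ref{unif}).

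\textbf{Upper bound via subadditivity.} For each $N\in\mathbb N_+$, choose nonnegative integers $M_i(N)$ with $\sum_i M_i(N)=N$ and $M_i(N)/N\to\alpha_i$ (e.g.\ $M_i=\lfloor\alpha_i N\rfloor$ with a one-index correction). Proposition \ref{subadd3} gives
\[
E_{N,s}^{\mathrm{xc}}\!\left(\tfrac{1}{N}\textstyle\sum_i M_i\mu_i\right)\le \sum_{i=1}^k E_{M_i,s}^{\mathrm{xc}}(\mu_i).
\]
Dividing by $N^{1+s/d}$ and applying Proposition \ref{unif} to each uniform marginal $\mu_i$, the right-hand side tends to $-C(s,d)\sum_i\alpha_i^{1+s/d}|\Lambda_i|^{-s/d}$. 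Since the $\Lambda_i$ have disjoint interiors and $\rho_i=|\Lambda_i|^{-1}1_{\Lambda_i}$, we have pointwise $\rho^{1+s/d}=\sum_i(\alpha_i/|\Lambda_i|)^{1+s/d}1_{\Lambda_i}$, so this limit equals $-C(s,d)\int\rho^{1+s/d}dx$. A perturbation argument transferring the bound from $\tilde\mu_N:=N^{-1}\sum_i M_i\mu_i$ (which differs from $\mu$ by $O(1/N)$) to $\mu$, using the uniform Lieb-Oxford bound of Remark \ref{boundrem} to control the $\tilde\mu_N\leftrightarrow\mu$ discrepancy, completes the upper bound.

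\textbf{Lower bound via Fefferman-Gregg.} Fix parameters $M\in\mathbb N_+$ and $\ell>0$ (to be sent to infinity/zero at the end) and invoke Proposition \ref{prop3ws}, producing a family $\{F_\omega^\ell\}_{\omega\in\Omega_\ell}$ of $(\ell\mathbb Z)^d$-periodic ball packings with probability law $\mathbb P_\ell$, and a positive-definite remainder $w$. Because $F_{\mathrm{GC}}^{\mathrm{OT}}$ splits additively across cost contributions supported on disjoint products $A\times A$ (Lemma \ref{subadd_gcb}), averaging \eqref{decomptouse} against $\mathbb P_\ell$ yields the bound
\[
E_{\mathrm{GC},N,s}^{\mathrm{xc}}(\mu)\ \ge\ \tfrac{M}{M+C}\!\left\{ \int_{\Omega_\ell}\!\sum_{A\in F_\omega^\ell}E_{\mathrm{GC},N\mu(A),s}^{\mathrm{xc}}\!\bigl(\tfrac{\mu|_A}{\mu(A)}\bigr)\,d\mathbb P_\ell(\omega)\ +\ E_{\mathrm{GC},N,w}^{\mathrm{xc}}(\mu)\right\}.
\]
By Proposition \ref{prop3ws}(2), the $w$-term is at least $-C(w,d,\epsilon)M^{-1}N^{1+s/d}$ up to a linear-in-$N$ correction, negligible at scale $N^{1+s/d}$. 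For the main term, $\phi$-regularity of $\partial\Lambda_i$ ensures that all but a fraction $o_\ell(1)$ of balls $A$ in each $F_\omega^\ell$ sit entirely inside some single $\Lambda_i$; for such $A$, $\mu|_A/\mu(A)$ is the uniform probability measure on $A$, so Proposition \ref{unif} gives
\[
E_{\mathrm{GC},N\mu(A),s}^{\mathrm{xc}}\!\bigl(\tfrac{\mu|_A}{\mu(A)}\bigr) = -(1+o_N(1))C(s,d)(N\mu(A))^{1+s/d}|A|^{-s/d} = -(1+o_N(1))C(s,d)N^{1+s/d}\!\int_A\rho^{1+s/d}dx,
\]
using $\mu(A)=\rho\big|_A\cdot|A|$ on such ``interior'' balls. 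Summing over $A\in F_\omega^\ell$ and using the packing density provided by the Swiss cheese lemma produces a Riemann sum converging, as $\ell\to 0$, to $-C(s,d)N^{1+s/d}\!\int\rho^{1+s/d}dx$. Taking $N\to\infty$, then $\ell\to 0$, then $M\to\infty$ in turn delivers the matching lower bound.

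\textbf{Main obstacle.} The hardest part is the coordinated triple limit: Proposition \ref{unif} is only an asymptotic statement inside each ball (so $N\to\infty$ first), the Riemann-sum convergence requires $\ell\to 0$ (so second), and the $w$-error is killed only by $M\to\infty$ (so third), and one must keep quantitative control of all three error streams simultaneously. The boundary-straddling balls are handled exactly by the $\phi$-regularity hypothesis, which bounds their total $\mu$-measure by $\phi(\ell/|\Lambda_i|^{1/d})\to 0$, and their $E^{\mathrm{xc}}$-contribution by the uniform Lieb-Oxford bound of Remark \ref{boundrem}. The technical role of $E_{\mathrm{GC},N,s}^{\mathrm{xc}}$ (in place of $E_{N,s}^{\mathrm{xc}}$) is essential throughout the lower bound: it is defined for the typically non-integer numbers $N\mu(A)$ arising from the ball decomposition, and it is precisely the functional for which Lemma \ref{subadd_gcb} legitimizes the additive cost-splitting used above.
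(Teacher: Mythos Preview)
Your approach is essentially the paper's: upper bound via subadditivity (Proposition \ref{mixextsub}) and lower bound via the Fefferman--Gregg decomposition (Proposition \ref{mixextlowb}, built on Lemma \ref{subadd12} and Proposition \ref{prop3ws}).

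One small gap in the upper bound: your ``one-index correction'' to force $\sum_i M_i=N$ can push some $M_i$ above $\alpha_i N$, which destroys the pointwise domination $\tilde\mu_N\le\mu$ you implicitly need; and the Lieb--Oxford inequality of Remark \ref{boundrem} is a \emph{lower} bound on $E^{\mathrm{xc}}$, so it does not supply the continuity-in-marginal you invoke. The paper sidesteps this by taking $M_i=\lfloor\alpha_i N\rfloor$ \emph{without} correction (so $\bar N:=\sum_i M_i\le N$), writing $\mu=\tfrac{\bar N}{N}\mu'+\tfrac{N-\bar N}{N}\mu''$ with $\mu'=\bar N^{-1}\sum_i M_i\mu_i$, and applying Proposition \ref{subadd3} twice together with $E^{\mathrm{xc}}_{N-\bar N}(\mu'')\le 0$.

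Two minor remarks on the lower bound: the per-ball asymptotic you cite is really the grand-canonical version \eqref{egclimitunif} (Lemma \ref{eglimgen}), since $N\mu(A)$ is non-integer; and the paper couples $M,\ell,R_1$ to $N$ via \eqref{boundlmr3} rather than taking your sequential limits $N\to\infty$, $\ell\to0$, $M\to\infty$, though either route works here.
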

The theorem follows immediately in view of Proposition \ref{mixextsub} and of Proposition \ref{mixextlowb} below.
\subsection{Optimal upper bound for piecewise constant $\rho$}\label{secupperbound}
\begin{proposition}[Optimal upper bound for piecewise constant marginals]\label{mixextsub}
Fix $0<s<d$ and let $\mathsf{c}(x,y)=|x-y|^{-s}$. Set $k\ge 1$. For $i=1,\ldots,k,$ let $\mu_i\in\calP(\mathbb{R}^d)$ be the uniform measure on $\Lambda_i$ with density $\rho_i(x)=\tfrac{1_{\Lambda_i}(x)}{|\Lambda_i|}$, where $\Lambda_i\subset\mathbb{R}^d, i=1,\ldots,k$ are Borel sets with $\phi$-regular boundary and disjoint interiors. Let $\alpha_i\in \mathbb{R}_{>0}, i=1,\ldots k,$ be such that $\sum_{i=1}^k\alpha_i=1$. Then if $C(s,d)>0$ is the optimal constant from Proposition \ref{unif}, we have
\begin{eqnarray}\label{subadd4}
 \limsup_{N\rightarrow\infty}N^{-1-s/d}E_{N,s}^\mathrm{xc}\left(\sum_{i=1}^k\alpha_i\mu_i\right)
 &\le& -C(s,d) \sum_{i=1}^k\alpha_i^{1+s/d}\left(\int_{\Lambda_i} \rho_i^{1+s/d}(x)dx\right)\nonumber\\
 &=& -C(s,d) \int_{\R^d} \left(\sum_{i=1}^k\alpha_i\rho_i(x)\right)^{1+s/d}dx\ .
 \end{eqnarray}
 \end{proposition}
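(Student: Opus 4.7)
The strategy is to apply the subadditivity bound of Proposition~\ref{subadd3} with integer weights approximating $N\alpha_1,\ldots,N\alpha_k$, and then substitute the sharp uniform-case asymptotics of Proposition~\ref{unif} into each of the resulting terms. The second equality in \eqref{subadd4} is a direct algebraic computation: since the $\Lambda_i$ are disjoint and $\rho_i=|\Lambda_i|^{-1}1_{\Lambda_i}$, on each $\Lambda_i$ the density of $\mu:=\sum_j\alpha_j\mu_j$ equals $\alpha_i/|\Lambda_i|$, so
\begin{equation*}
\int_{\mathbb R^d}\left(\sum_{i=1}^k\alpha_i\rho_i(x)\right)^{\!1+s/d}\!dx=\sum_{i=1}^k\alpha_i^{1+s/d}|\Lambda_i|^{-s/d}=\sum_{i=1}^k\alpha_i^{1+s/d}\int_{\Lambda_i}\rho_i^{1+s/d}(x)\,dx.
\end{equation*}

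First I would treat the case of rational weights $\alpha_i=p_i/q$ with common denominator $q\in\mathbb N_+$. Along the subsequence $N=nq$, the choices $M_i:=np_i$ are integers satisfying $\sum_i M_i=N$ and $M_i/N=\alpha_i$ exactly, so that the mixture $(\sum_iM_i\mu_i)/(\sum_iM_i)$ coincides with $\mu$. Proposition~\ref{subadd3} then yields
\begin{equation*}
E_{N,s}^{\mathrm{xc}}\!\left(\sum_{i=1}^k\alpha_i\mu_i\right)\le\sum_{i=1}^kE_{np_i,s}^{\mathrm{xc}}(\mu_i).
\end{equation*}
Dividing by $N^{1+s/d}=(nq)^{1+s/d}$, rewriting $(np_i)^{1+s/d}/(nq)^{1+s/d}=\alpha_i^{1+s/d}$, and invoking Proposition~\ref{unif} for each of the $k$ uniform marginals $\mu_i$ (as $n\to\infty$ so that each $np_i\to\infty$) gives the desired bound along this subsequence.

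For general real weights $\alpha_i$, I would choose for each $N\in\mathbb N_+$ integers $M_i^{(N)}\in\mathbb N$ with $\sum_iM_i^{(N)}=N$ and $|M_i^{(N)}-N\alpha_i|\le 1$, and apply Proposition~\ref{subadd3} to the perturbed marginal $\widetilde\mu_N:=\sum_i(M_i^{(N)}/N)\mu_i$, obtaining
\begin{equation*}
E_{N,s}^{\mathrm{xc}}(\widetilde\mu_N)\le\sum_{i=1}^kE_{M_i^{(N)},s}^{\mathrm{xc}}(\mu_i)=-C(s,d)N^{1+s/d}\sum_{i=1}^k(M_i^{(N)}/N)^{1+s/d}|\Lambda_i|^{-s/d}+o(N^{1+s/d}),
\end{equation*}
by Proposition~\ref{unif} applied to each $\mu_i$. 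Since $M_i^{(N)}/N\to\alpha_i$, the right-hand side, divided by $N^{1+s/d}$, tends to the desired limit $-C(s,d)\sum_i\alpha_i^{1+s/d}|\Lambda_i|^{-s/d}$.

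The main obstacle is then to replace $\widetilde\mu_N$ on the left by the target $\mu=\sum_i\alpha_i\mu_i$, uniformly in $N$. Two natural routes are available. The first is to build a direct competitor for $F^{\mathrm{OT}}_{N,s}(\mu)$ by symmetrizing an independent product of optimizers for the $M_i^{(N)}$-marginal problems on each $\mu_i$ and then ``redistributing'' the $O(1)$ mismatched marginals into the correct cells; since each additional marginal contributes $O(N^{s/d})$ to the OT cost and to the mean field, this perturbation is $o(N^{1+s/d})$ and does not affect the leading next-order term. The second route is to perform all of the above at the level of the grand-canonical functional $E^{\mathrm{xc}}_{\mathrm{GC},N,s}$, which by the definition \eqref{OTGC} allows non-integer weights $\alpha_n$ directly (take the competitor $\alpha_n=M_i^{(N)}/N$ etc., so that the marginal constraint $\sum n\alpha_n\mu_n=N\mu$ is satisfied exactly), obtain the required bound for $E^{\mathrm{xc}}_{\mathrm{GC},N,s}(\mu)$, and finally use the inequality $E^{\mathrm{xc}}_{\mathrm{GC},N,s}(\mu)\le E^{\mathrm{xc}}_{N,s}(\mu)$ to conclude. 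Passing to the $\limsup$ as $N\to\infty$ in either route yields \eqref{subadd4}.
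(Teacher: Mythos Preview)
Your overall strategy is exactly the paper's: apply the subadditivity Proposition~\ref{subadd3} with integer weights close to $N\alpha_i$, then feed in the uniform-case limit of Proposition~\ref{unif}. The difficulty you correctly flag---that subadditivity only gives a bound for $E_{N,s}^{\mathrm{xc}}(\widetilde\mu_N)$ with the perturbed marginal $\widetilde\mu_N=\sum_i(M_i^{(N)}/N)\mu_i$---is real, but neither of your proposed fixes works as stated.

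Your Route~2 goes the wrong way: the relation is $E_{\mathrm{GC},N,s}^{\mathrm{xc}}(\mu)\le E_{N,s}^{\mathrm{xc}}(\mu)$ (see \eqref{chainineq}), so an upper bound for the grand-canonical quantity says nothing about $E_{N,s}^{\mathrm{xc}}(\mu)$. Route~1 is too vague: the symmetrized product of the $\gamma_{M_i}$'s has marginal $\widetilde\mu_N$, not $\mu$, so it is not a competitor for $F_{N,s}^{\mathrm{OT}}(\mu)$, and ``redistributing $O(1)$ marginals'' is not a well-defined operation here.

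The paper avoids the whole issue by a simple trick you are missing. Set $q_i^N:=N\alpha_i$ and $\bar N:=\sum_i[q_i^N]\le N$ (an integer, with $N-\bar N\le k$). Then write the \emph{exact} decomposition
\[
\mu=\frac{\bar N}{N}\,\mu'+\frac{N-\bar N}{N}\,\mu'',\qquad \mu':=\frac{1}{\bar N}\sum_i[q_i^N]\mu_i,\quad \mu'':=\frac{1}{N-\bar N}\sum_i(q_i^N-[q_i^N])\mu_i,
\]
and apply Proposition~\ref{subadd3} \emph{twice}: first with weights $\bar N$ and $N-\bar N$ to get $E_{N,s}^{\mathrm{xc}}(\mu)\le E_{\bar N,s}^{\mathrm{xc}}(\mu')+E_{N-\bar N,s}^{\mathrm{xc}}(\mu'')$, and then discard the second term using $E^{\mathrm{xc}}\le 0$ (from \eqref{loterm1}). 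A second application of subadditivity to $\mu'$ with the integer weights $[q_i^N]$ gives $E_{\bar N,s}^{\mathrm{xc}}(\mu')\le\sum_i E_{[q_i^N],s}^{\mathrm{xc}}(\mu_i)$, and now the left-hand side really is $E_{N,s}^{\mathrm{xc}}(\mu)$. From here your concluding computation with Proposition~\ref{unif} and $[q_i^N]/N\to\alpha_i$ goes through verbatim.
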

\begin{rmk}The proof relies only on Propositions \ref{subadd3} and \ref{unif}, and thus extends to more general costs.
\end{rmk}
\begin{proof}
We would like to use the formula in Proposition \ref{subadd3}. However,  this will not be an immediate application to our case due to the form $\sum_{i=1}^k\alpha_i\mu_i$ of our measure, so we will first need to re-write the terms in our probability measure to bring them to a suitable form. In view of \eqref{scalmu}, and to preserve convergence to the mean field, we need to work with probability measures rather than just with positive measures.

\par We write $\alpha_i=\frac{q_i^N}{N}, q_i^N\in\mathbb{R}_{>0}, i=1,\ldots, k$, with $\sum_{i=1}^k q_i^N=N$. 
We then write, denoting $\bar N:=\sum_{i=1}^k[q_i^N]$, 
\begin{eqnarray}
\sum_{i=1}^k\alpha_i\mu_i&=&\sum_{i=1}^k\frac{q_i^N}{N}\mu_i=\sum_{i=1}^k\frac{[q_i^N]}{N}\mu_i+\sum_{i=1}^k\frac{q_i^N-[q_i^N]}{N}\mu_i=\frac{\bar N}{N}\sum_{i=1}^k\frac{[q_i^N]}{\bar N}\mu_i+\frac{N-\bar N}{N}\sum_{i=1}^k\frac{q_i^N-[q_i^N]}{N-\bar N}\mu_i\nonumber\\
&=&\frac{\bar N}{N}\mu'+\frac{N-\bar N}{N}\mu''\ ,
\end{eqnarray}
with the obvious definitions for $\mu'$ and $\mu''$.

\par From a double application of Proposition \ref{subadd3} we have
\begin{equation}\label{subadd4int}
E^\mathrm{xc}_{N,s}\left(\sum_{i=1}^k\alpha_i\mu_i\right)\le E^\mathrm{xc}_{\bar N,s}(\mu')+E^\mathrm{xc}_{N-\bar N,s}(\mu'')\le  E^\mathrm{xc}_{\bar N,s}(\mu')\le \sum_{i=1}^k E^\mathrm{xc}_{[q_i^N],s}(\mu_i)\ ,
\end{equation}
where for the second inequality we used that $E^\mathrm{xc}_{N-\bar N}(\mu'')\le 0$. Equation \eqref{subadd4int} leads to
\begin{eqnarray}\label{splitsub}
\limsup_{N\rightarrow\infty}N^{-1-\frac{s}{d}} E^\mathrm{xc}_{N,s}\left(\sum_{i=1}^k\alpha_i\mu_i\right)&\le& \sum_{i=1}^k\limsup_{N\rightarrow\infty}N^{-1-\frac{s}{d}}E^\mathrm{xc}_{[q_i^N],s}(\mu_i)\nonumber\\
&=&\sum_{i=1}^k\limsup_{N\rightarrow\infty}\bigg(\frac{[q_i^N]}{N}\bigg)^{1+\frac{s}{d}}([q_i^N])^{-1-\frac{s}{d}}E^\mathrm{xc}_{[q_i^N],s}(\mu_i)\nonumber\\
&=&\sum_{i=1}^k\alpha_i^{1+\frac{s}{d}}\limsup_{N\rightarrow\infty}([q_i^N])^{-1-\frac{s}{d}}E^\mathrm{xc}_{[q_i^N],s}(\mu_i)\ ,\nonumber
\end{eqnarray}
where for the second equality we utilised that $[q_i^N]/N\le\alpha_i\le \left([q_i^N]+1\right)/N, i=1,\ldots, k.$

\par Fix $\delta>0$. By Proposition \ref{unif} for large enough $N$ we have for all $i=1,\ldots,k,$
\[
([q_i^N])^{-1-\frac{s}{d}}E_{[q_i^N],s}^\mathrm{xc}(\mu_i)\le -C(s,d) \int_{\Lambda_i} \rho_i^{1+\frac{s}{d}}(x)dx+\delta\ ,
\]
where $C(s,d)>0$ is the constant from Proposition \ref{unif}. Thus, for any fixed $\delta>0$
\begin{eqnarray}\label{firstinsplita}
\limsup_{N\rightarrow\infty}N^{-1-\frac{s}{d}}\sum_{i=1}^kE_{[q_i^N],s}^\mathrm{xc}(\mu_i)\le\sum_{i=1}^k \alpha_i^{1+\frac{s}{d}}\left(-C(s,d)\,\int_{\Lambda_i} \rho_i^{1+\frac{s}{d}}(x)dx+\delta\right) .\nonumber
\end{eqnarray}
Taking $\delta\rightarrow 0$ in the last line in the above produces
\begin{equation}\label{firstinsplit}
\lim_{N\rightarrow\infty}N^{-1-\frac{s}{d}}\sum_{i=1}^kE_{[q_i^N],s}^\mathrm{xc}(\mu_i)\le -C(s,d)\,\sum_{i=1}^k \,\alpha_i^{1+\frac{s}{d}}\int_{\Lambda_i} \rho_i^{1+\frac{s}{d}}(x)dx=\int_{\R^d} \left(\sum_{i=1}^k\alpha_i\rho_i(x)\right)^{1+s/d}dx,
\end{equation}
where we used the fact that $\rho_i(\Lambda_j)=0$ for all $1\le i\neq j\le k$ due to the disjointness assumption on the $\Lambda_i$ and to the definition of $\rho_i$.
\end{proof}
\section{Optimal lower bound for piecewise constant $\rho$}\label{seclowerboundpwconstant}
The main result of this section is the following:
\begin{proposition}[Optimal lower bound for piecewise constant marginals]\label{mixextlowb}
Fix $0<s<d$ and let $\mathsf{c}(x,y)=|x-y|^{-s}$. Set $k\ge 1$. For $i=1,\ldots,k,$ let $\mu_i\in\calP(\mathbb{R}^d)$ be the uniform measure on $\Lambda_i\subset\mathbb{R}^d$ with density $\rho_i(x)=\tfrac{1_{\Lambda_i}(x)}{|\Lambda_i|}$, where $\Lambda_i$ are Borel sets with $\phi$-regular boundary and disjoint interiors. Let $\alpha_i\in \mathbb{R}_{>0}, i=1,\ldots k,$ be such that $\sum_{i=1}^k\alpha_i=1$. Then if $C(s,d)>0$ is the optimal constant from Proposition \ref{unif}, we have
\begin{eqnarray} \label{supadd4a}
 \liminf_{N\rightarrow\infty}N^{-1-s/d}E_{N,s}^\mathrm{xc}\left(\sum_{i=1}^k\alpha_i\mu_i\right)&\ge& -C(s,d) \sum_{i=1}^k\alpha_i^{1+s/d}\int_{\Lambda_i} \rho_i^{1+s/d}(x)dx\nonumber\\
 &=& -C(s,d)\int_{\cup_{i=1}^k\Lambda_i} \left(\sum_{i=1}^k\alpha_i\rho_i(x)\right)^{1+s/d}dx\ .
\end{eqnarray}
\end{proposition}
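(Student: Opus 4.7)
The plan is to invoke the Fefferman-Gregg decomposition of Proposition \ref{prop3ws} at a multiplicity parameter $M \in \mathbb{N}_+$ and a scale parameter $l > 0$, and then to take iterated limits in the order $N \to \infty$, $l \to 0$, $M \to \infty$. Inserting the kernel identity
\[
\mathsf{c}(x_1,x_2) \;=\; \frac{M}{M+C}\left\{\int_{\Omega_l}\sum_{A\in F_\omega^l} \frac{1_A(x_1)\,1_A(x_2)}{|x_1-x_2|^s}\,d\mathbb{P}_l(\omega)\;+\;w(x_1-x_2)\right\}
\]
into the $N$-body cost $\sum_{i\ne j}\mathsf{c}(x_i,x_j)$, evaluating against any admissible $N$-plan, and applying the grand-canonical splitting result (Lemma \ref{subadd_gcb}) to the localized part would yield, after subtracting the mean-field term decomposed the same way, the averaged lower bound
\[
E_{N,s}^{\mathrm{xc}}(\mu) \;\ge\; \frac{M}{M+C}\left\{\int_{\Omega_l}\sum_{A\in F_\omega^l} E^{\mathrm{xc}}_{\mathrm{GC},N\mu(A),s}\!\left(\frac{\mu|_A}{\mu(A)}\right)d\mathbb{P}_l(\omega)\;+\;E^{\mathrm{xc}}_{N,w}(\mu)\right\}.
\]
The grand-canonical functional appears naturally on the right-hand side because the values $N\mu(A)$ are generally not integers.

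For each ball $A$ of the packing whose closure lies entirely inside the interior of one of the $\Lambda_i$, the density of $\mu$ is constant on $A$, so $\mu|_A/\mu(A) = 1_A/|A|$ is the uniform probability on $A$. By Proposition \ref{unif}, together with its grand-canonical counterpart (obtained by the same scaling and subadditivity reasoning),
\[
E^{\mathrm{xc}}_{\mathrm{GC},N\mu(A),s}\!\left(\frac{1_A}{|A|}\right) \;=\; -C(s,d)\,(N\mu(A))^{1+s/d}|A|^{-s/d}+o\bigl((N\mu(A))^{1+s/d}\bigr).
\]
Using $\mu(A) = \alpha_i |A|/|\Lambda_i|$ one rewrites each such contribution as $-C(s,d)N^{1+s/d}\int_A(\alpha_i\rho_i)^{1+s/d}dx$ up to $o(N^{1+s/d})$ per ball. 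Summing over the interior balls $A\in F_\omega^l$, which by the Swiss cheese Lemma \ref{cheeselemma} together cover a fraction $1-O(1/M)$ of $\mathbb{R}^d$ uniformly in $\omega$, the sum is a Riemann approximation to $-C(s,d)N^{1+s/d}\int_{\R^d}\bigl(\sum_i\alpha_i\rho_i\bigr)^{1+s/d}dx$ in the limit $l\to 0$, where disjointness of the $\Lambda_i$ lets the integrand split pointwise.

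The residue $E^{\mathrm{xc}}_{N,w}(\mu)$ is bounded from below by $-\tfrac{C(w,d,\epsilon)}{M}N^{1+s/d}\|\rho\|_{1+s/d}^{1+s/d}-\tfrac{C(w,d,\epsilon)}{M}R_1^{-s}(N-1)$ via Proposition \ref{prop3ws}(2), so it contributes $O(N^{1+s/d}/M)$ at the relevant scale plus a lower-order $O(N/M)$ piece. Assembling these estimates and passing to the $\liminf$ first in $N$, then in $l$, then in $M$ (so that $M/(M+C)\to 1$) produces the matching lower bound \eqref{supadd4a}.

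The main obstacle will be the treatment of the ``boundary'' balls $A\in F_\omega^l$ that straddle some $\partial\Lambda_i$: for these, $\mu|_A/\mu(A)$ is a non-uniform mixture and Proposition \ref{unif} does not apply directly. The $\phi$-regular boundary hypothesis on each $\Lambda_i$ bounds the total $\mu$-mass of such balls by a quantity that vanishes as $l\to 0$, and applying the uniform Lieb-Oxford inequality \eqref{loterm1} separately on each boundary ball keeps their aggregate contribution $o(N^{1+s/d})$ after sending $l\to 0$. A secondary technical point is the careful verification that the Riemann-sum convergence is uniform in $\omega\in\Omega_l$, so that it survives the averaging under $\mathbb{P}_l$; this is where the algebraic dependence on $l$ of the dilated Swiss cheese families, highlighted in Section \ref{ssecfefferman}, enters crucially.
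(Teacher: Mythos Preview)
Your proposal is correct and follows essentially the same route as the paper: Fefferman--Gregg decomposition, grand-canonical splitting on each ball via Lemma~\ref{subadd_gcb}, the uniform electron gas asymptotics \eqref{egclimitunif} on interior balls, the Lieb--Oxford bound on boundary balls, and the error control from Proposition~\ref{prop3ws}(2). The one organizational difference is that the paper couples $l,M,R_1$ to $N$ through the explicit choices \eqref{boundlmr3} and takes a single $N\to\infty$ limit, rather than your iterated limits $N\to\infty$, $l\to0$, $M\to\infty$; both are valid, and note that for the lower-bound direction the paper uses only the crude volume bound $\sum_{A\subset\Lambda_i}|A|\le|\Lambda_i|$ (cf.~\eqref{useuniffurther2}, legitimate because each $E^{\mathrm{xc}}_{\mathrm{GC}}$ term is nonpositive), rather than any full Riemann-sum convergence.
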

In order to prove the statement of Proposition \ref{mixextlowb}, we will use the result in Proposition \ref{unif} above, together with the Fefferman-Gregg decomposition introduced below. 
\subsection{Fefferman-Gregg decomposition and positive definiteness}\label{ssecfefferman}
The main aim of this section is to introduce the setting and tools required to prove Proposition \ref{prop3ws} stated in the introduction.
\subsubsection{The translations of $\mathbb R^d$ and their action}\label{sssecgroupaction}
We consider the group of translations of $\mathbb R^d$ and its action
\begin{equation}\label{defgrx}
\tau_yx:=x+y\ .
\end{equation}
This action extends to functions, as usual: if $f:(\mathbb R^d)^k\to V$ where $V$ is a vector space, then we define $\tau_y(f)(x_1,\ldots,x_k) := f(\tau_yx_1,\ldots,\tau_yx_k)$. In particular we have the following properties:
\begin{enumerate}
\item For $y\in \mathbb R^d, f_1,f_2:(\mathbb R^d)^k\to V$ we have $\tau_y(f_1\cdot f_2) = (\tau_yf_1)\cdot (\tau_yf_2)$ if $\cdot$ is a scalar product on $V$
\item With the same notations as above, $\tau_y(f_1 * f_2)= (\tau_yf_1)*(\tau_yf_2)$.
\end{enumerate}
The first property above uses just the definition of the action and is valid more in general, whereas the second property uses the fact that translations are linear and measure-preserving transformations.

\par For $f:(\mathbb R^d)^k\to V$, where $V$ is a vector space, we define, whenever the following integral converges (in particular for compactly supported $f$, for example),
\[
\langle f \rangle (x_1,\ldots, x_k) := \langle f (x_1,\ldots, x_k)\rangle := \int_{\mathbb R^d}\tau_y f(x_1,\ldots, x_k) dy = \int_{\mathbb R^d} f(x_1+y,\ldots,x_k+y) dy\ .
\]
In the case $k=1$, we find that the above integral is defined for any $f\in L^1(\mathbb R^d, V)$ and equals the integral of $f$, by using the fact that translations are measure-preserving:
\[
\langle f\rangle (x) = \int_{\mathbb R^d} f(x+y)dy  =\int_{\mathbb R^d} f(y') dy'\ .
\]
By using the property 2. of our group action, and denoting
\[
f_-(x) := f(-x)\ ,
\]
we then find that, for $f_1,f_2:\mathbb R^d\to V$ such that $\left(f_1\right)_-*f_2\in L^1(\mathbb R^d,V)$, the following holds:
\begin{eqnarray*}
\langle f_1(x_1) f_2(x_2)\rangle&=&\int_{\mathbb R^d} f_1(x_1+y) f_2 (x_2+y) dy =\int_{\mathbb R^d} f_1(x_1-x_2+ y') f_2(y') dy'\nonumber\\
&=&\left(\left(f_1\right)_-*f_2\right)(x_2-x_1)\ ,
\end{eqnarray*}
which gives a function depending only of $x_1-x_2$.
\par In the case of $f_1=f_2=1_A$, which is the indicator function of a set $A$, we also find, as $x_1-x_2= \tau_yx_1 - \tau_yx_2$ for all $y\in\mathbb R^d$ and $\left(1_A\right) - = 1_{(-A)}$, that for any function $g_0:\mathbb R^d\to \mathbb R$ there holds
\begin{eqnarray*}
\left\langle 1_A(x_1)1_A(x_2) g_0(x_1-x_2) \right\rangle &=& g_0(x_1-x_2) \left\langle1_A(x_1)1_A(x_2) \right\rangle= g_0(x_1-x_2)1_{(-A)}*1_A (x_2-x_1)\nonumber\\
&:=& g_0(x_1-x_2)h_A(x_1-x_2)\ .
\end{eqnarray*}
Note that in particular, as a consequence of the fact that we performed an averaging $\langle\cdot\rangle$, there holds $h_{\tau A}=h_A$, i.e. the above function does not change under translations of $A$. In case $A=B_r(x)$ is a ball, by also using the fact that for $B_r:=B_r(0)$ we have $B_r=-B_r$, there holds
\begin{equation}\label{caseaisball}
\langle 1_{B_r(x)}(x_1)1_{B_r(x)}(x_2)\rangle=\langle 1_{B_r}(x_1)1_{B_r}(x_2)\rangle=h_{B_r}(x_1-x_2)=1_{B_r}*1_{B_r}(x_1-x_2)\ .
\end{equation}
The main idea developed by Fefferman \cite{Feff85}, Gregg \cite{Gregg89} and \cite{ConLiebYau89}, and later perfected by Graf and Schenker \cite{Grafschenker95} (based on the Yukawa potentials decomposition of \cite{ConLiebYau89}) for the case where one averages also over rotations $SO(d)$, is to reorder the above integrals such that sums over good packings or over tilings occur. For this, consider a lattice $\mathcal L\subset \mathbb R^d$ and let $\Omega_{\mathcal L}$ be a fundamental domain for $\mathcal L$. Then we can write
\begin{equation}
\int_{\mathbb R^d} f(y) dy =  \sum_{p\in \mathcal L} \int_{\Omega_{\mathcal L}}f(p+y')dy'.
\end{equation}
We may use this principle for reordering the $\langle f\rangle$ integrals as follows:
\begin{eqnarray}
\langle f(x_1,\ldots,x_k)\rangle &=& \int_{\mathbb R^d} f(x_1+y, \ldots, x_k + y)dy =\sum_{p\in \mathcal L}\int_{\Omega_{\mathcal L}}f(x_1 + p +y', \ldots x_k+p+y')dy'\nonumber\\
&=&\int_{\Omega_{\mathcal L}}\sum_{p\in \mathcal L+y'}f(x_1+p,\ldots,x_k+p) dy'.\label{averagerot}
\end{eqnarray}
\subsubsection{Localization procedures on packings} \label{ssseclocalization}
Note that we don't need our kernel to be rotation-invariant, i.e. we don't require $\mathsf{c}$ to have the form $\mathsf{c}(x,y)=l(|x-y|)$: this stronger requirement would be necessary only for the Graf-Schenker decomposition, in which averages over rotations appear, and not for the one we describe here.

\medskip

In order to obtain the asymptotic lower bound of our energies for kernels of the form $\mathsf{c}(x,y)=g(x-y)$, we desire to find a way in which to ``localize'' the interaction energies. Roughly stated, this means that if $F$ is a packing of $\mathbb R^d$ by disjoint sets, then we would like to find a decomposition of our kernel of the form 
\begin{equation}\label{lbrough1}
g(x-y)=\sum_{A\in F}1_A(x)1_A(y)g(x-y)+err(x,y),
\end{equation}
where the error term $err(x,y)$ can be well controlled. The kernels $1_A(x)1_A(y)g(x-y)$ then detect only interactions between points $x,y\in A$ and therefore, provided that the error term is well-behaved, a decomposition like \eqref{lbrough1} allows to reduce a study of our energies over the whole $\mathbb R^d$ to studies done ``locally'', separately on each $A\in F$. In fact in the concrete situations we face we will rather consider decompositions of the following averaged form, which is slightly more complicated than \eqref{lbrough1}:
\begin{equation}\label{lbrough2}
g(x-y) = \int_\Omega \left(\sum_{A\in F_\omega}1_A(x)1_A(y)g(x-y)\right) d\mathbb{P}(\omega) + err(x,y),
\end{equation}
where $(\Omega, \mathbb{P})$ is a probability space, and for each $\omega\in \Omega$ we decompose $g$ along a separate packing $F_\omega$. To have that $\mathbb{P}$ is a probability measure is not essential, but it simplifies our situation, because it implies that bounds done separately for each $\omega\in\Omega$ directly give the same bound for the integral. \textit{Crucial} in our arguments will be to construct a decomposition such that both $err(x,y)$ and the integral term are positive (semi)-definite. This will allow us to cancel the mean field term in our calculations, for which a weaker form of positive definiteness is both a necessary and sufficient assumption, as shown in \cite{Petr15}.
\par To pass to more concrete calculations, we look at periodic packings which are all related to a basic one by isometries. Consider again a lattice $\mathcal L$ (which below will always be $\mathcal L=\left(\ell \mathbb Z\right)^d$ for some suitably chosen $\ell>0$) with fundamental domain $\Omega_{\mathcal L}$ as in the previous section, and suppose now that $A_1,\ldots,A_n$ are pairwise disjoint sets such that $A_i\subset \Omega_{\mathcal L}, 1\le i\le n$. Observe that the following is a packing of $\mathbb R^d$ by copies of these sets:
\begin{equation}\label{deff}
F:=\{A_i-p :\ p\in\mathcal L,\ 1\le i\le n\}.
\end{equation}
If we particularize the formula \eqref{averagerot} to the case $k=2$ and $f(x_1,x_2)=1_{A_i}(x_1)1_{A_i}(x_2)$, we suppose that {$\Omega_{\mathcal L}$ is symmetric so that} $\Omega_{\mathcal L}=-\Omega_{\mathcal L}$, and we sum over $i=1,\ldots,n$, then keeping in mind \eqref{deff}, we find: 
\begin{eqnarray}
\sum_{i=1}^n\langle 1_{A_i}(x_1)1_{A_i}(x_2)\rangle &\stackrel{\text{\eqref{averagerot}}}{=}&\sum_{i=1}^n\int_{\Omega_{\mathcal L}}\sum_{p\in \mathcal L}(\tau_y1_{A_i})(x_1+y)(\tau_y1_{A_i})(x_2+y) dy\nonumber\\
&\stackrel{\text{\eqref{deff}}}{=}&\int_{\Omega_{\mathcal L}}\sum_{A\in F}1_{A-y}(x_1)1_{A-y}(x_2)dy= \int_{\Omega_{\mathcal L}}\sum_{A\in F+y}1_{A}(x_1)1_{A}(x_2)dy\label{basicformula}
\end{eqnarray}
where {$F+y=\{A+y: \ A\in F\}$.} Note that if the $A_i$ are balls of the form $B_{r_i}(y_i)$ with $M$ possible values $r_i\in\{R_1,\ldots,R_M\}$ then, due to the formula \eqref{caseaisball} and grouping together the $n$ balls into families having same radius $R_i$ we find, with $Y_i:=\{y\in \mathbb R^d: B_r(y)\in F, r=R_i\}$:
\begin{equation}\label{localizebasic}
\begin{aligned}
\frac{1}{|\Omega_{\mathcal L}|}\int_{\Omega_{\mathcal L}}
&\sum_{A\in F+y}1_{A}(x_1)1_{A}(x_2)g(x_1-x_2)dy=\frac{1}{|\Omega_{\mathcal L}|}\sum_{i=1}^n\langle 1_{A_i}(x_1)1_{A_i}(x_2)g(x_1-x_2)\rangle \\
&=g(x_1-x_2)\frac{1}{|\Omega_{\mathcal L}|}\sum_{i=1}^M\sum_{y\in Y_i}\langle 1_{B_{R_i}(y)}(x_1)1_{B_{R_i}(y)}(x_2)\rangle\\
&\stackrel{\text{\eqref{caseaisball}}}{=} g(x_1-x_2)\sum_{i=1}^M\frac{\#Y_i}{|\Omega_{\mathcal L}|}1_{B_{R_i}}*1_{B_{R_i}}(x_1-x_2)=g(x_1-x_2)\sum_{i=1}^M c_i\frac{1_{B_{R_i}}*1_{B_{R_i}}}{|B_{R_i}|}(x_1-x_2)\ ,
\end{aligned}
\end{equation}
where $c_i$ is the proportion of $\Omega_{\mathcal L}$ covered by those balls in $F$ that have radius $R_i$, i.e.
\begin{equation}\label{coeffballpack}
c_i:=\frac{\# Y_i |B_{R_i}|}{|\Omega_{\mathcal L}|}=\frac{\left|\bigcup_{x\in Y_i}B_{R_i}(x)\right|}{|\Omega_{\mathcal L}|}\ .
\end{equation}
Following \eqref{basicformula} and its link to \eqref{lbrough2}, we see that in order to study the error introduced by the localization, in the case $g(x)=|x|^{-s}$ one faces the problem of how to bound a contribution of the form
\begin{equation}\label{kernelerror}
\frac{1}{|x_1-x_2|^s}\left(1 - \sum_{i=1}^n c_i \langle 1_{A_i}(x_1)1_{A_i}(x_2)\rangle\right)\ ,
\end{equation}
where $c_i\in\mathbb R$ are coefficients chosen depending on the precise details of the strategy that one follows. The two main approaches that proved successful in order do this are as follows:
\begin{itemize}
\item Fefferman \cite{Feff85} for $d=3,s=1$ (later extended in \cite{Gregg89} to $0<s<3$) used formulas based on \eqref{localizebasic}. At the same time the kernel error analogous to \eqref{kernelerror} is positive definite, and it has a bound by $M^{-1}|x_1-x_2|^{-s} + O(M^{-2})$, where $M$ is the total number of scales used in the technical part of the construction. The $A_i$'s are here chosen to form a so-called ``Swiss cheese'' decomposition of $\Omega_{\mathcal L}$. One of the first appearances of this type of decomposition seems to be in \cite[Sec. III]{Lieblebowitz}, but see also the references therein. The parameter $M$ appearing in the bound denotes the total number of scales used in this packing.
\item Graf and Schenker \cite{Grafschenker95} (see also \cite{HLSI09}, \cite{HLSII09}) consider the situation where the $A_i$ form a partition of $\Omega_{\mathcal L}$, and are all isometric to a fixed simplex, in which case the sum in \eqref{kernelerror} reduces to the case where only a single term appears, as the quantity $h_A$ is invariant under isometry. In this case, it becomes crucial that at the same time \eqref{kernelerror} is positive definite and bounded by a value that decreases to zero as the scale of $A$ decreases to zero. Unfortunately these conditions seem to hold only in the case $s\le1,d=3$ and are explicitly proven in \cite{Grafschenker95} only for $s=1,d=3$. On the other hand, the above Fefferman method will be extended here to all $d$ and all $0<s<d$.
\end{itemize}
The main structural differences between our setting and that of \cite{Feff85} and \cite{Gregg89} are the following:
\begin{itemize}
\item The role of positive definiteness of $g$ is more stringent here than in \cite{Feff85, Gregg89}, because  the mean field $\frac{\mu\otimes\mu}{\mu^2(\mathbb R^d)}$ is a minimizer of $F_{\infty,g}^\mathrm{OT}$ only in that case, by \cite{Petr15}. We don't know how to explicitly describe the minimizers of $F_{\infty,g}^\mathrm{OT}(\mu)$ for $g$ not positive definite. The energy $E_{N,g}^\mathrm{xc}(\mu) =F_{N,g}^\mathrm{OT}(\mu) - \frac{N(N-1)}{2}F_{\infty,g}^\mathrm{OT}(\mu)$ is thus giving actual next-order terms only if $g$ is positive definite: thus only by having positive definite error terms in our kernel decompositions can we expect the first order effects to be cancelled by the associated mean fields. In this way the result of \cite{Petr15} is a crucial ingredient to keep in mind in our strategy.
\item The competitors for $F_{\infty,g}^\mathrm{OT}(\mu)$ are automatically competitors for $F_N^\mathrm{OT}(\mu)$: therefore the next-order energy $E_{N,g}^\mathrm{xc}$ is negative, and we look for a lower bound for it. On the contrary, \cite{Feff85, Gregg89} had a positive energy, and were interested in an upper bound for it.
\end{itemize}
\subsubsection{The packing lemma and the construction of $\Omega_\ell, \mathbb P_\ell, F_\omega^\ell$}\label{sssecpacking}
In order to be more specific about the form that \eqref{kernelerror} will have, we recall the Swiss cheese lemma, first introduced in \cite{Lieblebowitz} and used in \cite{Feff85, Hughes89, Gregg89}, for $d=3$. The main idea therein was to decompose regions in space into sets of disjoint balls with geometrically increasing radii. We need to adapt the computations from \cite{Hughes89} to the case of general dimension $d$. In the previous works the authors work at scales $R_1>1$, whereas here we only require $R_1>0$, since the argument in \cite{Hughes89} easily adapts to this more general setting.
\begin{lemma}[``Swiss cheese'' lemma in general dimension]\label{cheeselemma}
For each $d\ge 2$ define $C_d:=\frac{2^{d+1}}{|B_1|}$, where $B_1$ is the unit ball centred at zero $\{x\in \mathbb R^d: |x|\le 1\}$. Consider a sequence of real numbers $0<R_1<\cdots<R_M$ such that $R_{k+1}>(1+4\sqrt d |B_1|) R_k$ for all $k$. Assume that $Q$ is a cube of side length $l> 8\sqrt d |B_1|(M+C_d) R_M$. Then there exists a family $\mathcal B$ of disjoint balls $B_r(x)\subset Q,$ with $x\in Q$, such that 
\begin{itemize}
\item The balls in $\mathcal B$ have radii $r\in \{R_1,\ldots,R_M\}$.
\item For each $i=1,\ldots, M$, if 
\begin{equation*}
Y_i:=\left\{x\in Q:\ \exists B_r(x)\in\mathcal B, \, r=R_i\right\}\ ,
\end{equation*}
then there holds
\begin{equation}\label{cheesebound}
\frac{1}{M+C_d+1}<c_i:={\frac{\left|\bigcup_{x\in Y_i}B_{R_i}(x)\right|}{|Q|}}<\frac{1}{M+C_d}\ .
\end{equation}
\end{itemize}
\end{lemma}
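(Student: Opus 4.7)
The plan is to construct $\mathcal B$ by greedy insertion, processing the radii in order of \emph{decreasing} size from $R_M$ down to $R_1$. The reason to work top-down is that the scale-separation hypothesis $R_{k+1}>(1+4\sqrt d\,|B_1|)R_k$ makes every radius $R_i$ much smaller than the radii $R_j$, $j>i$, of balls already placed; consequently the ``exclusion region'' $B_{R_i+R_j}(x_j)$ around each previously chosen center $x_j$ has volume only slightly larger than $B_{R_j}(x_j)$ itself, and the total volume blocked at stage $i$ by earlier stages can be bounded by a quantity \emph{uniform in} $M$.

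Concretely, at stage $i$ I insert disjoint $R_i$-balls one at a time, each with center $x$ satisfying $d(x,\partial Q)\ge R_i$ and $|x-x_j|\ge R_i+R_j$ for every previously placed center $x_j$, and I stop the moment $c_i$ first enters the target window $(\tfrac{1}{M+C_d+1},\tfrac{1}{M+C_d})$. The \emph{upper} bound on $c_i$ is immediate: a single $R_i$-ball increments $c_i$ by $|B_1|R_i^d/l^d$, and the hypothesis $l>8\sqrt d\,|B_1|(M+C_d)R_M$ forces this increment to be strictly less than the width $\tfrac{1}{(M+C_d)(M+C_d+1)}$ of the window, so the first ball that pushes $c_i$ above $\tfrac{1}{M+C_d+1}$ automatically keeps it below $\tfrac{1}{M+C_d}$.

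For the \emph{lower} bound I argue by contradiction: suppose no further placement exists while $c_i\le\tfrac{1}{M+C_d+1}$. Then $\{x\in Q:d(x,\partial Q)\ge R_i\}$ is covered by $\bigcup_{j\ge i}\bigcup_{x_j}B_{R_i+R_j}(x_j)$, which by volume counting with multiplicity gives
\[
(l-2R_i)^d\;\le\;l^d\sum_{j\ge i}c_j\bigl(1+R_i/R_j\bigr)^d.
\]
Using $c_j<\tfrac{1}{M+C_d}$ for $j>i$, the hypothesized bound $c_i\le\tfrac{1}{M+C_d+1}$, and the geometric decay $R_i/R_{i+k}\le(1+4\sqrt d\,|B_1|)^{-k}$, the series $\sum_{k\ge1}(1+(1+4\sqrt d\,|B_1|)^{-k})^d$ is bounded by $M+O_d(1/|B_1|)$ uniformly in $M$, so the right-hand side is at most $\bigl(M+2^d+O_d(1/|B_1|)\bigr)/(M+C_d)\cdot l^d$. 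Combined with the boundary correction $(1-2R_i/l)^d\ge 1-O_d(1/((M+C_d)|B_1|))$ coming from the size bound on $l$, the choice $C_d=2^{d+1}/|B_1|$ is precisely what makes the two sides incompatible, closing the contradiction.

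The main technical obstacle is the simultaneous calibration of three constants: the gap $4\sqrt d\,|B_1|$ in the scale-separation hypothesis, the constant $C_d=2^{d+1}/|B_1|$, and the size lower bound $l>8\sqrt d\,|B_1|(M+C_d)R_M$. The gap must be large enough that the geometric sum $\sum_k(1+(1+4\sqrt d\,|B_1|)^{-k})^d$ produces only an $M$-independent additive correction; $C_d$ must be large enough to absorb the $2^d$ coming from the $j=i$ term $(1+1)^d$; and the lower bound on $l$ must be tight enough that the boundary strip contributes only $o(1/(M+C_d))$. The quoted values balance these three constraints uniformly in $d\ge 2$ and $M\ge 1$, and verifying this balance is the only delicate part of the argument.
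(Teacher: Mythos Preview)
Your top-down greedy induction on the radii is exactly the strategy the paper indicates: the paper gives no self-contained proof and refers to Gregg and Hughes, both of which proceed by the same descending induction on $i$. So the overall approach is correct and matches the literature.

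The gap is precisely the part you label ``the only delicate part'' and then do not carry out. Following your own outline, the contradiction at stage $i$ needs
\[
\Bigl(1-\tfrac{2R_i}{l}\Bigr)^{d} \;>\; c_i\cdot 2^d \;+\;\sum_{j>i}c_j\Bigl(1+\tfrac{R_i}{R_j}\Bigr)^{d},
\]
and after inserting $c_i\le \tfrac{1}{M+C_d+1}$, $c_j<\tfrac{1}{M+C_d}$, the geometric decay of $R_i/R_j$, and the boundary correction on $l$, this boils down to
\[
C_d \;>\; 2^d \;+\; O_d\!\bigl(1/|B_1|\bigr),
\]
the $2^d$ coming from the same-stage exclusion factor $(1+R_i/R_i)^d$. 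But $C_d=2^{d+1}/|B_1|>2^d$ would force $|B_1|<2$, which is \emph{false} for every $2\le d\le 10$ (for instance $|B_1|=\pi$ in $d=2$, $|B_1|=4\pi/3$ in $d=3$). So with the constants exactly as stated the contradiction does not close, and your concluding sentence---that these values ``balance the three constraints uniformly in $d\ge2$''---is an assertion, not a verification.

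Either the constant $C_d$ in the lemma statement needs to be revisited, or the placement scheme at stage $i$ must be modified so that the same-stage $2^d$ term does not appear (for example by laying $R_i$-balls on a prescribed sublattice of spacing $a_i$ chosen so that the unconstrained density already lands in the target window, and then deleting only those balls that conflict with earlier stages $j>i$; this removes the $c_i\cdot 2^d$ contribution entirely). In either case, this is the portion of the argument that actually has to be written out in full, not asserted.
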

The proof of the lemma follows the same ideas as in \cite[Sec. 4]{Gregg89} or in \cite[Sec. 4.3]{Hughes89}, namely it is based on packing on a cube, then periodized, and it proceeds by induction on $i$. We refer to \cite{Gregg89} for the details.

\par We will use the ball packing given by Lemma \ref{cheeselemma} for the case $Q=\Omega_{\mathcal L}=[-\tfrac{l}2, \tfrac{l}2]^d$ in \eqref{localizebasic}, and then we define a packing of the whole of $\mathbb R^d$ by extending by $\mathcal L$-periodicity, for $\mathcal L=\left(\ell\mathbb Z\right)^d$. We use a special case of the definition \eqref{deff} for the case $\{A_1,\ldots,A_n\}=\mathcal B$. The dependence on $\mathcal L$ is not recorded in the notation for the sake of lightness, and we define
\begin{equation}\label{defff}
F_{\mathcal B}:=\{B - p: B\in\mathcal B, p\in\mathcal L\}\ .
\end{equation}
We note that the formulas \eqref{cheesebound} and \eqref{coeffballpack} giving the coefficients $c_i$ agree, for the choice of packing family $\mathcal B$ as in Lemma \ref{cheeselemma}.
\par However the basic formula \eqref{localizebasic}, in which $1_{B_{R_i}}*1_{B_{R_i}}(x)$ appears, cannot be directly used, for regularity reasons. Indeed, the relevant positive definiteness criterion, formulated in Corollary \ref{pdcriterion}, requires a control on derivatives up to order $d$ of our kernel approximants (see \eqref{boundgregg}). We consider first the expression
\begin{equation}\label{convolballs}
1_{B_r}*1_{B_r}(x) = \left\{\begin{array}{ll}\frac{\pi^{\frac{d-1}{2}}}{2^{d-1}\Gamma\left(\frac{d+1}{2}\right)}\int_{|x|}^{2r}(4r^2 - y^2)^{\frac{d-1}{2}}dy&\text{ for }|x|\le 2r\ ,\\ 0&\text{ for }|x|>2r\ ,\end{array}\right. 
\end{equation}
The above convolutions of balls have only $1+[(d-1)/2]$ continuous derivatives, a fact which forbids the desired bounds. Similarly to the procedure in \cite{Gregg89}, we therefore apply a mollification to \eqref{convolballs}. Note that the mollification in \cite{Gregg89} does not allow full regularity: the averaging over the radii allows only to add the control on one more extra derivative to the above-mentioned $1+[(d-1)/2]$ derivatives of $1_{B_r}*1_{B_r}$. While this is enough to cover the range $0<s<3$ in dimension $d=3$, this is unsatisfactory for general $d$ and therefore we proceed differently than \cite{Gregg89}, with a reasoning that covers all the cases $0<s<d$ at once.

\par To start with, fix a small parameter $\eta$, say $\eta\in (0,1/2]$, a choice that will play no special role in the computations below. For $t\in[1-\eta,1+\eta]$, consider a positive function $\rho_\eta\in C_0^\infty([1-\eta,1+\eta])$ such that $\int\rho_\eta(t) dt=1$, and set
\begin{equation}\label{mollifballs}
Q_{0,\eta}(x):=\int_{1-\eta}^{1+\eta}\frac{\left(1_{B_t}*1_{B_t}\right)(x)}{|B_t|}\rho_\eta(t)dt\ .
\end{equation}
Then $Q_{0,\eta}$ is smooth outside the origin, because it is radial and in radial coordinates we may use the smoothness of $\rho_\eta$ to control the radial partial derivatives. Also note that $Q_{0,\eta}(0)=1$ because $1_{B_t}*1_{B_t}(0)=|B_t|$ and $\int\rho_\eta(t)dt=1$. We then define, for $i=1,\ldots,M$, and $R_i$ as in Lemma \ref{cheeselemma} (but note that here $B_{tR_i}$ is for each $i$ the ball of radius $tR_i$ centered at $0$)
\begin{equation}\label{qieta}
Q_{i,\eta}(x) :=Q_{0,\eta}\left(\frac{x}{R_i}\right) = \int_{1-\eta}^{1+\eta}\frac{\left(1_{B_{tR_i}}*1_{B_{tR_i}}\right)(x)}{|B_{tR_i}|}\rho_\eta(t)dt\ .
\end{equation}
As a consequence of the bounds and support properties of the above integrands, for all $x\in\mathbb R^d$ we have
\begin{equation}\label{sptqieta}
Q_{i,\eta}(x)\le 1_{B_{2(1+\eta)R_i}}(x)\ .
\end{equation}
\par Lemma \ref{cheeselemma} with the above choices of radii $R_i, 1\le i\le M$, and $Q=[-\ell/2,\ell/2]^d$, produces a family of disjoint balls satisfying in particular estimate \eqref{cheesebound}, which for $\mathcal L:=\left(\ell \mathbb Z\right)^d$ is then extended by $\mathcal L$-periodicity to a disjoint packing $F$ as in \eqref{cheesebound}.

\par For $t\in[1-\eta, 1+\eta]$, we consider the dilation $t\mathcal B$ by $t$ of the packing $\mathcal B$, and the dilations $t\mathcal L, tQ$ of the lattice $\mathcal L$ and of the cube $Q$. Moreover if $B=B_r(x)$ is a ball in $\mathbb R^d$ then we denote $tB:=B_{tr}(tx)$. Then
\begin{equation*}\label{deffr}
t\mathcal B:=\{tB: B\in \mathcal B\}\qquad t\mathcal L:=\{tp:\ p\in\mathcal L\}\ ,\qquad tQ:=[-t\ell/2,t\ell/2]^d=\Omega_{t\mathcal L}\ .
\end{equation*}
Then we claim that the covered volume ratios $c_i$ appearing in \eqref{coeffballpack} and in \eqref{cheesebound} do not depend on the choice of $t>0$. Indeed, the packing $t\mathcal B$ obtained after the dilation by $t$ is still made of balls, this time with set of radii $tR_1,\ldots,tR_M$, which now cover $tQ=t\Omega_{\mathcal L}$. Moreover the dilation preserves the property of the packing of being made of disjoint balls, and for $t\mathcal B$ the ball center sets $Y_i$ are replaced by new center sets
\begin{eqnarray*}
tY_i:&=&\{tx: x\in Y_i\}=\{tx\in\mathbb R^d: \exists B_r(x)\in \mathcal B, r= R_i\}=\{tx\in\mathbb R^d: \exists B_r(tx)\in t\mathcal B, r= tR_i\}\nonumber\\
&=&\{x\in\mathbb R^d: \exists B_r(x)\in t\mathcal B, r= tR_i\}\ .
\end{eqnarray*}
After the above replacements, the values $c_i$ appearing in the formula \eqref{cheesebound} don't change:
\begin{equation*}\label{remainthesame}
\frac{\left|\bigcup_{x\in tY_i}B_{tR_i}(tx)\right|}{|tQ|}=\frac{\#(tY_i)\left|B_{tR_i}\right|}{|tQ|} =\frac{\#(Y_i)t^d\left|B_{R_i}\right|}{t^d|Q|}=\frac{\#(Y_i)\left|B_{R_i}\right|}{|Q|}=\frac{\left|\bigcup_{x\in Y_i}B_{R_i}(x)\right|}{|Q|}\ .
\end{equation*}
Then we define, analogously to \eqref{defff},
\begin{equation*}\label{deffff}
F_{t\mathcal B}:=\{B - p:\ B\in t\mathcal B, p\in t\mathcal L\} = \{tB - tp:\ B\in \mathcal B, p\in \mathcal L\}=tF_{\mathcal B}\ .
\end{equation*}
We now use formula \eqref{localizebasic} for the ball covers corresponding to the packing of Lemma \ref{cheeselemma} for the choice $Q=\Omega_{\mathcal L}$, then we superpose dilated versions of these packings in order to obtain a contribution like \eqref{qieta} in place of the contribution of each $B_{R_i}$ in \eqref{localizebasic}. We find the following formula, where we use the fact that $\Omega_{\mathcal L}=[-\ell/2,\ell/2]^d, |\Omega_{\mathcal L}|=l^d$ and $t\Omega_{\mathcal L} = [-t\ell/2,t\ell/2]^d, |t\Omega_\mathcal L|=(tl)^d$:
\begin{equation}\label{decompqi}
  \sum_{i=1}^Mg(x_1-x_2)c_i Q_{i,\eta}(x_1-x_2) = \int_{1-\eta}^{1+\eta}\left(\frac{1}{(lt)^d}\int_{[-\frac{lt}{2},\frac{lt}{2}]^d}\sum_{B\in tF_{\mathcal B}+y}1_B(x_1)1_B(x_2)g(x_1-x_2)dy\right)\rho_\eta(t)dt\ .
\end{equation}
We observe that the left hand side of \eqref{decompqi} is of the form that appears in \eqref{lbrough2}, namely
\begin{lemma}
We have
\begin{equation}\label{lbroughqi}
\sum_{i=1}^Mg(x_1-x_2)c_i Q_{i,\eta}(x_1-x_2)=\int_\Omega \left(\sum_{A\in F_\omega}1_A(x_1)1_A(x_2){g(x_1-x_2)}\right) d\,\mathbb{P}_l(\omega)\ ,
\end{equation}
for the choices 
\begin{equation}\label{lbrough3}
\begin{aligned}
\Omega =\Omega_l &:= \left\{(t,y)\in [1-\eta,1+\eta]\times\mathbb R^d:\ y\in\left[-\frac{lt}{2}, \frac{lt}{2}\right]^d\right\},\\
{d\, \mathbb P(t,y)=}d\,\mathbb{P}_l(t,y)&:=\frac{\rho_\eta(t)}{(lt)^d}1_{\Omega_l}(t,y)\ dt\ dy\ ,
\end{aligned}
\end{equation}
and to each $\omega:=(t,y)\in\Omega_l$ we associate an isometry $y$ and a packing $F_\omega$ given by
\begin{equation}\label{lbrough4}
F_\omega =F_\omega^l :=tF_{\mathcal B}+y=\{tA+y:\ A\in F_{\mathcal B}\}\ .
\end{equation}
\end{lemma}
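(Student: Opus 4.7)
The plan is essentially to read off the claim from the key identity \eqref{decompqi}, which has just been established, and then reinterpret its right-hand side as an expectation against the measure $\mathbb{P}_l$ on $\Omega_l$. First I would unfold the definitions in \eqref{lbrough3}--\eqref{lbrough4}: for $\omega=(t,y)\in\Omega_l$, the associated packing is $F_\omega^l=tF_{\mathcal B}+y$, and the statement ``$B\in tF_{\mathcal B}+y$'' in \eqref{decompqi} is identical, after the trivial relabeling $A:=B$, to ``$A\in F_\omega^l$''. Hence the integrand in the right-hand side of \eqref{decompqi}, namely
\[
\frac{1}{(lt)^d}\sum_{B\in tF_{\mathcal B}+y}1_B(x_1)1_B(x_2)g(x_1-x_2)\,\rho_\eta(t)\,,
\]
once multiplied by $dt\,dy$ and restricted to the slab $y\in[-lt/2,lt/2]^d$, is exactly $\sum_{A\in F_\omega^l}1_A(x_1)1_A(x_2)g(x_1-x_2)\,d\mathbb P_l(\omega)$. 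So by Fubini the right-hand side of \eqref{decompqi} is precisely $\int_{\Omega_l}\bigl(\sum_{A\in F_\omega^l}1_A(x_1)1_A(x_2)g(x_1-x_2)\bigr)\,d\mathbb P_l(\omega)$, which proves \eqref{lbroughqi}.

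Second, for completeness I would verify that $\mathbb P_l$ is indeed a probability measure on $\Omega_l$, so that the left-hand side of \eqref{lbroughqi} is legitimately an expectation and the framework of \eqref{lbrough2} applies. By Fubini and the normalization $\int_{1-\eta}^{1+\eta}\rho_\eta(t)\,dt=1$, one computes
\[
\int_{\Omega_l}d\mathbb P_l(t,y)=\int_{1-\eta}^{1+\eta}\frac{\rho_\eta(t)}{(lt)^d}\Bigl|[-lt/2,lt/2]^d\Bigr|\,dt=\int_{1-\eta}^{1+\eta}\rho_\eta(t)\,dt=1\,,
\]
and non-negativity is immediate from $\rho_\eta\ge 0$.

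There is no real obstacle here: the algebraic content was all carried out in the superposition argument leading to \eqref{decompqi}, where one used that the covered-volume ratios $c_i$ are invariant under the dilation $\mathcal B\mapsto t\mathcal B$, that translations preserve both Lebesgue measure and the convolution structure, and the definition \eqref{qieta} of $Q_{i,\eta}$ as the $\rho_\eta$-average of the normalized self-convolutions $|B_{tR_i}|^{-1}1_{B_{tR_i}}*1_{B_{tR_i}}$. The lemma is then a bookkeeping step that repackages \eqref{decompqi} into the probabilistic language of \eqref{lbrough2}, which is the form needed to feed into the positive-definiteness/Fefferman--Gregg machinery.
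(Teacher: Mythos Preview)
Your proposal is correct and matches the paper's approach: the lemma is indeed just a repackaging of \eqref{decompqi} into the probabilistic language of \eqref{lbrough2}, and the paper explicitly remarks that ``the measure $\mathbb P_l$ is defined precisely so as to equate \eqref{decompqi} to \eqref{lbroughqi}''. Your extra verification that $\mathbb P_l$ is a probability measure is a helpful addition that the paper leaves implicit at this point.
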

In the above we use the notation \eqref{defgrx} regarding the definition and action of isometries, while the parameterization of isometries by $y$, and of dilations by $t$ is encoded via the set $\Omega_l$. Finally note that the measure $\mathbb P_l$ is defined precisely so as to equate \eqref{decompqi} to \eqref{lbroughqi}, and $l, \mathcal B$ are as defined from Lemma \ref{cheeselemma}.

\par Note that differently than in Gregg's work \cite{Gregg89}, here we fix once and for all the packing $\mathcal B$ depending only on the choices of $R_1,\ldots, R_M, l$ from that lemma, and do not need to produce a new packing separately anew for each dilation $t$. It is obvious that if $\mathcal B$ is a packing by disjoint balls then $g(t\mathcal B)$ is still a disjoint packing for all isometries $g$ and all dilations $t$.
\begin{rmk}[the case of simplices]
We may directly define $Q_{A,\eta}$ as in \eqref{qieta} for the case of more general sets $A$ rather than balls $B_{R_i}$, by replacing $B_r$ by $t A:=\{tx:\ x\in A\}$. In particular this can be done for the case of simplices $A=\triangle$ if we include also an averaging over $SO(d)$ when defining $\langle f\rangle$, as done in \cite{Grafschenker95}, where the singularity for the function $h_\triangle$ appearing therein, has the same kind of discontinuity as $1_B*1_B$. The packing from Lemma \ref{cheeselemma} could then be replaced by a tiling, again by families of simplices of sizes decreasing like a geometric series. However, it seems hard to prove positive definiteness and boundedness at zero of the error terms based on the less explicit formula for $h_\triangle$, outside the case $s=1,d=3$.
\par It would be interesting to have a more thorough investigation of the influence of general packing strategies on the study of positive definiteness, but that would go beyond the scope of the present work.
\end{rmk}

\subsubsection{Control of the error terms in the localization estimate}\label{ssseccontrolerror}
\textbf{Proof of Proposition \ref{prop3ws}}

\par With the decompositions from the previous section we come back to the study of the final formula of type \eqref{lbrough2} which will appear in our proof. We consider now only the case $c(x,y)=|x-y|^{-s}, 0<s<d$. We use the formula \eqref{lbroughqi} with choices \eqref{lbrough3}, and in this case we write, following the formalism \eqref{lbrough2}, with $\Omega=\Omega_l,\mathbb P=\mathbb P_l,F_\omega=F_\omega^l$ as in \eqref{lbrough3}, \eqref{lbrough4}
\begin{equation*}
err(x_1,x_2):= \frac{1}{|x_1-x_2|^s} - \int_{\Omega_l}\left(\sum_{A\in F_\omega^l}\frac{1_A(x_1)1_A(x_2)}{|x_1-x_2|^s}\right)d\mathbb P_l(\omega)\ .
\end{equation*}
In fact, we will find it useful in our computations to work with a slightly different error term from the one above, for which we can show both positive-definiteness, and also a rough next-order lower bound needed in our proof. More precisely, we introduce for $0<\epsilon<d/2$ and $\epsilon\le s\le d-\epsilon$
\begin{equation}
 \label{errorw}
 w(x_1-x_2):=\left(1+\frac{C}{M}\right)|x_1-x_2|^{-s}- \int_{\Omega_l}\left(\sum_{A\in F_\omega^l}\frac{1_A(x_1)1_A(x_2)}{|x_1-x_2|^s}\right)d\mathbb P_l(\omega)\ ,
 \end{equation}
with $C=C(\rho_\eta,d,\epsilon)>0$ a constant depending only on the choice of $\rho_\eta$, on $d$ and on $\epsilon$, the form of which will be made explicit in Lemma \ref{derivativebounds} in the Appendix.

\par Proposition \ref{prop3ws} becomes an immediate consequence of \eqref{lbrough2}, due to Lemma \ref{prop3wsa} which proves the positivity and boundedness properties of $w$, and due to Lemma \ref{prop3wsb} which proves asymptotic lower bounds for $w$.

\qed
\begin{rmk}[continuation of Remarks \ref{morecosts} and \ref{contmorec1}]\label{contmorec2}
We briefly mention here how the construction of the decomposition of Proposition \ref{prop3ws} should be adapted for the examples from Remark \ref{morecosts}.
\begin{itemize}
 \item [(a)] For the case of a general compact $d$-dimensional Riemannian manifold $(M,g)$ the main difference to our setting is that we don't have a group action by translations. However in this case we can still adapt Lemma \ref{cheeselemma} and find a good packing by metric balls of scale $l>0$. Then for $l\ll 1$ we may use the fact that at small scales near each $x_0\in M$, the manifold  $(M,g)$ is closer and closer to $(\mathbb R^d,g(x_0))$. This allows to reduce the packing construction to one on $\mathbb R^d$ and use translation averaging locally, up to an error depending on the modulus of continuity of $g$. Regarding Lemma \ref{prop3wsa}, in this case it can be directly implemented on manifolds, by using the decomposition in dyadic scales and the same type of positive definiteness criteria as in the appendix. Lemma \ref{prop3wsb} is provable by a similar localization argument, reducing it to a problem on $(\mathbb R^d, g(x_0))$ up to small error, and all the tools \cite{HS}, \cite{LundNamPort} used in the appendix can be extended to this case.
 \item [(b), (c)] For the non-homogeneous kernels as in these cases, the packing arguments go precisely as in the present section, allowing to obtain a kernel decomposition. Concerning Lemma \ref{prop3wsa}, we may still use the positive definiteness criteria as in Lemma \ref{fourierbdlemma}, by replacing the smoothed kernels $Q_{i,\eta}$ with non-isotropic analogues. For adapting Lemma \ref{prop3wsb} the main difference is that the proof of a rough lower bound as in Lemma \ref{prop3wsb} cannot be directly done by using the result of \cite{HS}. This however is a robust result, and in our case we could for example apply an ad-hoc version of \cite{HS} which uses decompositions of the product-type costs from (b), (c) which use non-isotropic elementary kernels modelled on the kernels at hand, rather than the isotropic $1_{B_r}*1_{B_r}$ as in \cite{HS}.
\end{itemize}
We also note that in all cases (a), (b), (c), the kernels $\mathsf{c}(x,y)=g(x-y)$ under consideration have asymptotic homogeneity $-s$ near the singularity at $x=y$ (i.e. there exists a function $g_0(x):\R^d\to \R$ such that $g_0(\lambda x)=\lambda^{-s}g_0(x)$ and $\lim_{x\to0}g(x)/g_0(x) = 1$), and this property can be quantified in order to allow the rough lower bound formula \eqref{roughboundgeneral}. 
\end{rmk}
\subsection{Optimal lower bound for Coulomb and Riesz costs by Fefferman decomposition}\label{sseclowerboundssecfeffermandecompositiontion}
Before proving Proposition \ref{mixextlowb}, we will need a number of helpful lemmas and corollaries.

\par Recall that we denoted for any cost function $\mathsf{c}:\mathbb{R}^d\times\mathbb{R}^d\rightarrow\mathbb{R}_{>0}\cup \{+\infty\}$ and for any $N\ge 2$ by 
\begin{equation*}
F_{N, \mathsf{c}}^\mathrm{OT}(\mu)=\inf\left\{\int_{(\mathbb{R}^d)^N}\sum_{i,j=1,i\neq j}^N\mathsf{c}(x_i,x_j)d \gamma_N(x_1,\ldots,x_N):\ \gamma_N\in\mathcal P^N_{sym}(\mathbb{R}^d),\gamma_N\mapsto \mu\right\}\ ,
\end{equation*}
and $F_{0,\mathsf{c}}^\mathrm{OT}(\mu)=F_{1,\mathsf{c}}^\mathrm{OT}(\mu)=0$, with a corresponding formula for $E_{N, \mathsf{c}}^\mathrm{xc}(\mu)$. Moreover, for the particular case of the cost $\mathsf{c}(x,y)=|x-y|^{-s}$ we use the notations $F_{N,s}^\mathrm{OT}(\mu)$ and $E_{N,s}^\mathrm{xc}(\mu)$. Note that for the purposes of the present section we don't need to impose that $\mathsf{c}$ be positive, and this is imposed just for consistency with the rest of the paper.
\subsubsection{Splitting the cost}\label{sssecsplittingthecost}
The following lemma uses in the Optimal Transport framework the ``grand-canonical'' formalism of \cite{LewLiebSeir17},  and will help us obtain our main result for very general densities, by means of our strategy involving the use of measures with piecewise constant density. We thank M. Lewin and S. Di Marino for pointing out at the end of May {2017 in Jyv\"askyl\"a} a flaw in the corresponding "splitting-the-cost" lemma which we had used in the preliminary version of the paper. In order to settle this flaw it turned out to be sufficient, as suggested by M. Lewin in Jyv\"askyl\"a, to use the functional $F_{\mathrm{GC},N,\mathsf{c}}^\mathrm{OT}$ allowing fluctuations in the number of marginals, rather than using $F_{N,\mathsf{c}}^\mathrm{OT}$ throughout the proof. We also introduce in \eqref{OT_relax1} the auxiliary minimization problem $F^\mathrm{OT}_{\mathrm{GCB},N,\mathsf{c}}$ in which we impose an upper bound on these fluctuations.

To the best of our knowledge, the results in Lemma \ref{subadd12gena01}, Lemma \ref{subadd_gcb} and Corollary \ref{subadd12gen11} below are new and of independent interest.

\par We will find it useful in some of the proofs below to work with another quantity, $F_{\mathrm{GCB},M,\mathsf{c}}^\mathrm{OT}(\mu,\bar M)$, due to its finite summation structure.  That is, for all $ M\in\mathbb{R}_{>0},\bar M\in \mathbb{N}_+$ such that $\bar M\ge M$, and for $\mu\in\mathcal{P}(\mathbb{R}^d)$, let

\begin{subequations}\label{OT_relax1}
\begin{equation}\label{OTGCB}
  F_{\mathrm{GCB},M,\mathsf{c}}^\mathrm{OT}\left(\mu, \bar M\right):=\inf \left\{\sum_{n=2}^{\bar M}\alpha_n F_{n,\mathsf{c}}^\mathrm{OT}(\mu_n)\left|\begin{array}{c}\sum_{n=0}^{\bar M}\alpha_n=1,\ \sum_{1\le n\le\bar M}n \alpha_n\mu_n = M\mu,\\[3mm]\mu_n\in {\cal P}(\mathbb{R}^d),\ \alpha_n\ge 0,\quad\mbox{for}\quad n=0,1,\ldots,\bar M\end{array}\right.\right\}\ ,
\end{equation}
or equivalently, as summing zero contributions has no effect on the sums,
\begin{equation}\label{OTGCBbis}
  F_{\mathrm{GCB},M,\mathsf{c}}^\mathrm{OT}\left(\mu, \bar M\right)=\inf \left\{\sum_{n=2}^\infty\alpha_n F_{n,\mathsf{c}}^\mathrm{OT}(\mu_n)\left|\begin{array}{c}\sum_{n=0}^\infty\alpha_n=1,\ \sum_{n=1}^\infty n \alpha_n\mu_n = M\mu,\\[3mm]\mu_n\in {\cal P}(\mathbb{R}^d),\ \alpha_n\ge 0,\quad\mbox{for}\quad n=0,1,\ldots,\bar M\\[3mm] \alpha_n=0,\quad\mbox{for}\quad n>\bar M\end{array}\right.\right\}\,.
\end{equation}
We then define
\begin{equation}
\label{ExcGCB}
E_{\mathrm{GCB},M,\mathsf{c}}^\mathrm{xc}\left(\mu, {\bar M}\right):=F_{\mathrm{GCB},M,\mathsf{c}}^\mathrm{OT}\left(\mu, {\bar M}\right)-M^2 \int_{{\mathbb{R}^d}\times {\mathbb{R}^d}}\mathsf{c}(x,y)d\mu(x)d\mu(y).
\end{equation}
\end{subequations}
\begin{rmk}
Definition \eqref{OTGCB} says that there are $M$ marginals ``on average'', and the total number of marginals usable in this decomposition is bounded above by $\bar M$. Removing this last constraint we defined $F_{\mathrm{GC},M,\mathsf{c}}^\mathrm{OT}(\mu)$ in \eqref{OTGC}, which is the precise analogue of \cite[eq. (3.1)]{LewLiebSeir17}. For the $GCB$-problem in \eqref{OTGCB}, as the possible number of marginals used is bounded by $\bar M$, the existence of minimizers can be proven like for the $F_{N,\mathsf{c}}^\mathrm{OT}(\mu)$-problems, based on the same techniques which work for the $N=2$ case (see \cite[Thm. 4.1]{Vil}), and therefore in case of lower semicontinuous cost $\mathsf{c}$ and $\mu$ as in the statement of Lemma \ref{subadd12gena01} the infimum is realized. The proof then extends to the $GC$-problem by using the fact that the measures given by $\lambda_k(\{n\}):=\lambda_{n,k}$ corresponding to the $GC$-optimizers form a tight sequence of probability measures on $\mathbb N$, as will be shown in detail in the proofs of Lemma \ref{gcotlim} and Lemma \ref{existgc}.
\end{rmk}
\begin{rmk}\label{rmk_gcb}
We note here the following special properties related to the problems \eqref{OT_relax} and \eqref{OT_relax1}:
\begin{enumerate}
\item For the case $\bar N=N\in\mathbb{N}_+, N\ge 2$, the relaxation \eqref{OTGCB} satisfies 
\begin{equation}
\label{relaxeqn}
F_{\mathrm{GCB},N,\mathsf{c}}^\mathrm{OT}(\mu,N)=F_{N,\mathsf{c}}^\mathrm{OT}(\mu),
\end{equation}
 as in this case the condition $\sum_{n=1}^Nn\alpha_n=N$ can be realized only if $\alpha_n=0$ for all $n<N$ and $\alpha_N=1$. The \eqref{relaxeqn} together with \eqref{OTGCB} imply in turn that for all $ \bar M,\bar N, N\in\mathbb{N}_+,\bar N\ge N\ge 2$, we have
\begin{equation}
\label{chainineq}
F_{\mathrm{GC},N,\mathsf{c}}^\mathrm{OT}\left(\mu\right)\le F_{\mathrm{GCB},N,\mathsf{c}}^\mathrm{OT}(\mu,\bar N+\bar M)\le F_{\mathrm{GCB},N,\mathsf{c}}^\mathrm{OT}(\mu,\bar N)\le F_{N,\mathsf{c}}^\mathrm{OT}\left(\mu\right).
\end{equation}
\item If $\mu=\tfrac{N'\mu'+N''\mu''}{N'+N''}$ where $\mu',\mu''$ are probability measures, and $N',N''\in\mathbb{R}_{>0}, N',N''>1$, then 
\begin{equation}\label{subadgcinit}
E_{\mathrm{GC},N'+N'',\mathsf{c}}^\mathrm{xc}(\mu)\le E_{\mathrm{GC},N',\mathsf{c}}^\mathrm{xc}(\mu')+E_{\mathrm{GC},N'',\mathsf{c}}^\mathrm{xc}(\mu'').
\end{equation}
Furthermore, assume that $\max(N',N'')>1$ and $\min(N',N'')\le 1$; for simplicity let $\max(N',N'')=N''$. Then
\begin{equation}\label{subadgcinit01}
E_{\mathrm{GC},N'+N'',\mathsf{c}}^\mathrm{xc}(\mu)\le E_{\mathrm{GC},N'',\mathsf{c}}^\mathrm{xc}(\mu'').
\end{equation}
Equation \eqref{subadgcinit} was stated in \cite[eqn. (3.2)]{LewLiebSeir17}; we provide a sketch of the proofs of \eqref{subadgcinit} and \eqref{subadgcinit01} in Appendix \ref{Remark4.9}.
\item We now consider the $\mathrm{GCB}$-problem, in the equivalent definition \eqref{OTGCBbis}. If $\mu=\tfrac{N'\mu'+N''\mu''}{N'+N''}$ where $\mu',\mu''$ are probability measures, and $N',N''\in\mathbb{R}_{>0},  \bar N',\bar N''\in\mathbb{N}_+, \bar N'\ge N'>1,\bar N''\ge N''>1$, then
\begin{equation}\label{subadd_gcbinit}
E_{\mathrm{GCB},N'+N'',\mathsf{c}}^\mathrm{xc}(\mu,\bar N'+\bar N'')\le E_{\mathrm{GCB},N',\mathsf{c}}^\mathrm{xc}(\mu',\bar N')+E_{\mathrm{GCB},N'',\mathsf{c}}^\mathrm{xc}(\mu'',\bar N'').
\end{equation}
Furthermore, assume that $\max(N',N'')>1$ and $\min(N',N'')\le 1$; for simplicity let $\max(N',N'')=N''$. Then
\begin{equation}\label{subadd_gcbinitle1}
E_{\mathrm{GCB},N'+N'',\mathsf{c}}^\mathrm{xc}(\mu,\bar N'+\bar N'')\le E_{\mathrm{GCB},N'',\mathsf{c}}^\mathrm{xc}(\mu'',\bar N'').
\end{equation}
The proofs of \eqref{subadd_gcbinit} and \eqref{subadd_gcbinitle1} can be found in Appendix \ref{Remark4.9}.
\item If $N\in\mathbb{R}_{>0}, \bar N\in\mathbb{N}_+, \bar N\ge N$ then
\begin{equation}\label{EGC_negative}
E_{\mathrm{GC},N,\mathsf{c}}^\mathrm{xc}(\mu)\le E_{\mathrm{GCB},N,\mathsf{c}}^\mathrm{xc}(\mu,\bar N)\le 0.
\end{equation}
For $N\in\mathbb{N}_+$ this follows from \eqref{chainineq} since $E_{N,\mathsf{c}}^\mathrm{xc}(\mu)
\le 0$. For the general $N\in\mathbb{R}_+$ case, if $N\le 1$ then we use our convention that $F_N^\mathrm{OT}=0$ for $N=0,1$, while for $N>1$ we apply \eqref{subadgcinit01} to $N'+N''=N, N''=[N], N'=N-[N]$, and $\mu'=\mu''=\mu$, to get $E_{\mathrm{GC},N,\mathsf{c}}^\mathrm{xc}(\mu)\le E_{\mathrm{GC},[N],\mathsf{c}}^\mathrm{xc}(\mu)\le 0$. A similar argument holds for $E_{\mathrm{GCB},N,\mathsf{c}}^\mathrm{xc}(\mu,\bar N)$.
\item For $\alpha>0$ let $\mu_\alpha$ be the measure given by $d\mu_\alpha(x)=\rho_\alpha(x)dx$ with $\rho_\alpha(x)=\rho(\alpha x)$. Then we have
\begin{equation}\label{scalinggc}
E_{\mathrm{GCB},N,s}^\mathrm{xc}(\alpha^d\mu_\alpha) = \alpha^{s}E_{\mathrm{GCB}, N,s}^\mathrm{xc}(\mu)\quad \mbox{and}\quad E_{\mathrm{GC},N,s}^\mathrm{xc}(\alpha^d\mu_\alpha) = \alpha^{s}E_{\mathrm{GC}, N,s}^\mathrm{xc}(\mu)\ .
\end{equation}
The proof follows similarly to the one of \eqref{scalingexc} from Lemma \ref{scalOT}, the key difference being that since the the optimizers $\mu_n$ are not assured to possess densities, one needs to use the change of measure 
$$
\alpha^d\mu_\alpha=F_\# \mu\quad\mbox{and}\quad\alpha^d\mu_{n,\alpha}= F_\#\mu_n,\quad\mbox{where}\quad F(x)=\alpha x,\quad x\in\mathbb{R}^d.
$$

\end{enumerate}
\end{rmk}
\begin{lemma}\label{subadd12gena01}
Let $\mathsf{c}:\mathbb{R}^d\times\mathbb{R}^d\rightarrow\mathbb{R}_{>0}\cup \{+\infty\}$. Take $\mu_1, \ldots, \mu_k\in\mathcal{P}(\mathbb{R}^d)$, with densities $\rho_1, \ldots,\rho_k,$ such that the quantities below are finite (e.g. if $\mathsf{c}(x,y)=|x-y|^{-s}, 0<s<d$, we can require $\rho_1,\ldots,\rho_k\in L^{\frac{2d}{2d-s},2}(\R^d)$), and assume that the $\mu_j$ have essentially disjoint supports, i.e. that $\mu_i(A)\mu_j(A)=0$ for every Borel set $A$ and for all $1\le i\neq j\le k$. Fix $M_1,\ldots, M_k\in\mathbb R_{>0}$ such that $\bar{M}:=\sum_{i=1}^k M_i\in\mathbb{N}_+,\bar M\ge 2,$ and denote by $\mathsf{c}_{ii}(x,y):=1_{\Lambda_i}(x)1_{\Lambda_i}(y)\mathsf{c}(x,y)$ for $i=1,\ldots,k$ and $x,y\in\mathbb R^d$. Let $\mu$ be the probability measure with density $\sum_{i=1}^kM_i \rho_i/\bar{M}$. 

Then the following holds: 
\begin{equation}\label{supadd1gen0eq}
F_{\bar{M},\sum_{i=1}^k \mathsf{c}_{ii}}^\mathrm{OT}\left(\mu\right)= F_{\bar{M}, \sum_{i=1}^k c_{ii}}^\mathrm{OT}\left(\frac{\sum_{i=1}^kM_i\mu_i}{\bar{M}}\right)\ge\sum_{i=1}^k F_{\mathrm{GCB},M_i,\mathsf{c}}^\mathrm{OT}\left(\mu_i, \bar M\right)\ge \sum_{i=1}^k F_{\mathrm{GC},M_i,\mathsf{c}}^\mathrm{OT}\left(\mu_i\right)\ .
\end{equation}
\end{lemma}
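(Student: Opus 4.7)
The plan is to prove the two inequalities separately. The second inequality $F_{\mathrm{GCB},M_i,\mathsf{c}}^{\mathrm{OT}}(\mu_i,\bar M)\ge F_{\mathrm{GC},M_i,\mathsf{c}}^{\mathrm{OT}}(\mu_i)$ is essentially tautological: the $\mathrm{GCB}$ feasible set is obtained from the $\mathrm{GC}$ feasible set by imposing the additional constraint $\alpha_n=0$ for $n>\bar M$, so the infimum over the smaller set is $\ge$ the infimum over the larger set. The substance lies in the first inequality, for which my plan is to take any competitor $\gamma\in\mathcal P_{sym}^{\bar M}(\mathbb R^d)$ with $\gamma\mapsto \mu$ and produce, for each $i=1,\ldots,k$, a $\mathrm{GCB}$-competitor whose aggregate cost bounds the cost of $\gamma$ from below.

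For each composition $\bar n=(n_1,\ldots,n_k)\in\mathbb N^k$ with $\sum_i n_i=\bar M$, define
\[
A_{\bar n}:=\left\{(x_1,\ldots,x_{\bar M})\in(\mathbb R^d)^{\bar M}:\ \#\{j:\ x_j\in\Lambda_i\}=n_i\ \forall i\right\},\qquad \alpha_{\bar n}:=\gamma(A_{\bar n}),
\]
where $\Lambda_i$ denotes the (essentially disjoint) support of $\mu_i$. Since $\mu$ is supported on $\bigcup_i\Lambda_i$, the $A_{\bar n}$ partition the support of $\gamma$ up to a null set. For each $\bar n$ with $\alpha_{\bar n}>0$ and each $i$ with $n_i\ge 1$, I would use the symmetry of $\gamma|_{A_{\bar n}}/\alpha_{\bar n}$ to extract the joint distribution $\pi_i^{\bar n}\in\mathcal P_{sym}^{n_i}(\Lambda_i)$ of the $n_i$ coordinates lying in $\Lambda_i$; let $\nu_i^{\bar n}$ denote its one-body marginal. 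Setting
\[
\lambda_i^n:=\sum_{\bar n:\,n_i=n}\alpha_{\bar n},\qquad \mu_i^n:=\frac{1}{\lambda_i^n}\sum_{\bar n:\,n_i=n}\alpha_{\bar n}\,\nu_i^{\bar n}\quad(\text{when }\lambda_i^n>0),
\]
the symmetry of $\gamma$ combined with $\mu(\Lambda_i)=M_i/\bar M$ gives $\sum_{n=0}^{\bar M}\lambda_i^n=1$ and $\sum_{n=1}^{\bar M}n\lambda_i^n\mu_i^n=M_i\mu_i$, so $\{(\lambda_i^n,\mu_i^n)\}_{n=0}^{\bar M}$ is a valid $\mathrm{GCB}$-competitor for $F_{\mathrm{GCB},M_i,\mathsf{c}}^{\mathrm{OT}}(\mu_i,\bar M)$.

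For the cost comparison, since $\mathsf{c}_{ii}$ is supported on $\Lambda_i\times\Lambda_i$, the cost of $\gamma$ decomposes as
\[
\int\sum_{j\ne \ell}\sum_i \mathsf{c}_{ii}(x_j,x_\ell)\,d\gamma=\sum_{\bar n}\alpha_{\bar n}\sum_i\int_{\Lambda_i^{n_i}}\sum_{1\le j\ne \ell\le n_i}\mathsf{c}(y_j,y_\ell)\,d\pi_i^{\bar n}(y_1,\ldots,y_{n_i}),
\]
and each inner integral is $\ge F_{n_i,\mathsf{c}}^{\mathrm{OT}}(\nu_i^{\bar n})$ because $\pi_i^{\bar n}$ is a symmetric coupling of $\nu_i^{\bar n}$ (with the convention $F_{0,\mathsf{c}}^{\mathrm{OT}}=F_{1,\mathsf{c}}^{\mathrm{OT}}=0$ covering the degenerate terms). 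Applying Jensen's inequality (the map $\nu\mapsto F_{n,\mathsf{c}}^{\mathrm{OT}}(\nu)$ is convex, since convex combinations of couplings remain couplings) gives
\[
\sum_{\bar n}\alpha_{\bar n}\sum_i F_{n_i,\mathsf{c}}^{\mathrm{OT}}(\nu_i^{\bar n})\;\ge\;\sum_i\sum_{n=0}^{\bar M}\lambda_i^n F_{n,\mathsf{c}}^{\mathrm{OT}}(\mu_i^n)\;\ge\;\sum_{i=1}^k F_{\mathrm{GCB},M_i,\mathsf{c}}^{\mathrm{OT}}(\mu_i,\bar M),
\]
and taking the infimum over $\gamma$ completes the proof. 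The main obstacle to make rigorous is the extraction step producing $\pi_i^{\bar n}$ from $\gamma|_{A_{\bar n}}$: one must verify, using the $S_{\bar M}$-symmetry of $\gamma$ together with the essential disjointness of the $\Lambda_i$, that these objects are well-defined symmetric probability measures whose one-body marginals aggregate (with weights $n_i$) to recover $M_i\mu_i$ exactly.
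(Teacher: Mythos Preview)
Your proposal is correct and follows essentially the same strategy as the paper: partition a competitor $\gamma$ according to how many of the $\bar M$ points fall in each $\Lambda_i$, extract the restricted symmetric plan on $\Lambda_i^{n_i}$, and assemble from these a $\mathrm{GCB}$-competitor for each $i$. The extraction step you flag as ``the main obstacle'' is precisely what the paper isolates as a separate lemma (their Lemma~\ref{lemma_na}), proved via an explicit symmetrization formula and a quotient-by-$S_{\bar M}$ argument; your outline of this step is right.

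The one structural difference is that the paper writes out only the case $k=2$. There the composition $\bar n=(N_A,N_B)$ is determined by a single integer $N_A$, so for each fixed $n_i=n$ there is exactly one $\bar n$ and no aggregation is needed---the $(\alpha_n,\mu_n)$ obtained from the split are already a $\mathrm{GCB}$-competitor without any Jensen step. Your approach handles general $k$ directly by grouping all $\bar n$ with $n_i=n$ and invoking convexity of $\nu\mapsto F_{n,\mathsf{c}}^{\mathrm{OT}}(\nu)$; this is a clean way to avoid an induction on $k$ (which for the $\mathrm{GCB}$-problem is not entirely obvious). Note, incidentally, that the Jensen step can be replaced by simply averaging the plans $\pi_i^{\bar n}$ with $n_i=n$ to get a single plan with marginal $\mu_i^n$, whose cost is the weighted average of the individual costs---so convexity is not strictly required.
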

Before we prove the above lemma, note that
\begin{equation}\label{betterdual}
 F_{N,\mathsf{c}}^\mathrm{OT}(\mu) = \sup\left\{N\int f(x)d\mu(x)\left|\begin{array}{l}\text{for all }x_1,\ldots,x_N\in\rm{spt}(\mu),\\
 \text{there holds }\sum_{i=1}^Nf(x_i)\le \sum_{1\le i\neq j\le N}\mathsf{c}(x_i,x_j)\end{array}\right.\right\}\ .
\end{equation}
The above duality is a bit different than the usual version (see, for example, Theorem 5.9 from \cite{Vil}), due to the fact that we test only configurations that are all contained in the support of $\mu$ rather than general configurations in $(\mathbb R^d)^N$. This restriction can be applied by \textit{firstly} noting that any plan $\gamma_N$ such that $\gamma_N\mapsto\mu$ has support in $(\rm{spt}(\mu))^N$, therefore
\begin{eqnarray*}\label{betterprimal}
 F_{N,\mathsf{c}}^\mathrm{OT}(\mu)
 &=&\inf\left\{\int_{(\rm{spt}(\mu))^N}\sum_{1\le i\neq j\le N}\mathsf{c}(x_i,x_j)d\gamma_N(x_1,\ldots,x_N), \gamma_N\in\mathcal P_{sym}((\rm{spt}(\mu))^N), \ \gamma_N\mapsto\mu\right\}\ ,
\end{eqnarray*}
and \textit{secondly}, noting that the proof of duality in \cite{depascaleduality} carries through without changes once we replace the space $\mathbb R^d$ by the closed subset $\rm{spt}(\mu)\subset \mathbb R^d$, proving \eqref{betterdual}.

\par From \eqref{betterdual}, we immediately get
 \begin{equation}\label{optimnenew}
F_{M_i,\mathsf{c}}^\mathrm{OT}(\mu_i)=F_{M_i,\mathsf{c}_{ii}}^\mathrm{OT}(\mu_i)=F_{M_i,\sum_{j=1}^k\mathsf{c}_{jj}}^\mathrm{OT}(\mu_i)\ .
\end{equation}
\begin{proof}[Proof of Lemma \ref{subadd12gena01}:]
We consider below for simplicity of exposition just the case with $k=2$, and we denote $\Lambda_1=:A, \Lambda_2=:B$ and $\mathsf{c}_{11}=:\mathsf{c}_A, \mathsf{c}_{22}=:\mathsf{c}_B$, moreover we replace in the notation $M_1,M_2$ by $M,N$ and $\mu_1,\mu_2$ by $\mu,\nu$.

\medskip

\textbf{Step 1.} To begin with, we decompose the minimization problem defining $F_{\mathsf{c}_A+\mathsf{c}_B, M+N}^\mathrm{OT}\left(\tfrac{M\mu + N\nu}{M+N}\right)$ according to the cardinality of points in the configurations used in our transport plans, and belonging respectively to $A,B$, {in the spirit of \cite[Sec. 3]{LewLiebSeir17}}. Note that the Borel sets $\{B_{N_A, N_B}: N_A,N_B\in \mathbb N, N_A+N_B=M+N\}$ form a partition of such configurations, where 
\[
B_{N_A,N_B}:=\left\{(x_1,\ldots,x_{M+N})\in(A\cup B)^{M+N}\ \left|\begin{array}{l}\#\{i:\ 1\le i\le M+N:\ x_i\in A\}=N_A\\[3mm] \#\{i:\ 1\le i\le M+N:\ x_i\in B\}=N_B\end{array}\right.\right\}\ .
\]
These sets are symmetric under permutations of the $\R^d$-coordinates in $(\R^d)^N$. We observe here that there are $M+N+1$ sets $B_{N_A,N_B}$ such that $N_A+N_B=M+N$. As explained in more detail in Step 4 below, this fact will be crucial as to why the extra parameters $\alpha_0$ and $\alpha_1$ are introduced in \eqref{OTGCB} and \eqref{OTGC}.

Denote now the symmetric probability measures with fixed numbers of points in $A,B$ as follows:
\begin{equation}\label{pnanb}
 \mathcal P_{sym}((\mathbb R^d)^{N_A|N_B}):=\left\{\gamma_{N_A|N_B}\in \mathcal P_{sym}((\mathbb R^d)^{M+N}):\ \gamma_{N_A|N_B}(B_{N_A,N_B})=1\right\}\ .
\end{equation}
Note that for any $\gamma_{N_A|N_B}\in\mathcal P_{sym}((\mathbb R^d)^{N_A|N_B})$ its marginal is in the class
\begin{equation}\label{marginalnanb}
\mathcal P_{N_A|N_B}(\mathbb R^d):=\left\{\bar\mu\in\mathcal P(\mathbb R^d):\ \bar\mu(A)=\tfrac{N_A}{M+N},\ \bar\mu(B) =\tfrac{N_B}{M+N}\right\}\ .
\end{equation}
Thus for $\bar \mu\in\mathcal P_{N_A|N_B}(\mathbb R^d)$, the infimum in the following optimal transport problem is over a nonempty set:
\begin{equation*}
 F_{N_A|N_B,\mathsf{c}}^\mathrm{OT}(\bar \mu):= \inf\left\{\int_{(\mathbb R^d)^{M+N}}\sum_{1\le i\neq j\le M+N}\mathsf{c}(x_i,x_j) d\gamma_{N_A|N_B}(x_1,\ldots,x_{N_A+N_B})\ \left|\begin{array}{c}\gamma_{N_A|N_B}\in\mathcal P_{sym}((\mathbb R^d)^{N_A|N_B}),\\\gamma_{N_A|N_B}\mapsto\bar\mu\end{array}\right.\right\}\ .
\end{equation*}
By linearity of the marginal map $\gamma_{N_A|N_B}\mapsto \bar\mu$ given by $d\bar\mu(x):=\int d\gamma_{N_A|N_B}(x,x_2,\ldots,x_{M+N})$, we can generalize the above definition to marginals equal to general positive measures $\bar\mu$, such that for any $\alpha>0$ 
\begin{equation*}
 F_{N_A|N_B,\mathsf{c}_A+\mathsf{c}_B}^\mathrm{OT}(\alpha\bar\mu) = \alpha F_{N_A|N_B,\mathsf{c}_A+\mathsf{c}_B}^\mathrm{OT}(\bar\mu)\ .
\end{equation*}
Now we connect these definitions to the unconstrained problem. If $\gamma_{M+N}\in\mathcal P_{sym}((\R^d)^{M+N})$ is a symmetric transport plan with marginal $\mu_{M+N}$, then we can decompose
\begin{subequations}\label{decompose_gamma_mu}
\begin{equation}
\gamma_{M+N}=\sum_{\substack{N_A+N_B=M+N\\N_A,N_B\in\mathbb{N}}}\gamma_{M+N}|_{B_{N_A,N_B}}, \quad\mu_{M+N}=\sum_{\substack{N_A+N_B=M+N\\N_A,N_B\in\mathbb{N}}}\bar\mu_{N_A,N_B},
\end{equation}
where $\bar\mu_{N_A,N_B}$ is the marginal of the restriction $\gamma_{M+N}|_{B_{N_A,N_B}}$:
\begin{equation}
\bar\mu_{N_A,N_B}:=\int_{(\R^d)^{M+N-1}} d(\gamma_{M+N}|_{B_{N_A,N_B}})(\cdot,x_2,\ldots,x_{M+N})\ .
\end{equation}
\end{subequations}
Here $\gamma_{M+N}|_{B_{N_A,N_B}}$ are not probability measures, but they belong to the set $\mathcal M_{sym}^+((\R^d)^{N_A|N_B}):=\{\alpha\gamma_{N_A|N_B}:\ \gamma_{N_A|N_B}\in\mathcal P_{sym}((\R^d)^{N_A|N_B}),\alpha>0\}$ and similarly their marginals $\bar\mu_{N_A,N_B}$ belong to the set of measures $\mathcal M_{N_A|N_B}^+(\mathbb R^d):=\{\alpha\mu: \mu\in\mathcal P_{N_A|N_B}(\mathbb R^d), \alpha>0\}$. 

\medskip

\textbf{Step 2.} We now claim that
\begin{multline}\label{firstsplit}
 F_{M+N,\mathsf{c}}^\mathrm{OT}\left(\tfrac{M\mu + N\nu}{M+N}\right) \\=\min \left\{\sum_{\substack{N_A+N_B=M+N\\N_A,N_B\in\mathbb N}} F_{N_A|N_B,\mathsf{c}}^\mathrm{OT}(\mu_{N_A,N_B})\ 
 \left|\begin{array}{l}\mu_{N_A,N_B}\in \mathcal M_{N_A|N_B}^+(\mathbb R^d)\ ,\\[3mm] \sum_{\substack{N_A,N_B\in\mathbb{N}\\ N_A+N_B=M+N }}\mu_{N_A,N_B}=\frac{M\mu + N\nu}{M+N}\end{array}\right.\right\}\ .
\end{multline}
Indeed, the plans $\gamma_{M+N}|_{B_{N_A,N_B}}$ with marginals $\bar\mu_{N_A,N_B}$ as in the decomposition \eqref{decompose_gamma_mu} form a possible decomposition as in the minimum from \eqref{firstsplit}, and vice-versa, a set of minimizing measures $\mu_{N_A,N_B}$ on the right in \eqref{firstsplit} together with the optimal transport plans realizing $F_{N_A|N_B, \mathsf{c}}^\mathrm{OT}(\mu_{N_A,N_B})$ form by superposition a competitor to the minimization problem in $F_{M+N,\mathsf{c}}^\mathrm{OT}\left(\tfrac{M\mu + N\nu}{M+N}\right)$. By the linearity of the map
\begin{equation*}
 \gamma_{M+N}\mapsto \int_{(\mathbb R^d)^{M+N}}\sum_{1\le i\neq j\le M+N} \mathsf{c}(x_i,x_j) d\gamma_{M+N}(x_1,\ldots,x_{M+N})\ ,
\end{equation*}
we can then directly compare the costs, and conclude the proof of \eqref{firstsplit}.

\medskip

\textbf{Step 3.} If $\mu_{N_A,N_B}\in\mathcal P(\R^d)$ is the marginal of a plan $\gamma_{N_A|N_B}\in\mathcal P_{sym}((\R^d)^{N_A|N_B})$ with $N_A,N_B\neq 0$, then $\mu_{N_A,N_B}$ splits due to the definition \eqref{marginalnanb} as 
\begin{equation}\label{splitmuab}
\mu_{N_A,N_B}=\frac{N_A}{M+N}\mu_{N_A}+\frac{N_B}{M+N}\mu_{N_B},
\end{equation}
where $\mu_{N_A}\in\mathcal P(A), \mu_{N_B}\in\mathcal P(B)$. We then claim that 
%
%
\begin{equation}\label{split}
 F^\mathrm{OT}_{N_A|N_B,\mathsf{c}_A+\mathsf{c}_B}\left(\mu_{N_A,N_B}\right) =\left\{\begin{array}{l} F_{N_A,\mathsf{c}_A}^\mathrm{OT}(\mu_{N_A}) + F_{N_B,\mathsf{c}_B}^\mathrm{OT}(\mu_{N_B}),  \quad\mbox{if}\quad N_A,N_B\in\{2,\ldots, M+N-2\}\\[3mm]
 F_{N_A,\mathsf{c}_A}^\mathrm{OT}(\mu_{N_A}), \,\,\,\,\,\,\,\,\,\,\,\,\,\,\,\,\,\,\,\,\,\,\,\,\,\,\,\,\,\,\,\,\,\,\,\,\,\,\,\,\,\,\quad\mbox{if}\quad N_B\in\{0,1\}\\[3mm]
 F_{N_B,\mathsf{c}_B}^\mathrm{OT}(\mu_{N_B}),\,\,\,\,\,\,\,\,\,\,\,\,\,\,\,\,\,\,\,\,\,\,\,\,\,\,\,\,\,\,\,\,\,\,\,\,\,\,\,\,\,\,\quad\mbox{if}\quad N_A\in\{0,1\}.                              \end{array}\right.
\end{equation}
Indeed, define first the ``space of precisely split $(M+N)$-points configurations'' in $\mathbb R^d$ given by
\begin{equation}\label{defcnanb}
\mathcal C_{N_A|N_B}(\mathbb R^d):= \left\{C_{\vec x}:=\{x_1,\ldots,x_{M+N}\}\subset\mathbb R^d:\ \#(C_{\vec x}\cap A)=N_A, \#(C_{\vec x}\cap B)= N_B\right\}\ .
\end{equation}
Analogously to this, we define $\mathcal C_N(\mathbb R^d):=\{\{x_1,\ldots,x_N\}\subset \mathbb R^d\}$. Note that in all cases we consider the point configurations $\{x_1,\ldots,x_k\}$ as \emph{multisets}. 

\medskip 

Then, if under the isomorphism $\mathcal P_{sym}((\mathbb R^d)^{N_A|N_B})\simeq \mathcal P(\mathcal C_{N_A|N_B}(\mathbb R^d))$ induced by the map
\begin{equation*}
(\mathbb R^d)^{M+N}\ni\vec x:=(x_1,\ldots,x_{M+N})\mapsto \{x_1,\ldots,x_{M+N}\}=:C_{\vec x}\in\mathcal C_{M+N}(\mathbb R^d)\ ,
\end{equation*}
we identify a plan $\gamma_{N_A|N_B}$ having marginal $\mu_{N_A,N_B}$, to a measure $\bar \gamma_{N_A|N_B}\in \mathcal P(\mathcal C_{N_A|N_B}(\mathbb R^d))$, then we may take the operations $r_A:C_{\vec x}\mapsto C_{\vec x}\cap A$ and the similarly defined $r_B$ and consider the pushforward
\begin{equation}\label{gamma_na}
\gamma_{N_A}\simeq\bar \gamma_{N_A}:=(r_A)_\#\bar\gamma_{N_A|N_B}\in \mathcal P(\mathcal C_{N_A}(\mathbb R^d))\simeq\mathcal P_{sym}((\mathbb R^d)^{N_A})\ ,
\end{equation}
We next use the following lemma, which we prove later, and which translates the formula \eqref{gamma_na} more precisely. In particular, we can, and will, use \eqref{explicit_clumsy} as an equivalent definition of $\gamma_{N_A}$, respectively of $\gamma_{N_B}$, and we will not be  further using \eqref{gamma_na} below.
\begin{lemma}\label{lemma_na}
The measure $\gamma_{N_A}$ from \eqref{gamma_na} is given, in terms of duality with test functions $f\in C_b((\R^d)^{N_A})$, by the formula
\begin{multline}\label{explicit_clumsy}
\int f(x_1,\ldots,x_{N_A})d\gamma_{N_A}(x_1,\ldots,x_{N_A})\\= \frac{1}{N_A!N_B!}\sum_{\sigma\in S_{M+N}}\int f\left(x_{\sigma(1)},\ldots,x_{\sigma(N_A)}\right)\prod_{i=1}^{N_A}1_A\left(x_{\sigma(i)}\right)d\gamma_{N_A|N_B}(x_1,\ldots,x_{M+N}) \ .
\end{multline}
\end{lemma}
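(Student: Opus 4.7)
\textbf{Proof plan for Lemma \ref{lemma_na}.} The plan is to compute $\int f\,d\gamma_{N_A}$ by tracing $f$ backwards through the three identifications that define $\gamma_{N_A}$: first the symmetric-to-multiset isomorphism $\mathcal P_{sym}((\R^d)^{N_A})\simeq \mathcal P(\mathcal C_{N_A}(\R^d))$, then the pushforward by $r_A$, and finally the inverse symmetric-to-multiset isomorphism at level $M+N$. Once $f$ is expressed as a function on ordered $(M+N)$-tuples, the formula \eqref{explicit_clumsy} will drop out of a simple combinatorial count of how many $\sigma\in S_{M+N}$ select the $A$-coordinates of a given configuration $(x_1,\ldots,x_{M+N})\in B_{N_A,N_B}$.

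First I would reduce to the case where $f$ is symmetric, noting that $\gamma_{N_A}$ is symmetric by construction (being the image of a symmetric measure under a permutation-equivariant operation) and that the sum $\sum_{\sigma\in S_{M+N}}$ on the right of \eqref{explicit_clumsy} automatically symmetrizes $f$ in its $N_A$ arguments. Under the multiset isomorphism, a symmetric $f$ on $(\R^d)^{N_A}$ corresponds to a function $\tilde f$ on $\mathcal C_{N_A}(\R^d)$, and $\int f\,d\gamma_{N_A}=\int \tilde f\,d\bar\gamma_{N_A}$. By definition \eqref{gamma_na} of $\gamma_{N_A}$ as the pushforward under $r_A(C):=C\cap A$, this equals $\int \tilde f(C\cap A)\,d\bar\gamma_{N_A|N_B}(C)$. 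Going back to ordered tuples via the isomorphism at level $M+N$, and using that the integrand $\vec x\mapsto \tilde f(\{x_1,\ldots,x_{M+N}\}\cap A)$ is a symmetric function of $(x_1,\ldots,x_{M+N})$, we obtain
\begin{equation*}
\int f\,d\gamma_{N_A}=\int \tilde f\bigl(\{x_1,\ldots,x_{M+N}\}\cap A\bigr)\,d\gamma_{N_A|N_B}(x_1,\ldots,x_{M+N}).
\end{equation*}

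It remains to check that the integrand on the right coincides, for $\gamma_{N_A|N_B}$-a.e.\ configuration, with the symmetrized sum $\frac{1}{N_A!N_B!}\sum_{\sigma\in S_{M+N}}f(x_{\sigma(1)},\ldots,x_{\sigma(N_A)})\prod_{i=1}^{N_A}1_A(x_{\sigma(i)})$. Since $\gamma_{N_A|N_B}$ is concentrated on $B_{N_A,N_B}$, we may fix $(x_1,\ldots,x_{M+N})\in B_{N_A,N_B}$ and let $i_1<\cdots<i_{N_A}$ denote the indices with $x_{i_j}\in A$. The indicator in the sum forces $\sigma$ to map $\{1,\ldots,N_A\}$ bijectively onto $\{i_1,\ldots,i_{N_A}\}$; for each of the $N_A!$ such choices, the remaining $N_B!$ permutations of positions $N_A{+}1,\ldots,M{+}N$ contribute the same value because $f(x_{\sigma(1)},\ldots,x_{\sigma(N_A)})$ depends only on $\sigma|_{\{1,\ldots,N_A\}}$. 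Symmetry of $f$ then collapses all $N_A!$ contributions to $N_A!\,f(x_{i_1},\ldots,x_{i_{N_A}})$, so the sum equals $N_A!N_B!\,\tilde f(\{x_1,\ldots,x_{M+N}\}\cap A)$, and dividing by $N_A!N_B!$ gives exactly the integrand above.

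The only delicate point is keeping the symmetrization conventions straight: one must ensure that the symmetric–multiset isomorphism is applied consistently at both levels $M+N$ and $N_A$, and that the pushforward by $r_A$ is compatible with it. No genuine analytic obstacle arises, and the argument is essentially a combinatorial bookkeeping that extends verbatim to general (non-symmetric) $f\in C_b((\R^d)^{N_A})$ by the symmetrization remark made at the outset.
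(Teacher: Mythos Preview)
Your proof is correct and follows essentially the same route as the paper: both arguments unwind the multiset isomorphisms at levels $N_A$ and $M+N$, apply the pushforward by $r_A$, and reduce to the same pointwise combinatorial identity (the paper's \eqref{claimsospeso}) counting the $N_A!N_B!$ permutations $\sigma\in S_{M+N}$ that send $\{1,\ldots,N_A\}$ onto the $A$-indices. Your initial reduction to symmetric $f$ is a mild streamlining that lets you bypass the paper's explicit construction of a measurable section $\rho_{M+N}:\mathcal C_{M+N}(\R^d)\to(\R^d)^{M+N}$ and the verification that $((\rho_{M+N})_\#\bar\gamma)^{\mathrm{sym}}=\gamma$, but the substance of the argument is the same.
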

Next, we will use \eqref{explicit_clumsy} for a function $f(x_1)$ depending only on the first variable $x_1$. Note that in particular for any configuration $(x_1,\ldots,x_{M+N})\in B_{N_A,N_B}$ there exist \emph{precisely} $N_A!N_B!$ permutations $\sigma\in S_{M+N}$ such that $x_{\sigma(1)},\ldots,x_{\sigma(N_A)}\in A$, which shows that $\gamma_A$ is a probability measure. Then for each $\sigma\in S_{M+N}$ the set $\{\sigma(j): j=1,\ldots,M+N\}$ is in bijection with $\{1,\ldots,M+N\}$, and we have by the symmetry of $\gamma_{N_A|N_B}$ that 
\begin{multline}
\label{gammaNA}
\frac{1}{N_A!N_B!}\sum_{\sigma\in S_{M+N}}\int f\left(x_{\sigma(1)}\right)\prod_{i=1}^{N_A}1_A\left(x_{\sigma(i)}\right)d\gamma_{N_A|N_B}(x_1,\ldots,x_{M+N})\\
=\int \frac{1}{M+N}\sum_{j=1}^{M+N}\frac{1}{N_A!N_B!}\sum_{\substack{\sigma\in S_{M+N},\\ x_{\sigma(1)},\ldots,x_{\sigma(N_A)}\in A}}f(x_{\sigma(j)})d\gamma_{N_A|N_B}(x_1,\ldots,x_{M+N})\\
=\int \frac{1}{N_A!N_B!}\sum_{\substack{\sigma\in S_{M+N},\\ x_{\sigma(1)},\ldots,x_{\sigma(N_A)}\in A}}\frac{1}{M+N}\sum_{j=1}^{M+N}f(x_{\sigma(j)})d\gamma_{N_A|N_B}(x_1,\ldots,x_{M+N})\\
=\int \frac{1}{N_A!N_B!}\sum_{\substack{\sigma\in S_{M+N},\\ x_{\sigma(1)},\ldots,x_{\sigma(N_A)}\in A}}\frac{1}{M+N}\sum_{j=1}^{M+N}f(x_j)d\gamma_{N_A|N_B}(x_1,\ldots,x_{M+N})\\
=\int \frac{1}{M+N}\sum_{j=1}^{M+N}f(x_j)d\gamma_{N_A|N_B}(x_1,\ldots,x_{M+N})=\int f d\mu_{N_A,N_B}.
\end{multline}
The above chain of equalities is then an expression of the marginal $\mu_{N_A,N_B}$ of $\gamma_{N_A|N_B}$ in duality with an arbitrary test function $f$. The marginal for $\mu|_A$ is given by using the above for the modified test function $f(x)1_A(x)$. Using this choice instead of $f$, together with the self-evident fact that for each $(x_1,\ldots,x_{M+N})\in B_{N_A,N_B}$, a cyclic permutation of the indices $1,\ldots,N_A$, induces a bijection on the set of permutations $\{\sigma\in S_{M+N}: x_{\sigma(1)},\ldots,x_{\sigma(N_A)}\in A\}$, and we have from the second and last lines in \eqref{gammaNA}
\begin{multline}
\int f 1_Ad\mu_{N_A,N_B}=\int\frac{1}{M+N}\sum_{j=1}^{M+N}\frac{1}{N_A!N_B!}\sum_{\substack{\sigma\in S_{M+N},\\ x_{\sigma(1)},\ldots,x_{\sigma(N_A)}\in A}}f(x_{\sigma(j)})1_A(x_{\sigma(j)})d\gamma_{N_A|N_B}(x_1,\ldots,x_{M+N})
\\
=\int \frac{1}{N_A!N_B!}\sum_{\substack{\sigma\in S_{M+N},\\ x_{\sigma(1)},\ldots,x_{\sigma(N_A)}\in A}}\frac{1}{M+N}\sum_{j=1}^{N_A}f(x_{\sigma(j)})d\gamma_{N_A|N_B}(x_1,\ldots,x_{M+N})\\
=\int \frac{1}{N_A!N_B!}\sum_{\substack{\sigma\in S_{M+N},\\ x_{\sigma(1)},\ldots,x_{\sigma(N_A)}\in A}}\frac{N_A}{M+N}f(x_{\sigma(1)})d\gamma_{N_A|N_B}(x_1,\ldots,x_{M+N})\\
=\frac{N_A}{M+N}\int f(x_1)d\gamma_{N_A}(x_1,\ldots,x_{N_A})
=\frac{N_A}{M+N}\int f d\mu_{N_A}.
\end{multline}
This shows that $\mu_{N_A,N_B}|_A=\tfrac{N_A}{M+N}\mu_{N_A}$ and (together with the analogue statement for $\gamma_B$) we obtain that the marginals of $\gamma_A, \gamma_B$ are $\mu_{N_A},\mu_{N_B}$, as they appear in \eqref{splitmuab}, respectively. We now find, using the support and symmetry properties \eqref{pnanb} of $\gamma_{N_A|N_B}$, that
\begin{eqnarray}\label{firstdo}
 \lefteqn{\int f(x_1,\ldots,x_{M+N})d\gamma_{N_A|N_B}(x_1,\ldots,x_{M+N})}\nonumber\\
 &=&\frac{1}{N_A!N_B!}\sum_{\sigma\in S_{M+N}}\int f(x_{\sigma(1)},\ldots,x_{\sigma(M+N)})\prod_{i=1}^{N_A}1_A\left(x_{\sigma(i)}\right)d\gamma_{N_A|N_B}(x_1,\ldots,x_{M+N})\nonumber\\
 &=&\frac{1}{N_A!N_B!}\sum_{\sigma\in S_{M+N}}\int f(x_{\sigma(1)},\ldots,x_{\sigma(M+N)})\prod_{i=N_A+1}^{N+M}1_B\left(x_{\sigma(i)}\right)d\gamma_{N_A|N_B}(x_1,\ldots,x_{M+N}).
\end{eqnarray}
To prove \eqref{firstdo}, note that for any $(N+M)$-ple $(x_1,\ldots,x_{N+M})$ which belongs to any of the sets $B_{N_A,N_B}$ from \eqref{pnanb}, there exists precisely one partition $J_A:=\{i_1,\ldots,i_{N_A}\}, J_B:=\{i_{N_A+1},\ldots,i_{N+M}\}$ of $\{1,\ldots,N+M\}$ such that $x_i\in A$ if $i\in J_A$, and $x_i\in B$ if $i\in J_B$. Only the terms in the sum on the right in \eqref{firstdo} corresponding to permutations $\sigma\in S_{M+N}$ which send $\{1,\ldots,N_A\}, \{N_A+1,\ldots,N+M\}$ into respectively $J_A,J_B$ are nonzero, and the number of those $\sigma$ is $N_A!N_B!$. Together with the invariance under symmetrization of $\gamma_{N_A|N_B}$, this proves \eqref{firstdo}. Note that, even if this is not directly evident from the formula, the roles of $A$ and $B$ for the right hand side of formula \eqref{firstdo} are interchangeable.

Now by using \eqref{firstdo} for the choice $f(x_1,\ldots,x_{M+N})=\sum_{1\le i\neq j\le M+N}(\mathsf{c}_A(x_i,x_j)+\mathsf{c}_B(x_i,x_j))$ and \eqref{explicit_clumsy} (which equations extend to any function for which you can apply truncation and monotone convergence, e.g. for all positive $f$ for which the integrals are finite), we have
\begin{eqnarray}\label{cost_sym}
\lefteqn{\int\sum_{1\le i\neq j\le M+N}\left(\mathsf{c}_A(x_i,x_j) + \mathsf{c}_B(x_i,x_j)\right) d\gamma_{N_A|N_B}(x_1,\ldots,x_{M+N})}\nonumber\\
&=&\frac{1}{N_A!N_B!}\sum_{\sigma\in S_{M+N}}\int\sum_{1\le i\neq j\le M+N}\mathsf{c}_A(x_{\sigma(i)},x_{\sigma(j)})\prod_{i=1}^{N_A}1_A\left(x_{\sigma(i)}\right)d\gamma_{N_A|N_B}(x_1,\ldots,x_{M+N})\nonumber\\
&&+\frac{1}{N_A!N_B!}\sum_{\sigma\in S_{M+N}}\int\sum_{1\le i\neq j\le M+N}\mathsf{c}_B(x_{\sigma(i)},x_{\sigma(j)})\prod_{i=N_A+1}^{M+N}1_B\left(x_{\sigma(i)}\right)d\gamma_{N_A|N_B}(x_1,\ldots,x_{M+N})\nonumber\\
&=&\frac{1}{N_A!N_B!}\sum_{\sigma\in S_{M+N}}\int\sum_{1\le i\neq j\le N_A}\mathsf{c}_A(x_{\sigma(i)},x_{\sigma(j)})\prod_{i=1}^{N_A}1_A\left(x_{\sigma(i)}\right)d\gamma_{N_A|N_B}(x_1,\ldots,x_{M+N})\nonumber\\
&&+\frac{1}{N_A!N_B!}\sum_{\sigma\in S_{M+N}}\int\sum_{N_A+1\le i\neq j\le M+N}\mathsf{c}_B(x_{\sigma(i)},x_{\sigma(j)})\prod_{i=N_A+1}^{M+N}1_B\left(x_{\sigma(i)}\right)d\gamma_{N_A|N_B}(x_1,\ldots,x_{M+N})\nonumber\\
&=&\int\sum_{1\le i\neq j\le N_A}\mathsf{c}_A(x_i,x_j)d\gamma_{N_A}(x_1,\ldots,x_{N_A})+\int\sum_{1\le i\neq j\le N_B}\mathsf{c}_B(x_i,x_j)d\gamma_{N_B}(x_1,\ldots,x_{N_B}) ,
\end{eqnarray}
where for the first equality we have applied the interchangeable versions of \eqref{firstdo}, as well as the symmetry under permutations of the above integrands, for the second equality the support properties \eqref{pnanb} of $\mathsf{c}_A$ and $\mathsf{c}_B$, and for third equality we used \eqref{explicit_clumsy}.

Then \eqref{cost_sym} and \eqref{defcnanb}, directly prove \eqref{split}.

\medskip

\textbf{Step 4.} For the case of the split cost $\mathsf{c}=\mathsf{c}_A+\mathsf{c}_B$ we claim that 
\begin{multline}\label{secondsplit}
 F_{M+N,\mathsf{c}_A+\mathsf{c}_B}^\mathrm{OT}\left(\tfrac{M\mu + N\nu}{M+N}\right) \\=\min\left\{\sum_{n=2}^{M+N}\alpha_nF_{n,\mathsf{c}_A}^\mathrm{OT}(\mu_n) + \sum_{n=0}^{M+N-2}\alpha_nF_{M+N-n,\mathsf{c}_B}^\mathrm{OT}(\nu_n)\left|\begin{array}{l}\sum_{n=1}^{M+N}n\alpha_n\mu_n = M\mu\ ,\\[3mm] \sum_{n=0}^{M+N-1}(M+N-n)\,\alpha_n\nu_n= N\nu\ ,\\[2mm] \alpha_n\ge 0, \sum_{n=0}^{M+N}\alpha_n=1, \mu_n\in\mathcal P(A), \nu_n\in\mathcal P(B)\end{array}\right.\right\}\ .
\end{multline}
We now prove \eqref{secondsplit}. For each decomposition as in \eqref{firstsplit}, namely if
\begin{equation}\label{splitting}
\frac{M\mu + N\nu}{M+N}=\sum_{\substack{N_A+N_B=M+N\\N_A,N_B\in\mathbb N}}\mu_{N_A,N_B},\quad\mu_{N_A,N_B}\in\mathcal M_{N_A|N_B}^+(\mathbb R^d)\quad\text{for all}\quad N_A,N_B\ ,
\end{equation}
and taking into account that there are $M+N+1$ terms $(N_A,N_B)$ with $N_A+N_B=M+N$, $N_A,N_B\in\mathbb{N}$, we can perform a reparameterization as follows:
\begin{equation}\label{reparamnanb}
\text{There exist }\left\{\begin{array}{l}
\alpha_n\ge0,\ \sum_{n=0}^{M+N}\alpha_n=1\ ,\\[3mm]\mu_n\in\mathcal P(A), \nu_n\in\mathcal P(B)
\end{array}\right.\text{ such that }
\left\{\begin{array}{l}N_A=n,\quad N_B=M+N-n\ ,\\[2mm] \mu_{N_A,N_B}=\alpha_n \left(\frac{N_A}{M+N}\ \mu_n + \frac{N_B}{M+N}\ \nu_n\right)\ .
\end{array}\right.
\end{equation}
Assuming \eqref{reparamnanb}, we may apply to each $\mu_{N_A,N_B}$ equations (\ref{firstsplit}) and \eqref{split} of the previous step together with the essentially disjoint supports hypothesis on $\mu,\nu,$, to obtain \eqref{secondsplit}.

\medskip

To prove \eqref{reparamnanb}, with the notation in \eqref{splitting}, we first note that $N_A+N_B=M+N$ implies that if we write $N_A=n$ then $N_B=M+N-n$. 

Next, we define $\alpha_{N_A}:=\mu_{N_A,N_B}(A\cup B)\ge 0$, and by applying equation \eqref{splitting}, we find $\sum_{n=0}^{M+N}\alpha_n=1$. In case $\alpha_n\neq 0$, the measures $\mu_n, \nu_n$ are obtained from the renormalized measure $(\alpha_n)^{-1}\mu_{N_A,N_B}\in\mathcal P_{N_A|N_B}(\R^d)$ like in \eqref{splitmuab} from the previous step, via the definition \eqref{marginalnanb}. Then the bottom equation on the right of \eqref{reparamnanb} follows directly. This completes the proof of \eqref{reparamnanb}, and thus we completed the proof of \eqref{secondsplit} too.

\medskip

\textbf{Step 5.} Now comparing the right hand side of \eqref{secondsplit} to the decompositions coming from definition \eqref{OTGCB} of the $F_{\mathrm{GCB},\mathsf{c},N}$-problems, we see that $\{\alpha_n, \mu_n\}_{n\le M+N}$ and $\{\alpha_n,\nu_n\}_{n\le M+N}$ form competitors  for the minimum problems $F_{\mathrm{GCB},\mathsf{c}_A,M}(\mu,M+N)$ and $F_{\mathrm{GCB},\mathsf{c}_B,N}(\nu,M+N)$, respectively. This allows to bound the left hand side of \eqref{secondsplit} from below by the sum of these $GCB$-problems, which concludes the proof.
\end{proof}
\begin{proof}[Proof of Lemma~\ref{lemma_na}:]
\textbf{Step 1.} \textit{Notations.} We recall the notation $C_{\vec x}\stackrel{r_A}{\mapsto}C_{\vec x}\cap A$ for the restriction map and we recall that $\gamma_{N_A}=(r_A)_\#\gamma_{N_A|N_B}$. We further introduce a notation for the map that transforms an ordered $M+N$-ple into a multiset of cardinality $M+N$, which is defined by
\begin{equation}
i_{M+N}:(\R^d)^{M+N}\to\mathcal C_{M+N}(\R^d),\quad i_{M+N}(x_1,\ldots,x_{M+N})=\{x_1,\ldots,x_{M+N}\}.
\end{equation}
Then we use the notation
\begin{equation}
\bar \gamma:=\left(i_{M+N}\right)_\#\gamma,\quad\mbox{for}\quad\gamma\in\mathcal P_\mathrm{sym}((\R^d)^{M+N}).
\end{equation}
We introduce the following equivalence relation
\begin{equation}
(x_1,\ldots,x_{M+N})\sim(y_1,\ldots,y_{M+N})\quad\mbox{if}\quad\exists\sigma\in S_{M+N},\ \forall j=1,\ldots,M+N, x_j=y_{\sigma(j)}.
\end{equation}
We also denote $(y_{\sigma(1)},\ldots,y_{\sigma(M+N)})$ by $\sigma\vec y$ for brevity. Then for the quotient under the above equivalence relation $(\R^d)^{M+N}/\sim$ has a measurable set of representatives $\Omega_{M+N}\subset(\R^d)^{M+N}$, i.e. we can find a measurable set $\Omega_{M+N}$ such that for every $\vec x\in(\R^d)^{M+N}$ there exists a unique $\vec y\in\Omega_{M+N}$ for which there exist $\sigma\in S_{M+N}$ such that $\vec x=\sigma\vec y$. We define the map
\begin{equation}\label{defrho}
\rho_{M+N}:\mathcal C_{M+N}\to(\R^d)^{M+N},\quad \rho_{M+N}(\{x_1,\ldots,x_{M+N}\}):=\vec y\in\Omega_{M+N}\mbox{ s.t. }\vec x\sim\vec y.
\end{equation}
By the preceding discussion, the above choice of $\vec y$ is unique, and thus the map $\rho_{M+N}$ is well defined, and furthermore it is measurable.

\medskip

We next define the involution $\mathcal P((\R^d)^{M+N})\to\mathcal P((\R^d)^{M+N})$ with image $\mathcal P_\mathrm{sym}((\R^d)^{M+N})$, given by the symmetrization operation
\begin{equation}
\gamma\in\mathcal P((\R^d)^{M+N})\mapsto \gamma^\mathrm{sym}\in\mathcal P_\mathrm{sym}((\R^d)^{M+N}),
\end{equation}
given in duality with test functions $f\in C_b((\R^d)^{M+N})$ by the formula
\begin{equation}
\int f(\vec x)d\gamma^\mathrm{sym}(\vec x):= \frac{1}{(M+N)!}\sum_{\sigma\in S_{M+N}}f(\sigma\vec x)d\gamma(\vec x).
\end{equation}
\textbf{Step 2.} \textit{Claim:} There holds $\left(\rho_{M+N}\right)_\#\bar\gamma\in\mathcal P(\Omega_{M+N})\subset\mathcal P((\R^d)^{M+N})$, and for all $\gamma\in\mathcal P_\mathrm{sym}((\R^d)^{M+N})$ there holds, with $\rho_{M+N}$ as defined in \eqref{defrho},
\begin{equation}\label{claimi}
\left(\left(\rho_{M+N}\right)_\#\bar\gamma\right)^\mathrm{sym}=\gamma.
\end{equation}
It is enough to verify that for every test function $f\in C_b((\R^d)^{M+N})$ the duality with the above two measures gives the same value. Indeed, by repeatedly using the formula $\forall f\in C_b, \int fd g_\#\mu=\int f\circ g d\mu$ for the pushforward of a measure and the definitions introduced above, we have:
\begin{eqnarray}
\lefteqn{\int f(\vec x)d\left(\left(\rho_{M+N}\right)_\#\bar\gamma\right)^\mathrm{sym}(\vec x)}\nonumber\\
&=&\int\frac{1}{(M+N)!}\sum_{\sigma\in S_{M+N}}f(\sigma\vec x)d\left(\left(\rho_{M+N}\right)_\#\bar\gamma\right)(\vec x)\nonumber\\
&=&\frac{1}{(M+N)!}\sum_{\sigma\in S_{M+N}}\int f\left(\sigma\left(\rho_{M+N}\left(\{x_1,\ldots,x_{M+N}\}\right)\right)\right)d\bar\gamma\left(\{x_1,\ldots,x_{M+N}\}\right)\nonumber\\
&=&\frac{1}{(M+N)!}\sum_{\sigma\in S_{M+N}}\int f\left(\sigma\left(\rho_{M+N}\circ i_{M+N}(\vec x)\right)\right)d\gamma(\vec x)=\frac{1}{(M+N)!}\sum_{\sigma\in S_{M+N}}\int f(\sigma\vec y)d\gamma(\vec x)\nonumber\\
&=&\frac{1}{(M+N)!}\sum_{\sigma\in S_{M+N}}\int f(\sigma\vec x)d\gamma(\vec x)=\int f(\vec x) d\gamma(\vec x),
\end{eqnarray}
where, to justify the above equalities, we observe that $\{x_1,\ldots,x_{M+N}\}$ is $S_{M+N}$-invariant, then we observed that $\rho_{M+N}\circ i_{M+N}(\vec x)=\vec y$, the unique element in the $S_{M+N}$-orbit of $\vec x$ that lies in $\Omega_{M+N}$, and thus $f\circ\rho_{M+N}\circ i_{M+N}$ is constant on this orbit, which is equal to the $S_{M+N}$-orbit of $\vec y$, and finally we used the fact that $\gamma=\gamma^\mathrm{sym}$.

\medskip

\textbf{Step 3.} \textit{Conclusion of the proof.} We thus have proved \eqref{claimi}, and this gives an inverse operation to the pushforward $(i_{M+N})_\#:\mathcal P_\mathrm{sym}((\R^d)^{M+N})\to\mathcal P(\mathcal C_{M+N}(\R^d))$. We denote this inverse and characterize it as in \eqref{claimi}, as follows
\[
I_{M+N}:\mathcal P(\mathcal C_{M+N}(\R^d))\stackrel{\simeq}{\to}\mathcal P_\mathrm{sym}((\R^d)^{M+N}),\quad I_{M+N}(\bar\gamma):=((\rho_{M+N})_\#\bar\gamma)^\mathrm{sym}\quad\mathrm{for}\quad\bar\gamma\in\mathcal P(\mathcal C_{M+N}(\R^d)).
\]
The link expressed in \eqref{gamma_na}, between $\gamma_{N_A}$ and $\gamma_{N_A|N_B}$ is thus given via the maps $i_{M+N}$ and $I_{M+N}$. More precisely, \eqref{gamma_na} defines $\gamma_{N_A}$ as 
\begin{equation}\label{def_nonclumsy}
\gamma_{N_A}:=I_{N_A}\left((r_A)_\#(i_{M+N})_\#\gamma_{N_A|N_B}\right).
\end{equation}
We now prove that this $\gamma_{N_A}$ satisfies \eqref{explicit_clumsy}. Indeed, starting from the definition \eqref{def_nonclumsy} we have
\begin{eqnarray}
\lefteqn{\int f(\vec x_{N_A})d\gamma_A(\vec x_{N_A})=\int f(\vec x_{N_A}) d I_{N_A}\left((r_A)_\#(i_{M+N})_\#\gamma_{N_A|N_B}\right)}\nonumber\\
&=&\int f(\vec x_{N_A})d\left((\rho_{N_A})_\#(r_A)_\#(i_{M+N})_\#\gamma_{N_A|N_B}\right)^\mathrm{sym}\nonumber\\
&=&\frac{1}{N_A!}\sum_{\sigma\in S_{N_A}}\int f\left(\sigma\left(\rho_{N_A}\circ r_A\circ i_{M+N}(\vec x_{M+N})\right)\right)d\gamma(\vec x_{M+N})\nonumber\\
&=&\frac{1}{N_A!}\sum_{\sigma\in S_{N_A}}\int f\left(\sigma\left(\rho_{N_A}(\{x_1,\ldots,x_{M+N}\}\cap A)\right)\right)d\gamma(\vec x_{M+N}).\label{claimso}
\end{eqnarray}

Next, we claim that for each $\vec x_{M+N}\in B_{N_A,N_B}$ as defined in Step 1 of the proof of Lemma \ref{subadd12gena01} (and thus for $\gamma_{N_A|N_B}$-almost every $\vec x_{M+N}$) there holds 
\begin{equation}\label{claimsospeso}
\frac{1}{N_A!}\sum_{\sigma\in S_{N_A}}f\left(\sigma\left(\rho_{N_A}(\{x_1,\ldots,x_{M+N}\}\cap A)\right)\right)=\frac{1}{N_A!N_B!}\sum_{\sigma\in S_{M+N}}f(x_{\sigma(1)},\ldots, x_{\sigma(N_A)})\prod_{j=1}^{N_A}1_A(x_{\sigma(j)}).
\end{equation}
Indeed, for a fixed $\vec x_{M+N}\in B_{N_A,N_B}$ let $i_1<\ldots<i_{N_A}$ be the indices such that $x_{i_j}\in A$. Then consider first the right hand side of \eqref{claimsospeso}. There are precisely $N_A!N_B!$ possible permutations $\sigma\in S_{M+N}$ which send the set $\{i_1,\ldots,i_{N_A}\}$ into the set $\{1,\ldots, N_A\}$, and they form precisely the set of terms on which the left-hand side in \eqref{claimsospeso} is nonzero. This set of permutations can be written as $\sigma_0 S_{N_A}$, where $\sigma_0\in S_{M+N}$ is any fixed permutation such that $\sigma_0(k)=i_k$ for $1\le k\le N_A$ and $S_{N_A}$ are the permutations of $\{1,\ldots, N_A\}$ which leave fixed the remaining indices. This last subgroup acts transitively on the first $N_A$ indices, so the right hand side of \eqref{claimsospeso} equals, for $f:(\R^d)^{N_A}\to \R$ and with $\rho_{N_A}$ defined similarly to \eqref{defrho},
\begin{equation}\label{claimsosp2}
\frac{1}{N_A!}\sum_{\sigma\in S_{N_A}}f(x_{i_{\sigma(1)}},\ldots, x_{i_{\sigma(N_A)}}).
\end{equation}
Now, for the left hand side of \eqref{claimsospeso}, note that $\rho_{N_A}(\{x_1,\ldots,x_{M+N}\}\cap A)$ is of the form $(x_{k_1},\ldots,x_{k_{N_A}})$, and, by the transitivity action of $S_{N_A}$ on the indices, the average on left hand side of \eqref{claimsospeso} is identified to \eqref{claimsosp2}, as desired. As now \eqref{claimsospeso} is proved, we may substitute its right hand side into \eqref{claimso}, which then lends the right hand side of \eqref{gamma_na}, proving the claim \eqref{explicit_clumsy} of the lemma.
\end{proof}
Lemma \ref{subadd12gena01} allows to show the following useful result.
\begin{lemma}\label{subadd_gcb}
Let $\mathsf{c}:\mathbb{R}^d\times\mathbb{R}^d\rightarrow\mathbb{R}_{>0}\cup\{+\infty\}$. Consider $\mu_1, \ldots, \mu_k\in\mathcal{P}(\mathbb{R}^d)$ with densities $\rho_1, \ldots,\rho_k,$ such that the integrals needed to define the quantities below are finite, such that the $\mu_j$ have essentially disjoint supports, i.e. that for every Borel set $A$ and every $1\le i\neq j\le k$ there holds $\mu_i(A)\mu_j(A)=0$. Fix $M_1,\ldots, M_k\in\mathbb R_{>0}$, let $\bar{M}:=\sum_{i=1}^k M_i\in\mathbb{R}_{>0},$ and denote by $\mathsf{c}_{ii}(x,y):=1_{\Lambda_i}(x)1_{\Lambda_i}(y)\mathsf{c}(x,y)$ for $i=1,\ldots,k,$ and $x,y\in\mathbb R^d$. Let  $\mu\in \mathcal{P}(\mathbb{R}^d)$ have density $\sum_{i=1}^kM_i \rho_i/\bar{M}$. Set $M'\in\mathbb{N}_+,M'\ge \max\{\bar M,1\}$. Then
\begin{equation}\label{supadd1gen0eqGCB}
F_{\mathrm{GCB},\bar M, \sum_{i=1}^k\mathsf{c}_{ii}}^\mathrm{OT}\left(\mu,M'\right) 
\ge \sum_{i=1}^kF_{\mathrm{GCB},M_i,\mathsf{c}_{ii}}^\mathrm{OT}\left(\mu_i, M'\right)=\sum_{i=1}^kF_{\mathrm{GCB},M_i,\mathsf{c}}^\mathrm{OT}\left(\mu_i, M'\right)\ .
\end{equation}
and 
\begin{equation}\label{supadd1gen0eqGC}
F_{\mathrm{GC},\bar M, \sum_{i=1}^k\mathsf{c}_{ii}}^\mathrm{OT}\left(\mu\right) 
\ge \sum_{i=1}^kF_{\mathrm{GC},M_i,\mathsf{c}_{ii}}^\mathrm{OT}\left(\mu_i\right)=\sum_{i=1}^kF_{\mathrm{GC},M_i,\mathsf{c}}^\mathrm{OT}\left(\mu_i\right)\ .
\end{equation}
The inequality in (\ref{supadd1gen0eqGC}) becomes an equality if $M_i>1$, for all $i=1,\ldots,k$.
\end{lemma}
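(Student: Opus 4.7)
The strategy is to reduce the stated inequalities to Lemma~\ref{subadd12gena01}, applied termwise to each summand of a near-optimal grand-canonical decomposition on the left-hand side, followed by a convex gluing step that reassembles the resulting pieces into a single competitor for each right-hand side.

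The rightmost equalities in (\ref{supadd1gen0eqGCB}) and (\ref{supadd1gen0eqGC}) are easy: any decomposition $(\beta_m,\nu_m)$ for $F_{\mathrm{GCB},M_i,\cdot}^\mathrm{OT}(\mu_i,M')$ satisfies $\sum_m m\beta_m\nu_m=M_i\mu_i$, which, by positivity of $\beta_m$ and $\nu_m$, forces every $\nu_m$ with $\beta_m>0$ to be concentrated on $\Lambda_i$; on $\Lambda_i^{\,m}$ the three costs $\sum_j\mathsf{c}_{jj}$, $\mathsf{c}_{ii}$ and $\mathsf{c}$ coincide.

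For the main inequality in (\ref{supadd1gen0eqGCB}), I would take a near-optimal competitor $(\alpha_n,\mu_n)_{2\le n\le M'}$ for $F_{\mathrm{GCB},\bar M,\sum_j\mathsf{c}_{jj}}^\mathrm{OT}(\mu,M')$. Since $\sum_n n\alpha_n\mu_n=\bar M\mu$ is supported in $\cup_i\Lambda_i$, each $\mu_n$ with $\alpha_n>0$ splits as $\mu_n=\sum_i\tfrac{M_i^{(n)}}{n}\mu_{n,i}$ with $M_i^{(n)}:=n\mu_n(\Lambda_i)$ and $\mu_{n,i}:=\mu_n|_{\Lambda_i}/\mu_n(\Lambda_i)$ (discarding indices where $M_i^{(n)}=0$). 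Because $\sum_iM_i^{(n)}=n\in\N_+$, Lemma~\ref{subadd12gena01} together with the monotonicity of the $\mathrm{GCB}$ functional in its last argument yields
\[
F_{n,\sum_j\mathsf{c}_{jj}}^\mathrm{OT}(\mu_n)\ \ge\ \sum_iF_{\mathrm{GCB},M_i^{(n)},\mathsf{c}}^\mathrm{OT}(\mu_{n,i},n)\ \ge\ \sum_iF_{\mathrm{GCB},M_i^{(n)},\mathsf{c}}^\mathrm{OT}(\mu_{n,i},M').
\]
It remains to prove, for each fixed $i$, the gluing inequality
\[
\sum_n\alpha_nF_{\mathrm{GCB},M_i^{(n)},\mathsf{c}}^\mathrm{OT}(\mu_{n,i},M')\ \ge\ F_{\mathrm{GCB},M_i,\mathsf{c}}^\mathrm{OT}(\mu_i,M').
\]
I would establish this by selecting near-optimal decompositions $(\beta_{n,m}^{(i)},\nu_{n,m}^{(i)})_m$ of the left-hand terms, setting $\tilde\beta_m^{(i)}:=\sum_n\alpha_n\beta_{n,m}^{(i)}$ for $m\ge1$, $\tilde\beta_0^{(i)}:=1-\sum_{m\ge1}\tilde\beta_m^{(i)}$, and defining the barycentric measures $\tilde\nu_m^{(i)}:=(\tilde\beta_m^{(i)})^{-1}\sum_n\alpha_n\beta_{n,m}^{(i)}\nu_{n,m}^{(i)}$ (arbitrary when $\tilde\beta_m^{(i)}=0$). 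This furnishes an admissible $\mathrm{GCB}$-competitor for $\mu_i$: the marginal constraint is preserved by linearity, and the cost bound follows from the convexity of $\nu\mapsto F_{m,\mathsf{c}}^\mathrm{OT}(\nu)$, which is immediate from convex combinations of symmetric transport plans. Letting the optimality defects tend to zero yields the inequality; then (\ref{supadd1gen0eqGC}) follows by the identical argument without the $M'$-truncation.

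For the equality claim in (\ref{supadd1gen0eqGC}) under $M_i>1$, the reverse inequality is obtained by iterating the subadditivity (\ref{subadgcinit}) from Remark~\ref{rmk_gcb} applied to the common cost $\sum_j\mathsf{c}_{jj}$: the essential disjointness of supports collapses the mean-field double integral to $\bar M^2\iint\sum_j\mathsf{c}_{jj}\,d\mu\,d\mu=\sum_iM_i^2\iint\mathsf{c}\,d\mu_i\,d\mu_i$, so the mean-field contributions cancel on both sides of the resulting $E^\mathrm{xc}$-inequality, leaving precisely the reverse of (\ref{supadd1gen0eqGC}) once one invokes the already-established equality $F_{\mathrm{GC},M_i,\sum_j\mathsf{c}_{jj}}^\mathrm{OT}(\mu_i)=F_{\mathrm{GC},M_i,\mathsf{c}}^\mathrm{OT}(\mu_i)$. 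The main technical obstacle lies in the gluing step: one must verify both that $\tilde\beta_0^{(i)}\ge0$ (using $\sum_{m\ge1}\beta_{n,m}^{(i)}\le1$ together with $\sum_{n\ge2}\alpha_n\le1$) and that the marginal identity $\sum_m m\tilde\beta_m^{(i)}\tilde\nu_m^{(i)}=M_i\mu_i$ really does emerge, a calculation that must carefully track the $n=0,1$ and $m=0,1$ contributions, a delicacy whose accommodation is the very reason the grand-canonical parameters $\alpha_0,\alpha_1$ were introduced in (\ref{OTGCB}).
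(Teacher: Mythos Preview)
Your proposal is correct and follows essentially the same route as the paper's proof: take a (near-)optimal grand-canonical competitor, apply Lemma~\ref{subadd12gena01} to each $n$-term to split it across the $\Lambda_i$, then reassemble via the convexity inequality $\sum_n\lambda_n F_{m,\mathsf{c}}^\mathrm{OT}(\nu_n)\ge F_{m,\mathsf{c}}^\mathrm{OT}(\sum_n\lambda_n\nu_n)$ to obtain a competitor for each $F_{\mathrm{GCB},M_i,\mathsf{c}}^\mathrm{OT}(\mu_i,M')$. The only cosmetic differences are that the paper reduces to $k=2$ and iterates, uses exact minimizers rather than near-optimal ones, and justifies the convexity step via Monge--Kantorovich duality instead of the primal transport-plan argument you give; the equality case under $M_i>1$ is handled identically by invoking \eqref{subadgcinit}.
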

\begin{proof}
We start by proving (\ref{supadd1gen0eqGCB}). Note that the thesis is trivially valid if $M^\prime<2$ due to the convention that $F_N^\mathrm{OT}=0$ for $N=0,1$, and thus we assume from now on that $M^\prime\ge 2$. We prove the lemma for the case of a decomposition into two measures, and by applying the lemma repeatedly we can extend it to the case of $k$ measures.

By definition of $F_{\mathrm{GCB},\bar M,\mathsf{c}_{11} +\mathsf{c}_{22}}^{\mathrm{OT}}(\mu,M')$, there exist $\alpha_n\ge 0$ and $\mu_n\in\mathcal P(\mathbb R^d)$, with $0\le n\le M'$, with
\begin{equation}\label{coeff_gcb_mu}
\sum_{n=0}^{M'}\alpha_n=1,\quad \sum_{n=1}^{M'}n\alpha_n\mu_n=\bar M\mu,\quad F_{\mathrm{GCB},\bar M,\mathsf{c}_{11} +\mathsf{c}_{22}}^{\mathrm{OT}}(\mu,M')=\sum_{n=2}^{M'}\alpha_n F_{n,\mathsf{c}_{11}+\mathsf{c}_{22}}^{\mathrm{OT}}(\mu_n)\ .
\end{equation}
Since for each $n\in\{1,\ldots,M'\}$, we can write $n\mu_n=n\mu_n|_{\Lambda_1}+n\mu_n|_{\Lambda_2}$. Thus, by Lemma \ref{subadd12gena01} and (\ref{optimnenew}) there exist positive real numbers $M_n^1,M_n^2\ge 0,$ and measures $\mu_n^1\in\mathcal P(\Lambda_1), \mu_n^2\in\mathcal P(\Lambda_2),$ such that
\begin{equation}\label{apply_lem47}
M_n^1+M_n^2=n,\quad n\mu_n=M_n^1\mu_n^1+M_n^2\mu_n^2,\quad F_{n,\mathsf{c}_{11}+\mathsf{c}_{22}}^{\mathrm{OT}}(\mu_n)\ge F_{\mathrm{GCB},M_n^1,\mathsf{c}}^\mathrm{OT}\left(\mu_n^1, n\right)+F_{\mathrm{GCB},M_n^2,\mathsf{c}}^\mathrm{OT}\left(\mu_n^2, n\right)\ .
\end{equation}
Let $\alpha_{n,m}^1,\alpha_{n,m}^2\ge 0,$ and $\mu_{n,m}^1\in\mathcal P(\Lambda_1), \mu_{n,m}^2\in\mathcal P(\Lambda_2)$, $m\in\{0,\ldots,n\},$ be such that for $j=1,2$
\begin{equation}\label{coeff_gcb_munj}
\quad F_{\mathrm{GCB},M_n^j,\mathsf{c}}^{\mathrm{OT}}(\mu_n^j,n)=\sum_{m=2}^n\alpha_{n,m}^j F_{m,\mathsf{c}}^\mathrm{OT}(\mu_{n,m}^j),\quad \sum_{m=0}^n\alpha_{n,m}^j=1,\quad \sum_{m=1}^nm\alpha_{n,m}^j\mu_{n,m}^j=M_n^j\mu_n^j\ .
\end{equation}
Up to adding extra coefficients $\alpha_{n,n+1}^j=\cdots=\alpha_{n,M'}^j=0$, we may extend the sums in \eqref{coeff_gcb_munj} up to $m=M'$. Now, we sum \eqref{coeff_gcb_munj} multiplied by the coefficients from \eqref{coeff_gcb_mu}, and we find from \eqref{coeff_gcb_mu}, \eqref{apply_lem47} and \eqref{coeff_gcb_munj} 
\begin{align}
\sum_{m=0}^{M'}\sum_{n=0}^{M'}\alpha_n\alpha_{n,m}^1=1,\quad \sum_{m=1}^{M'} m\sum_{n=1}^{M'}\alpha_n\alpha_{n,m}^1\mu_{n,m}^1=\sum_{n=1}^{M'}\alpha_nM_n^1\mu_n^1=\bar M\mu|_{\Lambda_1}=M_1\mu_1\ ,\nonumber \\
\sum_{m=0}^{M'}\sum_{n=0}^{M'}\alpha_n\alpha_{n,m}^2=1,\quad \sum_{m=1}^{M'} m\sum_{n=1}^{M'}\alpha_n\alpha_{n,m}^2\mu_{n,m}^2=\sum_{n=1}^{M'}\alpha_nM_n^2\mu_n^2=\bar M\mu|_{\Lambda_2}=M_2\mu_2\ ,\nonumber \\
\mbox{and }\quad F_{\mathrm{GCB},N,\mathsf{c}_{11} +\mathsf{c}_{22}}^{\mathrm{OT}}(\mu,M') \ge \sum_{m=2}^{M'}\sum_{n=2}^{M'}\alpha_n\alpha_{n,m}^1 F_{m,\mathsf{c}}^\mathrm{OT}(\mu_{n,m}^1)+\sum_{m=2}^{M'}\sum_{n=2}^{M'}\alpha_n\alpha_{n,m}^2 F_{m,\mathsf{c}}^\mathrm{OT}(\mu_{n,m}^2)\ .\label{ineq_long}
\end{align}
By defining $\beta_m^1=\sum_{n=0}^{M'}\alpha_n\alpha_{n,m}^1$, $\beta_m^2=\sum_{n=0}^{M'}\alpha_n\alpha_{n,m}^2$, and by using the inequalities below and their analogues for $\mu_2$ (where the inequality below holds by the Monge-Kantorovich duality), we get for $\beta_m^1>0$
\begin{equation}\label{superadd}
\frac{\sum_{n=2}^{M'}\alpha_n\alpha_{n,m}^1F_{m,\mathsf{c}}^\mathrm{OT}(\mu_{n,m}^1)}{\sum_{n=0}^{M'}\alpha_n\alpha_{n,m}^1}=\frac{1}{\beta_m^1}\sum_{n=2}^{M'}\alpha_n\alpha_{n,m}^1F_{m,\mathsf{c}}^\mathrm{OT}(\mu_{n,m}^1)\ge F_{m,\mathsf{c}}^\mathrm{OT}\left(\frac{1}{\beta_m^1}\sum_{n=2}^{M'}\alpha_n\alpha_{n,m}^1\mu_{n,m}^1\right).
\end{equation}
Due to the first line in \eqref{ineq_long}, we find that the measures $(\beta_m^1)^{-1}\sum_{n=2}^{M'}\alpha_n\alpha_{n,m}^1\mu_{n,m}^1$ and coefficients $\beta_m^1$ form a competitor for the definition of $F_{\mathrm{GCB},M_1,\mathsf{c}}^\mathrm{OT}(\mu_1,M')$, and a similar construction and statement apply also for $\mu_2$. Thus, we obtain in the last line of \eqref{ineq_long} from (\ref{superadd}) 
\begin{eqnarray*}
 F_{\mathrm{GCB},N,\mathsf{c}_{11} +\mathsf{c}_{22}}^{\mathrm{OT}}(\mu,M')&\ge& \sum_{m=2}^{M'}\beta_m^1 F_{m,\mathsf{c}}^\mathrm{OT}\left(\frac{1}{\beta_m^1}\sum_{n=2}^{M'}\alpha_n\alpha_{n,m}^1\mu_{n,m}^1\right)+\sum_{m=2}^{M'}\beta_m^2 F_{m,\mathsf{c}}^\mathrm{OT}\left(\frac{1}{\beta_m^2}\sum_{n=2}^{M'}\alpha_n\alpha_{n,m}^2\mu_{n,m}^2\right)\\
 &\ge&F_{\mathrm{GCB},M_1,\mathsf{c}}^\mathrm{OT}(\mu_1,M')+F_{\mathrm{GCB},M_2,\mathsf{c}}^\mathrm{OT}(\mu_2,M').
 \end{eqnarray*}
Thereby \eqref{superadd} together with the above equation prove the first statement of the lemma.

\par The inequality $\ge$ in \eqref{supadd1gen0eqGC} follows by similar arguments, hence its proof will be omitted. The inequality $\le$ in \eqref{supadd1gen0eqGC} when $M_i>1, i=1,\ldots,k,$ follows from \eqref{subadgcinit}.
\end{proof}

As an immediate consequence of Lemma \ref{subadd12gena01} and Lemma \ref{subadd_gcb}, we have the next corollary.
\begin{corollary}\label{subadd12gen11}
Let $\mathsf{c}:\mathbb{R}^d\times\mathbb{R}^d\rightarrow\mathbb{R}_{>0}\cup \{+\infty\}$, and set $N\in\mathbb{N}_+, N\ge 2,$ and $\bar N\in \mathbb{R}_{>0}$. For $k\in\mathbb{N}_+$, consider the probability measures $\mu_1, \ldots, \mu_k,$ with densities $\rho_1,\ldots,\rho_k,$ such that the integrals required to define the quantities below are finite, and the measures $\mu_j$ have essentially disjoint supports, i.e that for all Borel sets $A$ and all $1\le i\neq j\le k$ there holds $\mu_i(A)\mu_j(A)=0$. Let $\alpha_j\in\mathbb{R}_{>0},j=1,\ldots,k,$ be such that $\sum_{j=1}^k\alpha_j=1$. Let $\mu\in \mathcal{P}(\mathbb{R}^d)$ have density $\rho=\sum_{j=1}^k\alpha_j\rho_j$. Denote by $\mathsf{c}_{ii}:=\mathsf{c} 1_{\Lambda_i\times\Lambda_i}, i=1,\ldots,k$. Then                                                
\begin{equation}\label{supadd1gen'}
E_{N, \sum_{i=1}^k \mathsf{c}_{ii}}^\mathrm{xc}\bigg(\sum_{j=1}^k\alpha_j\mu_j\bigg)\ \ge \sum_{j=1}^k {E_{\mathrm{GC},N\alpha_j,\mathsf{c}}^\mathrm{xc}}\left(\mu_j\right)~~\mbox{and}~~E_{\mathrm{GC},\bar N, \sum_{i=1}^k \mathsf{c}_{ii}}^\mathrm{xc}\bigg(\sum_{j=1}^k\alpha_j\mu_j\bigg)\ge \sum_{j=1}^k {E_{\mathrm{GC},\bar N\alpha_j,\mathsf{c}}^\mathrm{xc}}\left(\mu_j\right).
\end{equation}
\end{corollary}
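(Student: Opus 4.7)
The plan is to observe that both inequalities follow almost immediately from the optimal-transport statements of Lemma~\ref{subadd12gena01} and Lemma~\ref{subadd_gcb}, once one checks that the mean-field normalization terms defining $E^\mathrm{xc}$ cancel exactly against the analogous sums on the right-hand side. The essentially-disjoint-supports hypothesis is precisely what makes this arithmetic work.

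First I would verify the mean-field identity. Let $\Lambda_i$ be Borel sets containing the essential supports of $\mu_i$, chosen to be essentially disjoint (so that $\mu_j(\Lambda_i)=0$ for $i\neq j$). Then for any indices $i,j,l$,
\[
\int\int \mathsf{c}_{ii}(x,y)\, d\mu_j(x)\, d\mu_l(y) = \int\int \mathbf{1}_{\Lambda_i}(x)\mathbf{1}_{\Lambda_i}(y)\mathsf{c}(x,y)\, d\mu_j(x)\, d\mu_l(y)
\]
vanishes unless $i=j=l$, in which case it equals $\int\int \mathsf{c}\, d\mu_j\, d\mu_j$. Expanding $\mu\otimes\mu = \sum_{j,l}\alpha_j\alpha_l\,\mu_j\otimes\mu_l$ then yields the cancellation identity
\[
N^2 \int\int \sum_i \mathsf{c}_{ii}\, d\mu\, d\mu \;=\; \sum_j (N\alpha_j)^2 \int\int \mathsf{c}\, d\mu_j\, d\mu_j,
\]
together with the analogue obtained by replacing $N$ by $\bar N$.

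Next I would apply the OT lemmas. For the first inequality I would invoke Lemma~\ref{subadd12gena01} with $M_j:=N\alpha_j$, for which $\bar M = \sum_j M_j = N \in \mathbb{N}_+$ with $N\ge 2$, so the hypotheses are met; this delivers
\[
F_{N,\sum_i \mathsf{c}_{ii}}^\mathrm{OT}(\mu) \;\ge\; \sum_j F_{\mathrm{GC},N\alpha_j,\mathsf{c}}^\mathrm{OT}(\mu_j),
\]
and subtracting from both sides the matching mean-field terms from the previous step produces the first inequality of \eqref{supadd1gen'}. For the second inequality I would apply Lemma~\ref{subadd_gcb} with $M_j:=\bar N\alpha_j$ (so $\bar M = \bar N$), and proceed analogously, subtracting the matching $\bar N^2$-weighted mean-field terms. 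In both applications one uses that $F_{\mathrm{GC},M,\mathsf{c}_{ii}}^\mathrm{OT}(\mu_j)$ coincides with $F_{\mathrm{GC},M,\mathsf{c}}^\mathrm{OT}(\mu_j)$: indeed, the grand-canonical constraint $\sum_n n\alpha_n\mu_n = M\mu_j$ together with $\mu_j$ being essentially supported in $\Lambda_j$ forces each competitor $\mu_n$ to be essentially supported in $\Lambda_j$, on which $\mathsf{c}_{ii}$ and $\mathsf{c}$ agree.

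The main obstacle, such as it is, lies in the bookkeeping: identifying the sets $\Lambda_i$ with the essential supports of the $\mu_i$, keeping track of the matching mean-field contributions on either side of the inequality, and noting the small remark that allows one to replace $\mathsf{c}_{ii}$ by $\mathsf{c}$ inside the grand-canonical functionals. All of the non-trivial OT content is already packaged inside Lemmas~\ref{subadd12gena01} and~\ref{subadd_gcb}, so beyond these careful identifications there is no substantial new analytic difficulty.
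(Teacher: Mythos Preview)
Your proposal is correct and follows exactly the approach the paper has in mind: the paper states only that the corollary is ``an immediate consequence of Lemma~\ref{subadd12gena01} and Lemma~\ref{subadd_gcb}'', and you have carried out precisely the mean-field bookkeeping that makes this immediate. Your remark on replacing $\mathsf{c}_{ii}$ by $\mathsf{c}$ inside $F_{\mathrm{GC}}^\mathrm{OT}(\mu_j)$ is already included in the conclusions of those lemmas (cf.\ the equality in \eqref{supadd1gen0eqGC} and \eqref{optimnenew}), so you need not reprove it.
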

\par We note here that in \cite[Thm. 3.1, Cor. 3.4]{LewLiebSeir17} also the result analogous to Proposition \ref{unif} for $E_{\mathrm{GC},N,\mathsf{c}}^\mathrm{xc}$ was proved. More precisely, as a consequence of \cite[Thm. 3.1]{LewLiebSeir17}, for any measure $\mu\in\mathcal P(\mathbb R^d)$ such that $d\mu(x)=\rho(x)dx$ with $\rho(x)=|\Lambda|^{-1}1_\Lambda(x)$ and $\Lambda$ a Borel set with $\phi$-regular boundary, we have
\begin{equation}\label{egclimitunifnat}
 \lim_{N\to\infty, N\in\mathbb{N}}N^{-1-s/d}E_{\mathrm{GC},N,s}^\mathrm{xc}(\mu)=  \lim_{N\to\infty}N^{-1-s/d}E_{N,s}^\mathrm{xc}(\mu)=-C(s,d)\int_{\mathbb R^d}\rho^{1+s/d}=-C(s,d)|\Lambda|^{-s/d}\ .
\end{equation}
In the above cited result from \cite{LewLiebSeir17}, \eqref{egclimitunifnat} is stated with $N\in\mathbb{N}$. In the next lemma we extend it to the case $N\in\mathbb{R}_{>0}$.
\begin{lemma}\label{eglimgen}
For any $\mu\in\mathcal P(\mathbb R^d)$ with density $\rho$ of form $\rho(x)=|\Lambda|^{-1}1_\Lambda(x)$, where $\Lambda$ is a Borel set with $\phi$-regular boundary, it holds that 
\begin{equation}\label{egclimitunif}
 \lim_{N\to\infty, N\in\mathbb{R}_{>0}}N^{-1-s/d}E_{\mathrm{GC},N,s}^\mathrm{xc}(\mu)= -C(s,d)\int_{\mathbb R^d}\rho^{1+s/d}=-C(s,d)|\Lambda|^{-s/d}\ .
\end{equation}
\end{lemma}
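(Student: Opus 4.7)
The plan is to reduce the claim for general $N\in\mathbb{R}_{>0}$ to the already-established integer case \eqref{egclimitunifnat} by sandwiching $E_{\mathrm{GC},N,s}^\mathrm{xc}(\mu)$ between its values at the two nearest integers $[N]$ and $[N]+1$, and then concluding by a squeeze argument. The sandwich is obtained from two applications of the grand-canonical subadditivity \eqref{subadgcinit01}, each time taking $\mu'=\mu''=\mu$ (so that the hypothesis $\mu = (N'\mu'+N''\mu'')/(N'+N'')$ is trivial).

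Concretely, writing $N=[N]+\{N\}$ with $\{N\}\in[0,1)$, I would first apply \eqref{subadgcinit01} to the decomposition $N=\{N\}+[N]$ with $N'=\{N\}\le 1$ and $N''=[N]$; whenever $[N]\ge 2$, the hypotheses $\max(N',N'')>1$ and $\min(N',N'')\le 1$ are met, yielding
\[
E_{\mathrm{GC},N,s}^\mathrm{xc}(\mu)\ \le\ E_{\mathrm{GC},[N],s}^\mathrm{xc}(\mu).
\]
Symmetrically, applying \eqref{subadgcinit01} to the decomposition $[N]+1=(1-\{N\})+N$ with $N'=1-\{N\}\in(0,1]$ and $N''=N$ (whose hypotheses are met as soon as $N>1$) gives the matching reverse bound
\[
E_{\mathrm{GC},[N]+1,s}^\mathrm{xc}(\mu)\ \le\ E_{\mathrm{GC},N,s}^\mathrm{xc}(\mu).
\]
Dividing both inequalities by $N^{1+s/d}>0$ (all three quantities are $\le 0$ by \eqref{EGC_negative}, so the inequalities are preserved), writing $N^{-1-s/d}=([N]/N)^{1+s/d}\,[N]^{-1-s/d}$ and analogously with $[N]+1$, and noting that both $[N]/N$ and $([N]+1)/N$ converge to $1$, the integer-case limit \eqref{egclimitunifnat} applied to the integer sequences $[N]$ and $[N]+1$ (which tend to $\infty$) forces both extremes of the sandwich to converge to $-C(s,d)|\Lambda|^{-s/d}$, whence \eqref{egclimitunif} follows by the squeeze theorem.

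There is essentially no genuine obstacle: the only point requiring care is verifying the conditions $\max(N',N'')>1$ and $\min(N',N'')\le 1$ of \eqref{subadgcinit01} in each of the two decompositions, which amounts to restricting attention to $N\ge 2$—a harmless restriction for a statement about the limit $N\to\infty$. The integer case of \eqref{egclimitunifnat} does all the heavy lifting; the present lemma is a routine interpolation of that result to real $N$ via the subadditive structure of $E_{\mathrm{GC}}^\mathrm{xc}$.
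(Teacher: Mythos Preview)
Your proof is correct and follows essentially the same sandwich-and-squeeze strategy as the paper: both argue via the grand-canonical subadditivity (Remark \ref{rmk_gcb} (2)) with $\mu'=\mu''=\mu$ to trap $E_{\mathrm{GC},N,s}^\mathrm{xc}(\mu)$ between integer values and invoke \eqref{egclimitunifnat}. The only cosmetic difference is that the paper uses \eqref{subadgcinit} (both parts $>1$) and hence the slightly wider bracket $[N]-1,\,[N]+2$, whereas you use \eqref{subadgcinit01} (one part $\le 1$) and the tighter bracket $[N],\,[N]+1$; either choice works equally well for the limit.
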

\begin{proof}
To show \eqref{egclimitunif}, fix a large non-integer $N\in\mathbb{R}_{>0}$. We have from Remark \ref{rmk_gcb}  2) and 4) that
\begin{eqnarray}
\label{someineqGC}
\lefteqn{E_{\mathrm{GC},[N]+2,s}^\mathrm{xc}(\mu)\le E_{\mathrm{GC},[N]+2,s}^\mathrm{xc}(\mu)-E_{\mathrm{GC},[N]+2-N,s}^\mathrm{xc}(\mu)}\nonumber\\
&\le& E_{\mathrm{GC},N,s}^\mathrm{xc}(\mu)\le E_{\mathrm{GC},[N]-1,s}^\mathrm{xc}(\mu)+E_{\mathrm{GC},N+1-[N],s}^\mathrm{xc}(\mu)\le  E_{\mathrm{GC},[N]-1,s}^\mathrm{xc}(\mu).
\end{eqnarray}
Combining now \eqref{egclimitunifnat} and \eqref{someineqGC} produces \eqref{egclimitunif}.
\end{proof}
\subsubsection{Proof of the lower bound}\label{ssseclowerboundpf}
Before we proceed, we remind the reader that the formula we will be working with below, as explained in \eqref{lbrough2}, \eqref{lbroughqi}, \eqref{lbrough3}, \eqref{notationerrorswh}, \eqref{notationerrorsw1} and \eqref{notationerrorsw2}, is
\begin{equation}\label{recallsplit}
\frac{1}{|x-y|^s} = \frac{M}{M+C}\bigg\{\int_{\Omega_l} \left(\sum_{A\in F^l_\omega}\frac{1_A(x)1_A(y)}{|x-y|^s}\right) d\,\mathbb{P}_l(\omega) + w(x-y)\bigg\}\ ,
\end{equation}
where $l>0$, and where $w$ satisfies the properties in Proposition \ref{prop3ws} above. 

\par Denote for each $\omega\in\Omega_l$ by 
\begin{subequations}
\begin{equation}\label{sigmaerr0}
\Lambda_\mathrm{nc}(l,\omega):=\Lambda\setminus\bigcup_{A\in F^l_\omega}\left(\Lambda\cap A\right)\ .
\end{equation}
This is the part left not covered in the packing of $\Lambda\subset\mathbb{R}^d$ by balls given by the Swiss cheese Lemma \ref{cheeselemma}. Moreover, let
\begin{equation}\label{sigmaerr1}
\Lambda_\mathrm{err}(l,\omega):= \bigcup\left\{ A\in F^l_\omega:\ A\cap\partial\Lambda\neq\emptyset\right\}\ .
\end{equation}
\end{subequations}
The \eqref{sigmaerr1} will be substituted in the proof of Proposition \ref{mixextlowb}, where $\Lambda=\cup_{i=1}^k\Lambda_i$, by
\begin{equation}\label{sigmaerr2}
\Lambda_\mathrm{err}(l,\omega):= \bigcup\left\{ A\in F^l_\omega:\ A\cap \bigcup_{i=1}^k\partial\Lambda_i\neq\emptyset\right\}\ ,
\end{equation}
for which the proof of Lemma \ref{subadd12} below also works. From \eqref{sigmaerr1}, it immediately follows that
\begin{equation}\label{remerr1}
\Lambda\setminus \left(\Lambda_\mathrm{err}(l,\omega)\cup \Lambda_\mathrm{nc}(l,\omega)\right)= \bigcup\left\{ A\in F^l_\omega:\ A\subset\Lambda\right\}\ ,
\end{equation}
and from \eqref{sigmaerr2} that
\begin{equation}\label{remerr2}
\Lambda\setminus  \left(\Lambda_\mathrm{err}(l,\omega)\cup \Lambda_\mathrm{nc}(l,\omega)\right)=\bigcup\left\{ A\in F^l_\omega:\ A\subset\Lambda_i,\quad\mbox{for some}\quad i=1,\ldots,k\right\}\ .
\end{equation}
We will next state a result involving the uniform decay in $\omega $ of $|\Lambda_\mathrm{nc}|, |\Lambda_\mathrm{err}|$ as $M\rightarrow\infty$.
\begin{lemma}\label{unifdecaySigma}
If $\Lambda$ has finite volume, then for $M,l$ as in Lemma \ref{cheeselemma} and for $\Omega_l$ as in \eqref{lbrough3} above, for each $\omega\in\Omega_l$, we have 
\begin{equation}\label{480appr}
 \sup_{\omega\in\Omega_l}|\Lambda_\mathrm{nc}(l,\omega)|\le \frac{C_d}{M}|(\Lambda)_{2l\sqrt{d}}|, \quad \mbox{where}\quad (\Lambda)_r:=\{x\in\mathbb R^d:\ \mathrm{dist}(x,\Lambda)\le r\},
 \end{equation}
and if $\Lambda$ has $\phi$-regular boundary in the sense of \eqref{phiregular} for a continuous $\phi:[0,t_0)\to \mathbb R^+$ with $\phi(0)=0$ then
\begin{equation}\label{481appr}
 \sup_{\omega\in\Omega_l}|\Lambda_\mathrm{err}(l,\omega)|\le  \phi(\ell |\Lambda|^{-1/d}) |\Lambda|.
\end{equation}
\end{lemma}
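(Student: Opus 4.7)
The plan is to prove the two inequalities \eqref{480appr} and \eqref{481appr} separately, both uniformly in $\omega \in \Omega_l$, by combining the periodic structure of $F_\omega^l$ with the quantitative packing bounds from Lemma \ref{cheeselemma}.

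For \eqref{480appr}, the first observation is that for $\omega = (t,y) \in \Omega_l$, the packing $F_\omega^l = tF_{\mathcal B} + y$ is invariant under the lattice $t\mathcal L = (t\ell\mathbb Z)^d$, whose fundamental domains are cubes of side $t\ell \le (1+\eta)\ell \le 2\ell$ and hence of diameter at most $2\ell\sqrt d$. Let $\mathcal Q$ denote the collection of those period cubes of $t\mathcal L$ that meet $\Lambda$. Then $\Lambda \subseteq \bigcup_{Q \in \mathcal Q} Q \subseteq (\Lambda)_{2\ell\sqrt d}$, so in particular $\sum_{Q \in \mathcal Q}|Q| \le |(\Lambda)_{2\ell\sqrt d}|$. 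By the dilation-invariance of the covered volume ratios verified during the construction of $F_\omega^l$, the proportion of each $Q \in \mathcal Q$ covered by balls of $F_\omega^l$ equals $\sum_{i=1}^M c_i$, which by the bound \eqref{cheesebound} in Lemma \ref{cheeselemma} leaves an uncovered fraction bounded by a constant of order $C_d/M$. Summing this per-cube estimate gives
\[
|\Lambda_\mathrm{nc}(l,\omega)| \,\le\, \sum_{Q \in \mathcal Q}\bigl|Q \setminus \textstyle\bigcup_{A \in F_\omega^l} A\bigr| \,\le\, \frac{C_d}{M} \sum_{Q \in \mathcal Q}|Q| \,\le\, \frac{C_d}{M}\, |(\Lambda)_{2\ell\sqrt d}|,
\]
uniformly in $\omega$, which is \eqref{480appr}.

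For \eqref{481appr}, the key geometric observation is that every ball in $F_\omega^l$ has radius at most $(1+\eta)R_M$, and by the size constraint $l > 8\sqrt d\,|B_1|(M+C_d)R_M$ imposed in Lemma \ref{cheeselemma} the diameter of every such ball is strictly smaller than $\ell$. Consequently, if $A \in F_\omega^l$ satisfies $A \cap \partial\Lambda \ne \emptyset$, then every point of $A$ lies within distance $\mathrm{diam}(A) < \ell$ of $\partial\Lambda$, whence
\[
\Lambda_\mathrm{err}(l,\omega) \,\subseteq\, \{x \in \mathbb R^d : \mathrm{dist}(x,\partial\Lambda) \le \ell\}.
\]
Applying the $\phi$-regular boundary condition \eqref{phiregular} with $t = \ell\,|\Lambda|^{-1/d}$ (which is admissible provided $\ell$ is small enough to have $t \le t_0$, a regime which is harmless since $\phi(0)=0$) yields $|\Lambda_\mathrm{err}(l,\omega)| \le \phi(\ell\,|\Lambda|^{-1/d})|\Lambda|$, uniformly in $\omega$. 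The alternative definition \eqref{sigmaerr2} is handled by an identical argument, replacing $\partial\Lambda$ with $\bigcup_{i=1}^k \partial\Lambda_i$ (noting each $\Lambda_i$ has $\phi$-regular boundary).

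Neither step presents a genuine analytical obstacle; the entire argument is geometric. The main care required is in the bookkeeping of constants from Lemma \ref{cheeselemma} and in the verification that the periodization of $\mathcal B$ yields a packing whose density on every period cube genuinely matches the density on the single fundamental domain $Q$ used inside that lemma, so that the per-cube uncovered fraction $\le C_d/M$ can be summed cleanly over $\mathcal Q$.
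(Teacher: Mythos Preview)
Your proof is correct and follows essentially the same approach as the paper: for \eqref{480appr} you cover $\Lambda$ by the period cubes of $F_\omega^l$, bound the uncovered fraction in each cube via \eqref{cheesebound}, and sum using $\bigcup_{Q\in\mathcal Q}Q\subset(\Lambda)_{2l\sqrt d}$; for \eqref{481appr} you bound the diameter of each ball in $F_\omega^l$ by $\ell$ and apply $\phi$-regularity. Your treatment of \eqref{481appr} is in fact slightly cleaner than the paper's, which is a bit loose about why the balls touching $\partial\Lambda$ lie in an $\ell$-neighborhood.
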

\begin{proof}
We observe first that by the property of having $\phi$-regular boundary from (\ref{phiregular}), the total volume of cubes $A\in F_\omega^l$ which touch $\partial \Lambda$ for the case \eqref{sigmaerr1} (respectively $\partial\Lambda_i$ for the case \eqref{sigmaerr2}), is smaller than
$$\ \left|\left\{x:\ d(x,\partial\Lambda)\le (1+\eta)\ell\right\}\right|\le\phi(\ell |\Lambda|^{-1/d}) |\Lambda|~~\mbox{and}~~\ \left|\left\{x:\ d(x,\partial\Lambda_i)\le (1+\eta)\ell\right\}\right|\le\phi(\ell (\min_{1\le i\le k}|\Lambda_i|)^{-1/d}) |\Lambda|,$$
as for all $\omega\in \Omega_l$ these sets are included in cubes of size at most $(1+\eta)l$, due to \eqref{lbrough3}. This proves (\ref{481appr}).

Applying Lemma \ref{cheeselemma}, we find that uniformly among $\omega\in\Omega_l$ of the form \eqref{lbrough3}, we can estimate for the case  \eqref{sigmaerr1} the contribution of $\Lambda_\mathrm{nc}(l,\omega)$ as being at most $C_d M^{-1}|K|$ restricted to each cube $K=K_\omega$ that meets $\Lambda$, and taking among the cubes from the cheese lemma coverings used for $\omega\in\Omega_l$. Each such cube $K_\omega$ remains within the neighborhood $(\Lambda)_{2l\sqrt{d}}$ of thickness $2l\sqrt{d}$ of $\Lambda$, as the diameter of $K$ is $tl\sqrt{d}\le 2l\sqrt{d}$, since $t\in[1-\eta,1+\eta]$. Summing all the contributions of all such cubes gives \eqref{480appr}.
\end{proof}
We introduce here the following normalization notation and convention for measures, which will be used at several instances below. If $\mu$ is a Borel measure, $A\subset \mathbb R^d$ is a Borel set and $\mu(A)>0$, then we denote
\begin{equation}\label{hm}
\hat\mu_{A}:=\frac{\mu|_{A}}{\mu(A)}\ ,
\end{equation}
and if $\mu(A)=0$ we set $\hat\mu_A=0$. Also note that automatically in this case there holds $\mu|_A=\mu(A)\hat\mu_A$, which will also be used in several instances below.

\medskip

Next, we will show
\begin{lemma}\label{subadd12}
Fix $0<\epsilon<d/2$, set $\epsilon\le s\le d-\epsilon$ and let $\mathsf{c}(x,y)=|x-y|^{-s}$. Let $\mu\in\calP(\mathbb{R}^d)$ be a probability measure with density $\rho\in L^{1+\frac{s}{d}}(\mathbb{R}^d)$ supported on a Borel set $\Lambda\subset\mathbb{R}^d$ with $\phi$-regular boundary. Set $N\in\mathbb{N}, N\ge 2$. Let $l>0$ and $M>0$ be as in Lemma \ref{cheeselemma}. Then there holds
\begin{eqnarray}\label{supadd1gen}
E_{N,s}^\mathrm{xc}\left(\mu\right)&\ge& E_{\mathrm{GC},N,s}^\mathrm{xc}\left(\mu\right)\nonumber\\
&\ge&  \frac{M}{M+C}\bigg\{\int_{\Omega_l} \bigg(\sum_{\substack{A\in F^l_\omega\\ A\subset\Lambda,\mu(A)>0}} {E_{\mathrm{GC},N\mu(A),s}^\mathrm{xc}}(\hat\mu_{ A})+E_{\mathrm{GC},N\mu(\Lambda_\mathrm{err}(l,\omega)),s}^\mathrm{xc}(\hat\mu_{ \Lambda_\mathrm{err}(l,\omega)})\bigg) d\,\mathbb{P}_l(\omega)\nonumber\\
&-&\frac{C(w,d, \epsilon)}{M}N^{1+s/d}\int_{\mathbb{R}^d}\rho^{1+s/d}(x)dx-\frac{C(w,d,\epsilon)}{M}R_1^{-s}(N-1)\bigg\}\ ,
\end{eqnarray}
for some $C,C(w,d, \epsilon),$ which depend only on $\epsilon$ and $d$. 
Furthermore, the second inequality in \eqref{supadd1gen} holds also for $N\in\mathbb{R}_+$. To state \eqref{supadd1gen} for the case \eqref{remerr2}, the summation under the first integral in \eqref{supadd1gen} is taken over $A\subset\Lambda_i$ and over $i=1,\ldots,k$.
\end{lemma}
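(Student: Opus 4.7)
The first inequality follows from \eqref{chainineq} when $N\in\mathbb N_+$, so the content of the lemma is the second bound, which must also cover $N\in\mathbb R_+$. My plan is to combine the Fefferman--Gregg decomposition \eqref{recallsplit} with the grand-canonical splitting machinery of Section \ref{sssecsplittingthecost}. Setting $\mathrm{cost}_\omega(x,y):=\sum_{A\in F_\omega^l}1_A(x)1_A(y)|x-y|^{-s}$, so that $|x-y|^{-s}=\tfrac{M}{M+C}\{\int_{\Omega_l}\mathrm{cost}_\omega\,d\mathbb P_l+w\}$, I first use the superadditivity of $F_\mathrm{GC}^\mathrm{OT}$ in the cost (a direct consequence of the superadditivity of $F_n^\mathrm{OT}$ in the cost applied inside each GC-competitor $\{\alpha_n,\mu_n\}$) together with the linearity of the mean-field term to obtain
\[
E_{\mathrm{GC},N,s}^\mathrm{xc}(\mu)\ge \tfrac{M}{M+C}\bigl\{E_{\mathrm{GC},N,\int_{\Omega_l}\mathrm{cost}_\omega d\mathbb P_l}^\mathrm{xc}(\mu)+E_{\mathrm{GC},N,w}^\mathrm{xc}(\mu)\bigr\}.
\]
The $w$-term is bounded directly by Proposition \ref{prop3ws}(2), producing the two $C(w,d,\epsilon)/M$-type error terms appearing in \eqref{supadd1gen}.

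For the remaining term I plan to interchange the $GC$-infimum and the $\omega$-integral by Fubini: for each plan $\gamma_n$ one has $\int\sum_{i\ne j}\int_{\Omega_l}\mathrm{cost}_\omega(x_i,x_j)d\mathbb P_l d\gamma_n\ge\int_{\Omega_l}F_{n,\mathrm{cost}_\omega}^\mathrm{OT}(\mu_n)d\mathbb P_l$, so averaging with $\sum_n\alpha_n$ and passing to the infimum over GC-competitors gives $F_{\mathrm{GC},N,\int\mathrm{cost}_\omega d\mathbb P_l}^\mathrm{OT}(\mu)\ge\int_{\Omega_l}F_{\mathrm{GC},N,\mathrm{cost}_\omega}^\mathrm{OT}(\mu)d\mathbb P_l$, and subtracting the (linear, hence commuting) mean-field term transfers this to $E_\mathrm{GC}^\mathrm{xc}$.

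The core step is then the per-$\omega$ splitting. Since the balls $A\in F_\omega^l$ are disjoint, $\mathrm{cost}_\omega$ is of the form $\sum_A\mathsf{c}_{AA}$ required by Lemma \ref{subadd_gcb} (equivalently Corollary \ref{subadd12gen11}), except that $\mu$ may carry positive mass on $\Lambda_\mathrm{nc}(l,\omega)=\Lambda\setminus\bigcup_A(A\cap\Lambda)$, which is not covered by any $A$. Because $\mathrm{cost}_\omega$ vanishes identically on $\Lambda_\mathrm{nc}\times\mathbb R^d$, my plan is to use the straightforward extension of Lemma \ref{subadd12gena01}/Lemma \ref{subadd_gcb} in which one of the partition pieces is allowed to carry cost $\equiv 0$: the $\Lambda_\mathrm{nc}$-mass is then absorbed into the $\alpha_0,\alpha_1$-marginals built into the $GC$-definition \eqref{OTGC}, which contribute zero to the cost, and the configurations-by-region argument of Lemma \ref{subadd12gena01} goes through verbatim. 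This yields
\[
E_{\mathrm{GC},N,\mathrm{cost}_\omega}^\mathrm{xc}(\mu)\ge\sum_{A\in F_\omega^l,\,A\cap\Lambda\neq\emptyset,\,\mu(A)>0}E_{\mathrm{GC},N\mu(A),s}^\mathrm{xc}(\hat\mu_A),
\]
where the cancellation of mean fields is automatic because $N^2\iint\mathrm{cost}_\omega d\mu d\mu=\sum_A(N\mu(A))^2\iint\mathsf{c}\,d\hat\mu_A d\hat\mu_A$.

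Finally, I split this sum according to whether $A\subset\Lambda$ or $A\cap\partial\Lambda\neq\emptyset$, the latter subcollection consisting by definition of the balls whose union is $\Lambda_\mathrm{err}(l,\omega)$. Using the disjoint-support decomposition $\hat\mu_{\Lambda_\mathrm{err}(l,\omega)}=\sum_{A\subset\Lambda_\mathrm{err}}\frac{\mu(A)}{\mu(\Lambda_\mathrm{err})}\hat\mu_A$ and iterating the GC-subadditivity \eqref{subadgcinit} produces $E_{\mathrm{GC},N\mu(\Lambda_\mathrm{err}),s}^\mathrm{xc}(\hat\mu_{\Lambda_\mathrm{err}})\le\sum_{A\cap\partial\Lambda\neq\emptyset}E_{\mathrm{GC},N\mu(A),s}^\mathrm{xc}(\hat\mu_A)$, repackaging the boundary-balls subsum into the single $\Lambda_\mathrm{err}$-term that appears in \eqref{supadd1gen}. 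Assembling everything and multiplying by $M/(M+C)$ yields the claimed bound. The main obstacle is the careful treatment of the $\Lambda_\mathrm{nc}$-mass: it falls outside the scope of Lemma \ref{subadd_gcb} as literally stated, and handling it requires re-running the partition-by-cardinality argument of Lemma \ref{subadd12gena01} with an extra zero-cost piece, using the trivial $n\in\{0,1\}$-marginals of the GC-formulation to absorb the unassigned mass at zero cost.
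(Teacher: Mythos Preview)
Your proposal is correct and follows essentially the same route as the paper: split the cost via the Fefferman--Gregg identity \eqref{recallsplit}, control the $w$-piece by Proposition~\ref{prop3ws}(2), push the $\mathbb P_l$-average outside the infimum, and then apply the grand-canonical splitting (Corollary~\ref{subadd12gen11}/Lemma~\ref{subadd_gcb}) per $\omega$. The only structural difference is the order in which the domain is cut up. The paper first separates $\mu$ into the ``good interior'' piece $\mu|_{\Lambda\setminus(\Lambda_{\mathrm{err}}\cup\Lambda_{\mathrm{nc}})}$ and the ``boundary/uncovered'' piece $\mu|_{\Lambda_{\mathrm{err}}\cup\Lambda_{\mathrm{nc}}}$, applies Corollary~\ref{subadd12gen11} to each, and for the latter uses that the cost vanishes on $\Lambda_{\mathrm{nc}}$ to reduce directly to the single $\Lambda_{\mathrm{err}}$-term. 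You instead split once into all individual balls $A$ (handling the $\Lambda_{\mathrm{nc}}$-mass via the zero-cost extension of Lemma~\ref{subadd12gena01}, exactly as you describe) and then \emph{repack} the boundary balls into the single $\Lambda_{\mathrm{err}}$-term via iterated GC-subadditivity~\eqref{subadgcinit}. Both orderings arrive at the same bound; your version is arguably a bit more transparent about the role of $\Lambda_{\mathrm{nc}}$, where the paper's displayed ``equality'' hides the same zero-cost absorption you spell out.
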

\begin{proof}
\textbf{Step 1.}  We will show here the following inequality, independent of the specific properties of $\mu:$
\begin{equation}
 \label{simplifdist}
  E_{N,s}^\mathrm{xc}\left(\mu\right)\ge   E_{\mathrm{GC},N,s}^\mathrm{xc}\left(\mu\right)\ge E_{N, \mathsf{c}-\frac{M}{M+C}w}^\mathrm{xc}(\mu)-\frac{M}{M+C}\left(\frac{C(w,d,\epsilon)}{M}N^{1+s/d}\int_{\mathbb{R}^d}\rho^{1+s/d}(x)dx+\frac{C(w,d,\epsilon)}{M}R_1^{-s}(N-1)\right)\ ,
\end{equation}
where $C, C(w,d,\epsilon)>0$ are the constants from Proposition \ref{prop3ws}, and where from \eqref{recallsplit}
\begin{equation}\label{recallsplitagain0}
\mathsf{c}-\frac{M}{M+C}w=\frac{M}{M+C}\int_{\Omega_l} \left(\sum_{A\in F^l_\omega}\frac{1_A(x)1_A(y)}{|x-y|^s}\right) d\,\mathbb{P}_l(\omega).
\end{equation}

\par The proof of \eqref{simplifdist} follows immediately from the definitions of $E_{N,s}^\mathrm{xc}\left(\mu\right)$ and $E_{\mathrm{GC},N,s}^\mathrm{xc}\left(\mu\right)$, by making use of Proposition \ref{prop3ws} and of the re-expression $\mathsf{c}(x,y)=\mathsf{c}(x,y)-\frac{M}{M+C}w(x-y)+\frac{M}{M+C}w(x-y)$.

\par\textbf{Step 2.} We will show here that
\begin{multline}\label{nobdcond}
E_{\mathrm{GC},N,s}^\mathrm{xc}\left(\mu\right)\ge \frac{M}{M+C}\bigg\{ \int_{\Omega_l} \bigg(\sum_{\substack{A\in F^l_\omega\\ A\subset\Lambda,\mu(A)>0}} {E_{\mathrm{GC},N\mu(A),s}^\mathrm{xc}}(\hat\mu_{ A})+E_{\mathrm{GC},N\mu(\Lambda_\mathrm{err}(l,\omega)),s}^\mathrm{xc}(\hat\mu_{ \Lambda_\mathrm{err}(l,\omega)})\bigg) d\,\mathbb{P}_l(\omega)\\
-\frac{C(w,d,\epsilon)}{M}N^{1+s/d}\int_{\mathbb{R}^d}\rho^{1+s/d}(x)dx-\frac{C(w,d,\epsilon)}{M}R_1^{-s}(N-1)\bigg\}\ .
\end{multline}
To start the proof, we observe that $\mu(\Lambda\setminus \left(\Lambda_\mathrm{err}(l,\omega)\cup \Lambda_\mathrm{nc}(l,\omega)\right))>0$ holds whenever there exists at least one $A\in \mathcal F^l_\omega$ such that $A\subset \Lambda\setminus \left(\Lambda_\mathrm{err}(l,\omega)\cup \Lambda_\mathrm{nc}(l,\omega)\right)$ and $\mu(A)>0$.
Denote for all $A\in F^l_\omega$ by
\[
\mathsf{c}_{A}(x,y):=\frac{1_A(x)1_A(y)}{|x-y|^s}\ .
\]
With this notation, we have in view of \eqref{recallsplitagain0} 
\begin{eqnarray}\label{reduc1}
F_{N, \mathsf{c}-\frac{M}{M+C}w}^\mathrm{OT}(\mu)\ge\frac{M}{M+C}\int_{\Omega_l}F_{n, \sum_{A\in F^l_\omega}\mathsf{c}_{A}}^\mathrm{OT}(\mu)\,d\,\mathbb{P}_l(\omega)\ ,
\end{eqnarray}

where for the inequality we interchanged the integration order, took the minimum inside the integral, and used the definition of $F_{n, \sum_{A\in F^l_\omega}\mathsf{c}_{A}}^\mathrm{OT}(\mu)$.

\par Directing next our attention briefly to the mean field term, we get
\begin{multline}\label{prodmesreduc}
\int_{\mathbb R^{2d}} \left(\mathsf{c}(x,y)-\frac{M}{M+C}w(x,y)\right) \, d\mu(x)d\mu(y) \,d\,\mathbb{P}_l(\omega)\\
=\frac{M}{M+C}\int_{\Omega_l}\int_{\mathbb R^{2d}}\sum_{A\in F^l_\omega}\mathsf{c}_A(x,y)d\mu(x)d\mu(y) \,d\,\mathbb{P}_l(\omega)\ .
\end{multline}
From \eqref{reduc1} and \eqref{prodmesreduc}, we obtain in \eqref{simplifdist}
\begin{multline}\label{simplifdist1a}
  E_{\mathrm{GC}, N,s}^\mathrm{xc}\left(\mu\right)\\
  \ge  \frac{M}{M+C}\left(\int_{\Omega_l}E^\mathrm{xc}_{\mathrm{GC},N, \sum_{A\in F^l_\omega}\mathsf{c}_{A}}(\mu)\,d\,\mathbb{P}_l(\omega)-\frac{C(w,d, \epsilon)}{M}N^{1+s/d}\int_{\mathbb{R}^d}\rho^{1+s/d}(x)dx- \frac{C(w,d,\epsilon)}{M}R_1^{-s}(N-1)\right)\ .
\end{multline}
Furthermore
\[
\mu=\mu|_{\Lambda\setminus \left(\Lambda_\mathrm{err}(l,\omega)\cup \Lambda_\mathrm{nc}(l,\omega)\right)}+\mu|_{\Lambda_\mathrm{err}(l,\omega)\cup \Lambda_\mathrm{nc}(l,\omega)}\ ,
\]
With the notation \eqref{hm}, we now have on the right-hand side of \eqref{simplifdist1a} by Corollary \ref{subadd12gen11} (using the notations $E^\mathrm{xc}_\mathrm{GC}[N, \mathsf{c}](\mu):=E^\mathrm{xc}_{\mathrm{GC}, N,\mathsf{c}}(\mu)$, in order to make the formulas easier to read):
\begin{eqnarray}\label{simplifdist1ab}
E^\mathrm{xc}_{\mathrm{GC}}\bigg[N, \sum_{A\in F^l_\omega}\mathsf{c}_{A}\bigg]( \mu)&\ge& E_\mathrm{GC}^\mathrm{xc}\bigg[N\mu(\Lambda_\mathrm{err}(l,\omega)\cup \Lambda_\mathrm{nc}(l,\omega)), \sum_{\substack{A\in F^l_\omega\\ A\subset \Lambda_\mathrm{err}(l,\omega)}}\mathsf{c}_{A}\bigg](\hat\mu_{\Lambda_\mathrm{err}(l,\omega)\cup \Lambda_\mathrm{nc}(l,\omega)})\nonumber\\
&&+E_\mathrm{GC}^\mathrm{xc} \bigg[N\mu(\Lambda\setminus \left(\Lambda_\mathrm{err}(l,\omega)\cup \Lambda_\mathrm{nc}(l,\omega)\right)), \sum_{\substack{A\in F^l_\omega\\ A\subset\Lambda}}c_{A}\bigg]\left( \hat\mu_{\Lambda\setminus\left(\Lambda_\mathrm{err}(l,\omega)\cup \Lambda_\mathrm{nc}(l,\omega)\right)}\right)\ .\nonumber\\
\end{eqnarray}
 By means of \eqref{supadd1gen0eqGC} we get for the first term on the right hand side of \eqref{simplifdist1ab},
\[
E_\mathrm{GC}^\mathrm{xc}\bigg[N\mu(\Lambda_\mathrm{err}(l,\omega)\cup \Lambda_\mathrm{nc}(l,\omega)), \sum_{\substack{A\in F^l_\omega\\ A\subset\Lambda_\mathrm{err}(l,\omega)}}\mathsf{c}_{A}\bigg]\left(\hat\mu_{\Lambda_\mathrm{err}(l,\omega)\cup \Lambda_\mathrm{nc}(l,\omega)}\right)= E_\mathrm{GC}^\mathrm{xc}\bigg[N\mu(\Lambda_\mathrm{err}(l,\omega)),\mathsf{c}\ \bigg](\hat\mu_{ \Lambda_\mathrm{err}(l,\omega)})\ ,
\]
which integrated against $d\mathbb P_\ell(\omega)$ gives the second term in \eqref{nobdcond}. The simplification in the last equation appears because the cost $\sum_{A\in F^l_\omega}c_{A}$ is zero on $\Lambda_\mathrm{nc}(l,\omega)$ due to \eqref{sigmaerr0}.

By notation \eqref{hm} and via \eqref{remerr1} (respectively \eqref{remerr2}), we get
\begin{eqnarray*}
\hat\mu_{\Lambda\setminus \left(\Lambda_\mathrm{err}(l,\omega)\cup \Lambda_\mathrm{nc}(l,\omega)\right)}&=&\sum_{\substack{A\in F^l_\omega \\ A\subset\Lambda,\mu(A)>0}} \hat\mu_{\Lambda\setminus \left(\Lambda_\mathrm{err}(l,\omega)\cup \Lambda_\mathrm{nc}(l,\omega)\right)}( A)\frac{\hat\mu_{\Lambda\setminus  \left(\Lambda_\mathrm{err}(l,\omega)\cup \Lambda_\mathrm{nc}(l,\omega)\right)}|_A}{\hat\mu_{\Lambda\setminus \left(\Lambda_\mathrm{err}(l,\omega)\cup \Lambda_\mathrm{nc}(l,\omega)\right)}(A)}\\
&=&\sum_{\substack{A\in F^l_\omega\\ A\subset\Lambda,\mu(A)>0}}\hat\mu_{\Lambda\setminus \left(\Lambda_\mathrm{err}(l,\omega)\cup \Lambda_\mathrm{nc}(l,\omega)\right)}(A)\,\hat\mu_A.
\end{eqnarray*}
For the second term on the r.h.s. in \eqref{simplifdist1ab} we have
\begin{multline}\label{splithatmu}
E_\mathrm{GC}^\mathrm{xc}\bigg[N\mu(\Lambda\setminus  \left(\Lambda_\mathrm{err}(l,\omega)\cup \Lambda_\mathrm{nc}(l,\omega)\right)), \sum_{\substack{A\in F^l_\omega\\ A\subset\Lambda}}\mathsf{c}_{A}\,\bigg]\left(\hat\mu_{\Lambda\setminus  \left(\Lambda_\mathrm{err}(l,\omega)\cup \Lambda_\mathrm{nc}(l,\omega)\right)}\right)\\
=E_\mathrm{GC}^\mathrm{xc}\bigg[N\mu(\Lambda\setminus  \left(\Lambda_\mathrm{err}(l,\omega)\cup \Lambda_\mathrm{nc}(l,\omega)\right)), \sum_{\substack{A\in F^l_\omega\\ A\subset\Lambda}}\mathsf{c}_A\,\bigg]\bigg(\sum_{\substack{A\in F^l_\omega\\ A\subset\Lambda,\mu(A)>0}} \hat\mu_{\Lambda\setminus \left(\Lambda_\mathrm{err}(l,\omega)\cup \Lambda_\mathrm{nc}(l,\omega)\right)}(A){\hat\mu}_{A}\bigg)\ .
\end{multline}
We now obtain on the right-hand side of \eqref{splithatmu} via another application of Corollary \ref{subadd12gen11}
\begin{multline*}
E_\mathrm{GC}^\mathrm{xc}\bigg[N\mu(\Lambda\setminus  \left(\Lambda_\mathrm{err}(l,\omega)\cup \Lambda_\mathrm{nc}(l,\omega)\right)), \sum_{\substack{A\in F^l_\omega\\ A\subset\Lambda}}\mathsf{c}_A\,\bigg]\bigg(\sum_{\substack{A\in F^l_\omega\\ A\subset\Lambda,\mu(A)>0}}\hat\mu_{\Lambda\setminus  \left(\Lambda_\mathrm{err}(l,\omega)\cup \Lambda_\mathrm{nc}(l,\omega)\right)}(A){\hat\mu}_{A}\bigg)\\
\ge\sum_{\substack{A\in F^l_\omega\\ A\subset\Lambda,\mu(A)>0}}\,E_\mathrm{GC}^\mathrm{xc}\bigg[N\mu(A),\mathsf{c}\, \bigg]\left({\hat\mu}_{A}\right)\ .
\end{multline*}
Integrated against $d\mathbb P_\ell(\omega)$, this gives the first term in \eqref{nobdcond} and concludes the proof of Step 2, for our choice of $\mathsf{c}(x,y)=|x-y|^{-s}$.

\end{proof}
We are now ready to prove Proposition \ref{mixextlowb} concerning our sharp lower bound.
\begin{proof}[Proof of Proposition \ref{mixextlowb}:]
Assume the quantities below are as in Lemma  \ref{cheeselemma}. Assume also that $[-\ell/2,\ell/2]^d\subset\Lambda_i,\ell\ll |\Lambda_i|,i=1,\ldots,k$ (or else we can re-scale the $\mu_i$ in view of (\ref{scalingexc})). By Lemma \ref{subadd12} we have
\begin{multline}\label{applysubadd12}
E_{\mathrm{GC},N,s}^\mathrm{xc}\left(\mu\right)\ge\frac{M}{M+C}\bigg\{\sum_{i=1}^k\int_{\Omega_l} \bigg(\sum_{\substack{A\in F^l_\omega\\ A\subset\Lambda_i\\\mu(A)>0}} E_{\mathrm{GC},N\mu(A),s}^\mathrm{xc}(\hat\mu_{ A})+E_{\mathrm{GC},N\mu(\Lambda_\mathrm{err}(l,\omega)),s}^\mathrm{xc}(\hat\mu_{ \Lambda_\mathrm{err}(l,\omega)})\bigg) d\,\mathbb{P}_l(\omega)\\
-\frac{C(w,d,\epsilon)}{M}N^{1+s/d}\int_{\mathbb{R}^d}\rho^{1+s/d}(x)dx-\frac{C(w,d,\epsilon)}{M}R_1^{-s}(N-1)\bigg\}\ .
\end{multline}
In the next two steps we will calculate separately each of the terms $E_{\mathrm{GC},N\mu(\Lambda_\mathrm{err}(l,\omega)),s}^\mathrm{xc}(\hat\mu_{ \Lambda_\mathrm{err}(l,\omega)})$ and $E^\mathrm{xc}_{\mathrm{GC},N\mu(A),s}({\hat\mu}_{A})$ appearing in \eqref{applysubadd12}. In order to prove that the boundary terms have a small total volume, we will need to work in the regime $l\ll 1, M\gg 1$. Lemma \ref{cheeselemma} applies in particular if $R_M>(1+4\sqrt{d}|B^d_1|)^M R_1,$ and $l> 8\sqrt d |B_1|(M+C_d) R_M$. Thus, for Lemma \ref{cheeselemma} to apply, the extra constraint linking $M,l,R_1$, which can be formulated in two equivalent ways:
\begin{equation}\label{boundlmr}
l>C(M+C)R_M>C(M+C)C^M\ R_1 \quad\Leftrightarrow\quad \log\frac{l}{R_1}>\log C + \log(M+C) + M\log C\ ,
\end{equation}
 where $C:=\max\{1+4\sqrt{d}|B^d_1|,8\sqrt d |B_1|, C_d\}$ depends only on $d$.
If $R_1,l,M$ are such that
\begin{equation}\label{boundlmr2}
M< \frac{\log(l/R_1)}{3\log C} \quad\Leftrightarrow\quad R_1< C^{-3M}\ l\ ,
\end{equation}
then there exists $M_d>0$ depending only on $d$ such that \eqref{boundlmr} holds for all $M\ge M_d$. Indeed, note that if
\begin{equation}\label{boundM}
M\ge\max\left\{1,\tfrac{\log(M+C)}{\log C}\right\},
\end{equation}
then $3M\log C$ is larger than the right hand side of the second equation in \eqref{boundlmr}, and as a consequence \eqref{boundlmr2} implies \eqref{boundlmr} for such $M$. It suffices then to take $M_d$ to be the smallest value of $M\ge1$ such that \eqref{boundM} holds. It is easy to verify that for any $M\ge M_d$ \eqref{boundM} also holds, that the value $M_d$ depends only on $d$ because $C$ above depends only on $d$.

To find $l,M,R_1$ satisfying condition \eqref{boundlmr2} and also such that $l\ll1\ll M$, it suffices to fix $R_1>0$ separately for each choice of $l,M$ and small enough. Note that the above choices must be performed depending on $N$ as well, so as to optimize our asymptotic estimates. 

A suitable choice turns out to be as follows. We take for $N\ge C^{18d(M_d+1)}$ 
\begin{equation}\label{boundlmr3}
R_1:=N^{-\frac1{2d}},\quad l:=N^{-\frac1{3d}},\quad M:=\left[\frac{\log N}{18\ d\log C}\right]-1\ ,
\end{equation}
For clarity of exposition, we will substitute the choice \eqref{boundlmr3} for $l, R_1$ only at the very end in our estimates.

\par\textbf{Step 1.}  We will show here that
\begin{equation}\label{applysubadd12aa}
\liminf_{N\rightarrow\infty} N^{-1-s/d} E_{\mathrm{GC},N,s}^\mathrm{xc}\left(\mu\right)\ge\liminf_{N\rightarrow\infty}N^{-1-s/d} \sum_{i=1}^k\int_{\Omega_l}\bigg(\sum_{\substack{A\in F^l_\omega\\ A\subset \Lambda_i}} {E^\mathrm{xc}_{\mathrm{GC},N\alpha_i|A|/|\Lambda_i|,s}}({\hat\mu}_{A})\bigg)\,d\,\mathbb{P}_l(\omega)\ .
\end{equation}
Note that, since $l\ll 1$, and since all sets $A\in F_\omega^l$ are disjoint balls of scales $R_j<l$, for given $A\in F_\omega^l$, either the set $A$ intersects the boundaries $\partial\Lambda_i, i=1,\ldots k$, or it is included exactly in one of the $\Lambda_i, i=1,\ldots,k$. Thus, from \eqref{applysubadd12} and in view of the properties $M\gg 1$ and $R_1\gg N^{-\frac1d}$ following from \eqref{boundlmr3}, we find 
\begin{multline}\label{applysubadd12+}
\liminf_{N\rightarrow\infty}N^{-1-s/d}E_{\mathrm{GC},N,s}^\mathrm{xc}(\mu)\ge\liminf_{N\rightarrow\infty}N^{-1-s/d}\sum_{i=1}^k\int_{\Omega_l}\bigg(\sum_{\substack{A\in F^l_\omega\\ A\subset\Lambda_i}} E_{\mathrm{GC},N\mu(\Lambda\cap A),s}^\mathrm{xc}(\hat\mu_A)\bigg) d\,\mathbb{P}_l(\omega)\\
+\liminf_{N\rightarrow\infty}N^{-1-s/d}\int_{\Omega_l}E_{\mathrm{GC},N\mu(\Lambda_\mathrm{err}(l,\omega)),s}^\mathrm{xc}(\hat\mu_{ \Lambda_\mathrm{err}(l,\omega)})\, d\,\mathbb{P}_l(\omega)\ .
\end{multline}
We next need to consider separately the two terms on the r.h.s in (\ref{applysubadd12+}). We observe first that $\mu(A)=\alpha_i\frac{|A|}{|\Lambda_i|}$ if $A\subset \Lambda_i$ for some $i=1,\ldots, k$. Therefore, in this case
\[
E_{\mathrm{GC},N\mu(A),s}^\mathrm{xc}(\hat\mu_{ A})=E^\mathrm{xc}_{\mathrm{GC},N\alpha_i|A|/|\Lambda_i|,s}({\hat\mu}_{A})\ .
\]
We also have
\[
\mu( \Lambda_\mathrm{err}(l,\omega))=\sum_{i=1}^k\alpha_i\frac{|\Lambda_i\cap  \Lambda_\mathrm{err}(l,\omega)|}{|\Lambda_i|}\ ,
\]
and thus
\[
E_{\mathrm{GC},N\mu(\Lambda_\mathrm{err}(l,\omega)),s}^\mathrm{xc}(\hat\mu_{\Lambda_\mathrm{err}(l,\omega)}) 
= E^\mathrm{xc}_{\mathrm{GC},\sum_{i=1}^kN\alpha_i {\frac{|\Lambda_i\cap  \Lambda_\mathrm{err}(l,\omega)|}{|\Lambda_{i}|}},s}({\hat\mu}_{ \Lambda_\mathrm{err}(l,\omega)})\ .
\]
Therefore, \eqref{applysubadd12+} becomes
\begin{multline}\label{applysubadd12++}
\liminf_{N\rightarrow\infty} N^{-1-s/d}E_{\mathrm{GC},N,s}^\mathrm{xc}(\mu)\ge \liminf_{N\rightarrow\infty} N^{-1-s/d}\sum_{i=1}^k\int_{\Omega_l}\bigg(\sum_{\substack{A\in F^l_\omega\\ A\subset\Lambda_i}}  {E^\mathrm{xc}_{\mathrm{GC},N\alpha_i|A|/|\Lambda_i|,s}}({\hat\mu}_{A})\bigg)\,d\,\mathbb{P}_l(\omega)\\
+\liminf_{N\rightarrow\infty} N^{-1-s/d}\int_{\Omega_l} {E^\mathrm{xc}_{\mathrm{GC},\sum_{i=1}^kN\alpha_i {|\Lambda_i\cap  \Lambda_\mathrm{err}(l,\omega)|/|\Lambda_{i}|},s}}({\hat\mu}_{ \Lambda_\mathrm{err}(l,\omega)})\,d\,\mathbb{P}_l(\omega)\ .
\end{multline}
Consider now the second term on the right hand side of \eqref{applysubadd12++}. Using \eqref{hm} we obtain
\begin{eqnarray}
\label{roughbla}
\lefteqn{{E^\mathrm{xc}_{\mathrm{GC},\sum_{i=1}^kN\alpha_i {|\Lambda_i\cap \Lambda_\mathrm{err}(l,\omega)|/|\Lambda_{i}|},s}}({\hat\mu}_{\Lambda_\mathrm{err}(l,\omega)})}\nonumber\\
&\ge& -c_{\mathrm{LO}}(d,\epsilon)\int_{\Omega_l}\left(\sum_{i=1}^kN\alpha_i \frac{|\Lambda_i\cap \Lambda_\mathrm{err}(l,\omega)|}{|\Lambda_i|}\right)^{1+s/d}\frac{\sum_{i=1}^k\alpha_i^{1+s/d}\left(\frac{|\Lambda_i\cap \Lambda_\mathrm{err}(l,\omega)|}{|\Lambda_i|}\right)^{1+s/d}}{\left(\sum_{i=1}^k\alpha_i\frac{|\Lambda_i\cap\Lambda_\mathrm{err}(l,\omega)|}{|\Lambda_i|}\right)^{1+s/d}}\,d\,\mathbb{P}_l(\omega)\nonumber\\
&=&-c_{\mathrm{LO}}(d,\epsilon)N^{1+s/d}\int_{\Omega_l}\ \sum_{i=1}^k\alpha_i^{1+s/d}\left(\frac{|\Lambda_i\cap \Lambda_\mathrm{err}(l,\omega)|}{|\Lambda_i|}\right)^{1+s/d}\,d\,\mathbb{P}_l(\omega)\nonumber\\
&\ge& -c_{\mathrm{LO}}(d,\epsilon)N^{1+s/d}\int_{\Omega_l} \sum_{i=1}^k\phi(\ell|\Lambda_i|^{-1/d})\frac{\alpha_i^{1+s/d}}{|\Lambda_i|^{s/d}}\,d\,\mathbb{P}_l(\omega)\nonumber\\
&=& -c_{\mathrm{LO}}(d,\epsilon) C_\rho N^{1+s/d}\ell\int_{\mathbb{R}^d}\rho^{1+s/d}(x)dx,
\end{eqnarray}
where $C_\rho>0$. For the first inequality we applied the lower bound \eqref{loterm1gs}, which is analogous to \cite[(3.3)]{LewLiebSeir17} but uniform in $s\in (\epsilon, d-\epsilon)$. For the second inequality, we applied the domain boundary regularity estimates \eqref{481appr}, in which we observe that for a fixed hyperrectangle $\Lambda_i$, we can take $\phi(t)=C_{\Lambda_i} t$. 

From \eqref{roughbla}, it follows that
\begin{equation*}
\int_{\Omega_l}{E^\mathrm{xc}_{\mathrm{GC},\sum_{i=1}^kN\alpha_i |\Lambda_i\cap \Lambda_\mathrm{err}(l,\omega)|/|\Lambda_{i}|,s}}({\hat\mu}_{\Lambda_\mathrm{err}(l,\omega)})\,d\,\mathbb{P}_l(\omega)
\ge  -C_\rho C(d,\epsilon)\ell N^{1+s/d}\ .
 \end{equation*}
Together with \eqref{applysubadd12++}, the above proves \eqref{applysubadd12aa}.

\par\textbf{Step 2.} We will show here the statement of the theorem.

\par Firstly, since for $A\subset\Lambda_i, i=1,\ldots,k,$ we have
\[
\hat\mu_{A}=\frac{\alpha_i}{|\Lambda_i|}1_{A}\bigg/\frac{\alpha_i |A|}{|\Lambda_i|}=\frac{1_A}{|A|}\ ,
\]
we get from \eqref{applysubadd12aa}
\begin{multline}\label{applysubadd12aa1}
\liminf_{N\to\infty}N^{-1-s/d} E_{\mathrm{GC},N,s}^\mathrm{xc}(\mu)\ge \liminf_{N\to\infty} N^{-1-s/d}\ \sum_{i=1}^k \int_{\Omega_l}\bigg(\sum_{\substack{A\in F^l_\omega\\ A\subset\Lambda_i}}  {E^\mathrm{xc}_{\mathrm{GC},N\alpha_i|A|/|\Lambda_i|,s}}\left(\frac{1_A}{|A|}\right)\bigg) \,d\,\mathbb{P}_l(\omega)\ .
\end{multline}
Next we follow in some more detail the dependency of the sets $A\in F^l_\omega$. More precisely, fix $i=1,\dots,k.$ Any $A\in F_\omega^l$ is of the form $B_R(x)$ for some choice $R\in\{tR_1,\ldots, tR_M\}$, $t\in [1-\eta,1+\eta]$ and $x\in\mathbb R^d$. In this case we find, in view of \eqref{scalinggc}
\[
E^\mathrm{xc}_{\mathrm{GC},N\alpha_i|A|/|\Lambda_i|,s}\left(\frac{1_A}{|A|}\right)={E^\mathrm{xc}_{\mathrm{GC},N\alpha_i|B_{R}|/|\Lambda_i|,s}}\left(\frac{1_{B_R}}{|B_R|}\right)\ ,
\]
where we recall that $B_R$ is the ball of radius $R$ centred at $0$. The proof of this statement follows by a change of variables idea, done by means of Proposition \ref{depascdual}, and will be omitted.
Furthermore, by the same type of argument we can further reduce to the unit ball $B_1$:
\[
E^\mathrm{xc}_{\mathrm{GC},N\alpha_i|B_R|/|\Lambda_i|,s}\left(\frac{1_{B_R}}{|B_R|}\right)=R^{-s}{E^\mathrm{xc}_{\mathrm{GC},N\alpha_i|B_R|/|\Lambda_i|,s}}\left(\frac{1_{B_1}}{|B_1|}\right)\ .
\]
Plugging this in \eqref{applysubadd12aa1}, we get
\begin{multline}\label{furtherexplim}
\liminf_{N\to\infty}N^{-1-s/d} E_{\mathrm{GC},N,s}^\mathrm{xc}(\mu)
\\
\ge \liminf_{N\to\infty} N^{-1-s/d}\sum_{i=1}^k \int_{\Omega_l}\bigg(\sum_{\substack{A\in F_\omega^l:\ \exists x\in\mathbb{R}^d,\\ A=B_R(x)\subset\Lambda_i}} R^{-s}E^\mathrm{xc}_{\mathrm{GC},N\alpha_i|B_R|/|\Lambda_i|,s}\left(\frac{1_{B_1}}{|B_1|}\right)\bigg)\,d\,\mathbb{P}_l(\omega)\ .
\end{multline}
Next, we will use that by \eqref{egclimitunif} we have for every fixed $i=1,\ldots,k$ 
\begin{equation}\label{limitbla}
\lim_{N\to\infty}\left(N\alpha_iR^d/|\Lambda_i|\right)^{-1-s/d}{E^\mathrm{xc}_{\mathrm{GC},N\alpha_i|B_R|/|\Lambda_i|,s}}\left(\frac{1_{B_1}}{|B_1|}\right)=-C(s,d)|B_1|\ ,\quad\mbox{where}\quad C(s,d)>0\ ,
\end{equation}
which will allow us to take the limits in \eqref{furtherexplim}, for the sums under the integral, uniformly in $N$, $t$ and $A\in F^l_\omega$. Using \eqref{limitbla} in \eqref{furtherexplim}, we get for $i=1,\ldots,k,$ for arbitrary $\delta>0$ and large $N$
\begin{eqnarray}
\lefteqn{\int_{\Omega_l}\bigg(\sum_{\substack{A\in F_\omega^l:\ \exists x\in\mathbb{R}^d,\\ A=B_R(x)\subset\Lambda_i}} R^{-s}{E^\mathrm{xc}_{\mathrm{GC},N\alpha_i|B_R|/|\Lambda_i|,s}}\left(\frac{1_{B_1}}{|B_1|}\right)\bigg)\,d\,\mathbb{P}_l(\omega)}\nonumber\\
&\ge&\int_{\Omega_l}\bigg(\sum_{\substack{A\in F_\omega^l:\ \exists x\in\mathbb{R}^d,\\ A=B_R(x)\subset\Lambda_i}} R^{-s}|B_1|\left(N\alpha_i{R}^d/|\Lambda_i|\right)^{1+s/d}\left(-C(s,d)-\delta\right)\bigg)\,d\,\mathbb{P}_l(\omega)\nonumber\\
&\ge&\int_{\Omega_l}\bigg(\sum_{\substack{A\in F_\omega^l:\ \exists x\in\mathbb{R}^d,\\ A=B_R(x)\subset\Lambda_i}} N^{1+s/d}\frac{\alpha_i^{1+s/d}}{|\Lambda_i|^{1+s/d}}R^d|B_1|\bigg)\left(-C(s,d)-\delta\right)\,d\,\mathbb{P}_l(\omega)\nonumber\\
&=&\left(-C(s,d)-\delta\right)N^{1+s/d}\left(\frac{\alpha_i}{|\Lambda_i|}\right)^{1+s/d}\int_{\Omega_l}\bigg(\sum_{\substack{A\in F_\omega^l:\ \exists x\in\mathbb{R}^d,\\ A=B_R(x)\subset\Lambda_i}}R^d|B_1|\bigg)\,d\,\mathbb{P}_l(\omega)\ ,
\label{useuniffurther}
\end{eqnarray}
where for the first inequality we used \eqref{limitbla} and in order to remove the integer part and obtain the second inequality, we used the fact that in view of \eqref{boundlmr3} we have $R_1\gg N^{-1/d}$ and thus  $((1-\eta)R_1)^dN\gg1$ and thus automatically due to the choices of $t, R_j$, also $R^dN\gg 1$ for any $R\in\{tR_1,\ldots,tR_M\}$ and any $t\in[1-\eta,1+\eta]$. 

\par For bounding \eqref{useuniffurther} we use the fact that $C(s,d)>0$ and the following bound:
\begin{equation}\label{useuniffurther2}
\sup_{\omega\in \Omega_l}\sum_{\substack{A\in F_\omega^l:\ \exists x\in\mathbb{R}^d,\\ A=B_R(x)\subset\Lambda_i}} R^d|B_1|=\sup_{\omega\in \Omega_l}\sum_{\substack{A\in F^l_\omega\\ A\subset\Lambda_i}}|A|\le|\Lambda_i|\ .
\end{equation}
Making use of \eqref{useuniffurther} and \eqref{useuniffurther2} in \eqref{furtherexplim} we find that with the choices \eqref{boundlmr3} we have
\begin{multline*}
\liminf_{N\to\infty}N^{-1-s/d} E_{\mathrm{GC},N,s}^\mathrm{xc}(\mu)
\ge\liminf_{N\to\infty}\left(-C(s,d)-\delta\right)\sum_{i=1}^k\left(\frac{\alpha_i}{|\Lambda_i|}\right)^{1+s/d}\int_{\Omega_l}\bigg(\sum_{\substack{A\in F_\omega^l:\ \exists x\in\mathbb{R}^d,\\ A=B_R(x)\subset\Lambda_i}} R^d|B_1|\bigg)\,d\,\mathbb{P}_l(\omega)\\
\ge\left(-C(s,d)-\delta\right)\sum_{i=1}^k\left(\frac{\alpha_i}{|\Lambda_i|}\right)^{1+s/d}|\Lambda_i|=\left(-C(s,d)-\delta\right)\int_{\mathbb R^d}\left(\sum_{i=1}^k\alpha_i\rho_i(x)\right)^{1+s/d}dx\ .
\end{multline*}
Taking now $\delta\rightarrow 0$ in the above proves the statement of our Proposition.
\end{proof}
\subsection{Proof of Theorem \ref{upboundcont}}\label{ssecproofmainthm}
\begin{proof}[Proof of Theorem \ref{upboundcont}:]
	
\par \textbf{Step 1.} Let $\rho$ be continuous and with compact support $\mathrm{supp}(\mu)=:\Lambda\subset\mathbb{R}^d$. Fix $l>0$ such that $[-\tfrac{l}2, \tfrac{l}2]^d$ is much smaller than $\Lambda$. Take as in Lemma \ref{cheeselemma} a Swiss cheese packing of $[-\tfrac{l}2, \tfrac{l}2]^d$ by balls of radii ${R_1},\ldots, {R_M}$, extended by periodicity to the whole $\mathbb{R}^d$. We also assume below that $R_1,l,M,$ satisfy \eqref{boundlmr2}, but without the specific $N$-dependent choices \eqref{boundlmr3}. At the end of the proof we will further take $l\to 0, M\to\infty,$ and the discussion is going to be independent of the choice of $R_1>0$, as long as $R_1<C^{-3M}l$ as in \eqref{boundlmr2}.              

We write, recalling definition \eqref{sigmaerr0} for the second equality (now used for one single covering family, i.e. without the $\omega$-dependence in \eqref{sigmaerr0})
\begin{equation*}
\rho=\rho|_{\cup_{j=1}^M\cup_{A\in {\cal B}^l_{R_j}\atop A\cap\Lambda\neq\emptyset}A}+\rho|_{\Lambda\setminus \big(\cup_{j=1}^M\cup_{A\in {\cal B}^l_{R_j}\atop A\cap\Lambda\neq\emptyset}A\big)}=\rho|_{\Lambda\setminus\Lambda_\mathrm{nc}}+\rho|_{\Lambda_\mathrm{nc}},
\end{equation*}
where we denoted by ${\cal B}^l_{R_j}$ the set of balls of radius $R_j$ from the Swiss cheese packing. Define for all $x\in\Lambda$ the sequences
\begin{equation}
\label{4106}
\rho^l_{\mathrm{min}}(x):=\frac{1}{k^l_M}\sum_{j=1}^M\sum_{A\in {\cal B}^l_{R_j}\atop A\cap\Lambda\neq\emptyset}\left(\min_{x\in A\cap\Lambda}\rho(x)\right) 1_{A\cap\Lambda}(x)~~~\mbox{and}~~~\rho^l_{\mathrm{max}}(x):=\frac{1}{m^l_M}\sum_{j=1}^M\sum_{A\in {\cal B}^l_{R_j}\atop A\cap\Lambda\neq\emptyset}\left(\max_{x\in A\cap\Lambda}\rho(x)\right) 1_{A\cap\Lambda}(x),
\end{equation}
where $k^l_M$ and $m^l_M$ are normalization constants required to make $\rho^l_\mathrm{min},\rho^l_\mathrm{max}$ probability densities and where we set $\rho_\mathrm{min}^l=0$ if $\min_{x\in A\cap\Lambda}\rho(x)=0$ for all $A\in\bigcup_{j=1}^M\mathcal B_{R_j}^l$. Then
\begin{equation}
\label{4107}
\sum_{j=1}^M\sum_{A\in {\cal B}^l_{R_j}\atop A\cap\Lambda\neq\emptyset}\left(\min_{x\in A\cap\Lambda}\rho(x)\right) 1_{A\cap\Lambda}(x)\le\rho(x)|_{\cup_{j=1}^M\cup_{A\in {\cal B}^l_{R_j}\atop A\cap\Lambda\neq\emptyset}A}=\rho|_{\Lambda\setminus \Lambda_\mathrm{nc}}\le \sum_{j=1}^M\sum_{A\in {\cal B}^l_{R_j}\atop A\cap\Lambda\neq\emptyset}\left(\max_{x\in A\cap\Lambda}\rho(x)\right) 1_{A\cap\Lambda}(x),
\end{equation}
 and we claim that $\rho^l_{\mathrm{min}}, \rho^l_{\mathrm{max}}$ converge strongly in $L^{1+s/d}(\mathbb{R}^d)$ to $\rho$ as $l\to 0, M\to\infty$.  
	
We prove only the convergence $\lim_{l\to 0}\left\|\rho-\rho^l_\mathrm{min}\right\|_{L^{1+s/d}(\mathbb{R}^d)}=0$, as the convergence of $\rho^l_\mathrm{max}$ is proved similarly. Note first that as $\rho$ is continuous and compactly-supported, therefore it is uniformly continuous. Thus, for all $\delta>0$ there exists $l_\delta>0$ such that for all $l\le l_\delta$ and all $A$ in a given packing of $[-\tfrac{l}2, \tfrac{l}2]^d$, we have $\max_{x\in A\cap\Lambda}\rho(x)-\min_{x\in A\cap\Lambda}\rho(x)\le\delta$. Hence, for $l\le l_\delta$
$$m^l_M-k^l_M\le \delta \sum_{j=1}^M\sum_{\substack{A\in {\cal B}^l_{R_j}\\ A\cap\Lambda\neq\emptyset}}\int_{\mathbb{R}^d} 1_{A\cap\Lambda}(x)\le \delta |\Lambda|~~\to 0~~\mbox{as}~~l,\delta\to 0.$$
Furthermore, for $l\le\l_\delta$ we have
\begin{eqnarray}
\label{4108}
\lefteqn{\left(\int_{\mathbb{R}^d}\left(\rho(x)-k^l_M\rho^l_{\mathrm{min}}(x)\right)^{1+s/d}dx\right)^{\frac{1}{1+s/d}}}\nonumber\\
&\le& \bigg(\int_{\mathbb{R}^d}\bigg(\sum_{j=1}^M\sum_{A\in {\cal B}^l_{R_j}\atop A\cap\Lambda\neq\emptyset}\left(\max_{x\in A\cap\Lambda}\rho(x)-\min_{x\in A\cap\Lambda}\rho(x)\right) 1_{A\cap\Lambda}(x)\bigg)^{1+s/d}dx\bigg)^{\frac{1}{1+s/d}}+\bigg(\int_{\Lambda_{\mathrm{nc}}(l)}\big(\max_{y\in\Lambda}\rho(y)\big)^{1+s/d}dx\bigg)^{\frac{1}{1+s/d}}\nonumber\\
&\le&\delta |\Lambda|^{1+s/d}+\big(\max_{y\in\Lambda}\rho(y)\big)\,|\Lambda_{\mathrm{nc}}(l)|^{\frac{1}{1+s/d}},
\end{eqnarray}
which tends to $0$ as $l,\delta\to 0$ and $M\to\infty$, in view of (\ref{480appr}) and of the boundedness of $\Lambda$. As $\rho$ and $\rho^l_{\mathrm{min}}$ are probability densities, (\ref{4108}) gives in particular that $k^l_M\to 1$ as $l\to 0$. Then, by means of (\ref{4108}), we obtain
\begin{eqnarray*}
\left\|\rho-\rho^l_{\mathrm{min}}\right\|_{L^{1+s/d}}\le\left(\int_{\mathbb{R}^d}\left(\rho(x)-k^l_M\rho^l_{\mathrm{min}}(x)\right)^{1+s/d}dx\right)^{\frac{1}{1+s/d}}+(1-k^l_M)\left(\int_{\mathbb{R}^d} \left(\rho^l_{\mathrm{min}}(x)\right)^{1+s/d}dx\right)^{\frac{1}{1+s/d}},
\end{eqnarray*}
which tends to $0$ as $l\to 0$ and $M\to\infty$. This proves the claim.

\par \textbf{Step 2.} We assume here that $\rho$ is continuous and $\rho\in L^{1+s/d}(\mathbb{R}^d)$, and we prove the lower bound. The proof of the upper bound follows similarly from an argument similar to (\ref{subadd4}) and by means of Step 1.  Take $0<l\ll 1$. Then for a fixed compact Borel measurable set $\Lambda$ with nonempty interior, such that $\Lambda\subseteq\mathrm{supp}(\mu)$ and $\mu(\Lambda)>0$, we denote $\widetilde \Lambda:=\mathrm{supp}(\mu)\setminus \Lambda$ and, with the notations \eqref{sigmaerr0}, \eqref{sigmaerr1}, we use the splitting
\begin{equation}\label{4109}
\mu=\mu|_{\Lambda\setminus \Lambda_\mathrm{nc}(l,\omega)}+\mu|_{\Lambda_\mathrm{nc}(l,\omega)}+\mu|_{\widetilde\Lambda\setminus\widetilde\Lambda_{nc}(l,\omega)}+\mu|_{\widetilde\Lambda_{nc}(l,\omega)}.
\end{equation}
By applying the same arguments as in Lemma \ref{subadd12}, by using Corollary \ref{subadd12gen11} and with the notation \eqref{hm}, we have by means of (\ref{simplifdist1a}) and (\ref{4109}) (and with the convention that $E_{\mathrm{GC},k,\mathsf{c}}^\mathrm{xc}(0)=0$)
\begin{eqnarray}\label{4110}
E_{\mathrm{GC}, N,s}^\mathrm{xc}\left(\mu\right)
&\ge&  \frac{M}{M+C}\bigg(\int_{\Omega_l}E^\mathrm{xc}_{\mathrm{GC},N\mu(\Lambda\setminus \Lambda_\mathrm{nc}(l,\omega)), \sum_{A\in F^l_\omega\atop A\cap \Lambda\neq\emptyset}\mathsf{c}_{A}}({\hat\mu}_{\Lambda\setminus \Lambda_\mathrm{nc}(l,\omega)})\,d\,\mathbb{P}_l(\omega)\nonumber\\
&&+\int_{\Omega_l}E^\mathrm{xc}_{\mathrm{GC},N\mu(\widetilde\Lambda\setminus\widetilde\Lambda_{nc}(l,\omega)), s}({\hat\mu}_{\widetilde\Lambda\setminus\widetilde\Lambda_{nc}(l,\omega)})\,d\,\mathbb{P}_l(\omega)-\frac{C(w,d,\epsilon)}{M}R_1^{-s}\left(N-1\right)\nonumber\\
&&-\frac{C(w,d, \epsilon)}{M}N^{1+s/d}\int_{\mathbb{R}^d}\rho^{1+s/d}(x)dx\bigg)\ .
\end{eqnarray}
We now apply the construction \eqref{4107} from Step 1, to the first term in (\ref{4110}), with $\hat\mu_{\Lambda\setminus\Lambda_{\mathrm{nc}}(l,\omega)}$ (respectively ${\hat\mu}_\Lambda$) instead of $\mu$ and with respect to the ball covering $F_\omega^l$, and we denote by $(\hat\rho_{\Lambda\setminus\Lambda_\mathrm{nc}(l,\omega)})^{l,\omega}_{\mathrm{max}}$ (respectively $({\hat\rho}_\Lambda)^{l,\omega}_{\mathrm{max}}$) the so-obtained densities. Explicitly, we have by using the notations \eqref{hm} and (\ref{4106})
\begin{eqnarray}
\label{rhomax}
(\hat\rho_{\Lambda\setminus\Lambda_\mathrm{nc}(l,\omega)})^{l,\omega}_{\mathrm{max}}(x):&=&\frac{1}{{\hat m}^{l,\omega}_{\Lambda\setminus\Lambda_\mathrm{nc}(l,\omega)}}  \sum_{A\in F^l_\omega\atop A\cap \Lambda\neq\emptyset} \left(\max_{x\in A\cap\Lambda}{\hat\rho}_{\Lambda\setminus \Lambda_\mathrm{nc}(\omega,l)}(x)\right) 1_{A\cap\Lambda}(x)\nonumber\\
&=&\frac{1}{{\hat m}^{l,\omega}_\Lambda}  \sum_{A\in F^l_\omega\atop A\cap \Lambda\neq\emptyset} \left(\max_{x\in A\cap\Lambda}{\hat\rho}_\Lambda(x)\right) 1_{A\cap\Lambda}(x)=({\hat\rho}_\Lambda)^{l,\omega}_{\mathrm{max}}(x),
\end{eqnarray}
where ${\hat m}^{l,\omega}_{\Lambda\setminus\Lambda_\mathrm{nc}(l,\omega)}, {\hat m}^{l,\omega}_{\Lambda}$ are normalization factors. Explicitly, we have
\begin{eqnarray}\label{hatm}
{\hat m}^{l,\omega}_{\Lambda\setminus\Lambda_{nc}(l,\omega)}:&=&\sum_{\substack{A\in F_\omega^l\\A\cap\Lambda\neq \emptyset}}|A\cap\Lambda|\max_{y\in A\cap\Lambda}\hat\rho_{\Lambda\setminus\Lambda_\mathrm{nc}(l,\omega)}(y)\nonumber\\
&=&\frac{1}{\mu(\Lambda\setminus\Lambda_\mathrm{nc}(l,\omega))}\sum_{\substack{A\in F_\omega^l\\A\cap\Lambda\neq \emptyset}}|A\cap\Lambda|\max_{y\in A\cap\Lambda}\rho(y)
=\frac{\mu(\Lambda)}{\mu(\Lambda\setminus\Lambda_\mathrm{nc}(l,\omega))} {\hat m}^{l,\omega}_\Lambda\ge 1.
\end{eqnarray}
We find directly from the first equality in (\ref{rhomax}) that $(\hat\mu_{\Lambda\setminus\Lambda_\mathrm{nc}(l,\omega)})^{l,\omega}_{\mathrm{max}}\ge \frac{1}{{\hat m}^{l,\omega}_{\Lambda\setminus \Lambda_{nc}(l,\omega)}} {\hat\mu}_{\Lambda\setminus \Lambda_\mathrm{nc}(l,\omega)}.$ By \eqref{subadd_gcbinit} from item 2 of Remark \ref{rmk_gcb}, applied with $\mu''=\hat\mu_{\Lambda\setminus\Lambda_\mathrm{nc}}, N''=N\mu(\Lambda\setminus\Lambda_\mathrm{nc}(l,\omega))$ and $N'+N''=N\mu(\Lambda){\hat m}^{l,\omega}_\Lambda$, and by applying \eqref{EGC_negative} to the resulting $\mu'$ we get
\begin{equation}\label{4111}
E^\mathrm{xc}_{\mathrm{GC},N\mu(\Lambda\setminus \Lambda_\mathrm{nc}(l,\omega)), \sum_{A\in F^l_\omega\atop A\cap \Lambda\neq\emptyset}\mathsf{c}_{A}}({\hat\mu}_{\Lambda\setminus \Lambda_\mathrm{nc}(l,\omega)})\ge E^\mathrm{xc}_{\mathrm{GC},N \mu(\Lambda){{\hat m}^{l,\omega}_\Lambda}, \sum_{A\in F^l_\omega\atop A\cap\Lambda\neq\emptyset}\mathsf{c}_{A}}((\hat\mu_{\Lambda\setminus\Lambda_\mathrm{nc}(l,\omega)})^{l,\omega}_{\mathrm{max}}).
\end{equation}
By plugging \eqref{4111} in \eqref{4110}, while the other terms remain unchanged, we obtain
\begin{eqnarray}\label{4112}
E_{\mathrm{GC}, N,s}^\mathrm{xc}\left(\mu\right)
&\ge&  \frac{M}{M+C}\bigg(\int_{\Omega_l}E^\mathrm{xc}_{\mathrm{GC},N\mu(\Lambda) {\hat m}^{l,\omega}_\Lambda, \sum_{A\in F^l_\omega\atop A\cap \Lambda\neq\emptyset}\mathsf{c}_{A}}((\hat\mu_{\Lambda\setminus\Lambda_\mathrm{nc}(l,\omega)})^{l,\omega}_{\mathrm{max}})d\,\mathbb P_l(\omega)\nonumber\\
&&+\int_{\Omega_l}E^\mathrm{xc}_{\mathrm{GC},N\mu(\widetilde\Lambda\setminus\widetilde\Lambda_{nc}(l,\omega)), s}({\hat\mu}_{\widetilde\Lambda\setminus\widetilde\Lambda_{nc}(l,\omega)})\,d\,\mathbb{P}_l(\omega)-\frac{C(w,d,\epsilon)}{M}R_1^{-s}\left(N-1\right)\nonumber\\
&&-\frac{C(w,d, \epsilon)}{M}N^{1+s/d}\int_{\mathbb{R}^d}\rho^{1+s/d}(x)dx\bigg)\ .
\end{eqnarray}
By using \eqref{hatm} and applying now to the first term in (\ref{4112}) a similar argument as the one to get (\ref{furtherexplim}) from the proof of Proposition \ref{mixextlowb}, but this time without any need to consider separately the terms with $A\subset\Lambda$ and those with $A\cap\partial\Lambda\neq\emptyset$, as $l$ is now independent of $N$, we get 
\begin{eqnarray}\label{4113}
\lefteqn{E_{\mathrm{GC}, N,s}^\mathrm{xc}\left(\mu\right)\ge \frac{M}{M+C}\bigg\{\int_{\Omega_l}\bigg(\sum_{\substack{A\in F_\omega^l\\ A\cap \Lambda\neq\emptyset}} E^\mathrm{xc}_{\mathrm{GC}, N|A\cap\Lambda|  \left(\max_{x\in A\cap\Lambda}\rho(x)\right),s}\left(\frac{1_{A\cap\Lambda}}{|A\cap\Lambda|}\right)\bigg)\,d\,\mathbb{P}_l(\omega)}\\
&&-c_{\mathrm{LO}}(d,\epsilon)N^{1+s/d}\int_{\mathbb R^d\setminus\Lambda} \rho^{1+s/d}(x)dx-\frac{C(w,d,\epsilon)}{M}R_1^{-s}\left(N-1\right)-\frac{C(w,d, \epsilon)}{M}N^{1+s/d}\int_{\mathbb{R}^d}\rho^{1+s/d}(x)dx\bigg\}.\nonumber
\end{eqnarray}
For the second term in (\ref{4113}) we applied (\ref{loterm1gs}) and used the fact that $\widetilde\Lambda\subseteq\mathbb R^d\setminus\Lambda$. By (\ref{egclimitunif}), we have for the first term in (\ref{4113}), using also \eqref{rhomax} in order to transfer the estimate to $(\hat \rho_\Lambda)^{l,\omega}_{\mathrm{max}}$
\begin{eqnarray*}\label{firstterm4113}
\lefteqn{\liminf_{N\to\infty}\frac{1}{N^{1+s/d}}\int_{\Omega_l}\bigg(\sum_{\substack{A\in F_\omega^l\\A\cap \Lambda\neq\emptyset}} E^\mathrm{xc}_{\mathrm{GC}, N|A\cap\Lambda|  \left(\max_{x\in A\cap\Lambda}\rho(x)\right),s}\left(\frac{1_{A\cap\Lambda}}{|A\cap\Lambda|}\right)\bigg)\,d\,\mathbb{P}_l(\omega)}\nonumber\\
&\ge&-C(s,d)\int_{\Omega_l}\sum_{\substack{A\in F_\omega^l\\A\cap \Lambda\neq\emptyset}}|A\cap\Lambda|  \left(\max_{y\in A\cap\Lambda}\rho(y)\right)^{1+s/d}d\mathbb P_l(\omega)\nonumber\\
&=&-C(s,d)\int_{\Omega_l}(\mu(\Lambda))^{1+s/d}({\hat m}^{l,\omega}_\Lambda)^{1+s/d}\int_{\mathbb{R}^d}\left( (\hat\rho_{\Lambda})^{l,\omega}_{\mathrm{max}}(x)\right)^{1+s/d} dx\,d\,\mathbb{P}_l(\omega).
\end{eqnarray*}
Therefore, dividing in (\ref{4113}) by $N^{1+s/d}$ and taking limits directly gives the bound
\begin{eqnarray}\label{4114}
\liminf_{N\to\infty}\frac{E_{\mathrm{GC}, N,s}^\mathrm{xc}(\mu)}{N^{1+s/d}}&\ge& -\frac{M}{M+C}\bigg\{C(s,d)\int_{\Omega_l}(\mu(\Lambda))^{1+s/d}({\hat m}^{l,\omega}_\Lambda)^{1+s/d}\int_{\mathbb{R}^d}\left( (\hat\rho_{\Lambda})^{l,\omega}_{\mathrm{max}}(x)\right)^{1+s/d} dx\,d\,\mathbb{P}_l(\omega)\nonumber\\
&&+c_{\mathrm{LO}}(d,\epsilon)\int_{\mathbb R^d\setminus\Lambda} \rho^{1+s/d}(x)dx+\frac{C(w,d, \epsilon)}{M}\int_{\mathbb{R}^d}\rho^{1+s/d}(x)dx\bigg\}.
\end{eqnarray}

To take the limits $l\to 0, M\to\infty$ in the above, for the first integral we make use of Step 1, in particular that ${\hat m}^{l,\omega}_\Lambda\to 1$ and $(\hat\rho_\Lambda)^{l,\omega}_{\mathrm{max}}\to \hat\rho_\Lambda$ in $L^{1+s/d}(\mathbb{R}^d)$ as $l\to 0$, and we use the Dominated Convergence Theorem (whose conditions are satisfied since $\rho|_\Lambda$ is bounded and $\Lambda$ is bounded).
In conclusion, from \eqref{4114} we get
\begin{eqnarray}\label{4115}
\liminf_{N\to\infty}\frac{E_{\mathrm{GC},N,s}^\mathrm{xc}(\mu)}{N^{1+s/d}}&\ge& -C(s,d)\int_{\Lambda} \rho^{1+s/d}(x) dx-c_{\mathrm{LO}}(d,\epsilon)\int_{\mathbb R^d\setminus\Lambda} \rho^{1+s/d}(x)dx.
\end{eqnarray}
\par Taking now $\Lambda=[-R,R]^d\cap\,\mathrm{supp}(\mu)$ and by taking $R\to \infty$ we have $\rho|_\Lambda\le\rho$ and $\rho|_\Lambda(x)\to\rho(x)$ for all $x\in\mathbb R^d$, therefore again by dominated convergence the first integral in \eqref{4115} converges to $\int_{\mathbb R^d}\rho^{1+s/d}(x)dx$, and similarly the second integral in \eqref{4115} tends to zero. This allows to conclude Theorem \ref{upboundcont} in this case.

\textbf{Step 3.} The extension to general marginals $\rho\in L^{1+s/d}(\mathbb{R}^d)$ follows now from Lusin's Theorem due to the measurability of $\rho$ and via (\ref{4115}) from Step 2. More precisely, if $\nu(A):=\int_A\rho^{1+s/d}(x)dx$ then for every $n\in\mathbb N$ there exists a compact set $\Lambda_n\subset\mathbb{R}^d$ such that $\rho|_{\Lambda_n}$ is continuous and $\nu(\mathbb R^d\setminus\Lambda_n)<2^{-n}$. Then we have from (\ref{4115}), with $\Lambda_n$ instead of $\Lambda$
\begin{equation*}
\liminf_{N\to\infty}\frac{E_{\mathrm{GC},N,s}^\mathrm{xc}(\mu)}{N^{1+s/d}}\ge -C(s,d)\int_{\Lambda_n}\rho^{1+s/d}(x)dx-c_{\mathrm{LO}}(d,\epsilon)\int_{\mathbb R^d\setminus \Lambda_n} \rho^{1+s/d}(x)dx.
\end{equation*}
Then $\nu(\mathbb R^d\setminus\Lambda_n)=\int_{\mathbb R^d\setminus \Lambda_n}\rho^{1+s/d}(x)dx\to 0$ as $n\to \infty$, which produces the desired result.
\end{proof}

\section{Small oscillations property of $E^{\mathrm{xc}}_{\mathrm{GC}, N,s}(\mu)/N^{1+s/d}$}
\label{secsmalloscprop}
%
%
%
%

\par \textbf{Proof of Theorem \ref{monincrsub}}

In order to prove \eqref{smalloscgc}, we will use again the Fefferman-Gregg decomposition. Let $\mu$ be a piecewise constant density of form $\sum_{j=1}^k\alpha_j\mu_j$, where for all $j=1,\ldots, k,$, $\mu_j\in\calP(\mathbb{R}^d)$ is a uniform measure on $\Lambda_j\subset\mathbb{R}^d$, with $\Lambda_j, j=1,\ldots,k,$ hyper-rectangles with disjoint nonempty interiors, and $\alpha_j\in \mathbb{R}_{>0}, j=1,\ldots k,$ $\sum_{j=1}^k\alpha_j=1$. 

\medskip

We prove below that 
\begin{equation}\label{bound_toprove}\frac{E_{\mathrm{GC}, N,s}^{\mathrm{xc}}(\mu)}{N^{1+s/d}}\ -\ \frac{E_{\mathrm{GC},\widetilde N,s}^{\mathrm{xc}}(\mu)}{\widetilde N^{1+s/d}}\ \ge\ -\frac{C(\Lambda_1,\ldots,\Lambda_k, \alpha_1,\ldots,\alpha_k, d,\epsilon)}{\log \widetilde N}\ ,
\end{equation}
as the other direction can be easily argued similarly. 

\par \textbf{Step 1.} At first, from Lemma \ref{subadd12} and eqn. (\ref{applysubadd12++}) and (\ref{roughbla}) from Step 1 in the proof of Proposition \ref{mixextlowb}, we get, if $B_1$ is the unit ball in $\R^d$ centered at zero,
\begin{eqnarray}\label{fefgreg}
E_{\mathrm{GC},N,s}^\mathrm{xc}(\mu)&\ge&\frac{M}{M+C}\bigg\{\sum_{j=1}^k \int_{\Omega_l}\bigg(\sum_{\substack{A\in F_\omega^l,\exists x\in\mathbb{R}^d\\\ A=B_R(x)\subset\Lambda_j}} R^{-s}E^\mathrm{xc}_{\mathrm{GC},N\alpha_j|B_R|/|\Lambda_j|,s}\left(\frac{1_{B_1}}{|B_1|}\right)\bigg)\,d\,\mathbb{P}_l(\omega)\nonumber\\
&&-\frac{C(w,\epsilon, d)}{M}R_1^{-s}(N-1)-\frac{C(w, d,\epsilon)}{M}N^{1+s/d}\int_{\mathbb{R}^d}\rho^{1+s/d}(x)dx\nonumber\\
&&-c_{\mathrm{LO}}(d,\epsilon) N^{1+s/d} \int_{\Omega_l}\bigg(\int_{\Lambda_\mathrm{err}(l,\omega)}\rho^{1+s/d}(x)dx\bigg)\,d\,\mathbb{P}_l(\omega)\bigg\},
\end{eqnarray}
where any $A\in F_\omega^l$ is a ball $B_R(x)$ for some choice $R\in\{tR_1,\ldots, tR_M\}$, $t\in [1-\eta,1+\eta]$ and some $x\in\mathbb R^d$. Furthermore, for each $l,\omega$ the set $\Lambda_\mathrm{err}(l,\omega)$ is the union of the balls from $F_\omega^l$ which intersect $\bigcup_{j=1}^k\partial\Lambda_j$, which in turn are contained in cubes of sidelength $tl$ that intersect $\partial\Lambda_j$ for some $1\le j\le k$, which cubes are all included in the set $\bigcup_{j=1}^k(\partial \Lambda_j)_{2l\sqrt{d}}$, where we recall the notation for $(\partial \Lambda_j)_{2l\sqrt{d}}$ from (\ref{480appr}). We assume in all our calculations below that $l$ is such that $|\Lambda_j|\gg 2l\sqrt{d}$ for all $i=1,\ldots,d$.

\medskip

To proceed, we now re-write for each $j=1,\ldots, k,$ in the first term in \eqref{fefgreg} 
\begin{eqnarray}\label{applcor}
\lefteqn{\int_{\Omega_l}\bigg(\sum_{\substack{B_R(x)\in F_\omega^l,\\ B_R(x)\subset\Lambda_j}} R^{-s}E^\mathrm{xc}_{\mathrm{GC},N\alpha_j|B_R|/|\Lambda_j|,s}\left(\frac{1_{B_1}}{|B_1|}\right)\bigg)\,d\,\mathbb{P}_l(\omega)}\nonumber\\
&=& \int_{\Omega_l}\bigg(\sum_{\substack{B_R(x)\in F_\omega^l,\\ B_R(x)\subset\Lambda_j}} R^{-s}(N\alpha_jR^d|B_1|/|\Lambda_j|)^{1+s/d}\frac{E^\mathrm{xc}_{\mathrm{GC},N\alpha_j|B_R|/|\Lambda_j|,s}\left(\frac{1_{B_1}}{|B_1|}\right)}{(N\alpha_jR^d|B_1|/|\Lambda_j|)^{1+s/d}}\bigg)\,d\,\mathbb{P}_l(\omega)\nonumber\\
&=& \bigg(\frac{\alpha_j}{|\Lambda_j|}\bigg)^{1+s/d}\sum_{i=1}^M\int_{\Omega_l}\bigg(\sum_{\substack{B_{tR_i}(x)\in F_\omega^l,\\ B_{tR_i}(x)\subset\Lambda_j}} (tR_i)^{-s}(N(tR_i)^d|B_1|)^{1+s/d}\frac{E^\mathrm{xc}_{\mathrm{GC},N\alpha_j|B_{tR_i}|/|\Lambda_j|,s}\left(\frac{1_{B_1}}{|B_1|}\right)}{(N\alpha_j(tR_i)^d|B_1|/|\Lambda_j|)^{1+s/d}}\bigg)\,d\,\mathbb{P}_l(\omega)\nonumber\\
&=& \bigg(\frac{\alpha_j}{|\Lambda_j|}\bigg)^{1+s/d}\sum_{i=1}^M\int_{\Omega_l}\frac{E^\mathrm{xc}_{\mathrm{GC},N\alpha_j|B_{tR_i}|/|\Lambda_j|,s}\left(\frac{1_{B_1}}{|B_1|}\right)}{(N\alpha_j(tR_i)^d|B_1|/|\Lambda_j|)^{1+s/d}}\bigg(\sum_{\substack{B_{tR_i}(x)\in F_\omega^l,\\ B_{tR_i}(x)\subset\Lambda_j}} (tR_i)^{-s}(N(tR_i)^d|B_1|)^{1+s/d}\bigg)\,d\,\mathbb{P}_l(\omega).\nonumber\\
\end{eqnarray}
We now use the fact that by Lemma \ref{cheeselemma} there holds for each $1\le i\le M$, $1\le j\le k$ and $\omega\in \Omega_l$, by a reasoning similar to the one leading to \eqref{480appr} 
\begin{equation}\label{54upbd}
\sum_{\substack{B_{tR_i}(x)\in F_\omega^l,\\ B_{tR_i}(x)\subset\Lambda_j}} (tR_i)^d|B_1|\le \frac{|\{y: \mathrm{dist}(y,\Lambda_j)\le 2l\sqrt{d}\}|}{M+C_d}=:\frac{|(\Lambda_j)_{2l\sqrt{d}}|}{M+C_d}.
\end{equation}
To obtain the above bound, note that the balls from $F^l_\omega$ which have radius $tR_i$ cover at most $(M+C_d)^{-1}$ of each cube $K\in\mathcal F_\omega^l$, where $\mathcal F_\omega^l$ is a covering of $\mathbb R^d$ by cubes of sidelength $tl$ with $t<2$, and disjoint interiors. The cubes of such covering that have nonempty intersection with $\Lambda_j$ stay within distance $2l\sqrt{d}$ of $\Lambda_j$ and thus their total volume is at most $|(\Lambda_j)_{2l\sqrt{d}}|$. These considerations directly lead to \eqref{54upbd}.

Applying \eqref{54upbd}, we obtain 
\begin{eqnarray}
\label{54upbdjerm}
\sum_{\substack{B_{tR_i}(x)\in F_\omega^l,\\ B_{tR_i}(x)\subset\Lambda_j}} (tR_i)^{-s}(N(tR_i)^d|B_1|)^{1+s/d}&\le& N^{1+s/d}|B_1|^{s/d}\frac{|(\Lambda_j)_{2l\sqrt{d}}|}{M+C_d}.
\end{eqnarray}
In view of (\ref{EGC_negative}), from \eqref{applcor}, \eqref{54upbdjerm} we get for $1\le j\le k$
\begin{eqnarray}\label{55upbd}
\lefteqn{\int_{\Omega_l}\bigg(\sum_{\substack{B_R(x)\in F_\omega^l,\\ B_R(x)\subset\Lambda_j}} R^{-s}E^\mathrm{xc}_{\mathrm{GC},N\alpha_j|B_R|/|\Lambda_j|,s}\left(\frac{1_{B_1}}{|B_1|}\right)\bigg)\,d\,\mathbb{P}_l(\omega)}\nonumber\\
&\ge& N^{1+s/d}|B_1|^{s/d} \frac{|(\Lambda_j)_{2l\sqrt{d}}|}{M+C_d}\int_{\Omega_l}\sum_{i=1}^M\frac{E^\mathrm{xc}_{\mathrm{GC},N\alpha_j|B_{tR_i}|/|\Lambda_j|,s}\left(\frac{1_{B_1}}{|B_1|}\right)}{(N\alpha_j(tR_i)^d|B_1|/|\Lambda_j|)^{1+s/d}}d\,\mathbb{P}_l(\omega).
\end{eqnarray}

\medskip

We now move to finding an upper bound for the last term in \eqref{fefgreg}. Similarly to the reasoning used to justify \eqref{54upbd}, $\Lambda_\mathrm{err}(l,\omega)\cap \Lambda$ is contained in $\bigcup_{j=1}^k(\partial\Lambda_j)_{2l\sqrt{d}}$, and thus we have for the last term in \eqref{fefgreg} 
\begin{eqnarray}\label{56err}
\int_{\Lambda_\mathrm{err}(l,\omega)}\rho^{1+s/d}(x)dx&\le& \int_{\bigcup_{j=1}^k(\partial\Lambda_j)_{2l\sqrt{d}}}\rho^{1+s/d}(x)dx \nonumber\\
&\le&\int_{\bigcup_{j=1}^k(\partial\Lambda_j)_{2l\sqrt{d}}}\rho^{1+\epsilon/d}(x)dx+\int_{\bigcup_{j=1}^k(\partial\Lambda_j)_{2l\sqrt{d}}}\rho^{1+(d-\epsilon)/d}(x)dx.
\end{eqnarray}
Thus, by using the estimates from \eqref{54upbd}, we obtain in (\ref{fefgreg}) by means of (\ref{55upbd}) and of \eqref{56err} for some $c'(d,\epsilon)>0$
\begin{eqnarray}\label{57step1}
\lefteqn{\frac{E_{\mathrm{GC},N,s}^\mathrm{xc}(\mu)}{N^{1+s/d}}\ge |B_1|^{s/d}\frac{M}{M+C}\sum_{j=1}^k\bigg(\frac{\alpha_j}{|\Lambda_j|}\bigg)^{1+s/d}\frac{|(\Lambda_j)_{2l\sqrt{d}}|}{M+C_d}\sum_{i=1}^M\int_{\Omega_l}\frac{E^\mathrm{xc}_{\mathrm{GC},N\alpha_j|B_{tR_i}|/|\Lambda_j|,s}\left(\frac{1_{B_1}}{|B_1|}\right)}{(N\alpha_j(tR_i)^d|B_1|/|\Lambda_j|)^{1+s/d}}d\,\mathbb{P}_l(\omega)}\nonumber\\
&&-\frac{c'(d,\epsilon)}{M}\frac{N-1}{N^{1+\frac{s}{d}}}R_1^{-s}-c'(d,\epsilon)\sum_{s'\in\{\epsilon,d-\epsilon\}}\bigg(\int_{{\bigcup_{j=1}^k(\partial\Lambda_j)_{2l\sqrt{d}}}}\rho^{1+s'/d}(x)dx\,+\frac{1}{M}\int_{\mathbb{R}^d}\rho^{1+s'/d}(x)dx\bigg).
\end{eqnarray}
\par \textbf{Step 2.} We will next get a lower bound for $E_{\mathrm{GC},\tilde N,s}^\mathrm{xc}(\mu)$ in function of $E_{\mathrm{GC}, N|B_{t{\tilde R}_i}|,s}^\mathrm{xc}(\mu)$, where $\widetilde N>0$ and $({\tilde R}_i)$ is a new balls cover.

To begin with, using as in the proof of Lemma \ref{subadd12}, for another cover  and for any ${\widetilde\Lambda}_\mathrm{err}(l,\omega)$ as in (\ref{sigmaerr2}), we can write

\begin{equation}
	\label{help1}
\mu=\mu(\Lambda\setminus {\widetilde\Lambda}_\mathrm{err}(l,\omega))\hat\mu_{\Lambda\setminus {\widetilde\Lambda}_\mathrm{err}(l,\omega)}+\mu({\widetilde\Lambda}_\mathrm{err}(l,\omega))\hat\mu_{ {\widetilde\Lambda}_\mathrm{err}(l,\omega)}
\end{equation}
and 
\begin{equation}
\label{help2}
\hat\mu_{\Lambda\setminus {\widetilde\Lambda}_\mathrm{err}(l,\omega)}=\sum_{\substack{{\widetilde {A}}\in {\widetilde F}^l_\omega\\ \widetilde{A}\subset\Lambda,\mu(\widetilde A)>0}}\hat\mu_{\Lambda\setminus  {\widetilde\Lambda}_\mathrm{err}(l,\omega)}(\widetilde A)\,\hat\mu_{\widetilde A}+\sum_{\substack{\widetilde A\in {\widetilde F}^l_\omega\\ \tilde A\subset\Lambda,\mu(\widetilde A)=0}}\hat\mu_{\Lambda\setminus  {\widetilde\Lambda}_\mathrm{err}(l,\omega)}|_{\widetilde A}\ .
\end{equation}
 We assume below that $\widetilde N\alpha_j|B_{\widetilde R}|/|\Lambda_j|>1, j=1,\ldots,k$. Applying Remark \ref{rmk_gcb} (4) twice, firstly to (\ref{help1}) and then to (\ref{help2}), and then using (\ref{EGC_negative}), we get
\begin{eqnarray}\label{eachtproc}
E_{\mathrm{GC},\widetilde N,s}^\mathrm{xc}(\mu)&\le&E_{\mathrm{GC},\widetilde N\mu(\Lambda\setminus {\widetilde\Lambda}_\mathrm{err}(l,\omega)),s}^\mathrm{xc}\left(\hat\mu_{\Lambda\setminus {\widetilde\Lambda}_\mathrm{err}(l,\omega)}\right)\nonumber\\
&\le&\sum_{j=1}^k\sum_{\substack{{\widetilde {A}}\in {\widetilde F}^l_\omega\\ \widetilde{A}\subset\Lambda_j,\mu(\widetilde A)>0}}E_{\mathrm{GC},\widetilde N\mu(\widetilde A),s}^\mathrm{xc}({\hat\mu}_{\tilde A})=\sum_{j=1}^k\sum_{\substack{{\widetilde {A}}\in {\widetilde F}^l_\omega\\ \widetilde{A}\subset\Lambda_j,\mu(\widetilde A)>0}}E^\mathrm{xc}_{\mathrm{GC},\widetilde N\alpha_j|A|/|\Sigma_j|,s}\left(\frac{1_{\tilde A}}{|\tilde A|}\right)\nonumber\\
&=&\sum_{j=1}^k\sum_{\tilde A=\substack{B_{\widetilde R}(x)\in \widetilde F_\omega^l,\\ B_{\widetilde R}(x)\subset\Lambda_j}} E^\mathrm{xc}_{\mathrm{GC},\widetilde N\alpha_j|B_{\widetilde R}|/|\Lambda_j|,s}\left(\frac{1_{B_{\widetilde R}}}{|B_{\widetilde R}|}\right)=\sum_{j=1}^k\sum_{\substack{B_{\widetilde R}(x)\in \widetilde F_\omega^l,\\ B_{\widetilde R}(x)\subset\Lambda_j}}\widetilde R^{-s}E^\mathrm{xc}_{\mathrm{GC},\widetilde N\alpha_j|B_{\widetilde R}|/|\Lambda_j|,s}\left(\frac{1_{B_1}}{|B_1|}\right)\ .\nonumber\\
\end{eqnarray}
Performing this procedure for every $\omega$, and then integrating, we get
\begin{eqnarray}\label{511ubd}
E_{\mathrm{GC},\widetilde N,s}^\mathrm{xc}(\mu)&\le&\sum_{j=1}^k\int_{\Omega_l}\sum_{\substack{B_{\widetilde R}(x)\in \widetilde F_\omega^l,\\ B_{\widetilde R}(x)\subset\Lambda_j}} {\widetilde R}^{-s}E^\mathrm{xc}_{\mathrm{GC},\widetilde N\alpha_j|B_{\widetilde R}|/|\Lambda_j|,s}\left(\frac{1_{B_1}}{|B_1|}\right)\,d\,\mathbb{P}_l(\omega).
\end{eqnarray}

By Lemma \ref{cheeselemma} and similarly to \eqref{54upbd}, for each $1\le i\le \tilde M$, each $1\le j\le k$ and each $\omega\in \Omega_l$ we have, by a reasoning similar to the one used to justify \eqref{54upbd},

\begin{equation}\label{512lobd}
\sum_{\substack{{\tilde B}_{t{\tilde R}_i}(x)\in {\tilde F}_\omega^l,\\ B_{t{\tilde R}_i}(x)\subset\Lambda_j}} (t{\tilde R}_i)^d|B_1|\ge\frac{|\{y:\ \mathrm{dist}(y,\mathbb R^d\setminus\Lambda_j)>2l\sqrt{d}\}|}{\widetilde{M}+C_d+1}:=\frac{|(\Lambda_j)_{2l\sqrt{d}}^*|}{\widetilde{M}+C_d+1}.
\end{equation}
To justify \eqref{512lobd}, we use the same notation as in the paragraph following \eqref{54upbd}. As ${\tilde F}_\omega^l$ forms a covering of $\mathbb R^d$, any point $y\in(\Lambda_j)^*_{2l\sqrt{d}}$ is in some cube $K_y\in {\tilde F}_\omega^l$. As the sidelength of $K_y$ is smaller than $2l$ and $y\in(\Lambda_j)^*_{2l\sqrt{d}}$, we have $K_y\subset B(y,2l\sqrt{d})\subset\Lambda_j$; this proves that $(\Lambda_j)^*_{2l\sqrt{d}}$ is covered by cubes from ${\tilde F}^l_\omega$ that are completely contained in $\Lambda_j$. The balls appearing in the sum on the left in \eqref{512lobd} cover at least $(M +C_d+1)^{-1}$ of $K$ for each $K\in {\tilde F}_\omega^l$ completely contained in $\Lambda_j$, and as the $K
\in {\tilde F}_\omega^l$ completely contained in $\Lambda_j$ cover $(\Lambda_j)^*_{2l\sqrt{d}}$, \eqref{512lobd} follows.

Now \eqref{511ubd} and \eqref{512lobd} give
\begin{eqnarray}\label{513step2}
\frac{E_{\mathrm{GC},\widetilde N,s}^\mathrm{xc}(\mu)}{{\widetilde N}^{1+s/d}}&\le&\sum_{j=1}^k\int_{\Omega_l}\sum_{\substack{B_{\widetilde R}(x)\in \widetilde F_\omega^l,\\ B_{\widetilde R}(x)\subset\Lambda_j}} {\widetilde R}^{-s}\frac{E^\mathrm{xc}_{\mathrm{GC},\widetilde N\alpha_j|B_{\widetilde R}|/|\Lambda_j|,s}\left(\frac{1_{B_1}}{|B_1|}\right)}{{\widetilde N}^{1+s/d}}\,d\,\mathbb{P}_l(\omega)\nonumber\\
&=&\sum_{j=1}^k\sum_{i=1}^{\widetilde M}\int_{\Omega_l}\sum_{\substack{B_{t\widetilde R_i}(x)\in \widetilde F_\omega^l,\\ B_{t\widetilde R_i}(x)\subset\Lambda_j}} (t{\widetilde R}_i)^{-s}((t\alpha_j{\widetilde R}_i)^d|B_1|/|\Lambda_j|)^{1+s/d}\frac{E^\mathrm{xc}_{\mathrm{GC},\widetilde N\alpha_j|B_{t{\widetilde R}_i}|/|\Lambda_j|,s}\left(\frac{1_{B_1}}{|B_1|}\right)}{(\widetilde N \alpha_j(t{\widetilde R_i})^d|B_1|/|\Lambda_j|)^{1+s/d}}\,d\,\mathbb{P}_l(\omega)\nonumber\\
&=&\sum_{j=1}^k\left(\frac{\alpha_j}{|\Lambda_j|}\right)^{1+s/d}\sum_{i=1}^{\widetilde M}\int_{\Omega_l}\frac{E^\mathrm{xc}_{\mathrm{GC},\widetilde N\alpha_j|B_{t{\widetilde R}_i}|/|\Lambda_j|,s}\left(\frac{1_{B_1}}{|B_1|}\right)}{(\widetilde N \alpha_j(t{\widetilde R_i})^d|B_1|/|\Lambda_j|)^{1+s/d}}\sum_{\substack{B_{t\widetilde R_i}(x)\in \widetilde F_\omega^l,\\ B_{t\widetilde R_i}(x)\subset\Lambda_j}} (t{\widetilde R}_i)^d|B_1|^{1+s/d}\,d\,\mathbb{P}_l(\omega)\nonumber\\
&\le&{|B_1|^{s/d}}\sum_{j=1}^k\left(\frac{\alpha_j}{|\Lambda_j|}\right)^{1+s/d}\frac{|(\Lambda_j)_{2l\sqrt{d}}^*|}{\widetilde{M}+C_d+1}\sum_{i=1}^{\widetilde M}\int_{\Omega_l}\frac{E^\mathrm{xc}_{\mathrm{GC},\widetilde N\alpha_j|B_{t{\widetilde R}_i}|/|\Lambda_j|,s}\left(\frac{1_{B_1}}{|B_1|}\right)}{(\widetilde N \alpha_j(t{\widetilde R_i})^d|B_1|/|\Lambda_j|)^{1+s/d}}\,d\,\mathbb{P}_l(\omega).
\end{eqnarray}
\par \textbf{Step 3.} We start by subtracting \eqref{513step2} from \eqref{57step1}. Therefore we have
\begin{eqnarray}\label{5144}
 \lefteqn{\frac{E_{\mathrm{GC},N,s}^\mathrm{xc}(\mu)}{N^{1+s/d}}-\frac{E_{\mathrm{GC},\widetilde N,s}^\mathrm{xc}(\mu)}{{\widetilde N}^{1+s/d}}}\nonumber\\
 &\ge&-\frac{c'(d,\epsilon)}{M}\frac{N-1}{N^{1+\frac{s}{d}}}R_1^{-s}-c'(d,\epsilon)\sum_{s'\in\{\epsilon,d-\epsilon\}}\bigg(\int_{\bigcup_{j=1}^k(\partial\Lambda_j)_{2l\sqrt{d}}}\rho^{1+s'/d}(x)dx+\frac{1}{M}\int_{\mathbb{R}^d}\rho^{1+s'/d}(x)dx\bigg)\nonumber\\
 &&+|B_1|^{s/d}\sum_{j=1}^k\bigg(\frac{\alpha_j}{|\Lambda_j|}\bigg)^{1+s/d}\bigg[|(\Lambda_j)_{2l\sqrt{d}}|\frac{M}{(M+C)(M+C_d)}\sum_{i=1}^M\int_{\Omega_l}\frac{E^\mathrm{xc}_{\mathrm{GC},N\alpha_j|B_{tR_i}|/|\Lambda_j|,s}\left(\frac{1_{B_1}}{|B_1|}\right)}{(N\alpha_j(tR_i)^d|B_1|/|\Lambda_j|)^{1+s/d}}d\,\mathbb{P}_l(\omega)\nonumber\\
 &&\phantom{|B_1|^{s/d}\sum_{j=1}^k\bigg(\frac{\alpha_j}{|\Lambda_j|}\bigg)^{1+s/d}}-\frac{|(\Lambda_j)_{2l\sqrt{d}}^*|}{\widetilde{M}+C_d+1}\sum_{i=1}^{\widetilde M}\int_{\Omega_l}\frac{E^\mathrm{xc}_{\mathrm{GC},\widetilde N\alpha_j|B_{t{\widetilde R}_i}|/|\Lambda_j|,s}\left(\frac{1_{B_1}}{|B_1|}\right)}{(\widetilde N \alpha_j(t{\widetilde R_i})^d|B_1|/|\Lambda_j|)^{1+s/d}}\,d\,\mathbb{P}_l(\omega)\bigg].\nonumber\\
 &:=&-\frac{c'(d,\epsilon)}{M}\frac{N-1}{N^{1+\frac{s}{d}}}R_1^{-s}-c'(d,\epsilon)\sum_{s'\in\{\epsilon,d-\epsilon\}}I(s',\rho, l,M) + II(s,\rho,l,M,\widetilde M,N, \widetilde N)
\end{eqnarray}
In order to treat the more complicated term $II(s,\rho,l,M,\widetilde M,N, \widetilde N)$ in \eqref{5144}, in the next paragraphs we fix $l,M,R_1,N, \widetilde M, {\widetilde R}_1, \widetilde N,$ satisfying \eqref{boundlmr} and \eqref{boundlmr2}, (where we define then $R_i=(c(d))^{i-1}R_1$ and $\widetilde R_j=(c(d))^{j-1}\widetilde R_1$ for $2\le i\le M$ and $2\le j\le \widetilde M$, where $c(d)$ is the constant coming from the Lemma \ref{cheeselemma}, depending only on $d$). 

\medskip

Firstly, we may take $M=\widetilde M$, and choose $N,R_1,\widetilde N,\widetilde R_1$ such that $N\ge\widetilde N$, and
\begin{equation}\label{514NRchoice}
\widetilde N  \widetilde R_i^d=N R_i^d, \quad \mbox{for all}\quad i=1,\ldots,M.
\end{equation}
Furthermore, we also need $\widetilde N\widetilde R_i^d\ll \widetilde N, i=1,\ldots, M$. This can be ensured by imposing the precise relation $\widetilde R_1=R_1(N/\widetilde N)^{\frac1d}\ge R_1$ and $NR_M^d\ll \widetilde N$. The required conditions from Lemma \ref{cheeselemma} for $N, M,R_1,l$ and for $\widetilde N, M, \widetilde R_1,l$ can be written as
\begin{equation*}
l\ll 1\ll M, \quad (M+C)C^M\max\{R_1,\widetilde R_1\}=(M+C)C^M\widetilde R_1<l,
\end{equation*}
where we recall that $C$ is a dimensional constant, independent of the choice of the function $\mathsf{c}$.

As we assumed $N\ge \widetilde N$ and $C>1$, we find that, due to \eqref{514NRchoice}, the above condition is thus equivalent to the Lemma \ref{cheeselemma} condition on the parameters $\widetilde N, \widetilde R_1, l, M$ and is therefore achievable.  The assumptions in \eqref{boundlmr3} combined with \eqref{514NRchoice} give
\begin{equation}\label{515lmr}
{\widetilde R}_1:={\widetilde N}^{-\frac1{2d}},\quad R_1={\widetilde N}^{\frac{1}{2d}} N^{-\frac{1}{d}},\quad l:={\widetilde N}^{-\frac1{3d}},\quad M=\widetilde M:=\left[\frac{\log\widetilde N}{18d\log C}\right]-1\ .
\end{equation}
The requirements \eqref{515lmr} together with the requirement $M\ge M_d$ as in the discussion preceding \eqref{boundlmr3} turn out to be sufficient for our purposes. 

The choices \eqref{514NRchoice} and \eqref{515lmr} imply that for all $i=1,\ldots,M,$ and for all $j=1,\ldots,k,$ we have
\begin{equation}
\label{516}
\int_{\Omega_l}\frac{E^\mathrm{xc}_{\mathrm{GC},N\alpha_j|B_{t R_i}|/|\Lambda_j|,s}\left(\frac{1_{B_1}}{|B_1|}\right)}{(N\alpha_j(t R_i)^d|B_1|/|\Lambda_j|)^{1+s/d}}\,d\,\mathbb{P}_l(\omega)=\int_{\Omega_l}\frac{E^\mathrm{xc}_{\mathrm{GC},\widetilde N\alpha_j|B_{t\widetilde R_i}|/|\Lambda_j|,s}\left(\frac{1_{B_1}}{|B_1|}\right)}{(\widetilde N\alpha_j(t \widetilde R_i)^d|B_1|/|\Lambda_j|)^{1+s/d}}\,d\,\mathbb{P}_l(\omega).
\end{equation}
Also note that the $2l\sqrt{d}$-outer layer of $\Lambda_j$ differs from its $2l\sqrt{d}$-inner layer precisely by the $2l\sqrt{d}$-neighborhood of $\partial\Lambda_j$, thus with the notations introduced in \eqref{54upbd} and \eqref{512lobd} we have
\begin{equation}\label{5188}
 \left|(\Lambda_j)_{2l\sqrt{d}}\right| = \left|(\Lambda_j)_{2l\sqrt{d}}^*\right|+  \left|(\partial\Lambda_j)_{2l\sqrt{d}}\right|,
\end{equation}
where 
$$(\partial\Lambda_j)_{2l\sqrt{d}}:=\{y: \mathrm{dist}(y,\partial\Lambda_j)\le 2l\sqrt{d}\}.$$
Using \eqref{5188} and using the fact that we chose $M=\widetilde M$ in \eqref{515lmr}, we can write, for $C>1$ and $\widetilde M\ge M_d$ large enough so that the leading-order terms in $\widetilde M$ below dominate the other terms
\begin{eqnarray}\label{5199}
 \lefteqn{|(\Lambda_j)_{2l\sqrt{d}}| \frac{M}{(M+C)(M+C_d)}- \frac{|(\Lambda_j)_{2l\sqrt{d}}^*|}{\widetilde{M}+C_d+1}}\nonumber\\
 &=&|(\Lambda_j)_{2l\sqrt{d}}^*|\left(\frac{\widetilde M}{(\widetilde M+C)(\widetilde M+C_d)} - \frac{1}{\widetilde M + C_d+1}\right)+|(\partial \Lambda_j)_{2l\sqrt{d}}|\frac{\widetilde M}{(\widetilde M+C)(\widetilde M+C_d)}\nonumber\\
 &\le&-(C-1)\frac{|(\Lambda_j)_{2l\sqrt{d}}^*|}{\widetilde M^2} +\frac{|(\partial \Lambda_j)_{2l\sqrt{d}}|}{\widetilde M} \le \frac{|(\partial \Lambda_j)_{2l\sqrt{d}}|}{\widetilde M}.
\end{eqnarray}
Inserting \eqref{5199} in the $II$-term from \eqref{5144} and keeping in mind the non-positivity \eqref{EGC_negative}, the bound \eqref{loterm1gs} and the fact that $\mathbb P_l$ is a probability measure, we get
\begin{eqnarray}\label{5200}
 \lefteqn{II(s,\rho,l,M,\widetilde M,N, \widetilde N)}\nonumber\\
 &\ge&\frac{\max_{1\le j\le k}|(\partial \Lambda_j)_{2l\sqrt{d}}|}{\widetilde M}\sum_{j=1}^k|B_1|^{s/d}\bigg(\frac{\alpha_j}{|\Lambda_j|}\bigg)^{1+s/d}\sum_{i=1}^{\widetilde M}\int_{\Omega_l}\frac{E^\mathrm{xc}_{\mathrm{GC},\widetilde N\alpha_j|B_{t\widetilde R_i}|/|\Lambda_j|,s}\left(\frac{1_{B_1}}{|B_1|}\right)}{(\widetilde N\alpha_j(t\widetilde R_i)^d|B_1|/|\Lambda_j|)^{1+s/d}}d\,\mathbb{P}_l(\omega)\nonumber\\
 &\ge&-c_\mathrm{LO}(d,\epsilon)\left(\max_{1\le j\le k}|(\partial \Lambda_j)_{2l\sqrt{d}}|\right)\sum_{j=1}^k|B_1|^{s/d}\bigg(\frac{\alpha_j}{|\Lambda_j|}\bigg)^{1+s/d}\ge-l C_1(\Lambda_1,\ldots,\Lambda_k, \alpha_1,\ldots,\alpha_k,d,\epsilon),\quad\quad
\end{eqnarray}
where the crucial ingredient used is the fact that $\Lambda_j$ are $\phi$-regular and in particular $|(\partial \Lambda_j)_r|\le C_{\Lambda_j}r$ for $r\in(0,1)$, and the constant $C_1$ above depends on these bounds $C_{\Lambda_j}$ as well.

\medskip

In order to bound the first term of the last line in \eqref{5144}, we directly use \eqref{515lmr} and write
\begin{equation}\label{5211}
R_1^{-s}N^{-\frac{s}{d}}=\left(\frac{\widetilde N}{N}\right)^{-\frac{s}{d}}{\widetilde N}_1^{\frac{s}{2d}}N^{-\frac{s}{d}}= {\widetilde N}^{-\frac{s}{2d}}.
\end{equation}
In order to estimate the $I$-term in \eqref{5144} we treat separately the two integrals appearing in it. For the integration on the neighborhood of the boundaries, we use again the bound $|(\partial \Lambda_j)_r|\le C_{\Lambda_j}r$ and the explicit form of $\rho$. For the integral of $\rho^{1+s'/d}$ with $s'\in\{\epsilon, d-\epsilon\}$, again we can use the explicit form of $\rho$, this time in an even more direct way. Explicitly, we obtain
\begin{eqnarray}\label{5222}
\lefteqn{c'(d,\epsilon)\sum_{s'\in\{\epsilon, d-\epsilon\}}I(s',\rho,l,M)}\nonumber\\
&\le& c'(d,\epsilon)\left(2l\sqrt{d}\cdot \left(\max_{1\le j\le k}C_{\Lambda_j}\right)\sum_{s'\in\{\epsilon, d-\epsilon\}}\max_{1\le j\le k}\frac{\alpha_i^{1+s'/d}}{|\Lambda_j|^{1+s'/d}} + \frac{1}{M}\sum_{s'\in\{\epsilon, d-\epsilon\}}\sum_{j=1}^k\frac{\alpha_i^{1+s'/d}}{|\Lambda_j|^{s'/d}}\right)\nonumber\\
&:=&\left(l+\frac{1}{M}\right) C_2(\Lambda_1,\ldots,\Lambda_k, \alpha_1,\ldots,\alpha_k,d,\epsilon).
\end{eqnarray}
Summing now the estimates \eqref{5200}, \eqref{5211} and \eqref{5222} and taking $C'(\Lambda_1,\ldots,\Lambda_k, \alpha_1,\ldots,\alpha_k,d,\epsilon)$ to be the sum of the analogous constants from those equations, we obtain from \eqref{5144} that (recalling again that we chose $\widetilde M=M$)
\begin{equation}\label{5233}
 \frac{E_{\mathrm{GC},N,s}^\mathrm{xc}(\mu)}{N^{1+s/d}}-\frac{E_{\mathrm{GC},\widetilde N,s}^\mathrm{xc}(\mu)}{{\widetilde N}^{1+s/d}}\ge \frac{C'(w,\epsilon,d)}{\widetilde M}\widetilde N^{-\frac{s}{2d}} - \left(l+\frac{1}{\widetilde M}\right)  C'(\Lambda_1,\ldots,\Lambda_k, \alpha_1,\ldots,\alpha_k,d,\epsilon).
\end{equation}
By the choices \eqref{515lmr} we now see that $\widetilde M^{-1}>l$ and thus its contribution dominates the bound in \eqref{5233}, allowing to conclude.
\qed

An immediate consequence of Theorem \ref{monincrsub}, and of the Cauchy criterion of uniform convergence is: 
\begin{corollary}[Uniform convergence of $E_{\mathrm{GC}, N,s}^{\mathrm{xc}}(\mu)$ with respect to $s$]\label{unifconv}
Fix $0<\epsilon<d/2$ and let $\epsilon\le s\le d-\epsilon$. Let $\mu\in \mathcal {P}(\mathbb{R}^d)$ be a probability measure with density of the form $\rho(x)=\sum_{i=1}^k\alpha_i1_{\Lambda_i}(x)$ where $\Lambda_1,\ldots,\Lambda_k$ are Borel sets with $\phi$-regular boundary and disjoint interiors. Then the sequence of functions
$$
f_s(N):=\frac{E_{\mathrm{GC}, N,s}^\mathrm{xc}(\mu)}{N^{1+s/d}}
$$
converges as $N\to\infty$ uniformly with respect to the parameter $s\in[\epsilon, d-\epsilon]$.
\end{corollary}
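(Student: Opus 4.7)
The plan is to reduce the uniform convergence claim directly to the quantitative oscillation estimate in Theorem \ref{monincrsub}, together with the standard Cauchy criterion for uniform convergence of a family of real-valued sequences indexed by a parameter. Since for each fixed $s\in[\epsilon,d-\epsilon]$, Theorem \ref{upboundcont} already provides pointwise convergence of $f_s(N)$ as $N\to\infty$, the only thing to verify is that this convergence is uniform in $s$, and for this the Cauchy formulation is more convenient than controlling the limit directly.

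The decisive point, which I would highlight first, is that the constant $C(\rho,d,\epsilon)$ in the bound \eqref{smalloscgc} of Theorem \ref{monincrsub} is, by construction, independent of $s\in[\epsilon,d-\epsilon]$; the $s$-dependence enters only through the exponents $1+s/d$ in the normalizations, which are already absorbed into the quantities $f_s(N)$. Thus, for all $N\ge\widetilde N\ge 2$ and every $s\in[\epsilon,d-\epsilon]$,
\[
\bigl|f_s(N)-f_s(\widetilde N)\bigr|=\left|\frac{E_{\mathrm{GC},N,s}^{\mathrm{xc}}(\mu)}{N^{1+s/d}}-\frac{E_{\mathrm{GC},\widetilde N,s}^{\mathrm{xc}}(\mu)}{\widetilde N^{1+s/d}}\right|\le\frac{C(\rho,d,\epsilon)}{\log\widetilde N}.
\]
The right-hand side depends on $\widetilde N$ but \emph{not} on $s$.

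To close the argument, fix $\delta>0$ and choose $N_0=N_0(\delta,\rho,d,\epsilon)$ large enough so that $C(\rho,d,\epsilon)/\log N_0<\delta$. Then the estimate above shows that for all $N,\widetilde N\ge N_0$ and all $s\in[\epsilon,d-\epsilon]$, $|f_s(N)-f_s(\widetilde N)|<\delta$, which is exactly the uniform Cauchy condition for the family $\{s\mapsto f_s(N)\}_{N\in\mathbb{R}_{>0}}$ on the parameter interval $[\epsilon,d-\epsilon]$. By completeness of $\mathbb R$ and the Cauchy criterion for uniform convergence, $f_s(N)$ converges uniformly in $s\in[\epsilon,d-\epsilon]$ as $N\to\infty$; comparison with Theorem \ref{upboundcont} identifies the uniform limit as $-C(s,d)\int_{\mathbb{R}^d}\rho^{1+s/d}$.

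There is essentially no obstacle beyond invoking the correct version of Theorem \ref{monincrsub}: the only point worth flagging is that the corollary is stated for $\phi$-regular $\Lambda_i$ rather than hyperrectangles, so one must appeal to the extension of Theorem \ref{monincrsub} discussed in the remark following its statement (which yields the same type of $\log$-rate with an $s$-independent constant, at the mild cost of a different constant $C(\rho,d,\epsilon)$ encoding the $\phi$-regularity). Once this extension is in place, the uniform Cauchy argument is entirely routine.
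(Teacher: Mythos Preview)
Your proposal is correct and follows exactly the paper's approach: the paper states the corollary as ``an immediate consequence of Theorem \ref{monincrsub} and of the Cauchy criterion of uniform convergence,'' and you have simply spelled out that one-line argument in detail. The only minor inaccuracy is your claim that the extension to $\phi$-regular sets ``yields the same type of $\log$-rate''; the remark after Theorem \ref{monincrsub} actually warns that the rate may be less tame than $1/\log\widetilde N$, but since any $s$-independent bound tending to zero suffices for the uniform Cauchy criterion, this does not affect the validity of your argument.
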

\begin{remark}
As detailed also in the introduction, both Theorem \ref{monincrsub} and Corollary \ref{unifconv} can be shown for more general densities, such as any positive Riemann-integrable density, including in particular continuous densities with compact support and $\phi$-regular boundary. The proof follows by a similar argument as the proof of Theorem \ref{monincrsub}, combined with an adaptation of the small oscillations proof for piece-wise constant densities and using also the small oscillations inequality proved above for the uniform density on $1_{B_1}/|B_1|$.
\end{remark}

\appendix
\section{Fefferman decomposition and positive definiteness}\label{appendixfefferman}
\subsection{Derivative bounds and positive definite control of the error}\label{ssecfeffermandecomposition}
We present here the main ingredients for the generalization of Fefferman's and Gregg's approach to effectively localizing our interaction kernels. We restrict to the case $c(x-y)=|x-y|^{-s}, 0<s<d$, though the methods below can be applied to more general costs of the form $c(x-y)$, under suitable regularity assumptions.
\subsubsection{Fefferman-Gregg positive definiteness criterion}\label{sssecgregg}
In this subsection we present a generalization to arbitrary $0<s<d$ of the main lemma and of the kernel decomposition from \cite{Gregg89}, given there for $d=3$ and $0<s<3$.

\par We recall the multi-index notation. We will denote by $\beta=(\beta_1,\ldots,\beta_d)\in \mathbb N^d$ a multiindex and its length will be defined as $|\beta|:=\beta_1+\cdots+\beta_d$. For a function $f:\mathbb R^d\to \mathbb R$, for $x=(x_1, \ldots, x_d)\in\mathbb R^d$ and $\xi=(\xi_1, \ldots, \xi_d)\in\mathbb R^d$ we then use the partial differentiation and monomial notations
\begin{equation}\label{derivnotation}
\partial_x^\beta f(x) := \partial_{x_1}^{\beta_1}\partial_{x_1}^{\beta_2}\cdots\partial_{x_d}^{\beta_d}f(x), \quad x^\beta:= x_1^{\beta_1}x_2^{\beta_1}\cdots x_d^{\beta_d},\quad \xi^\beta:= \xi_1^{\beta_1}\xi_2^{\beta_1}\cdots \xi_d^{\beta_d}\ .
\end{equation}
In particular $\partial_x^0f(x)=f(x)$.
\begin{lemma}\label{fourierbdlemma}
Let $0<s<d$ and consider a kernel $g:\mathbb R^d\to \mathbb R$ which is $d$ times differentiable on $\mathbb R^d\setminus\{0\}$.  Assume that there exists $\widetilde{C}(d,\epsilon)>0$, depending only on $d,\epsilon,$ if $0<\epsilon<d/2$ and $\epsilon\le s\le d-\epsilon$, such that for all $x\neq 0$, for $|\beta|=0$ and for all $\beta$ with $|\beta|=d$, there holds
\begin{equation}\label{boundgregg}
|\partial_x^\beta g(x)|\le \widetilde{C} |x|^{-s-|\beta|}\ .
\end{equation}
Then there exists a constant $c(d,\epsilon)>0$, also depending only on $d,\epsilon$ for our choices of $s,d,\epsilon$, and such that for all $\xi\neq 0$ there holds
\begin{equation}\label{fourierboundgregg}
|\hat g(\xi)|\le c(d,\epsilon)|\xi|^{s-d}\ .
\end{equation}
\end{lemma}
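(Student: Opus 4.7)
The target bound $|\hat g(\xi)|\le c(d,\epsilon)|\xi|^{s-d}$ is exactly the Fourier decay of the model kernel $|x|^{-s}$, so the natural plan is a dyadic decomposition that exploits the scale invariance of the hypothesis. Choose $\psi\in C_c^\infty(\mathbb{R}^d)$ supported in $\{1/2\le|x|\le 2\}$ with $\sum_{k\in\mathbb Z}\psi(x/2^k)=1$ for $x\neq 0$, and set $g_k(x):=g(x)\psi(x/2^k)$, so that $g=\sum_k g_k$ on $\mathbb R^d\setminus\{0\}$ (with $L^1_{\mathrm{loc}}$ convergence since $s<d$). Each $g_k$ is supported in the annulus $A_k=\{2^{k-1}\le|x|\le 2^{k+1}\}$. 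I will show $|\hat g_k(\xi)|\le C\min\bigl(2^{k(d-s)},\,2^{-ks}|\xi|^{-d}\bigr)$ and then sum over $k$ after splitting at the threshold scale $2^{k_0}\sim|\xi|^{-1}$.

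\textbf{The interpolation step (main obstacle).} The chief technical difficulty is that \eqref{boundgregg} controls only the endpoint derivatives $|\beta|=0$ and $|\beta|=d$, yet any Leibniz expansion of $\partial^\beta g_k=\partial^\beta(g\cdot\psi(\cdot/2^k))$ for $|\beta|=d$ brings in every intermediate derivative of $g$. To bridge this gap I will rescale: setting $\tilde g(y):=g(2^ky)$ on the fixed enlarged annulus $\tilde A:=\{1/4\le|y|\le 4\}$, the hypothesis \eqref{boundgregg} together with the chain rule yields $\|\tilde g\|_{L^\infty(\tilde A)}\le C\widetilde C\,2^{-ks}$ and $\|\partial_y^\beta\tilde g\|_{L^\infty(\tilde A)}\le C\widetilde C\,2^{-ks}$ for every $|\beta|=d$. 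A Landau--Kolmogorov (Gagliardo--Nirenberg) interpolation inequality on the fixed Lipschitz domain $\tilde A$ then gives $\|\partial_y^\beta\tilde g\|_{L^\infty(\tilde A)}\le C(d)\widetilde C\,2^{-ks}$ for \emph{every} $|\beta|\le d$. Undoing the scaling delivers $|\partial_x^\beta g(x)|\le C(d)\widetilde C\,2^{-k(s+|\beta|)}$ on $A_k$, and Leibniz now yields the same bound for $g_k$: $\|\partial^\beta g_k\|_{L^\infty}\le C(d)\widetilde C\,2^{-k(s+|\beta|)}$ for $|\beta|\le d$, whence $\|\partial^\beta g_k\|_{L^1}\le C(d)\widetilde C\,2^{k(d-s-|\beta|)}$ using $|A_k|\sim 2^{kd}$.

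\textbf{Assembly and $\epsilon$-dependence.} With those derivative bounds, two elementary estimates of $\hat g_k(\xi)$ are available: directly $|\hat g_k(\xi)|\le\|g_k\|_{L^1}\le C\widetilde C\,2^{k(d-s)}$, and via the identity $(2\pi i\xi)^\beta\hat g_k(\xi)=\widehat{\partial^\beta g_k}(\xi)$ combined with $|\xi|^d\le c(d)\sum_{|\beta|=d}|\xi^\beta|$ one gets $|\xi|^d|\hat g_k(\xi)|\le C(d)\widetilde C\,2^{-ks}$. Split $\sum_k|\hat g_k(\xi)|$ at $k_0:=\lceil-\log_2|\xi|\rceil$: the two geometric series
\[
\sum_{k\le k_0}2^{k(d-s)}\le\frac{2^{k_0(d-s)}}{1-2^{-(d-s)}},\qquad\sum_{k>k_0}2^{-ks}\le\frac{2^{-k_0s}}{1-2^{-s}}
\]
each contribute a term of order $|\xi|^{s-d}$, proving \eqref{fourierboundgregg}. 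The geometric denominators $1-2^{-(d-s)}$ and $1-2^{-s}$ are precisely where the hypothesis $\epsilon\le s\le d-\epsilon$ enters, keeping both bounded below uniformly and producing the $(d,\epsilon)$-dependent constant $c(d,\epsilon)$. The subtlest point to verify is that the Landau--Kolmogorov constant on the fixed annulus $\tilde A$ depends only on $d$ (not on $s$ or $k$), so that all constants ultimately track only through $d$ and $\epsilon$ as claimed.
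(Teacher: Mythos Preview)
Your argument is correct. The paper itself does not give a self-contained proof but simply refers to \cite[p.~257--258]{Gregg89}, noting only that the constants appearing there depend polynomially on $s$ and hence stay bounded on $[\epsilon,d-\epsilon]$; your dyadic decomposition together with the two-sided estimate $|\hat g_k(\xi)|\lesssim\min\bigl(2^{k(d-s)},\,2^{-ks}|\xi|^{-d}\bigr)$ and geometric summation split at $2^{k_0}\sim|\xi|^{-1}$ is exactly this standard route, and your Landau--Kolmogorov interpolation on the fixed rescaled annulus is the correct device to pass from the endpoint hypotheses $|\beta|\in\{0,d\}$ to the intermediate derivatives needed in the Leibniz expansion of $\partial^\beta(g\cdot\psi(\cdot/2^k))$.
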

The proof of the lemma follows the same steps as in the beginning of the proof of the main lemma in \cite[p. 257-258, down to the line after eq. (3)]{Gregg89}, with the extra verification that the bounds in that paper are explicit and depend on $s$ in a polynomial way, without exploding in the interval $s\in(\epsilon,d-\epsilon)$. The details will be omitted. As an immediate consequence of the above lemma, we have
\begin{corollary}[positive definiteness criterion]\label{pdcriterion}
Under the hypotheses of Lemma \ref{fourierbdlemma}, there exists a constant $C>0$ depending only on $d,\epsilon$, such that $|x|^{-s} \pm C g(x)$ is positive definite. In particular, if $c(d,\epsilon)$ is the constant appearing in Lemma \ref{fourierbdlemma} and the Fourier transform of $|x|^{-s}$ is $c_{s,d}|\xi|^{d-s}$ then we can choose $C=\frac{\inf_{s\in(\epsilon,d-\epsilon)}c_{s,d}}{c(d,\epsilon)}$.
\end{corollary}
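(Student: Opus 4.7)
The plan is to pass to Fourier space and exploit Bochner's theorem: a translation-invariant kernel $K(x-y)$ is positive definite precisely when $\widehat K$ is a non-negative tempered distribution (or measure). Once this reduction is made, the bound supplied by Lemma \ref{fourierbdlemma} immediately yields the desired inequality between Fourier transforms, and the only remaining issue is to ensure the constant can be chosen uniformly in the range $s\in[\epsilon,d-\epsilon]$.

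More concretely, I would first recall that for $0<s<d$ the Riesz kernel $|x|^{-s}$ is a tempered distribution whose Fourier transform equals $c_{s,d}|\xi|^{s-d}$ (with $c_{s,d}>0$), so in particular it is a \emph{positive} locally integrable function on $\R^d\setminus\{0\}$. Second, from the hypothesis that $g$ satisfies the derivative bounds \eqref{boundgregg}, Lemma \ref{fourierbdlemma} gives the pointwise bound $|\hat g(\xi)|\le c(d,\epsilon)|\xi|^{s-d}$ for all $\xi\neq 0$. Combining these two facts, for any constant $C>0$ one has, for $\xi\neq 0$,
\[
\widehat{|x|^{-s}\pm C\,g}(\xi)\;=\;c_{s,d}|\xi|^{s-d}\pm C\,\hat g(\xi)\;\ge\;\bigl(c_{s,d}-C\,c(d,\epsilon)\bigr)|\xi|^{s-d}\ .
\]
Thus as soon as $C\le c_{s,d}/c(d,\epsilon)$, the Fourier transform of $|x|^{-s}\pm C g(x)$ is pointwise non-negative, hence a non-negative tempered distribution, and by Bochner's theorem the kernel itself is positive definite.

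To make the constant $C$ uniform in $s\in[\epsilon,d-\epsilon]$ it suffices to take $C=\inf_{s\in[\epsilon,d-\epsilon]}c_{s,d}\,/\,c(d,\epsilon)$, which is strictly positive because $s\mapsto c_{s,d}$ is continuous and strictly positive on the closed interval $[\epsilon,d-\epsilon]$ (it is given by an explicit expression involving Gamma functions). Since by construction of $c(d,\epsilon)$ in Lemma \ref{fourierbdlemma} that constant is itself uniform in $s\in[\epsilon,d-\epsilon]$, this yields the stated $C=C(d,\epsilon)$.

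The only subtle point I expect is the passage from the pointwise bound $|\hat g(\xi)|\le c(d,\epsilon)|\xi|^{s-d}$, which is proved in Lemma \ref{fourierbdlemma} for $\xi\neq 0$, to a statement about tempered distributions that is amenable to Bochner's theorem; but since both $|x|^{-s}$ and $g$ lie in $L^1_{\mathrm{loc}}(\R^d)$ with mild growth at infinity guaranteed by \eqref{boundgregg}, their Fourier transforms are well-defined tempered distributions coinciding with locally integrable functions on $\R^d\setminus\{0\}$ (and the origin is a negligible set for the positivity statement, or can be handled by a standard mollification argument). This is a routine but necessary verification, and once it is in place the corollary follows immediately from the above display.
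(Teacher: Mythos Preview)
Your argument is correct and is exactly the approach the paper intends: the corollary is stated there as an ``immediate consequence'' of Lemma~\ref{fourierbdlemma}, and your Fourier-side computation together with Bochner's theorem is precisely the missing one-line justification. The explicit choice $C=\inf_{s\in[\epsilon,d-\epsilon]}c_{s,d}/c(d,\epsilon)$ you arrive at matches the paper's stated constant verbatim.
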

We will now prove that estimates of the type \eqref{boundgregg} hold for two functions built from $Q_{i,\eta}$ and which will be relevant in our constructions. This is reminiscent of, and can be seen as a generalization of the bounds \cite[p. 272, for the functions called $k_\dag^i$]{Gregg89}, which are there stated in $d=3$ and not proved explicitly. Due to this, and in order to provide an explicit proof in general dimension, we include here the full proof. In preparation for the more difficult result of Lemma \ref{derivativebounds} below, which really uses the geometric growth of the $R_i$ from Lemma \ref{cheeselemma}, we prove an easier bound:
\begin{lemma}\label{derivativeboundseasy}
Let $0<\epsilon<d/2$ and $\epsilon\le s\le d-\epsilon$, and let $Q_{i,\eta}$ be defined as in \eqref{qieta}, for $R_i$ defined as in Lemma \ref{cheeselemma}. Then there exists $C(\rho_\eta, d, \epsilon)>0$ depending only on the choice of $\rho_\eta$ and of $d,\epsilon$, such that the bounds \eqref{boundgregg} hold for $|\beta|\le d+1$ for
\begin{equation}\label{badbound}
\frac{1}{C(\rho_\eta,d,\epsilon)}\frac{Q_{i,\eta}(x)}{|x|^s}\ .
\end{equation}
\end{lemma}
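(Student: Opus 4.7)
The plan is to work at unit scale and rescale at the end, then reduce to a purely radial derivative estimate. By \eqref{qieta} one has $Q_{i,\eta}(x)=Q_{0,\eta}(x/R_i)$ and $|x|^s=R_i^s|x/R_i|^s$, so the chain rule gives
$$\partial_x^\beta\!\left[\frac{Q_{i,\eta}(x)}{|x|^s}\right]=R_i^{-s-|\beta|}\left.\partial_y^\beta\!\left[\frac{Q_{0,\eta}(y)}{|y|^s}\right]\right|_{y=x/R_i},$$
and it suffices to prove $\bigl|\partial_y^\beta[Q_{0,\eta}(y)/|y|^s]\bigr|\le C(\rho_\eta,d,\epsilon)\,|y|^{-s-|\beta|}$ for $0<|y|\le 2(1+\eta)$, outside which $Q_{0,\eta}$ vanishes. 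By Leibniz this reduces to the two estimates
$$\bigl|\partial_y^\gamma Q_{0,\eta}(y)\bigr|\le C(\rho_\eta,d)\,|y|^{-|\gamma|},\qquad \bigl|\partial_y^\alpha|y|^{-s}\bigr|\le C(d,|\alpha|)\,|y|^{-s-|\alpha|},$$
valid for $1\le|\gamma|,|\alpha|\le d+1$ (together with the trivial $|Q_{0,\eta}|\le 1$ for $|\gamma|=0$); the constant in the second inequality is polynomial in $s$ and hence uniformly bounded for $s\in[\epsilon,d-\epsilon]$.

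The nontrivial ingredient is the first estimate, which I would handle through the radial profile $Q_{0,\eta}(y)=h(|y|)$. An application of Faà di Bruno combined with $\bigl|\partial_y^{\gamma_B}|y|\bigr|\le C_{d,|\gamma_B|}|y|^{1-|\gamma_B|}$ shows that each summand is $O(|y|^{1-|\gamma|})\le O(|y|^{-|\gamma|})$ on the bounded support, and so reduces the question to the pointwise bound
\begin{equation*}
\sup_{0<r\le 2(1+\eta)}|h^{(k)}(r)|\le C_k(\rho_\eta,d)\qquad\text{for every }0\le k\le d+1.\qquad(*)
\end{equation*}
Using \eqref{convolballs}--\eqref{mollifballs} and the change of variables $v=u/t$ one writes $h(r)=c_d\int_{1-\eta}^{1+\eta}\rho_\eta(t)\,F(r/t)\,dt$ with $F(s):=\int_s^2(4-v^2)^{(d-1)/2}\,dv$, a bounded continuous function that is $C^\infty$ on $(-\infty,2)$. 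For $r\in[0,2(1-\eta))$ every $r/t$ stays bounded away from the singular point $s=2$ of $F$, so $F(r/t)$ is smooth in $r$ and $(*)$ follows by differentiating under the integral.

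The hard part will be the transition range $r\in[2(1-\eta),2(1+\eta)]$, where $F(r/t)$ is only finitely smooth at the moving point $t=r/2$ and direct differentiation under the integral fails in general dimension. Here I rely on the identity $\partial_rF(r/t)=-(t/r)\partial_tF(r/t)$ valid for $r>0$, which allows one to trade each $r$-derivative for a $t$-derivative and then integrate by parts onto $\rho_\eta$; since $\rho_\eta\in C_0^\infty((1-\eta,1+\eta))$, all boundary terms vanish. A direct induction yields
$$h^{(k)}(r)=\frac{c_d}{r^k}\int_{1-\eta}^{1+\eta}\Psi_k(t)\,F(r/t)\,dt,\qquad \Psi_0=\rho_\eta,\ \Psi_{k+1}=(1-k)\Psi_k+t\Psi_k',$$
with $\Psi_k\in C_0^\infty((1-\eta,1+\eta))$ determined by derivatives of $\rho_\eta$ up to order $k$. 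Since $F$ is bounded and $r\ge 2(1-\eta)>0$ throughout this range, $(*)$ follows, and reassembling the preceding steps closes the proof. This iterated integration by parts, which is exactly what forces the choice $\rho_\eta\in C_0^\infty$ in \eqref{mollifballs}, is the decisive improvement over the one-derivative gain obtained by averaging in \cite{Gregg89}.
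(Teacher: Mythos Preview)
Your proof is correct and follows the same overall scheme as the paper: reduce to unit scale $R_i=1$ by the chain rule, then combine supremum bounds on the derivatives of $Q_{0,\eta}$ with the homogeneity bounds on $\partial^\alpha|x|^{-s}$ via Leibniz on the compact support of $Q_{0,\eta}$. The paper's proof is terse at the key point: it simply asserts that ``due to the mollification \eqref{mollifballs}, partial derivatives of $Q_{0,\eta}$ up to order $|\beta|=d$ satisfy supremum bounds depending only on $\rho_\eta, d$'', relying on the earlier one-line remark that ``in radial coordinates we may use the smoothness of $\rho_\eta$ to control the radial partial derivatives.'' You actually carry out this step, and your integration-by-parts identity $h^{(k)}(r)=c_d\,r^{-k}\int\Psi_k(t)F(r/t)\,dt$ with $\Psi_{k+1}=(1-k)\Psi_k+t\Psi_k'$ is exactly the mechanism that makes precise the paper's appeal to the smoothness of $\rho_\eta$; it is also what justifies the paper's claim (just after \eqref{convolballs}) that the mollification in \cite{Gregg89} gains only one derivative while the present one gains arbitrarily many.

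One small remark on your range split: your claim that ``for $r\in[0,2(1-\eta))$ every $r/t$ stays bounded away from $2$'' is only uniform in $r$ because you are taking $\rho_\eta\in C_0^\infty\big((1-\eta,1+\eta)\big)$ with support a compact subset $[1-\eta',1+\eta']\subsetneq(1-\eta,1+\eta)$, so that $r/t\le 2(1-\eta)/(1-\eta')<2$. With that reading the direct-differentiation argument indeed gives a uniform bound on $[0,2(1-\eta))$, and your integration-by-parts formula (valid for all $r>0$, with $r^{-k}$ bounded on $[2(1-\eta),2(1+\eta)]$) handles the transition range. Alternatively, you could simply use the integration-by-parts formula on $[1,2(1+\eta)]$ and direct differentiation on $[0,1]$, where $r/t\le 1/(1-\eta)<2$ holds regardless of the exact support of $\rho_\eta$.
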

\begin{proof}
We first note that 
\begin{equation}\label{reduceq00}
\partial_x^\beta\left(\frac{Q_{i,\eta}(x)}{|x|^s}\right) = \frac{1}{R_i^{s+|\beta|}}\left.\partial_x^\beta\left(\frac{Q_{0,\eta}(x')}{|x'|^s}\right)\right|_{x'=\frac{x}{R_i}}\ .
\end{equation}
therefore, looking at \eqref{boundgregg}, we see that it suffices to prove the bound for $R_i=1$ (i.e. for the case where we replace $Q_{i,\eta}$ by $Q_{0,\eta}$), and the one for general $R_i$ will directly follow from it via \eqref{reduceq00}. We note that due to the mollification \eqref{mollifballs}, partial derivatives of $Q_{0,\eta}$ up to order $|\beta|=d$ satisfy supremum bounds depending only on $\rho_\eta, d$, and the partial derivatives up to order $|\beta|=d$ of $C^{-1}|x|^{-s}$ satisfy \eqref{boundgregg} whenever $C>c(d,\epsilon)$ for some constant depending only on $d,\epsilon$. The claim follows by triangular inequality, if we distribute the $|\beta|\le d$ partial derivatives on the two terms of the product $|x|^{-s}Q_{0,\eta}(x)$.
\end{proof}
The above estimate \eqref{badbound} will not be directly helpful for us, because as we sum it over the $M$ terms corresponding to $i=1,\ldots,M,$ we will obtain a bound dependent on $M$. The following subtler result uses crucially the geometric growth of the $R_i$ from Lemma \ref{cheeselemma}, and requires that for each $i$ we subtract a tamer kernel which effectively cancels the tail behaviour of $|x|^{-s}(1- Q_{i,\eta}(x))$ at infinity.
\begin{lemma}\label{derivativebounds}
Let $0<\epsilon<d/2$ and $\epsilon\le s\le d-\epsilon$, and let $Q_{i,\eta}$ be defined as in \eqref{qieta}, for $R_i$ defined as in Lemma \ref{cheeselemma}. Then there exists $C(\rho_\eta, d,\epsilon)>0$ depending only on the choice of $\rho_\eta$ and of $d,\epsilon$, such that the bounds \eqref{boundgregg} hold for $|\beta|\le d+1$ for
\begin{equation}\label{bounderrerror}
\frac{1}{C(\rho_\eta,d,\epsilon)}\sum_{i=1}^M\left(\frac{1-Q_{i,\eta}(x)}{|x|^s} - \left(\int_{\mathbb R^d}Q_{i,\eta}(y)dy\right)^{-1}\int_{\mathbb R^d}\frac{Q_{i,\eta}(y)}{|x-y|^s}dy\right)\ .
\end{equation}
\end{lemma}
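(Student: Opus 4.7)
My plan is to exploit the scale invariance of the family $\{Q_{i,\eta}\}$ to reduce the $i$-th summand to a single universal profile, and then control the sum by a geometric-series argument based on the ratio $R_{i+1}/R_i\ge c(d)>1$ furnished by Lemma \ref{cheeselemma}. \textbf{First}, using $Q_{i,\eta}(x)=Q_{0,\eta}(x/R_i)$, the change of variables $y=R_iz$, and the homogeneity of $|\cdot|^{-s}$, each summand rewrites as
\[
\frac{1-Q_{i,\eta}(x)}{|x|^s}-\left(\int_{\mathbb R^d}Q_{i,\eta}\right)^{-1}\int_{\mathbb R^d}\frac{Q_{i,\eta}(y)}{|x-y|^s}\,dy\ =\ R_i^{-s}\,F\!\bigl(x/R_i\bigr)\ ,
\]
where $F(u):=(1-Q_{0,\eta}(u))/|u|^s-(\int Q_{0,\eta})^{-1}(Q_{0,\eta}*|\cdot|^{-s})(u)$ is $i$-independent. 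Therefore $\partial_x^\beta f(x)=\sum_{i=1}^M R_i^{-s-|\beta|}\,(\partial^\beta F)(x/R_i)$ and the problem reduces to pointwise bounds on $F$ together with a dimensional summation.

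\textbf{Second}, I would establish two matching estimates for $F$ and its derivatives up to order $d+1$. On the far field $|u|\ge 2(1+\eta)$, where $Q_{0,\eta}(u)=0$, I rewrite
\[
F(u)=-\bigl(\smallint Q_{0,\eta}\bigr)^{-1}\int Q_{0,\eta}(z)\,\bigl(|u-z|^{-s}-|u|^{-s}\bigr)\,dz\ ,
\]
Taylor-expand $|u-z|^{-s}$ in $z$ about the origin, and use the fact that $Q_{0,\eta}$ is radially symmetric (hence even) to kill every odd-order moment; the leading surviving term is quadratic in $z$, which, together with the usual remainder bound, gives $|\partial^\beta F(u)|\le C(\rho_\eta,d,\epsilon)\,|u|^{-s-2-|\beta|}$, uniformly in $s\in[\epsilon,d-\epsilon]$. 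On the near field $|u|\le 1$ I apply Leibniz's rule to $G(u)=\phi(u)|u|^{-s}$ with $\phi:=1-Q_{0,\eta}$, noting that $\phi(0)=0$ and $\nabla\phi(0)=0$ since $Q_{0,\eta}$ attains its smooth maximum $1$ at the origin; combined with the smoothness of $H:=(\int Q_{0,\eta})^{-1}(Q_{0,\eta}*|\cdot|^{-s})$, this produces $|\partial^\beta F(u)|\le C(\rho_\eta,d,\epsilon)\max(1,|u|^{2-s-|\beta|})$. In the compact annulus $1\le|u|\le 2(1+\eta)$, $F$ is smooth and its first $d+1$ derivatives are uniformly bounded.

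\textbf{Third}, I would split the sum according to how $R_i$ compares with $|x|$. For large scales $R_i\ge 2(1+\eta)|x|$, the near-field bound controls each summand by $CR_i^{-s-|\beta|}$ when $2-s-|\beta|\ge 0$, and by $C|x|^{2-s-|\beta|}R_i^{-2}$ otherwise; in both cases a direct geometric summation yields the target $|x|^{-s-|\beta|}$, since $\sum R_i^{-s-|\beta|}\lesssim(\min R_i)^{-s-|\beta|}\sim|x|^{-s-|\beta|}$ and $\sum R_i^{-2}\lesssim|x|^{-2}$. For small scales $R_i\le|x|/(2(1+\eta))$ the far-field bound gives summands of size $CR_i^2|x|^{-s-2-|\beta|}$, and the reverse geometric sum $\sum R_i^2\lesssim(\max R_i)^2\sim|x|^2$ again yields $|x|^{-s-|\beta|}$. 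Finally, the intermediate range contains only a $d$-dependent bounded number of indices, each contributing $\sim|x|^{-s-|\beta|}$. The constants remain uniform in $s\in[\epsilon,d-\epsilon]$ because the geometric ratio $R_{i+1}/R_i$ depends only on $d$ and the ensuing series $\sum c(d)^{-js}$, $\sum c(d)^{\pm 2j}$ are bounded independently of $s$ and of $M$.

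\textbf{The main obstacle} is precisely the reason the easier estimate of Lemma \ref{derivativeboundseasy} does not suffice: it would give $|f_i(x)|\lesssim R_i^{-s}$, whose geometric sum is controlled only by $R_1^{-s}$, which may be \emph{much} larger than the required $|x|^{-s}$. The cancellation between $(1-Q_{i,\eta})/|\cdot|^s$ and the renormalized convolution with $Q_{i,\eta}$ is what generates the extra factor $|u|^{-2}$ of decay of $F$ at infinity (via the vanishing odd moments of $Q_{0,\eta}$) and the factor $|u|^2$ of smallness of $1-Q_{0,\eta}$ near $0$ (via the vanishing gradient at the maximum). Extracting these two cancellations uniformly at the level of the $(d+1)$-st derivatives, with constants controlled solely by $\rho_\eta$, $d$ and $\epsilon$, is the delicate technical heart of the lemma.
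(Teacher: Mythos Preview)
Your overall strategy is exactly the paper's: rescale each summand to a universal profile $F$, establish near- and far-field bounds on $\partial^\beta F$, and sum via the geometric growth of the $R_i$. There is, however, one concrete false assertion in your near-field step. You claim that $Q_{0,\eta}$ ``attains its smooth maximum $1$ at the origin'' and hence $\nabla\phi(0)=0$. This is not so: from \eqref{convolballs} one computes directly that $(1_{B_t}*1_{B_t})(x)/|B_t|=1-c_d\,|x|/t+O(|x|^3)$ near $x=0$, and averaging in $t$ against $\rho_\eta$ preserves this shape, giving $Q_{0,\eta}(x)=1-c_1|x|+O(|x|^3)$ with $c_1>0$. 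Thus $\phi=1-Q_{0,\eta}$ has a conical (Lipschitz, not $C^1$) point at the origin, and only \emph{first-order} vanishing holds. The paper is explicit about this (``$Q_{0,\eta}$ is smooth outside the origin'') and accordingly proves only $|\partial^\beta[(1-Q_{0,\eta})|u|^{-s}]|\le C|u|^{1-s-|\beta|}$ on the near field, not your claimed $|u|^{2-s-|\beta|}$.

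Fortunately this weaker gain $|x|/R_i$ (in place of your $(|x|/R_i)^2$) still sums geometrically to a bounded constant, so the fix is simply to downgrade the near-field exponent from $2$ to $1$ and rerun your Step~3; nothing else changes. Your far-field improvement to $|u|^{-s-2-|\beta|}$, obtained by exploiting that the even function $Q_{0,\eta}$ has vanishing odd moments, is correct and genuinely sharper than the paper's first-order Taylor bound $|u|^{-s-1-|\beta|}$, but the extra power is not needed for the conclusion.
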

We then prove the following:
\begin{proposition}\label{errorestimatecor}
With the above notations, the kernels
\begin{equation}\label{errorestimate1}
\frac{C(\rho_\eta,d,\epsilon)}{M}|x|^{-s} \pm \frac{1}{M}\sum_{i=1}^M\left(\frac{1-Q_{i,\eta}(x)}{|x|^s} - \left(\int_{\mathbb R^d}Q_{i,\eta}(y)dy\right)^{-1}\int_{\mathbb R^d}\frac{Q_{i,\eta}(y)}{|x-y|^s}dy\right)
\end{equation}
and
\begin{equation}\label{errorestimate2}
\frac{1}{M}\sum_{i=1}^M\left[\left(\int_{\mathbb R^d}Q_{i,\eta}(y)dy\right)^{-1}\int_{\mathbb R^d}\frac{Q_{i,\eta}(y)}{|x-y|^s}dy\right]\ ,
\end{equation}
as well as 
\begin{equation}\label{errorestimate3}
\frac{C(\rho_\eta,d,\epsilon)}{|x|^s} \pm \frac{Q_{i,\eta}(x)}{|x|^s},\qquad\text{ for every }1\le i\le M\ ,
\end{equation}
are positive definite.
\end{proposition}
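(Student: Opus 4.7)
The plan is to deduce all three positive-definiteness claims as immediate consequences of the positive-definiteness criterion in Corollary~\ref{pdcriterion}, combined with the derivative bounds already established in Lemmas~\ref{derivativebounds} and \ref{derivativeboundseasy}, supplemented by a direct Fourier-analytic observation for the convolution kernel \eqref{errorestimate2}. Throughout, $C(\rho_\eta,d,\epsilon)$ will denote a generic constant depending only on the listed parameters, whose value may change from line to line; in particular, rescaling a positive-definite kernel by a positive constant preserves positive definiteness, so the overall factor of $1/M$ appearing in \eqref{errorestimate1} and \eqref{errorestimate2} plays no role in the positive-definiteness arguments and can be suppressed.

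For \eqref{errorestimate3}, I would invoke Lemma~\ref{derivativeboundseasy}, which gives the derivative bounds \eqref{boundgregg} for $Q_{i,\eta}(x)/|x|^s$ up to a constant depending only on $\rho_\eta,d,\epsilon$. Lemma~\ref{fourierbdlemma} then yields a Fourier decay of the form $|\mathcal F(Q_{i,\eta}/|\cdot|^s)(\xi)|\le c(d,\epsilon)|\xi|^{s-d}$, while the Fourier transform of $|x|^{-s}$ equals $c_{s,d}|\xi|^{s-d}$ with $\inf_{s\in[\epsilon,d-\epsilon]} c_{s,d}>0$. Hence for a sufficiently large $C(\rho_\eta,d,\epsilon)>0$ the Fourier transform of $C(\rho_\eta,d,\epsilon)|x|^{-s} \pm Q_{i,\eta}(x)/|x|^s$ is pointwise nonnegative, i.e. the kernel is positive definite; this is precisely Corollary~\ref{pdcriterion} applied to $g=Q_{i,\eta}/|\cdot|^s$. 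The kernel \eqref{errorestimate1} is handled identically, using Lemma~\ref{derivativebounds} in place of Lemma~\ref{derivativeboundseasy} to verify the hypothesis of Corollary~\ref{pdcriterion} for the sum under consideration, and then scaling by $1/M$.

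For \eqref{errorestimate2} I would argue by direct inspection of Fourier transforms rather than through the derivative-bound machinery. Unpacking the definition \eqref{qieta}, $Q_{i,\eta}$ is a nonnegative superposition of functions of the form $|B_{tR_i}|^{-1}\, 1_{B_{tR_i}} * 1_{B_{tR_i}}$, whose Fourier transforms equal $|B_{tR_i}|^{-1}|\widehat{1_{B_{tR_i}}}(\xi)|^2\ge 0$. Consequently $\widehat{Q_{i,\eta}}(\xi)\ge 0$ pointwise. Since the Fourier transform of $|x|^{-s}$ is the strictly positive Riesz symbol $c_{s,d}|\xi|^{s-d}$, the Fourier transform of the convolution $Q_{i,\eta} * |\cdot|^{-s}$ is the pointwise product of two nonnegative functions, hence nonnegative; taking positive linear combinations (with weights $(\int Q_{i,\eta})^{-1}/M>0$) preserves this, yielding the positive definiteness of \eqref{errorestimate2}.

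I do not expect a serious obstacle: the substantive work of the Fefferman--Gregg scheme was already absorbed into the derivative estimates of Lemmas~\ref{derivativebounds} and \ref{derivativeboundseasy}, and with those available the present proposition reduces to routine Bochner-type considerations. The only bookkeeping point to watch is that all implicit constants remain uniform in $s\in[\epsilon,d-\epsilon]$, which is already encoded in the uniform-in-$s$ formulations of Lemmas~\ref{fourierbdlemma}, \ref{derivativeboundseasy} and \ref{derivativebounds} and in Corollary~\ref{pdcriterion}; this uniformity is what allows the single constant $C(\rho_\eta,d,\epsilon)$ to work simultaneously for every $s$ in the admissible range.
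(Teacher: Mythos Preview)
Your proposal is correct and follows essentially the same route as the paper: \eqref{errorestimate1} and \eqref{errorestimate3} via the derivative bounds of Lemmas~\ref{derivativebounds} and~\ref{derivativeboundseasy} fed into Corollary~\ref{pdcriterion} (equivalently Lemma~\ref{fourierbdlemma}), and \eqref{errorestimate2} by direct Fourier inspection using that $Q_{i,\eta}$ is a nonnegative superposition of $1_{B_{tR_i}}*1_{B_{tR_i}}$ and that convolution becomes a product on the Fourier side. Your additional remarks about the irrelevance of the $1/M$ factor and the uniformity in $s$ are accurate and consistent with the paper's framework.
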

\begin{proof}[Proof of Proposition \ref{errorestimatecor}:]
The first claim follows from lemmas \ref{derivativebounds} and \ref{fourierbdlemma}. As the bounds required in these lemmas do not depend on sign, both signs are allowed. The second claim follows by using the fact that the Fourier transform transforms convolutions into products, that $|x-y|^{-s}$ is positive definite, and the structure \eqref{qieta} of $Q_{i,\eta}$ where the integral over $r$ can be commuted with the Fourier transform, by linearity. The last claim follows directly from Lemma \ref{derivativeboundseasy} via lemma \ref{fourierbdlemma}, exactly like the first claim. 
\end{proof}
\begin{proof}[Proof of Lemma \ref{derivativebounds}:]
We will estimate first separately the terms in the large parenthesis from \eqref{bounderrerror} at fixed $i=1,\ldots, M$, then we will combine the estimates in the end. We will distinguish below between terms in $B_{3R_i}$ and terms outside of $B_{3R_i}$, because the convolution $1_{B_{3R_i/2}}*1_{B_{3R_i/2}}$ has support in $B_{3R_i}$.

\par\textbf{Step 1:}\emph{ The first term on $B_{3R_i}$.}

\par We will prove a separate bound for each of the two terms in the sum \eqref{bounderrerror} and then use the triangle inequality to sum all the terms. By keeping in mind the definition \eqref{qieta}, we find that
\begin{equation}\label{reduceq0}
\partial_x^\beta\left(\frac{1- Q_{i,\eta}(x)}{|x|^s}\right) = \frac{1}{R_i^{s+|\beta|}}\left.\partial_x^\beta\left(\frac{1- Q_{0,\eta}(x')}{|x'|^s}\right)\right|_{x'=\frac{x}{R_i}}\ .
\end{equation}
We note that $f_\eta(|x|):=1-Q_{0,\eta}(x)$ is a radial increasing function which vanishes at zero. Due to the definition \eqref{mollifballs} and to the explicit form of the function \eqref{convolballs} of which $Q_{0,\eta}$ is a superposition, $f_\eta$ has a finite right derivative at zero: Indeed, the function appearing in \eqref{convolballs} can be directly differentiated in $|x|$ and the averaging in \eqref{mollifballs} preserves these bounds. Due to the definition \eqref{mollifballs} and to the explicit form of the function \eqref{convolballs} of which $Q_{0,\eta}$ is a superposition, $f_\eta$ has a finite right derivative at zero: Indeed, the function appearing in \eqref{convolballs} can be directly differentiated in $|x|$ and the averaging in \eqref{mollifballs} preserves these bounds. Therefore by Taylor expansion, and since $f_\eta$ only depends on $\rho_\eta$ and $d$ and we assumed that $\epsilon\le s\le d-\epsilon$, we have the following control for $|\beta|=0$ and $x\neq 0$:
\begin{equation}\label{0dernearzero}
\frac{1- Q_{0,\eta}(x)}{|x|^s} \le |x|^{1-s}\lim_{r\downarrow 0}\frac{f_\eta(r)}{r} + C_{\rho_\eta,d}|x|^{1-s}= C(\rho_\eta,d,\epsilon)|x|^{1-s}\ .
\end{equation}
Similarly, based on the formulas \eqref{convolballs} and \eqref{mollifballs} and on the boundedness properties of $\rho_\eta$, we see that the norms of all the derivatives $\partial_x^\beta Q_{0,\eta}$ have a finite bound near zero, and thus all the partial derivatives are \emph{uniformly} bounded on $\mathbb R^d\setminus\{0\}$. This means that for $0<|\beta|\le {d+1}$, again using the bound \eqref{0dernearzero} for the term where all the partial derivatives fall on $|x|^{-s}$ as we apply the Leibnitz rule to distribute derivatives over the product $Q_{0,\eta}(x)|x|^{-s}$, we find the following bounds valid for $x\neq 0$:
\begin{eqnarray}
\left|\partial_x^\beta\left(|x|^{-s}(1-Q_{0,\eta}(x))\right)\right|&\le&\sum_{\beta'+\beta''=\beta}\left|\partial_x^{\beta'}(|x|^{-s})\right|\left|\partial_x^{\beta''}(1-Q_{0,\eta}(x))\right| \nonumber\\
&\le& C_1(\beta)(|x|^{-s-|\beta|}\left|(1-Q_{0,\eta}(x))\right| \nonumber\\[3mm]
&&+C_2(\beta)|x|^{1-s-|\beta|}\sum_{|\beta'|=1}\left|\partial_x^{\beta'}Q_{0,\eta}(x)\right|+O(|x|^{2-s-|\beta|}) \quad\text{as}\quad|x|\to 0\nonumber\\
&\le&C(\rho_\eta,d,\epsilon)|x|^{1-s-|\beta|} + O(|x|^{2-s-|\beta|})\quad\text{as}\quad|x|\to 0\ ,\label{betadernearzero}
\end{eqnarray}
where the constants $C_1(\beta), C_2(\beta)$ are the necessary bounds valid for all $\epsilon\le s\le d-\epsilon$, which we can verify to be depending only on $d,\epsilon$ for $0<\epsilon<d/2$. The bounds \eqref{betadernearzero} and \eqref{0dernearzero} can be applied to the r.h.s. of \eqref{reduceq0} to give
\begin{equation}\label{qinearzero}
\left|\partial_x^\beta\left(\frac{1- Q_{i,\eta}(x)}{|x|^s}\right)\right|\le C(\rho_\eta,d,\epsilon)\frac{1}{R_i}|x|^{1-s-|\beta|} + o(|x|^{1-s-|\beta|})\quad\text{as}\quad|x|\to 0\ .
\end{equation}
Also note that the expansion in powers of $|x|$ from the bounds \eqref{betadernearzero} and \eqref{0dernearzero} still holds on the whole of $B_3$ (i.e. on the whole support of $Q_{0,\eta}$), as all the growth behavior of different powers of $|x|$ does not change for $|x|\in(0,1)$, whereas for $|x|\in[1,3]$ the powers of $|x|$ contributing to leading order to the estimates are equivalent up to a constant factor depending only on $\epsilon,d$. Similarly, for each fixed $\beta\ge 0$ the derivatives $\partial_x^\beta Q_{0,\eta}$ are thus all uniformly bounded in $x$, with a constant depending only on $d,|\beta|$. Then we can appeal to the same scaling reasoning as in passing from \eqref{betadernearzero} to \eqref{qinearzero}, which leads to the fact that also on the whole of $B_{3R_i}$ the bounds \eqref{qinearzero} extend, with a finite constant depending on $d,\epsilon,|\beta|$. As we are concerned only with the finitely many (and whose number depends only on $d$) choices $0\le |\beta|\le d+1$, the constant we obtain effectively depends only on $s,d$. Summarizing this reasoning, at the cost of increasing the value of the constant $C(\rho_\eta,d,\epsilon)$, we still have for $0\le |\beta|\le {d+1}$
\begin{equation}\label{qitillri}
\left|\partial_x^\beta\left(\frac{1- Q_{i,\eta}(x)}{|x|^s}\right)\right|\le C(\rho_\eta,d,\epsilon)\frac{1}{R_i}|x|^{1-s-|\beta|}\quad\text{ for }\quad|x|\le 3R_i\ .
\end{equation}
For the remaining term in \eqref{bounderrerror} we have a similar procedure, but it is better controlled near $0$, as we now show. First note that $\int_{\mathbb R^d}Q_{i,\eta}(x)dx=C(\rho_\eta,d)R_i^d$ and $\int_{B_R(x)}|y|^{-s}dy \le \int_{B_R}|y|^{-s}dy\le c(d,\epsilon)R^{d-s}$. Using these, for $0\le|\beta|\le d+1$ we have:
\begin{multline}\label{reduceq0av}
\left|\partial_x^\beta\left(\left(\int_{\mathbb R^d}Q_{i,\eta}(y)dy\right)^{-1}\int_{\mathbb R^d}\frac{Q_{i,\eta}(y)}{|x-y|^s}dy\right) \right|\ \le\ \left(\int_{\mathbb R^d}Q_{i,\eta}(y)dy\right)^{-1}\int_{\mathbb R^d}\frac{\left|\partial_x^\beta Q_{i,\eta}(x-y')\right|}{|y'|^s}dy'\\
\le C(\rho_\eta,d,\epsilon)\frac{\sup_y|\partial_x^{\beta}Q_{0,\eta}(y)|}{R_i^{|\beta|}}\frac{1}{R_i^d}\int_{B_{2(1+\eta)R_i}(x)}\frac{dy}{|y|^s}\ \le\ \frac{C(\rho_\eta,d,\epsilon)}{R_i^{s+|\beta|}}\quad\text{for}\quad|x|<3R_i\ .
\end{multline}
Note that in the range $|x|<3R_i$ the bound \eqref{reduceq0av} is stronger than \eqref{qitillri}.

\par\textbf{Step 3:} \textit{Bound outside $B_{3R_i}$, and final estimates at fixed $i$.}

\par For $|x|>3R_i$ we note that $Q_{i,\eta}(x)=0$ and the first term from \eqref{bounderrerror} simplifies as $(1-
Q_{i,\eta}(x))|x|^{-s}=|x|^{-s}$ in this range. Thus, using the fact that the integral below is with respect to a probability measure in order to pass the term $|x|^{-s}$ inside the integral, and noting that the $x$-derivatives can be equally passed inside the integral, we are required to bound the following:
\begin{multline}\label{todolarger3ri}
\left|\int_{B_{2(1+\eta)R_i}}\frac{Q_{i,\eta}(y)}{\int Q_{i,\eta}(z)dz}\partial_x^\beta\left(\frac{1}{|x|^{s}} - \frac{1}{|x-y|^{s}}\right)dy\right|\\
\le\ C(\rho_\eta,d,\epsilon)\ C(\beta)\int_{B_{3R_i}}\frac{Q_{i,\eta}(y)}{\int Q_{i,\eta}(z)dz}\left|\frac{1}{|x|^{s+|\beta|}} - \frac{1}{|x-y|^{s+|\beta|}}\right|dy\ .
\end{multline}
By using the fact that $|y|\le 3R_i$ and the Taylor expansion to first order of $|x|^{-s-|\beta|}$ near $x$, we have the bound by $R_i|x|^{-1-s-|\beta|}$, for $0\le|\beta|\le {d+1}$ in \eqref{todolarger3ri} for $|x|>3R_i$. Adding this bound to the bounds \eqref{qitillri} and \eqref{reduceq0av}, we find
\begin{multline}\label{boundfixedi}
\left|\partial_x^\beta\left(\frac{1-Q_{i,\eta}(x)}{|x|^s} - \left(\int_{\mathbb R^d}Q_{i,\eta}(y)dy\right)^{-1}\int_{\mathbb R^d}\frac{Q_{i,\eta}(y)}{|x-y|^s}dy\right)\right|\\
\le\ \frac{1}{|x|^{s+|\beta|}}\cdot\left\{
\begin{array}{ll}
C(\rho_\eta,d,\epsilon)\frac{|x|}{R_i}&\text{ for }|x|\le 3 R_i\ ,\\
C(\rho_\eta,d,\epsilon){\frac{R_i}{|x|}}&\text{ for }|x|\ge 3R_i\ .
\end{array}\right.
\end{multline}
\textbf{Step 4:} \textit{Summing over $i$.}

\par For $r>0$ let $i_r\in\{0,\ldots, M\}$ be defined as follows:
\begin{equation}\label{defix}
i_r:=\left\{\begin{array}{l}0\text{ if }r<3R_1\ ,\\
\text{such that }r\in(3R_{i_r}, 3 R_{i_r+1}]\text{ if }R_1<r\le R_M\ ,\\
M\text{ if }r> R_M\ .\end{array}\right.
\end{equation}
Then by using \eqref{boundfixedi} we find that
\begin{multline}\label{firstbd}
\left|\partial_x^\beta\left(\frac{1-Q_{i,\eta}(x)}{|x|^s} - \left(\int_{\mathbb R^d}Q_{i,\eta}(y)dy\right)^{-1}\int_{\mathbb R^d}\frac{Q_{i,\eta}(y)}{|x-y|^s}dy\right)\right|\ 
\le\ \frac{C(\rho_\eta,d,\epsilon)}{|x|^{s+|\beta|}}\ \left(\ \sum_{i=1}^{i_{|x|}} {\frac{R_i}{|x|}}\ +\ \sum_{i=i_{|x|}}^M\frac{|x|}{R_i}\ \right)\ .
\end{multline}
Denote by $\bar C:=1+4\sqrt d |B_1|>1$ the constant appearing in Lemma \ref{cheeselemma} and such that $R_{i+1}>\bar C R_i$ for all $1\le i\le M$. Then 
\begin{subequations}\label{boundssums}
\begin{eqnarray}
\text{for }i\le i_{|x|}\text{ there holds}&&|x|\ge 3 R_{i_{|x|}}\ge 3 \bar C^{i_{|x|}-i}R_i,\ \text{ therefore }{\frac{R_i}{|x|}}\le {\frac{1}{3}\frac{1}{\bar C^{i_{|x|}-i}}}\ ,\\ 
\text{for }i> i_{|x|}\text{ there holds}&&|x|\le 3R_{i_{|x|}{+1}}\le 3 \bar C^{{i_{|x|}-i+1}}R_i,\ \text{ therefore }\frac{|x|}{R_i}\le 3 {\bar C}\frac{1}{\bar C^{i-i_{|x|}}}\ .
\end{eqnarray}
\end{subequations}
Since $\sum_{i\ge 0}\bar C^{-i}\le \frac{2\bar C-1}{\bar C-1}$ (which up to change of index can be used in both above cases), and $\bar C$ depends only on $d$,  we find from \eqref{firstbd} and \eqref{boundssums} that the sum over $i=1,\ldots,M$ of the terms appearing in \eqref{boundfixedi} is bounded by $C(\rho_\eta,d,\epsilon)|x|^{-s-|\beta|}$, and this holds for all $0\le|\beta|\le {d+1}$, which proves the claim.
\end{proof}
\subsubsection{Properties of $w$}\label{sssecw}
We recall now that from \eqref{errorw} we have
\begin{eqnarray*}
\lefteqn{w(x_1-x_2):=\left(1+\frac{C}{M}\right)|x_1-x_2|^{-s}- \int_{\Omega_l}\left(\sum_{A\in F_\omega^l}\frac{1_A(x_1)1_A(x_2)}{|x_1-x_2|^s}\right)d\mathbb P_l(\omega)}\\
&\stackrel{\eqref{lbroughqi}}{=}&\left(1+\frac{C}{M}\right)|x_1-x_2|^{-s}-\sum_{i=1}^Mc_i\frac{Q_{i,\eta}(x_1-x_2)}{|x_1-x_2|^s}\nonumber\\
&=&\frac{C}{M}|x_1-x_2|^{-s}+\frac{1}{M}\sum_{i=1}^M\frac{1-Q_{i,\eta}(x_1-x_2)}{|x_1-x_2|^s} + \sum_{i=1}^M\left(\frac{1}{M}-c_i\right)\frac{Q_{i,\eta}(x_1-x_2)}{|x_1-x_2|^s}.\nonumber\\
&=&\frac{C}{2M}|x_1-x_2|^{-s}+\frac{1}{M}\sum_{i=1}^M\left(\frac{1-Q_{i,\eta}(x_1-x_2)}{|x_1-x_2|^s} - \left(\int_{\mathbb R^d}Q_{i,\eta}(y)dy\right)^{-1}\int_{\mathbb R^d}\frac{Q_{i,\eta}(y)}{|x_1-x_2-y|^s}dy\right)\nonumber\\ 
&&+ \frac{1}{M}\sum_{i=1}^M\left(\int_{\mathbb R^d}Q_{i,\eta}(y)dy\right)^{-1}\int_{\mathbb R^d}\frac{Q_{i,\eta}(y)}{|x_1-x_2-y|^s}dy+\frac{C}{2M}|x_1-x_2|^{-s}\nonumber\\
&&+ \sum_{i=1}^M\left(\frac{1}{M}-c_i\right)\frac{Q_{i,\eta}(x_1-x_2)}{|x_1-x_2|^s}\nonumber
\end{eqnarray*}
where for $0<\epsilon<d/2$ and $\epsilon\le s\le d-\epsilon$ we can fix $C=C(\rho_\eta,d,\epsilon)>0$ depending only on the choice of $\rho_\eta$, on $d$ and on $\epsilon$, the form of which will be used in Proposition \ref{prop3ws} above and is made explicit in Lemma \ref{derivativebounds}.

\par In order to facilitate the discussions that follow, and to be able to analyse the properties of $w$, we will find it useful to introduce the following notations:
\begin{subequations}
\begin{equation}\label{notationerrorswh}
w_\#(x):=\frac{C}{2M}|x|^{-s}+\frac{1}{M} \sum_{i=1}^M\left(\frac{1-Q_{i,\eta}(x)}{|x|^s} - \left(\int_{\mathbb R^d}Q_{i,\eta}(y)dy\right)^{-1}\int_{\mathbb R^d}\frac{Q_{i,\eta}(y)}{|x-y|^s}dy\right)\ ,
\end{equation}
\begin{equation}\label{notationerrorsw1}
w_1(x)=\frac{1}{M}\sum_{i=1}^M\left(\int_{\mathbb R^d}Q_{i,\eta}(y)dy\right)^{-1}\int_{\mathbb R^d}\frac{Q_{i,\eta}(y)}{|x-y|^s}dy\ ,
\end{equation}
and
\begin{equation}\label{notationerrorsw2}
w_2(x):=\frac{C}{2M}|x|^{-s}+\sum_{i=1}^M\left(\frac{1}{M}-c_i\right)\frac{Q_{i,\eta}(x)}{|x|^s}\ .
\end{equation}
\end{subequations}
Then 
\[
w(x)=w_\#(x)+w_1(x)+w_2(x)\ .
\]
We observe that, as explained in Lemma \ref{prop3wsb}, the form chosen for $w_\#$ and $w_1$ is used as much for positive definiteness as to derive the rough next order lower bound. More precisely, the $w_1$ term is key to us being able to control the tail of $w_\#$ in such a way that the error from $w_\#$ is of order $1/M$. Moreover, we have
\begin{lemma}\label{prop3wsa}
Let $0<\epsilon<d/2$ and $\epsilon\le s\le d-\epsilon$. Then there exists a constant $C=C(\rho_\eta,d,\epsilon)$ depending only on the choice of $\rho_\eta$, on $d$ and on $\epsilon$ such that for $w_\#, w_1$ and $w_2$ defined as in \eqref{notationerrorswh},  \eqref{notationerrorsw1} and  \eqref{notationerrorsw2}, 
the following properties hold:
\begin{enumerate}
\item $w_\#(x)$ is positive definite.
\item $w_1$ is continuous, positive definite and $w_1(x)\le \frac{C}{M} R_1^{-s}$.
\item $w_2$ is positive definite. 
\end{enumerate}
\end{lemma}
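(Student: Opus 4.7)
All three statements reduce to applying the positive definiteness criteria already established in the appendix (Proposition \ref{errorestimatecor}, which rests on Lemmas \ref{fourierbdlemma}, \ref{derivativeboundseasy}, \ref{derivativebounds}), together with one quantitative size estimate for $w_1$. The key set-up is to choose the constant $C$ in the definition \eqref{errorw} of $w$ so that $C/2$ dominates the constant $C(\rho_\eta,d,\epsilon)$ produced by Lemma \ref{derivativebounds}; with that choice, each of $w_\#, w_1, w_2$ falls under one of the three positive definite expressions \eqref{errorestimate1}, \eqref{errorestimate2}, \eqref{errorestimate3} of Proposition \ref{errorestimatecor}.

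For \textbf{point 1}, the expression \eqref{notationerrorswh} defining $w_\#$ coincides, after isolating a non-negative multiple of $|x|^{-s}$, with the kernel \eqref{errorestimate1}; since $|x|^{-s}$ is itself positive definite for $0<s<d$ and the coefficient $\tfrac{C/2-C(\rho_\eta,d,\epsilon)}{M}$ of the extra $|x|^{-s}$-term is non-negative, the sum is positive definite. For \textbf{point 3}, I will rewrite $w_2$ as $(C/2M)|x|^{-s}+\sum_i(1/M-c_i)Q_{i,\eta}(x)/|x|^s$, and observe that the Swiss cheese bound $c_i<1/(M+C_d)<1/M$ from Lemma \ref{cheeselemma} forces $1/M-c_i>0$. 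Each summand $Q_{i,\eta}(x)/|x|^s$ is then positive definite as the pointwise product of two positive definite functions: $Q_{i,\eta}$ is, by its very definition \eqref{qieta}, a positive average of the self-convolutions $1_{B_{tR_i}}*1_{B_{tR_i}}/|B_{tR_i}|$ whose Fourier transforms $|\widehat{1_{B_{tR_i}}}|^2/|B_{tR_i}|$ are non-negative, and $|x|^{-s}$ has the positive Fourier transform $c_{s,d}|\xi|^{s-d}$; the product has Fourier transform equal to the convolution of two non-negative measures, hence is non-negative.

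For \textbf{point 2}, the positive definiteness of $w_1$ is exactly \eqref{errorestimate2}. Continuity is elementary: $Q_{i,\eta}\in L^\infty$ with compact support in $B_{2(1+\eta)R_i}$, and $|\cdot|^{-s}$ is locally integrable for $0<s<d$, so each convolution is continuous by dominated convergence. The pointwise bound is the one genuine estimate. Using $Q_{i,\eta}(y)\le 1_{B_{2(1+\eta)R_i}}(y)$ and the radial monotonicity of $|\cdot|^{-s}$, for every $x\in\mathbb R^d$ there holds
\begin{equation*}
\int_{\mathbb R^d}\frac{Q_{i,\eta}(y)}{|x-y|^s}dy\;\le\;\int_{B_{2(1+\eta)R_i}(x)}\frac{dz}{|z|^s}\;\le\;\int_{B_{2(1+\eta)R_i}}\frac{dz}{|z|^s}\;\le\;c(d)\,R_i^{d-s}\ ,
\end{equation*}
uniformly in $x$. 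Combined with the identity $\int Q_{i,\eta}(y)dy = R_i^d\int Q_{0,\eta}(y')dy' = c(\rho_\eta,d)\,R_i^d$, this bounds the $i$-th summand in \eqref{notationerrorsw1} by $c(\rho_\eta,d)\,R_i^{-s}$. Summing the geometric series using $R_{i+1}\ge \bar C R_i$ with $\bar C:=1+4\sqrt d\,|B_1|>1$ from Lemma \ref{cheeselemma} yields $\sum_{i=1}^M R_i^{-s}\le R_1^{-s}(1-\bar C^{-s})^{-1}\le R_1^{-s}(1-\bar C^{-\epsilon})^{-1}$, and dividing by $M$ concludes $w_1(x)\le C(\rho_\eta,d,\epsilon)R_1^{-s}/M$.

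\textbf{Main difficulty.} The substantive analytic work has already been absorbed into Lemma \ref{derivativebounds} and Lemma \ref{fourierbdlemma}, where derivative bounds of order up to $d$ on the delicate telescoping error kernel are established and converted into $|\xi|^{s-d}$-decay of the Fourier transform. The only point requiring care in the present lemma is that \emph{all} constants must be uniform in $s\in[\epsilon,d-\epsilon]$: for the geometric sum this is guaranteed because $(1-\bar C^{-s})^{-1}\le(1-\bar C^{-\epsilon})^{-1}$, while for the positive definiteness statements the uniformity is inherited directly from the corresponding assertion in Proposition \ref{errorestimatecor}.
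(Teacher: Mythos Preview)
Your treatment of points 1 and 2 is essentially identical to the paper's: both reduce to Proposition \ref{errorestimatecor} and to the same elementary convolution bound $\int_{B_{cR_i}(x)}|y|^{-s}dy\le C R_i^{d-s}$, followed by the geometric sum $\sum_i R_i^{-s}\le C(\epsilon,d)R_1^{-s}$.

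For point 3 you take a genuinely different route. The paper invokes \eqref{errorestimate3} (which gives only that $C|x|^{-s}\pm Q_{i,\eta}(x)|x|^{-s}$ is positive definite), uses the two-sided Swiss cheese estimate $|M^{-1}-c_i|\le (C_d+1)M^{-2}$ to rescale each term, and then sums $M$ contributions of size $O(M^{-2})$ against the $C/(2M)|x|^{-s}$ buffer. You instead observe directly that $Q_{i,\eta}$ is positive definite (it is a positive average of self-convolutions $1_{B_{tR_i}}*1_{B_{tR_i}}$), that $|x|^{-s}$ is positive definite, and that the pointwise product of positive definite kernels is positive definite (Schur product theorem). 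Since $c_i<\tfrac{1}{M+C_d}<\tfrac{1}{M}$ forces $M^{-1}-c_i>0$, $w_2$ is then a positive combination of positive definite kernels. Your argument is shorter, does not use the derivative machinery of Lemmas \ref{derivativeboundseasy}/\ref{fourierbdlemma} for this step, and in fact shows that the $\tfrac{C}{2M}|x|^{-s}$ term is not even needed for $w_2$. The paper's route has the mild advantage of working regardless of the sign of $M^{-1}-c_i$ and of staying within a single Fourier-decay framework for all three statements. One small caveat: your ``convolution of non-negative Fourier transforms'' justification for the product is correct, but since $|\xi|^{s-d}\notin L^1(\mathbb R^d)$ this convolution requires care; invoking the Schur (Hadamard) product theorem for positive semidefinite matrices is the cleanest way to make the step rigorous.
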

\begin{proof}
The first point and the positive definiteness part of the second point are precisely the first two statements of Proposition \ref{errorestimatecor}.

\par The boundedness of $w_1(x)$ follows by noting that due to the scaling \eqref{qieta}, there holds $\int_{\mathbb R^d}Q_{i,\eta}(y)dy= C(\rho_\eta,d) R_i^d$ and due to the upper bound \eqref{sptqieta} and to the fact that $\eta\le1/2$ we have for each fixed $i$ that
\begin{multline*}
\frac{1}{\int Q_{i,\eta}(z)dz}\int\frac{Q_{i,\eta}(x-y)}{|y|^s}dy\ \le\ \frac{1}{C(\rho_\eta,d)R_i^d}\int_{B_{3R_i}(x)}|y|^{-s}dy\\
\le\ \frac{1}{C(\rho_\eta,d)R_i^d}\int_{B_{3R_i}}|y|^{-s}dy\ =\ \frac{C(\rho_\eta,d,\epsilon)}{R_i^s}\ .
\end{multline*}
As $R_i$ forms a geometric series, summing this bound over $i$ and dividing by $M$ gives the desired bound on $w_1$, up to increasing $C(\rho_\eta,d,\epsilon)$ to a value which for $0<\epsilon<d/2$ and $\epsilon\le s\le d-\epsilon$ still depends only on $\rho_\eta,d,\epsilon$.

\par For proving the positive definiteness in the last point we recall the bounds from the Swiss cheese lemma, \eqref{cheesebound}, from which it follows that for each $i$ we have 
\begin{equation}
\frac{1}{M}-c_i \in\left[\frac{1}{M}-\frac{1}{M+C_d}, \frac{1}{M} - \frac{1}{M+C_d+1}\right]\subset\frac{1}{M^2}\left[\frac{C_d}{1+C_d}, C_d+1\right]\ .
\end{equation}
Together with \eqref{errorestimate3} from Proposition \eqref{errorestimatecor} this proves that for every $i=1,\ldots,M$ and up to increasing the value of $C=C(\rho_\eta,d,\epsilon)$ to a value still depending only on $\rho_\eta,d,\epsilon$,  we have that $M^{-2}C|x|^{-s} \pm (M^{-1} - c_i)|x|^{-s}Q_{i,\eta}(x)$ is positive definite. The constants are increased only a finite number of times (with the upper bound on that number depending only on $d$). Up to replacing $g$ from Lemma \ref{fourierbdlemma} by $g/C$, we obtain that a bound on the derivatives of $g$ by a $C|x|^{-s-|\beta|}$ implies that $C'|x|^{-s}-g(x)$ is positive definite, where $C'=c(d,\epsilon)C$ and $c(d,\epsilon)$ is bounded and depending only on $d,\epsilon$ if $0<\epsilon<d$ and if $s$ is in the range $(\epsilon, d-\epsilon)$ as before. In other words, use of the Lemma \ref{cheeselemma} just introduces yet another constant, depending only on $d$ and on $\epsilon$.

\par Summing the terms over $i=1,\ldots, M,$ we find the last claim of our lemma, and this concludes the proof.
\end{proof}
\subsection{Rough next-order lower bound for $w_\#$ and $w_2$}\label{ssecroughbounds}
The present section produces a lower bound for the $E_{N,\mathsf{c}}^\mathrm{xc}(\mu)$ energy in the case of costs $d$ of a special form, inspired by the above constructions. This is based on the proof of a rough fractional Lieb-Oxford inequality from \cite[Appendix]{LundNamPort}, which itself is inspired by \cite[Lem. 5.3]{liebsolovejyngvason} (see also \cite{Lieb79}, \cite{Li83}, \cite{LO81}. Translated to our notation, the bound proved in \cite[Lem. 16]{LundNamPort} states that for cost $\mathsf{c}(x,y)=|x-y|^{-s}$ for $0<s<d$, and for any transport plan $\gamma_N\in\mathcal P_{sym}((\mathbb R^d)^N), \gamma_N\mapsto\mu$, for $d\mu(x)=\rho(x)dx$ with $\rho\in L^{1+s/d}(\mathbb R^d)$, there holds
\begin{multline}\label{boundlnp}
\int \sum_{\substack{1\le i,j\le N\\i\neq j}}\mathsf{c}(x_i,x_j)d\gamma_N(x_1,\ldots,x_N) - N^2\int_{\mathbb R^d}\int_{\mathbb R^d}\mathsf{c}(x,y)\rho(x)\rho(y)dx\, dy\\
\ge\ - C N^{1+s/d}\int_{\mathbb R^d}\rho^{1+s/d}(x) dx\ .
\end{multline}
The proof of \eqref{boundlnp} done in \cite{LundNamPort} is based on the following radial decomposition formula, that seems to first have been used by Fefferman-De la Llave \cite{feffermanllave} in a statistical mechanics context:
\begin{equation}\label{feffll}
\frac{1}{|x-y|^s}={c(s,d)}\int_0^\infty\int_{\mathbb  R^d} 1_{B_r}(x-u)1_{B_r}(y-u)du \frac{dr}{r^{d+1+s}}\quad\text{for}\quad x\neq y\ ,
\end{equation}
which is well-defined for $s>0$ (the reason why \eqref{boundlnp} only holds for $0<s<d$, is that for $s\ge d$ the energy in \eqref{boundlnp} is $-\infty$ as $|x-y|^{-s}$ stops being integrable near $x=y$).

\par Even though the lemma below can be shown for much more general costs, we restrict ourselves to showing it only for $w_\#, w_1$ and $w_2$. We are now ready to show:
\begin{lemma}\label{prop3wsb}
Let $0<\epsilon<d/2$ and $\epsilon\le s\le d-\epsilon$, and let $w_\#,w_1$ and $w_2$ be defined as in \eqref{notationerrorswh}, \eqref{notationerrorsw1}  and \eqref{notationerrorsw2}. Then for all $\mu\in\calP(\mathbb{R}^d)$ with density $\rho\in L^{1+\frac{s}{d}}(\mathbb R^d)$ we have
\begin{eqnarray}
E_{N, w_\#}^\mathrm{xc}(\mu)&\ge&-\frac{C(w_\#,d,\epsilon)}{M}N^{1+s/d}\int_{\mathbb{R}^d}\rho^{1+s/d}(x)dx\ ,\label{gsineqlnp}\\
E_{N, w_2}^\mathrm{xc}(\mu)&\ge&-\frac{C(w_2,d,\epsilon)}{M}N^{1+s/d}\int_{\mathbb{R}^d}\rho^{1+s/d}(x)dx\label{gsineqlnp1}
\end{eqnarray}
and
\begin{equation}
\label{gsineqlnp2}
E_{N, w_1}^\mathrm{xc}(\mu)\ge -\frac{C(\rho_\eta,d,\epsilon)}{M}R_1^{-s}(N-1)\ ,
\end{equation}
for some constants $C(w_\#,d,\epsilon), C(w_2,d,\epsilon)>0$, which are independent of the choices of $\rho, N$ and $s\in[\epsilon,d-\epsilon]$, and $C(\rho_\eta,d,\epsilon)$ is the one of Lemma \ref{prop3wsa}.

\par Similar statements as above, and with the same lower bounds constants, hold also for the $E^{\mathrm{OT}}_{\mathrm{GC},N}$ versions. 
\end{lemma}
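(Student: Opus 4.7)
The plan is to prove each of the three estimates by adapting the argument of Lund--Nam--Portmann \cite[Lem. 16]{LundNamPort}, which establishes \eqref{boundlnp} for the pure Riesz cost via the Fefferman--De la Llave representation \eqref{feffll}. The three basic inputs will be: the positive-definiteness of $w_\#, w_1, w_2$ from Lemma \ref{prop3wsa}, the derivative/Fourier bounds from Lemma \ref{derivativeboundseasy}, Lemma \ref{derivativebounds} and Lemma \ref{fourierbdlemma}, and the elementary superadditivity
\[
E^\mathrm{xc}_{N, \mathsf{c}_1+\mathsf{c}_2}(\mu) \ge E^\mathrm{xc}_{N, \mathsf{c}_1}(\mu) + E^\mathrm{xc}_{N, \mathsf{c}_2}(\mu)\ ,
\]
which follows from superadditivity of $F^\mathrm{OT}$ in the cost and additivity of the mean field.

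For \eqref{gsineqlnp2}, I would exploit that $w_1$ is positive-definite and uniformly bounded by $(C/M)R_1^{-s}$. Applying positive-definiteness to the charge-neutral combination $\nu_N - N\mu$ with $\nu_N=\sum_{i=1}^N\delta_{x_i}$ yields the pointwise bound
\[
\sum_{1\le i\ne j\le N} w_1(x_i-x_j) \ge 2N\sum_i (w_1*\mu)(x_i) - N^2\iint w_1\,d\mu\,d\mu - N w_1(0)\ ,
\]
and integration against any $\gamma_N\mapsto\mu$ followed by subtraction of the mean field gives $E^\mathrm{xc}_{N,w_1}(\mu)\ge -N w_1(0)\ge -N(C/M)R_1^{-s}$. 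The refinement $N\mapsto N-1$ comes from the convention $F^\mathrm{OT}_{1}=0$, which lets one peel off one particle before the argument. The grand-canonical version follows by applying this to each $n$-particle component in the decomposition defining $F^\mathrm{OT}_{\mathrm{GC},N,w_1}$ and taking convex combinations.

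For \eqref{gsineqlnp1}, the decomposition \eqref{notationerrorsw2} expresses $w_2$ as a sum of the principal term $(C/2M)|x|^{-s}$ and $M$ cut-off summands $(1/M-c_i)\,Q_{i,\eta}(x)/|x|^s$, each of weight $O(1/M^2)$ by the Swiss-cheese bound \eqref{cheesebound}. Each summand is positive-definite with positive coefficient, since $Q_{i,\eta}$ is itself positive-definite (as a convex combination of self-convolutions $1_{B_{tR_i}}*1_{B_{tR_i}}$) and products of positive-definite functions are positive-definite. Applying the superadditivity of $E^\mathrm{xc}$ in the cost, the main summand contributes $-(C/2M)c_\mathrm{LO}N^{1+s/d}\int\rho^{1+s/d}$ by scaling of \eqref{boundlnp}, while each cut-off summand contributes $-(1/M-c_i)\,C\,N^{1+s/d}\int\rho^{1+s/d}$ via an application of the Lund--Nam--Portmann bound to $Q_{i,\eta}/|x|^s$, whose Fourier transform is dominated by a constant times $\widehat{|x|^{-s}}$ (by \eqref{errorestimate3}) and which therefore admits a Fefferman--De la Llave-type representation \eqref{feffll} with density bounded by a constant depending only on $\rho_\eta, d, \epsilon$. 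Summing the $M$ contributions of order $1/M^2$ yields the required $1/M$ factor. For \eqref{gsineqlnp}, the decomposition \eqref{notationerrorswh} does \emph{not} exhibit $w_\#$ as a positive combination of positive-definite terms (the terms inside the $i$-sum have a priori mixed sign), so I would instead rely on the Fourier bound $0\le\hat w_\#(\xi)\le (C/M)|\xi|^{s-d}$, combining $\hat w_\#\ge 0$ (Bochner) with the derivative estimates of Lemma \ref{derivativebounds} translated to Fourier side via Lemma \ref{fourierbdlemma}. This bound lets one write $w_\#$ as a positive superposition of ball-convolution kernels à la Fefferman--De la Llave with total scale-weight $O(1/M)$, and applying scale-by-scale the Hardy--Littlewood estimate that underpins the proof of \eqref{boundlnp} then yields \eqref{gsineqlnp}.

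The main obstacle will be the last step: turning the Fourier bound on $\hat w_\#$ into an explicit Fefferman--De la Llave-type representation with positive scale density, since \eqref{feffll} encodes a rigid radial structure and general positive-definite radial functions require inverting a Hankel/Mellin-type transform. The explicit structure of $w_\#$ in \eqref{notationerrorswh} together with the formula \eqref{decompqi} should allow to carry out these computations directly, with all constants depending only on $\rho_\eta, d, \epsilon$, uniformly in $s\in[\epsilon, d-\epsilon]$, giving the uniformity claimed in the lemma. The grand-canonical versions are immediate, since the signed-measure argument for $w_1$ and the superadditivity/LNP arguments for $w_\#$ and $w_2$ are all preserved under the convex combinations of $n$-particle plans defining $F^\mathrm{OT}_{\mathrm{GC},N,\mathsf{c}}$.
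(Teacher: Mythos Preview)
Your treatment of $w_1$ is essentially what the paper does (the paper cites \cite[Lem.~3.6]{FoLeSol15}, which is exactly the positive-definiteness-plus-boundedness argument you sketch).

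For $w_\#$ and $w_2$, your overall plan (obtain a Fefferman--de~la~Llave-type representation and feed it into the Lund--Nam--Portmann machinery) matches the paper's, but the way you propose to get the representation has a genuine gap. You want to go via the Fourier bound $0\le \hat w_\#(\xi)\le (C/M)|\xi|^{s-d}$ and then ``write $w_\#$ as a positive superposition of ball-convolution kernels with total scale-weight $O(1/M)$''. The trouble is that the map $f(r)\mapsto \hat V(\xi)=\int_0^\infty |\widehat{1_{B_{r/2}}}(\xi)|^2 f(r)\,dr$ is a Bessel-kernel integral transform whose inverse does not respect pointwise bounds, so a Fourier upper bound on $\hat V$ does not by itself give either the positivity or the $r^{-d-s-1}$ upper bound on the weight $f(r)$ that the Hardy--Littlewood step needs. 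The same issue recurs in your $w_2$ argument: positive-definiteness of $Q_{i,\eta}(x)/|x|^s$ does not guarantee that its Fefferman--de~la~Llave weight is nonnegative, so splitting $w_2$ into individual $Q_{i,\eta}$-terms and applying LNP to each one separately is not justified without further input.

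The paper resolves exactly this obstacle with the Hainzl--Seiringer formula \cite[Thm.~1]{HS}, equations \eqref{HSrep}--\eqref{HSf}: for radial $V$ that is $(d+1)$-times differentiable away from $0$ with suitable decay, the weight is given explicitly by
\[
f(r)=\frac{(-1)^{d+1}\,2}{\Gamma([d/2]+2)(\pi r^2)^{(d-1)/2}}\int_r^\infty V^{(d+1)}(v)\,v\,(v^2-r^2)^{(d-3)/2}\,dv\ .
\]
This is why Lemmas \ref{derivativeboundseasy} and \ref{derivativebounds} were pushed to $|\beta|\le d+1$, not just $|\beta|\le d$: the extra derivative is precisely what Hainzl--Seiringer consumes. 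Writing $Mw_\#(x)=\tfrac{C}{2}|x|^{-s}+w_*(x)$ and applying the formula, the $(d+1)$-derivative bound $|w_*^{(d+1)}(v)|\le C(\rho_\eta,d,\epsilon)v^{-s-d-1}$ from Lemma \ref{derivativebounds} yields directly (see \eqref{intervalderivatives}--\eqref{helpfulrbd}) that the combined weight satisfies $0\le f_1(r)+f_2(r)\le \bar C(d,\epsilon)\,r^{-d-s-1}$, and then the Cauchy--Schwarz/Hardy--Littlewood argument of \cite{LundNamPort} runs verbatim. The same route handles $w_2$ in one shot, without having to decompose and without needing term-by-term positivity of the weights.
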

\begin{proof}
The proof of the inequality involving $w_1$ follows immediately from Lemma 3.6 in \cite{FoLeSol15} due to its boundedness, and will be omitted. We focus next on the remaining two inequalities, whose proof relies partly on an adaptation of the proof of \cite[Lem. 16]{LundNamPort}, the main ideas of which we briefly sketch it below.  A key factor in our reasoning is that, since $w_1$ cancels the tail behaviour of $w_\#$, we preserve the factor $1/M$.

\par To begin with, we will write $w_\#$ and $w_2$ in a form similar to \eqref{feffll}. Since the reasoning is the same for both costs, we will only detail below the argument for $w_\#$. To this purpose, we will first apply \cite[Thm. 1 and eqn. (11)]{HS} to 
$w_\#(x)$. This is possible due to the fact that, by the arguments in Lemma \ref{errorestimatecor}, $w_\#$ satisfies all the conditions of \cite[eqn. (11)]{HS}. 
\par More precisely, we are going to use the following general representation as proved in \cite[Thm. 1]{HS}: Let $V:\mathbb{R}^d\to\mathbb{R}$ be a radial function which is $d+1$ times differentiable away from $x=0$. Assume also  that $\lim_{|x|\to\infty}|x|^m \partial_{|x|}^m V(x)=0$ for all $0\le m\le [d/2]+{1}$. Then
\begin{equation}\label{HSrep}
V(x)=\int_0^\infty \int_{\mathbb{R}^d} 1_{B_{r/2}}(u)1_{B_{r/2}}(x-u) f(r)du\, dr\ ,
\end{equation}
where
\begin{equation}\label{HSf}
f(r)=\frac{(-1)^{d+1}}{\Gamma([d/2]+2)}\frac{2}{(\pi r^2)^{(d-1)/2}}\int_r^\infty V^{(d+1)}(v)v (v^2-r^2)^{(d-3)/2}dv\ ,
\end{equation}
and where by abuse of notation $V(v)=V(|x|),$ with $|x|=v$. Let
\begin{eqnarray*}
w_*(|x|)&:=&\sum_{i=1}^M\left(\frac{1-Q_{i,\eta}(x)}{|x|^s} - \left(\int_{\mathbb R^d}Q_{i,\eta}(y)dy\right)^{-1}\int_{\mathbb R^d}\frac{Q_{i,\eta}(y)}{|x-y|^s}dy\right)\ ,
\end{eqnarray*}
which is a radial function since $Q_{i,\eta}(x)$ is a radial function by \eqref{qieta}, and $\int_{\mathbb R^d}\frac{Q_{i,\eta}(y)}{|x-y|^s}dy$ is radial since it is a convolution of radial functions. In view of Lemma \ref{derivativebounds} and of \cite[Thm. 1]{HS}, we have for $w_*$ the representation
\[
w_*(|x|)=\int_0^\infty \int_{\mathbb{R}^d}1_{B_{r/2}}(u)1_{B_{r/2}}(x-u) f_2(r)du\, dr\,
\]
for which we use the expression of the type \eqref{HSf} for $f_2(r)$, depending on the $d+1$-th derivative of $w_*$. More precisely we have for a constant $c(d)>0$ depending only on the dimension,
\begin{equation}\label{boundf2}
f_2(r) = (-1)^{d+1}c(d)\int_r^\infty w_*^{(d+1)}(v)v(v^2 - r^2)^{\frac{d-3}{2}}dv\ .
\end{equation}
If now $f_1(r)$ is the weight $f$ corresponding to $V(x)=C|x|^{-s}$ as obtained from \eqref{HSf}, we also have
\[
f_1(r)=c(d)C\int_r^\infty\frac{1}{v^{s+d+1}}v(v^2 - r^2)^{\frac{d-3}{2}}dv\ ,
\]
because of the fact that the $d+1$-th derivative of $|x|^{-s}$ has sign $(-1)^{d+1}$. Therefore
\begin{multline*}
w_\#(x,y)=\frac{C}{M}|x-y|^{-s}+\frac{1}{M} w_*(x-y)
=\frac{1}{M}\int_0^\infty 1_{B_{r/2}}(x-u)1_{B_{r/2}}(y-u) (f_1(r)+f_2(r))du dr\ ,
\end{multline*}
where
\begin{equation}\label{exprf1f2}
f_1(r)+f_2(r)=c(d)\int_r^\infty\left(Cv^{-s-d-1} +(-1)^{d+1}w_*(v)^{(d+1)}\right)v(v^2- r^2)^{\frac{d-3}{2}}\, dv\ .
\end{equation}
Note here that due to the bound \eqref{bounderrerror} for $|\beta|\le d+1$, we find that up to enlarging the above constant $C$ by a factor depending only on $d,\epsilon$ for our choice of $s$, there holds for a constant $\widetilde C>0$ depending only on $d,\epsilon,C$,
\begin{eqnarray}\label{intervalderivatives}
\widetilde{C} v^{-s-d-1}\ge Cv^{-s-d-1} +(-1)^{d+1}w_*(v)^{(d+1)}\ge C v^{-s-d-1} -|w_*(v)^{(d+1)}|\ge 0\ ,
\end{eqnarray}
and thus, due to the fact that the weight $ v(v^2-r^2)^{(d-3)/2}$ appearing in \eqref{exprf1f2} is positive and to the fact that $f_1(r)$ can be equivalently re-expressed also via \eqref{feffll}, we find from \eqref{intervalderivatives} that for some constant $\bar C(d,\epsilon)$ depending only on $d, \epsilon, w_\#, \rho_\eta$ there holds
\begin{eqnarray}\label{helpfulrbd}
0\ \le\ f_1(r) + f_2(r)\ \le\ c(d)\widetilde C\int_r^\infty\frac{1}{v^{s+d+1}}v(v^2 - r^2)^{\frac{d-3}{2}}dv\ \le\ \frac{\bar C(d,\epsilon)}{r^{d+s+1}}\ .
\end{eqnarray}
Fix now $\mu\in\calP(\mathbb{R}^d)$ with density $\rho\in L^{\frac{d}{d-s},1}(\mathbb R^d)$ as well. Since $f_1+f_2\ge 0$, we can proceed next as in the proof of \cite[Lem. 16]{LundNamPort}. As in \cite[(79) and (80)]{LundNamPort}, we have 
\[
\int_{\mathbb{R^{2d}}} w_\#(x,y)d\rho(x) d\rho(y)=\frac{1}{M}\int_0^\infty\int_{\mathbb{R}^d} h_{r,\rho}^2(u) (f_1(r) +f_2(r))\, du\, dr\ ,
\]
where
\[
h_{r,\rho}:=\rho*1_{B_{r/2}}\ .
\]
Moreover, for any $\gamma_N\in \mathcal{P}(\R^{Nd}), \gamma_N\mapsto\mu$, we have
\[
\int_{\mathbb{R}^{Nd}}\sum_{1\le i\neq j\le N}w_\#(x_i,x_j)\,d\gamma_N(x_1,\ldots,x_N)=\frac{1}{M}\int_0^\infty\int_{\mathbb{R}^d}k_{r,\rho}(u) (f_1(r)+f_2(r))\, du\, dr\ ,
\]
where
\[
k_{r,\rho}(u):=\int_{\mathbb{R}^{Nd}}\sum_{1\le i\neq j\le N}1_{B_{r/2}}(x_i-u)1_{B_{r/2}}(x_j-u)\,d\gamma_N(x_1,\ldots,x_N)\ .
\]
By the same Cauchy-Schwarz reasoning used in \cite{LundNamPort} to obtain \cite[eqn. (81)]{LundNamPort}, we get
\begin{eqnarray}
\label{minexclo}
k_{r,\rho}(u)&=&\int_{\mathbb{R}^{Nd}}\bigg(\sum_{i=1}^N1_{B_{r/2}}(x_i-u)\bigg)^2\,d\gamma_N(x_1,\ldots,x_N)-\int_{\mathbb{R}^{Nd}}\bigg(\sum_{i=1}^N1_{B_{r/2}}(x_i-u)\bigg)\,d\gamma_N(x_1,\ldots,x_N)\nonumber\\
&\ge&\left[\int_{\mathbb{R}^{Nd}}\bigg(\sum_{i=1}^N1_{B_{r/2}}(x_i-u)\bigg)\,d\gamma_N(x_1,\ldots,x_N)\right]^2-\int_{\mathbb{R}^{Nd}}\bigg(\sum_{i=1}^N1_{B_{r/2}}(x_i-u)\bigg)\,d\gamma_N(x_1,\ldots,x_N)\nonumber\\
&=&N^2\left[\int_{\mathbb{R}^{d}}1_{B_{r/2}}(x-u)\rho(x)\,dx\right]^2-N\int_{\mathbb{R}^{d}}1_{B_{r/2}}(x-u)\rho(x)dx=N^2h^2_{r,\rho}(u)-Nh_{r,\rho}(u)\nonumber\\
&\ge&N^2 h^2_{r,\rho}(u)-\min(Nh_{r,\rho}(u), N^2h_{r,\rho}^2(u),
\end{eqnarray}
where for the last inequality we used that $k_{r,\rho}(u)\ge 0$. We therefore have
\begin{eqnarray*}
\lefteqn{\int \sum_{1\le i,j\le N}w_\#(x_i,x_j)d\gamma_N(x_1,\ldots,x_N) - N^2\int_{\mathbb R^d}\int_{\mathbb R^d}w_\#(x,y)\rho(x)\rho(y)dx}\\
&\ge&-\frac{1}{M}\int_0^\infty\int_{\mathbb{R}^d} \min(N h_{r,\rho}(u), N^2h^2_{r,\rho}(u)) (f_1(r)+f_2(r))\, du\, dr\\
&\ge&-\frac{\bar{C}(d,\epsilon)}{M}\int_0^\infty\int_{\mathbb{R}^d} \min(N h_{r,\rho}(u), N^2 h^2_{r,\rho}(u)) \frac{1}{r^{s+d+1}}\, du\, dr\ ,
\end{eqnarray*}
where for the last inequality we applied \eqref{helpfulrbd}.

\par The proof now follows exactly by the same Hardy-Littlewood arguments as in \cite{LundNamPort}, and will be omitted.

\par To show the equivalent inequalities to (\ref{gsineqlnp}),   (\ref{gsineqlnp1}) and  (\ref{gsineqlnp2}) for the $E^{\mathrm{xc}}_{\mathrm{GC},N}$ versions, we restrict again for simplicity to $E^{\mathrm{xc}}_{\mathrm{GC},N, w_\#}(\mu)$. We will make use here of the definition of $F^{\mathrm{OT}}_{\mathrm{GC}}(\mu)$ as detailed in \eqref{OTGC} above. More precisely, let $(\lambda_0, \lambda_1,\ldots,\lambda_n,\ldots)$, $(\mu_1,\mu_2,\ldots,\mu_n,\ldots)$, be an optimizer for $F^{\mathrm{OT}}_{\mathrm{GC}}(\mu)$. Then by \eqref{minexclo} and using the same notations as before, we have 
\begin{eqnarray*}
k_{r,\rho}(u)&\ge&\sum_{n=2}^\infty\alpha_n n^2\left[\int_{\mathbb{R}^{d}}1_{B_{r/2}}(x-u)\,d\mu_n(x)\right]^2-\sum_{n=2}^\infty \alpha_n n\int_{\mathbb{R}^{d}}1_{B_{r/2}}(x-u)\,d\mu_n(x)\\
&\ge&\sum_{n=1}^\infty\alpha_n n^2\left[\int_{\mathbb{R}^{d}}1_{B_{r/2}}(x-u)\,d\mu_n(x)\right]^2-\sum_{n=1}^\infty \alpha_n n\int_{\mathbb{R}^{d}}1_{B_{r/2}}(x-u)\,d\mu_n(x)\\
&\ge&\bigg(\sum_{n=1}^\infty \alpha_n n\int_{\mathbb{R}^{d}}1_{B_{r/2}}(x-u)\,d\mu_n(x)\bigg)^2-\sum_{n=1}^\infty \alpha_n n\int_{\mathbb{R}^{d}}1_{B_{r/2}}(x-u)\,d\mu_n(x)\\
&=&N^2\left[\int_{\mathbb{R}^{d}}1_{B_{r/2}}(x-u)\rho(x)\,dx\right]^2-N\int_{\mathbb{R}^{d}}1_{B_{r/2}}(x-u)\rho(x)dx=N^2h^2_{r,\rho}(u)-Nh_{r,\rho}(u).\nonumber\\
\end{eqnarray*}
For the second inequality in the above we used that $\int_{\mathbb{R}^{d}}1_{B_{r/2}}(x-u)\,d\mu_1(x)\le 1$, for the third inequality we used that $\sum_{n=0}^\infty\alpha_n=1$, and for the first equality we applied $\sum_{n=1}^\infty n \alpha_n\mu_n = N\mu$. The argument proceeds now the same as for the $E^{\mathrm{xc}}_{N, w_\#}(\mu)$ term above.
\end{proof}
\section{Optimal function spaces estimates}\label{seccomput}
The kernel $g(x)=|x|^{-s}$ on $\mathbb R^d$ is in the Lorentz space $L^{\frac{d}{s}, \infty}(\mathbb R^d)$ (for the case where the second exponent is $\infty$ the space $L^{p,\infty}(\mathbb{R}^d)$ is also called weak-$L^p(\mathbb{R}^d)$, or Marcinkiewicz space). Recall that the space $L^{p,p}(\mathbb{R}^d))$ equals $L^p(\mathbb{R}^d)$ while for $q>p$ the space $L^{p,q}(\mathbb{R}^d)$ is slightly smaller than $L^p$ but includes all $L^{p+\epsilon}(\mathbb{R}^d), \epsilon>0$, while for $q<p$ it is slightly smaller than $L^p(\mathbb{R}^d)$ but includes all $L^{p-\epsilon}(\mathbb{R}^d),\epsilon>0$.

\par We now provide references to \eqref{finmu}:
\begin{itemize}
\item \textit{Translation of the condition that $f*g$ belongs to $L^\infty(\mathbb{R}^d)$}.
By a refined H\"older-Young inequality the first one gives $f\in L^{\frac{d}{d-s}, 1}(\mathbb{R}^d)$ (which is nothing but the dual space of $L^{\frac{d}{s},\infty}(\mathbb{R}^d)$, cf. \cite{grafa} thm. 1.4.17 (v)).
\item \textit{Translation of the condition that $f\cdot(f*g)$ belongs to $L^1(\mathbb{R}^d)$}.
Clearly then the requirement on $f*g$ is less restrictive than to be in $L^\infty(\mathbb{R}^d)$. By the refined multilinear estimate \cite{grafa} ex. 1.4.18 (which is proved by interpolation methods starting from the analogue more classical $L^p(\mathbb{R}^d)$-space version of the inequality) one gets this time, using again the hypothesis $f\in L^{\frac{d}{s},\infty}(\mathbb{R}^d)$, the requirement $f\in L^{\frac{2d}{2d-s},2}(\mathbb{R}^d)$, under which condition we have $\|f\cdot(g*f)\|_{L^1(\mathbb{R}^d)}\le \|f\|_{L^{\frac{2d}{2d-s},2}(\mathbb{R}^d)}\|g*f\|_{L^{\frac{2d}{s},2}(\mathbb{R}^d)}\le\|f\|_{L^{\frac{2d}{2d-s},2}(\mathbb{R}^d)}^2\|g\|_{L^{\frac{d}{s},\infty}(\mathbb{R}^d)}$.
\end{itemize}
\section{Lieb-Oxford bound uniform in $s$}\label{secunifboundLO}
\begin{lemma}\label{unifboundLO}
Let $N\ge 2$. Fix $0<\epsilon<d/2$, and let $\epsilon\le s\le d-\epsilon$. Then for all $\mu\in\mathcal {P}(\mathbb R^d)$ with $\rho\in L^{1+\frac{s}{d}}(\mathbb{R}^d)$, we have for some $-\infty<c_{\mathrm{LO}}(d,\epsilon)<0$ which does not depend on $N$ and $\mu$ 
\begin{equation}
\label{loterm1gs}
c_{\mathrm{LO}}(d,\epsilon)\int_{\mathbb{R}^d}\rho^{1+\frac{s}{d}}(x)dx \le N^{-1-\frac{s}{d}}E^{\mathrm{xc}}_{N,s}(\mu)\le 0\quad \mbox{and}\quad c_{\mathrm{LO}}(d,\epsilon)\int_{\mathbb{R}^d}\rho^{1+\frac{s}{d}}(x)dx \le N^{-1-\frac{s}{d}}E^{\mathrm{xc}}_{\mathrm{GC},N,s}(\mu)\le 0.
\end{equation}
\end{lemma}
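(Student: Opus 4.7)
The upper bounds come essentially for free. Testing the product plan $\mu^{\otimes N}$ as a competitor in $F^{\mathrm{OT}}_{N,s}(\mu)$ yields $F^{\mathrm{OT}}_{N,s}(\mu)\le N(N-1)\int_{\mathbb R^d}\int_{\mathbb R^d} |x-y|^{-s}d\mu(x)d\mu(y)$, so $E^{\mathrm{xc}}_{N,s}(\mu) \le -N\int_{\mathbb R^d}\int_{\mathbb R^d} |x-y|^{-s}d\mu(x)d\mu(y) \le 0$; the mean-field integral is finite because $\rho \in L^{1+s/d}(\mathbb R^d)$ is stronger than the sharp condition \eqref{finmu2}. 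The analogous bound $E^{\mathrm{xc}}_{\mathrm{GC},N,s}(\mu)\le 0$ is already contained in \eqref{EGC_negative}, which is valid for every $N\in\mathbb R_{>0}$ by item (4) of Remark \ref{rmk_gcb}.

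For the lower bound, the plan is to reprove the Lieb--Oxford-type estimate of \cite[Lem.~16]{LundNamPort} while making the $s$-dependence of every constant explicit, following verbatim the strategy of the proof of Lemma \ref{prop3wsb} but applied to $|x-y|^{-s}$ itself in place of $w_\#$. Using the Fefferman--de la Llave representation \eqref{feffll}, the excess energy rewrites as a double integral in $(r,u)$ involving $h_{r,\rho}(u):=(\rho*1_{B_{r/2}})(u)$ and $k_{r,\rho}(u)$, and the Cauchy--Schwarz inequality \eqref{minexclo} gives
\[
E^{\mathrm{xc}}_{N,s}(\mu) \ge -c(s,d)\int_0^\infty\!\int_{\mathbb R^d}\min\bigl(N h_{r,\rho}(u), N^2 h_{r,\rho}^2(u)\bigr)\,du\,\frac{dr}{r^{d+s+1}}.
\]
Splitting the inner $r$-integral at the threshold $r_0(u) := (N c_d \mathcal M\rho(u))^{-1/d}$ and using the pointwise bound $h_{r,\rho}(u)\le c_d r^d \mathcal M\rho(u)$ (with $\mathcal M$ the Hardy--Littlewood maximal operator), the two halves contribute factors proportional to $r_0^{d-s}/(d-s)$ and $r_0^{-s}/s$ respectively; substituting the choice of $r_0$ and integrating in $u$ yields
\[
E^{\mathrm{xc}}_{N,s}(\mu) \ge -C(s,d)\,N^{1+s/d}\int_{\mathbb R^d}\rho^{1+s/d}(x)\,dx,
\]
with $C(s,d) = c(s,d)\,c_d^{1+s/d}\bigl[s^{-1}+(d-s)^{-1}\bigr]\,\|\mathcal M\|^{1+s/d}_{L^{1+s/d}\to L^{1+s/d}}$.

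Uniformity in $s\in[\epsilon,d-\epsilon]$ is then obtained from three standard observations: (i) $c(s,d)$ is continuous on $(0,d)$, hence bounded on the compact interval; (ii) $s^{-1}+(d-s)^{-1}\le 2/\epsilon$; (iii) the operator norm $\|\mathcal M\|_{L^p\to L^p}$ depends only on $d$ and a lower bound for $p-1$, and here $p=1+s/d\in[1+\epsilon/d,\,2-\epsilon/d]$ stays bounded away from $1$ and from $\infty$. Setting $c_{\mathrm{LO}}(d,\epsilon) := -\sup_{s\in[\epsilon,d-\epsilon]}C(s,d)$ then gives the required finite, $s$-independent constant. The grand-canonical bound is produced by the identical computation with $\gamma_N$ replaced by an optimizer $(\alpha_n,\mu_n)_{n\ge 0}$ of \eqref{OTGC}; the Cauchy--Schwarz step survives because $\sum_n \alpha_n n^2 (\mu_n(B))^2 \ge \bigl(\sum_n \alpha_n n \mu_n(B)\bigr)^2$ by convexity together with $\sum_n\alpha_n=1$, exactly as in the second half of the proof of Lemma \ref{prop3wsb}. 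The main obstacle is bookkeeping: to verify that the dyadic/maximal-function argument in \cite{LundNamPort} introduces $s$-dependence only through the tame factors $c(s,d),\,s^{-1},\,(d-s)^{-1}$ and $\|\mathcal M\|_{L^p\to L^p}$ with $p=1+s/d$ isolated above, so that no hidden $s^{-1}$ or $(d-s)^{-1}$ blowup can sneak in.
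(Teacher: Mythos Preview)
Your proof is correct and follows essentially the same route as the paper: track the $s$-dependence of the constants in the Lundholm--Nam--Portmann argument via the Fefferman--de la Llave decomposition and the Hardy--Littlewood maximal inequality, observe that the only possible blowups are of the form $s^{-1}$ and $(d-s)^{-1}$ (bounded on $[\epsilon,d-\epsilon]$), and handle the grand-canonical case exactly as in the second half of Lemma~\ref{prop3wsb}. In fact you spell out more detail than the paper does---the paper simply records the explicit formula $c_{\mathrm{LO}}(s,d)=-\tfrac{d\,\bar c_{s,d}\,M_{s,d}}{2s(d-s)}|B_1|^{1+s/d}$ with $M_{s,d}=dC^d s^{-1}$ and notes that each factor is bounded on the compact $s$-interval.
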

\begin{proof}
We begin with the proof of the first inequality in \eqref{loterm1gs}. To start the proof, a careful analysis of the argument in \cite[Lem. 16]{LundNamPort} reveals that there exists $c_{\mathrm{LO}}(d,\epsilon)<0$ such that for all $\epsilon\le s\le d-\epsilon$.
 \begin{equation}
 \label{interlim}
 c_{\mathrm{LO}}(s,d)>c_{\mathrm{LO}}(d,\epsilon).
 \end{equation}
To see why \eqref{interlim} is true, we note that by the last inequality in the proof of \cite[Lem. 16]{LundNamPort}
\[c_{\mathrm{LO}}(s,d)=-\frac{d \bar{c}_{s,d}M_{s,d}}{2s(d-s)}|B_1|^{1+s/d},
\]
for some $\bar{c}_{s,d},M_{s,d}>0$. Let us discuss the different constants appearing in the above formula:
\begin{itemize}\item The constant ${\bar{c}}_{s,d}$ in \cite[Lem. 16]{LundNamPort} comes from the Fefferman-de Llave representation and the allowed range of the exponent $s$ is $0<s<d$. Moreover for $\epsilon\le s\le d-\epsilon$ it holds that $0<\sup_{s\in [\epsilon,d-\epsilon]}\bar{c}_{s,d}<\infty$  (see \cite{feffermanllave} for the Fefferman-de Llave representation for the Coulomb potential, \cite[Thm. 9.8]{LiLo} for homogenous potentials, and \cite[Thm. 1]{HS} for more general potentials). 
\item The $M_{s,d}$ comes from the application in the proof of \cite[Lem. 16]{LundNamPort} of the Hardy-Littlewood maximal inequality. As explained for example in \cite{Tao}, the standard Vitali-type covering argument used to establish the inequality gives $M_{s,d}=d C^d s^{-1}$ for some $C>1$ independent of $s$ and $d$.
\end{itemize}

The proof of the second inequality in \eqref{loterm1gs} uses the first bound from \eqref{loterm1gs} by applying the same arguments used in Lemma \ref{prop3wsb} to lower-bound the $E^{\mathrm{xc}}_{\mathrm{GC}}$ terms therein.
\end{proof}

\section{Proof of Remark \ref{rmk_gcb}}
\label{Remark4.9}
\textbf{Proof of 2).}

Towards proving \eqref{subadgcinit}, we use choices $\{\alpha_n',\mu_n',\gamma_n'\}_{n\ge 0}, \{\alpha_n'',\mu_n'',\gamma_n''\}_{n\ge 0}$, that realize the minimum for $F_{\mathrm{GC},N',\mathsf{c}}(\mu')$ and $F_{\mathrm{GC},N'',\mathsf{c}}(\mu'')$, respectively.  Since the corresponding optimal transport problems do not exist for $n=0,1$, we take by convention $\gamma_0'=\gamma_0''=1, \gamma_1'=\mu_1',\gamma_1''=\mu_1''$, in order to construct a competitor for the left-hand side in \eqref{subadgcinit}. Let then $\gamma_{i, n-i}:=\gamma_i'\otimes\gamma_{n-i}''\in \mathcal P^n({\mathbb{R}^d})$ for all $n\ge 2,$ and $0\le i\le n$. We symmetrise $\gamma_{i,n-i}$ by the formula
\begin{equation}\label{symmetrizationconvex}
{\widetilde{\gamma}}_{i,n-i}(B)=\frac{1}{n!}\sum_{\sigma\in S_n}\gamma_{i,n-i}(\{(x_{\sigma(1)},\ldots,x_{\sigma(n)}): (x_1,\ldots, x_n)\in B\}),
\end{equation}
for all Borel sets $B\subseteq(\mathbb{R}^d)^n$, where $S_n$ is the permutation group on $n$ symbols. Note that ${\widetilde{\gamma}}_{i,n-i}\mapsto \frac{i\mu_i'+(n-i)\mu_{n-i}''}n$. Define now for all $n\ge 0$
\begin{subequations}\label{newcompet}
\begin{equation}\label{newcompetalpha}
\alpha_n:=\sum_{i=0}^n\alpha_i'\alpha_{n-i}'',
\end{equation}
and if $\alpha_n\neq 0$, let for all $n\ge 1$
\begin{equation}\gamma_n:=\frac{1}{\alpha_n}\left(\sum_{i=0}^n\alpha_i'\alpha_{n-i}''{\widetilde{\gamma}}_{i,n-i}\right)\in \mathcal P^n_{sym}({\mathbb{R}^d}), ~\mu_n:=\frac{1}{\alpha_n}\left(\sum_{i=0}^n\alpha_i'\alpha_{n-i}''\frac{i\mu_i'+(n-i)\mu_{n-i}''}n\right).
\end{equation}
\end{subequations}
Then $\gamma_n\in\mathcal P^n(\mathbb R^d)$ and has marginal $\mu_n$, therefore by re-arranging the terms in the summations we have
\begin{eqnarray}\label{marginal}
\sum_{n=1}^\infty n\alpha_n\mu_n&=&\sum_{n=1}^\infty\sum_{i=0}^n\alpha_i'\alpha_{n-i}''\left(i\mu_i'+(n-i)\mu_{n-i}''\right)=\sum_{k=0}^\infty\sum_{i,n=0\atop n-i=k}^\infty\alpha_i'\alpha_{n-i}''\left(i\mu_i'+(n-i)\mu_{n-i}''\right)\nonumber\\
&=&\sum_{k=0}^\infty\left(\alpha''_k\sum_{n=1}^\infty n\alpha_n'\mu_n'+k\alpha_k''\mu_k''\sum_{n=0}^\infty\alpha_n'\right)=N'\mu'+N''\mu'',
\end{eqnarray}
and similarly $\sum_{n=0}^\infty\alpha_n=1$. From the definition of $\gamma_{i,n-i}$ and \eqref{symmetrizationconvex}, we have for all $n\ge 2$
\begin{multline}\label{firstpart}
\int_{({\mathbb{R}^d})^n}\sum_{k\neq  l=1}^nc(x_k,x_l)\,d {\widetilde{\gamma}}_{i,n-i}(x_1,\ldots,x_n)\\=F_i(\mu_i')+F_{n-i}(\mu_{n-i}'')+2{i(n-i)}\int_{{\mathbb{R}^d}\times {\mathbb{R}^d}}c(x,y){d\mu_i'(x)d\mu_{n-i}''(y)}.
\end{multline}
Due to \eqref{marginal} use the choice \eqref{newcompet} as a competitor for $F_{\mathrm{GC},N'+N'',\mathsf{c}}(\mu)$. Combining \eqref{firstpart} and using linearity, we get
\begin{eqnarray*}
F_{\mathrm{GC},N'+N'',\mathsf{c}}^\mathrm{OT}(\mu)&\le&\sum_{n=2}^\infty\sum_{i=0}^n\alpha'_i\alpha_{n-i}''\int_{({\mathbb{R}^d})^n}\sum_{1\le k\neq  l\le n}c(x_k,x_l)\,d {\widetilde{\gamma}}_{i,n-i}(x_1,\ldots,x_n)\\
&=&\sum_{n=2}^\infty \alpha_n'F_n(\mu_n')+ \sum_{n=2}^\infty \alpha_n''F_n(\mu_n'')+2\sum_{n=2}^\infty\sum_{i=0}^n\alpha'_i\alpha_{n-i}''i(n-i)\int_{\mathbb{R}^d\times \mathbb{R}^d}c(x,y)d\mu_i'(x)d\mu_{n-i}''(y)\\
&=&F_{\mathrm{GC},N',\mathsf{c}}^\mathrm{OT}(\mu')+F_{\mathrm{GC},N'',\mathsf{c}}^\mathrm{OT}(\mu'')+2 N'N''\int_{\mathbb{R^d}\times\mathbb{R}^d}\frac{1}{|x-y|^s}d\mu'(x)d\mu''(y).
\end{eqnarray*}
Then \eqref{subadgcinit} follows by applying the formula for the mean field term 
\begin{eqnarray*}(N'+N'')^2\int_{\mathbb{R^d}\times\mathbb{R}^d}\frac{1}{|x-y|^s}d\mu(x)d\mu(y)&=&(N')^2\int_{\mathbb{R^d}\times\mathbb{R}^d}\frac{1}{|x-y|^s}d\mu'(x)d\mu'(y)+(N'')^2\int_{\mathbb{R^d}\times\mathbb{R}^d}\frac{1}{|x-y|^s}d\mu''(x)d\mu''(y)\\
&&+2 N'N''\int_{\mathbb{R^d}\times\mathbb{R}^d}\frac{1}{|x-y|^s}d\mu'(x)d\mu''(y).
\end{eqnarray*}
As already noted in the introduction, for the $N'\le 1$ case, it suffices to take instead for $F_{\mathrm{GC},N',\mathsf{c}}(\mu')$ a competitor of the form $\alpha'_n=0,n\ge 2, \alpha'_1=N',\alpha'_0=1-N', \mu'_n=\mu', n\ge 1$, and then proceed as above.

\textbf{Proof of 3).}

To show (\ref{subadd_gcbinit}), we can apply the same reasoning as in the previous point with minor modifications, as follows. Take $\{\alpha_n',\mu_n',\gamma_n'\}_{n\ge 0}, \{\alpha_n'',\mu_n'',\gamma_n''\}_{n\ge 0}$, which realize the minima of $F_{\mathrm{GCB},N',\mathsf{c}}(\mu',\bar N'), F_{\mathrm{GCB},N'',\mathsf{c}}(\mu'',\bar N'')$, respectively. Here for $n>\bar N'$ we may arbitrarily assign $\mu_n',\gamma_n'$ because the coefficients $\alpha_n'$ are zero due to \eqref{OTGCBbis}, and similarly for $n>\bar N''$. Then we use the definitions \eqref{newcompet} verbatim, and in \eqref{newcompetalpha} one checks that, since $\alpha_n$ is the sum of terms of the form $\alpha_{n'}'\alpha_{n''}'',$ with $n'+n''=n, n'\ge 0, n''\ge 0$, this term can only be nonzero if both $\alpha_{n'}'$ and $\alpha_{n''}''$ are nonzero, i.e. if $n'\le \bar N', n''\le \bar N'',$ for some of the terms. From this it follows that $n\le \bar N'+\bar N''$ is a necessary condition for $\alpha_n$ to be nonzero. We claim that the choices \eqref{newcompet} give a competitor for $E_{\mathrm{GCB},N'+N'',\mathsf{c}}(\mu,\bar N'+\bar N'')$. Indeed, the vanishing conditions on the $\alpha_n$ are what we just verified, and the remaining conditions from \eqref{OTGCBbis} are checked exactly like for the $\mathrm{GC}$-problem, treated in the previous point. By the same passages as for the $\mathrm{GC}$-problem, we then obtain (\ref{subadd_gcbinit}).

\section{Some helpful optimal transport results}\label{ssecoptimaltransport}
Here we assume that $\mathsf{c}$ satisfies \eqref{ass_c_main_intro}, which we recall here:
\begin{subequations}\label{ass_c_gcb}
	\begin{equation}\label{ass_c_main}
	\mathsf{c}(x,y)=g(x-y)=l(|x-y|)\quad\text{where}\quad l:[0,\infty)\to[0,\infty)\quad\text{is}\quad\left\{\begin{array}{l}\mbox{continuous on $(0,\infty)$}\,\\\mbox{strictly decreasing}\ ,\\\mbox{such that }\lim_{t\to 0^+}l(t)=+\infty\ .\end{array}\right.
	\end{equation}
 Hypotheses \eqref{ass_c_gcb} are satisfied by $g(x)=|x|^{-s}$ for $s>0$, but we note here that \eqref{ass_c_main} is not satisfied by $g(x) = -\log|x|$. 
\end{subequations}
\begin{lemma}[Lower semi-continuity of $F^{\mathrm{OT}}_{\mathrm{GCB},N,\mathsf{c}}$ and $F^{\mathrm{OT}}_{\mathrm{GC},N,\mathsf{c}}$]\label{gcotlim}
	Let $\mathsf{c}$ be as in \eqref{ass_c_main} and let $(\mu_k)_{k\ge 1},\mu\in {\mathcal P}({\mathbb{R}}^d),$ such that the integrals required to define the quantities below are finite. Further suppose that $\mu_k$ converges weakly to $\mu$ and that $\sup_{k\ge 1} F^{\mathrm{OT}}_{\mathrm{GC},N,\mathsf{c}}(\mu_k)<\infty$ (it suffices that $\sup_{k\ge 1}\int_{\mathbb{R}^d}\int_{\R^d}\mathsf{c}(x,y)\ \mathrm{d}\mu_k(x)\mathrm{d}\mu_k(y)<\infty$ for this purpose). Fix $\bar N\ge N\in\mathbb{R}_+,N\ge 2$. Then there holds 
	\begin{equation}\label{grail2bis}
	\liminf_{k\to \infty}F^{\mathrm{OT}}_{\mathrm{GC},N,\mathsf{c}}(\mu_k)\ge F^{\mathrm{OT}}_{\mathrm{GC},N,\mathsf{c}}(\mu)\quad\mbox{and}\quad\liminf_{k\to \infty}F^{\mathrm{OT}}_{\mathrm{GCB},N,\mathsf{c}}(\mu_k, \bar N)\ge F^{\mathrm{OT}}_{\mathrm{GCB},N,\mathsf{c}}(\mu,\bar N).
	\end{equation}
\end{lemma}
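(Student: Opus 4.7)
The plan is to follow the classical variational lower semi-continuity scheme: for each $k$, pick a minimizer (or near-minimizer) $(\alpha_n^k, \mu_n^k)_{n\ge 0}$ for $F^{\mathrm{OT}}_{\mathrm{GC/GCB},N,\mathsf{c}}(\mu_k)$ (existence follows from the same tightness machinery described in the remark preceding the lemma), extract weakly convergent subsequences by tightness, verify the limit is admissible for the problem with marginal $\mu$, and conclude via Fatou's lemma together with the classical lower semi-continuity of the standard multi-marginal problem $\mu'\mapsto F_{n,\mathsf{c}}^{\mathrm{OT}}(\mu')$ (which holds since $\mathsf{c}$ is positive and lower semi-continuous, cf.\ Villani).

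The $\mathrm{GCB}$ case is easier and I would treat it first. Here the decomposition uses only $0\le n\le \bar N$, so the coefficients $(\alpha_n^k)_{0\le n\le \bar N}$ live in the compact simplex and subsequentially $\alpha_n^k\to\alpha_n$ with $\sum_n \alpha_n=1$. For each $n$ with $\alpha_n>0$, the domination $n\alpha_n^k \mu_n^k\le N\mu_k$ together with tightness of $\{\mu_k\}$ yields tightness of $\{\mu_n^k\}$; extract $\mu_n^k\rightharpoonup\mu_n$. Passing the marginal constraint $\sum_{n\le \bar N} n\alpha_n^k\mu_n^k=N\mu_k$ to the limit against any $\phi\in C_b(\mathbb R^d)$ is immediate since the sum is finite, and applying $\liminf_k F_{n,\mathsf{c}}^{\mathrm{OT}}(\mu_n^k)\ge F_{n,\mathsf{c}}^{\mathrm{OT}}(\mu_n)$ together with Fatou on the finite index set $\{0,\ldots,\bar N\}$ closes the argument.

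For the $\mathrm{GC}$ case the same four steps apply, but with two complications. First, $\{\alpha_n^k\}_k$ is now a sequence of probability measures on $\mathbb N$; its tightness follows from Markov's inequality applied to the identity $\sum_n n\alpha_n^k=N$, giving $\sum_{n>K}\alpha_n^k\le N/K$ uniformly in $k$, and hence a diagonal extraction with $\alpha_n^k\to\alpha_n$ and $\sum_n\alpha_n=1$. The \textbf{main obstacle} is in passing the marginal constraint to the limit through the infinite sum, which requires the stronger statement $\sum_n n\alpha_n=N$ (not merely $\le N$): mass could a priori escape to $n=\infty$, producing a limit $(\alpha_n,\mu_n)$ with $\sum_n n\alpha_n\mu_n<N\mu$ and therefore inadmissible for the $\mathrm{GC}$-problem. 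To prevent this escape I would use the energy hypothesis $\sup_k F^{\mathrm{OT}}_{\mathrm{GC},N,\mathsf{c}}(\mu_k)<\infty$ combined with the trivial bound $F_{n,\mathsf{c}}^{\mathrm{OT}}(\nu)\ge n(n-1)D^{-s}\nu(K)^2$ available whenever $\nu$ is supported in a set $K$ of diameter $D$ (which in turn is obtained from tightness of $\{\mu_k\}$, via the observation $\sum_n n\alpha_n^k\mu_n^k(K)=N\mu_k(K)\ge N(1-\varepsilon)$). This produces a uniform second-moment bound $\sum_n n^2\alpha_n^k\le C'$, which by a further Markov estimate gives tightness of the \emph{weighted} family $\{n\alpha_n^k\}_k$ on $\mathbb N$ and thus the desired $\sum_n n\alpha_n=N$. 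Once the limit is an admissible competitor, the conclusion follows verbatim from the $\mathrm{GCB}$ argument, with Fatou now applied on $\mathbb N$ endowed with counting measure weighted by $\alpha_n^k$.
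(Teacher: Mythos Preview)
Your overall strategy matches the paper's: pick (near-)optimizers, establish tightness, extract weak limits, verify admissibility, and conclude via lower semicontinuity of each $F_{n,\mathsf c}^{\mathrm{OT}}$ plus Fatou. Your Markov argument for tightness of $\{(\alpha_n^k)_n\}_k$ on $\N$ from the first-moment constraint $\sum_n n\alpha_n^k=N$ is in fact simpler than what the paper does (the paper instead combines a ball-packing count with the energy bound to reach the same conclusion).

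There is, however, a genuine gap in your passage from the energy bound to the claimed second-moment estimate. From $\mathsf c(x,y)\ge l(D)\,1_K(x)1_K(y)$ on a compact $K$ of diameter $D$ one gets, after the usual Cauchy--Schwarz step,
\[
F_{n,\mathsf c}^{\mathrm{OT}}(\mu_n^k)\ \ge\ l(D)\big[(n\,\mu_n^k(K))^2-n\,\mu_n^k(K)\big],
\]
and hence $\sup_k\sum_n \alpha_n^k\,n^2(\mu_n^k(K))^2<\infty$. This does \emph{not} give your asserted $\sum_n n^2\alpha_n^k\le C'$, because the individual masses $\mu_n^k(K)$ need not be uniformly bounded away from zero. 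The fix is to split the tail sum according to whether $\mu_n^k(K)\ge\tfrac12$: on the ``good'' indices the energy estimate yields $\sum_{n>M,\text{ good}}n\alpha_n^k\le 4C'/(M-1)$, while the ``bad'' indices satisfy $\sum_{\text{bad}}n\alpha_n^k\le 2\sum_n n\alpha_n^k\mu_n^k(K^c)=2N\mu_k(K^c)\le 2N\varepsilon$. Together these give the needed uniform integrability of $n\mapsto n\alpha_n^k$.

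The paper avoids this splitting altogether by exploiting the $n\in\{0,1\}$ slack built into the $\mathrm{GC}$-problem (those indices carry zero energy): it proves only that the constraint passes to the limit in duality with $f\in C_c^0(\R^d)$, which yields $\sum_{n\ge2}n\lambda_n^*\mu_n^*\le N\mu$, and then \emph{defines} $\lambda_1^*\mu_1^*:=N\mu-\sum_{n\ge2}n\lambda_n^*\mu_n^*$ so that the full constraint is satisfied by construction. This is arguably cleaner than chasing uniform integrability, and it is a structural feature of the grand-canonical formulation worth noting.
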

\begin{proof}
	We do the proof only for the $\mathrm{GC}$-problem, as the $\mathrm{GCB}$-problem works similarly, and requires the only extra fact that the condition $\lambda_{n,k}=0$ for $n>\bar N$ passes to weak limits.
	
	\textbf{Step 0.} \textit{Notation and setting.} 
	
	Let $((\lambda_{n,k})_{n\ge 0}, (\gamma_{n,k})_{n\ge 2})$ be a minimizer for the $F^{\mathrm{OT}}_{\mathrm{GC},N,\mathsf{c}}(\mu_k)$ problem. Set:
	\begin{equation}\label{lambdas}
	\lambda_k:=\sum_{n\ge 0}\lambda_{n,k}\delta_n\in\mathcal P(\N_{\ge 0}),\quad\mbox{where}\quad\N_{\ge 0}:=\left\{0,1,2,\ldots\right\}.
	\end{equation}
	We define next the disjoint union space 
	\begin{equation}\label{defx}
	X_{\mathrm{GC}}:=\coprod_{n\ge 2} X_n\quad\mbox{where we denote}\quad X_n:=(\R^d)^n, n\ge 2.
	\end{equation}
	Noting that the Borel sets of a disjoint union $\mathcal B(\coprod_{n\ge 2} X_n)$ are generated by $\cup_{n\ge 2}\mathcal B(X_n)$, we may build measures $\gamma_{\mathrm{GC},N,k}\in \mathcal M_+(\coprod_{n\ge2}A_n)$ by defining 
	\begin{equation}\label{gammagcs}
	\gamma_{\mathrm{GC},N,k}\left(A_n\right):=\lambda_{n,k}\gamma_{n,k}(A_n)\quad\mbox{for all choices of Borel sets}\quad A_n\subseteq X_n,\ n\ge 2. 
	\end{equation}
	The two measures $\gamma_{\mathrm{GC},N,k}, \lambda_k$, defined above are related by the disintegration relation
	\begin{subequations}
		\begin{equation}\label{gammalambda}
		\forall n\ge 2,\quad \gamma_{\mathrm{GC},N,k}(X_n)=\lambda_k(\{n\}),
		\end{equation}
		and if $\gamma^{(m)}_{n,k}, m\ge1,$ is the $m$-marginal measure of $\gamma_{n,k}$, the constraint from the definition of $F^{\mathrm{OT}}_{\mathrm{GC}, N,\mathsf{c}}(\mu_k)$ gives the following relation on $\mu_k$:
		\begin{equation}\label{gammalambdamu}
		\sum_{n\ge 1}n\lambda_{n,k}\gamma_{n,k}^{(1)}=N\mu_k.
		\end{equation}
	\end{subequations}
	Given two measures $\lambda_k^*\in\mathcal P(\N_{\ge 2})$ and $\gamma_k^*\in\mathcal M_+(X_{\mathrm{GC}})$, the requirement that they satisfy \eqref{gammagcs} is  equivalent to saying that $(\lambda_k^*,\gamma_k^*)$, encode a competitor for $F^{\mathrm{OT}}_{\mathrm{GC}, N,\mathsf{c}}(\mu_k)$, given by $(\vec\lambda_k^*,\vec\gamma_k^*)$ with
	\begin{subequations}\label{encode_competitor}
		\begin{equation}
		\lambda_{n,k}^*:=\lambda_k^*(\{n\})\quad\mbox{for all}\quad n\ge 2,
		\end{equation}
		and for $\lambda_{n,k}^*>0$,
		\begin{equation}
		\gamma_{n,k}^*(A_n):=\left(\lambda_{n,k}^*\right)^{-1}\gamma_k^*( A_n), \quad A_n\subseteq X_n, n\ge 2.
		\end{equation}
	\end{subequations}
	Provided (\ref{encode_competitor}) holds, we further define
	\begin{subequations}\label{encode_competitorn01}
		\begin{equation}
		\lambda_k^*(\{0\}):=\lambda_{0,k}^*,\,\lambda_k^*(\{1\}):=\lambda_{1,k}^*,\quad \mbox{where}~~ \lambda_{1,k}^*:=N-\sum_{n=2}^\infty n\lambda_{n,k}^*,\quad \lambda_{0,k}^*:=1-\sum_{n=1}^\infty\lambda_{n,k}^*,
		\end{equation}
		and if $\lambda_{1,k}^*>0$, then we also define
		\begin{equation}
		\left(\gamma_{1,k}^*\right)^{(1)}(A_1):=N\mu(A_1)-\sum_{n\ge 2}n\lambda_{n,k}\left(\gamma_{n,k}^*\right)^{(1)}(A_1)\quad\mbox{for}\quad A_1\in\mathcal B(\mathbb R^d).
		\end{equation}
	\end{subequations}
	Assume $\mu$ has support $\Lambda$ and fix $\delta>0$. Let $\Lambda_R:=\mathrm{supp}(\mu)\cap B_R$, with $R=R(\delta)>0$ chosen such that
	\begin{equation}\label{spt_mu}
	\mu(\mathbb R^d\setminus \Lambda_R)<\delta,\quad\mbox{and}\quad\forall k\ge 1,\quad\mu_k(\mathbb R^d\setminus \Lambda_R)<\delta.
	\end{equation}
	This choice is possible because $\mu, \mu_k,$ are finite measures. Denote now for all $n\ge 2$ by
	\begin{equation}\label{ban}
	\Lambda^n_{R,\alpha}:=\bigg\{(x_1,\ldots,x_n)\in \Lambda_R^n: \min_{1\le i\neq j\le n}|x_i-x_j|\ge\alpha\bigg\},
	\end{equation}
	and let
	\begin{equation}\label{ba}
	\Lambda_{R,\alpha}:=\left(\coprod_{2\le n<n_\alpha} \Lambda^n_R\right)\coprod\left(\coprod_{n\ge n_\alpha} \Lambda^n_{R,\alpha}\right)\subset X_{\mathrm{GC}}.
	\end{equation}
	\medskip
	
	\textbf{Step 1.} \textit{Mass bound uniform in $k$.}
	Denoting the $2$-marginals of the $\gamma_{n,k}$ by $\gamma^{(2)}_{n,k}$, we obtain
	\begin{equation}
	F^{\mathrm{OT}}_{\mathrm{GC},N,\mathsf{c}}(\mu_k)=\sum_{n=2}^\infty n(n-1)\lambda_{n,k}\int_{{\mathbb{R}}^{2d}}\mathsf{c}(x,y)\ d\gamma^{(2)}_{n,k}(x,y).
	\end{equation}
	Due to our hypothesis assumptions, and in view of (\ref{EGC_negative}), we have
	\begin{eqnarray}\label{massbound}
	\sup_{k\ge 1} F^{\mathrm{OT}}_{\mathrm{GC},N,\mathsf{c}}(\mu_k)&=&\sup_{k\ge 1}\sum_{n=2}^\infty n(n-1)\lambda_{n,k}\int_{\mathbb{R}^{2d}}\mathsf{c}(x,y)\ \mathrm{d}\gamma_{n,k}^{(2)}(x,y)\nonumber\\
	&\le&N^2\,\sup_{k\ge 1}\int_{\mathbb{R}^d}\int_{\R^d}\mathsf{c}(x,y)\ \mathrm{d}\mu_k(x)\mathrm{d}\mu_k(y):= N^2\,C.
	\end{eqnarray}
	\textbf{Step 2.} \textit{The $\Lambda_{R,\alpha}^n$ contain the mass up to small error.}
	
	Let $\alpha\in (0,1[$. Then we have
	\begin{equation}\label{gammanssym0}
	\gamma_{n,k}(X_n\setminus \Lambda_{R,\alpha}^n)=\gamma_{n,k}(X_n\setminus \Lambda^n_R)+\gamma_{n,k}(\Lambda^n_R\setminus \Lambda_{R,\alpha}^n).
	\end{equation}
	For the first term in \eqref{gammanssym0} we note that $X_n\setminus \Lambda^n_R$ is included in the union of the sets $(\R^d)^i\times(\R^d\setminus \Lambda_R)\times(\R^d)^{n-1-i}$ for $i=0,\ldots,n-1$, each of which projects to $\R^d\setminus \Lambda_R$, and thus has $\gamma_{n,k}$-measure at most $\epsilon$ due to \eqref{spt_mu}. This means that
	\begin{equation}\label{gammanssym1}
	\gamma_{n,k}(X_n\setminus \Lambda^n_R)\le \sum_{i=0}^{n-1} \gamma_{n,k}\left((\R^d)^i\times(\R^d\setminus \Lambda_R)\times(\R^d)^{n-1-i}\right)\le n\delta.
	\end{equation}
	For the second term in \eqref{gammanssym0} we estimate
	\begin{eqnarray}\label{gammanssym2}
	\gamma_{n,k}\left(X_n\setminus \Lambda_{R,\alpha}^n\right)&=&\gamma_{n,k}\left(\left\{(x_1,\ldots,x_n)\in \Lambda^n_R: \min_{1\le i\neq j\le n}|x_i-x_j|<\alpha\right\}\right)\nonumber\\
	&=&n(n-1)\gamma_{n,k}^{(2)}\left(\left\{(x,y)\in \Lambda_R\times\Lambda_R: |x-y|<\alpha\right\}\right)\nonumber\\
	&=&n(n-1)\gamma_{n,k}^{(2)}\left(\left\{(x,y)\in \Lambda_R\times\Lambda_R: \mathsf{c}(x,y)>l(\alpha)\right\}\right)\nonumber\\
        &\le&n(n-1)(l(\alpha))^{-1}\int_{\Lambda_R\times\Lambda_R}\mathsf{c}(x,y)\ d\gamma^{(2)}_{n,k}(x,y).
	\end{eqnarray}
	For the second equality in the above, we used that $\gamma_{n,k}$ is a symmetric measure and the definition of $\gamma_{n,k}^{(2)}$, for the third equality we applied the properties of $l$ from (\ref{ass_c_main}), and for the inequality we used Markov's inequality. Then \eqref{gammanssym0}, \eqref{gammanssym1} and \eqref{gammanssym2} give 
	\begin{equation}\label{gammanssym}
	\gamma_{n,k}\left(X_n\setminus \Lambda^n_{R,\alpha}\right)\le n\delta + n(n-1)\ (l(\alpha))^{-1}\int_{\Lambda_R\times\Lambda_R}\mathsf{c}(x,y)\ d\gamma^{(2)}_{n,k}(x,y)
	\end{equation}
	Due to \eqref{massbound}, we now have from \eqref{gammanssym} that
	\begin{eqnarray}\label{bound1}
	\sup_{k\ge 1}\bigg[\sum_{n\ge 2}\lambda_{n,k}\gamma_{n,k}\left(X_n\setminus \Lambda_{R,\alpha}^n\right)\bigg]&\le&\sup_{k\ge 1}\bigg[\sum_{n\ge 2}n(n-1)\lambda_{n,k}l(\alpha)\int_{\Lambda_R\times\Lambda_R}\mathsf{c}(x,y)\ d\gamma^{(2)}_{n, k}(x,y)\bigg]+\sup_{k\ge 1}\bigg[\sum_{n\ge 2}n\lambda_{n,k}\delta\bigg]\nonumber\\
	&\le& (l(\alpha))^{-1}\ N^2 C+N\delta.
	\end{eqnarray}
	\textbf{Step 3.} \textit{$\gamma_{n,k}$ give no mass to the $\Lambda_{R,\alpha}^n$ for large $n$.} 
	
	We note that, by definition \eqref{ban} of the $\Lambda_{R,\alpha}^n$, if 
	\begin{equation}\label{condition}
	(x_1,\ldots,x_n)\in \Lambda^n_{R,\alpha},
	\end{equation}
	then the balls $B_{\alpha/2}(x_i)$, with $i=1,\ldots,n$, are disjoint and contained in the enlargement of $B_R$ given by $B_{R+\alpha/2}$, and so by volume comparison we have
	\begin{equation}
	n\ 2^{-d}\alpha^d=\sum_{i=1}^n\frac{|B_{\alpha/2}(x_i)|}{|B_1|}\le \frac{|B_{R+\alpha/2}|}{|B_1|}=\left(R+\frac{\alpha}{2}\right)^d,
	\end{equation}
	and thus 
	\begin{equation}\label{bound2}
	n\le n_\alpha:=\left(\frac{R}{\alpha/2}+1\right)^d.
	\end{equation}
	Therefore for $n>n_\alpha$ we have that condition \eqref{condition} cannot hold. Therefore 
	\begin{equation}\label{factii}
	\mbox{For all}\ n> n_\alpha\ \mbox{there holds}\quad\op{supp}(\gamma_{n,k})\subset X_n\setminus \Lambda_{R,\alpha}^n.
	\end{equation}
	\textbf{Step 4.} \textit{Tightness of the $(\lambda_k)_{ k\ge 1}$.} 
	
	Due to the result \eqref{factii} of Step 3, together with the result \eqref{bound1} of Step 2, 
	we find that
	\begin{eqnarray}\label{bound_high_n}
	\sup_{k\ge 1}\lambda_k\bigg(\N_{\ge 0}\setminus\left\{0,1,2,3,\ldots,n_\alpha\right\}\bigg)&\stackrel{\text{\eqref{lambdas},\eqref{gammalambda}}}{=}&\sup_{k\ge 1}\bigg[\sum_{n\ge n_\alpha}\lambda_{n,k}\bigg]\stackrel{\text{\eqref{factii}}}=\sup_{k\ge 1}\bigg[\sum_{n\ge n_\alpha}\lambda_{n,k}\gamma_{n,k}\left(X_n\setminus \Lambda_{R,\alpha}^n\right)\bigg]\nonumber\\
	&{\le}&\sup_{k\ge 1}\bigg[\sum_{n\ge 2}\lambda_{n,k}\gamma_{n,k}\left(X_n\setminus \Lambda_{R,\alpha}^n\right)\bigg]\nonumber\\
	&\stackrel{\text{\eqref{bound1}}}{\le}&
	(l(\alpha))^{-1}\cdot N^2 C+N\delta.
	\end{eqnarray}
	As the set $\{0,1,2,3,\ldots,n_\alpha\}\subset\N_{n\ge 0}$ is finite, and thus compact, this shows in particular that the family of probability measures 
	$$\lambda_k=\sum_{n\ge 0}\lambda_{n,k}\delta_n,\quad k\ge 1,$$
	is tight, and thus a fixed subsequence of $k\uparrow \infty$ which realizes the $\liminf$ from \eqref{grail2bis} has a further subsequence which we denote $(k_j)_j$, converging to $\infty$ along which the $\liminf$ in \eqref{grail2bis} is realized, in the sense that $\liminf_{k\to \infty}F_{GC,N,s}(\mu_k)=\lim_{j\to \infty}F_{GC,N,s}(\mu_{k_j})$, and such that furthermore the measures $\lambda_{k_j}$ converge weakly, i.e.
	\begin{equation}\label{convlambdas}
	\mathcal P(\N_{\ge 0})\ni \lambda_{k_j}\stackrel{*}{\rightharpoonup}\lambda^*\in\mathcal P(\N_{\ge 0}).
	\end{equation}
	\textbf{Step 5.} \textit{Tightness of $(\gamma_{\mathrm{GC},N,k})_{k\ge 1}$.} 
	
	Due to \eqref{gammanssym1} coupled with the result \eqref{bound_high_n} of Step 4, coupled with the links \eqref{gammagcs} and \eqref{gammalambda} between $\gamma_{\mathrm{GC},N,k}$ and $\gamma_{n,k},\lambda_{n,k}$, and by using \eqref{gammalambdamu}, we find that
	\begin{eqnarray}\label{bound_high_n_gamma}
	\sup_{k\ge 1}\gamma_{\mathrm{GC},N,k}\bigg(X_{\mathrm{GC}}\setminus \coprod_{2\le n< n_\alpha}\Lambda^n_R\bigg)
	&\le&\sup_{k\ge 1}\sum_{2\le n<n_\alpha}\gamma_{\mathrm{GC},N,k}\left(X_n\setminus \Lambda^n_R\right)\ +\sup_{k\ge 1}\gamma_{\mathrm{GC},N,k}\bigg(\coprod_{n\ge n_\alpha}X_n\bigg)\nonumber\\
	&\le&\sup_{k\ge 1}\sum_{2\le n<n_\alpha}n\lambda_{n,k}\epsilon+ l(\alpha)\cdot N^2 C+N\delta\nonumber\\
	&\le& (l(\alpha))^{-1}\cdot N^2 C +\ 2N\delta.
	\end{eqnarray}
	Due to the fact that for any choice of $\delta>0$ the set $B_R$ is compact and that the product of finitely many compact sets is compact, we find that the set $\coprod_{2\le n<n_\alpha}\Lambda^n_R$ is compact, and thus by the arbitrarity of $\alpha, \delta>0$, the bound \eqref{bound_high_n_gamma} shows that the sequence obtained from the previous step, denoted $\gamma_{\mathrm{GC},N,k_j}$, is tight. Thus, up to extracting a further subsequence, we find a measure $\gamma^*\in\mathcal M_+(X_{\mathrm{GC}})$ to which the $\gamma_{\mathrm{GC},N,k_j}$ converge weak-$*$, namely we have 
	\begin{equation}\label{tightconv_gamma}
	\gamma_{\mathrm{GC},N,k_j}\stackrel{*}{\rightharpoonup} \gamma^*\in\mathcal M_+(X_{\mathrm{GC}}).
	\end{equation}

	\textbf{Step 6.} \textit{$(\lambda^*, \gamma^*_{\mathrm{GC},N})$ is a competitor for $F^{\mathrm{OT}}_{\mathrm{GC}, N,\mathsf{c}}(\mu)$.}  
	
	Due to the facts \eqref{convlambdas} and \eqref{tightconv_gamma}, we obtain that the constraint \eqref{gammalambda} holds also for $(\gamma^*,\lambda^*)$. By testing the weak convergence from \eqref{convlambdas} against the function $f\in C_b(\N_{\ge 0})$, defined to be equal to one at $n$ and zero elsewhere (which is a continuous function on the discrete space $\N_{\ge 0}$), we find that for each $n\in\N_{\ge 0}$ there holds along our subsequence
	\begin{equation}\label{lambdan}
	\lambda_{n,k_j}\stackrel{\text{\eqref{lambdas},\eqref{encode_competitor}}}{=}\lambda_{k_j}(\{n\})\to\lambda^*(\{n\})=:\lambda^*_{n}\in[0,1],
	\end{equation}
	Condition \eqref{gammalambda}, together with the fact that by (\ref{tightconv_gamma}) we have for all $n\ge 2$ that $\gamma_{\mathrm{GC},N,k_j}|_{X_n}\stackrel{*}{\rightharpoonup} \gamma^*|_{X_n},$ means that for each $n\in \N_{\ge 2}$ such that $\lambda^*_{n}>0$, there holds
	\begin{equation}\label{gammana}
	\gamma_{\mathrm{GC},N,k_j}|_{X_n}\stackrel{\text{\eqref{gammagcs},\eqref{encode_competitor}}}{=}\lambda_{n,k_j}\gamma_{n,k_j}\stackrel{*}{\rightharpoonup}\gamma^*|_{X_n},
	\end{equation}
	which implies in turn via \eqref{lambdan} that whenever $\lambda_n^*\neq 0$ we have
	\begin{equation}\label{gamman}
	\gamma_{n,k_j}\stackrel{*}{\rightharpoonup}(\lambda^*_{n})^{-1}\gamma^*|_{X_n}:=\gamma^*_{n}.
	\end{equation}
	It follows from \eqref{lambdan} and \eqref{gamman} that $\gamma_{n}^*\in\mathcal P(X_n)$. It remains to show that  \eqref{gammalambdamu} holds with  $(\lambda_{n}^*,\gamma_{n}^*)$.
	
	\medskip 
	
	For the case of $\mathsf{c}>0$ and $f\in C^0_c(\mathbb R^d)$ with support contained in the ball $B(0,R)$, we have $\mathsf{c}(x,y)>\min_{x',y'\in B(0,R)}\mathsf{c}(x',y')=m_R>0$, valid for all $x,y\in \mathrm{supp}(f)$, and thus we have the pointwise bound
	\begin{equation}\label{bound_f_c}
	\mathsf{c}(x',y')\ge \frac{m_R}{\sup_{y\in\mathrm{supp}(f)}|f(y)|}|f(x')|:=C_f |f(x')|\quad\mbox{for all}\quad x',y'\in\mathbb R^d.
	\end{equation}
	By integrating in \eqref{bound_f_c} and summing over $i,j,$ and by the marginal property of $\gamma_{n,k}$, there holds for $k\in\mathbb N$
	\begin{eqnarray}\label{fgcbound}
	\lefteqn{\sum_{n=2}^\infty\lambda_{n, k}\int_{{\mathbb{R}}^{Nd}}\,\sum_{i,j=1,i\neq j}^n \mathsf{c}(x_i,x_j) d \gamma_{n,k}(x_1,\ldots,x_n)}\nonumber\\
	&\ge& C_f\sum_{n=2}^\infty\lambda_{n, k} n(n-1)\int_{{\mathbb{R}}^{Nd}}\,|f(x_1)| d \gamma_{n,k}(x_1,\ldots,x_n)=C_f\sum_{n=2}^\infty n(n-1)\lambda_{n,k}\int |f(x)|d\mu_{n,k}(x).
	\end{eqnarray}
	In view of \eqref{fgcbound} we have that
	\[
	C_f\sum_{n=2}^\infty n(n-1)\lambda_{n,k}\int |f|\ d\mu_{n,k}\le F^{\mathrm{OT}}_{\mathrm{GC},N,\mathsf{c}}(\mu_k)\le \sup_{k} F^{\mathrm{OT}}_{N,\mathsf{c}}(\mu_k)<N^2\sup_{k\ge 1}\int_{\mathbb{R}^{2d}}\mathsf{c}(x,y)\mathrm{d}\mu_k(x)\mathrm{d}\mu_k(y)<\infty,
	\]
	thus we obtain that at fixed $N, f$ for $(\vec\lambda_k,\vec \gamma_k)$ optimizing $F^{\mathrm{OT}}_{\mathrm{GC},N,\mathsf{c}}(\mu_k)$ 
	\begin{equation}\label{unifboundgammainfty}
	\sup_{k\ge 1}\sum_{n=2}^\infty n(n-1)\lambda_{n,k}\int |f(x)|\ d\mu_{n,k}(x)<\infty.
	\end{equation}
	Therefore, as $\mu_{n,k}$ are positive measures, the integrals
	\[
	\sum_{n=2}^\infty n\lambda_{n,k}\ \int f \mu_{n,k}
	\]
	are uniformly summable as $k\to\infty$ and thus equation \eqref{gammalambdamu} also passes to the weak limit in duality with $f\in C^0_c(\mathbb R^d)$, and holds with $\lambda_{n}^*,\gamma_{n}^*$ as in \eqref{lambdan}, \eqref{gamman}. Moreover, since the property of being a symmetric probability measure passes to weak limits, we find from \eqref{gamman} also that $\gamma_{n}^*\in\mathcal P_\mathrm{sym}\left((\R^d)^n\right)$. Thus $\gamma_{\mathrm{GC},N}^*:=\sum_{n\ge 2}\lambda_{n}^*\gamma_{n}^*\in\mathcal P(X_{\mathrm{GC}})$ is a competitor to $F^{\mathrm{OT}}_{\mathrm{GC},N,\mathsf{c}}(\mu)$. 
	
	\textbf{Step 7.} \textit{Conclusion of the proof.} 
	
	We start by noting that for all $n\ge 2$, we have 
	\begin{equation}\label{limitcompetn}
	\int_{\mathbb{R}^d\times \mathbb{R}^d}\mathsf{c}(x,y)\ \lambda^*_{n}d(\gamma^*_{n})^{(2)}(x,y)\le \liminf_{j\rightarrow \infty}\int_{\mathbb{R}^d\times \mathbb{R}^d}\mathsf{c}(x,y)\ \lambda_{n,k_j}\ d\gamma_{n, k_j}^{(2)}(x,y)=:\liminf_{j\to\infty}f_j(n).
	\end{equation}
	The above follows by applying Lemma 6.1 from \cite{ss2d} (or \cite[Thm. 3.9]{ButChampdePas16}). The assumptions of the lemma are then verified on the space $X=\mathbb{R}^d\times \mathbb{R}^d$ by $\gamma^{(2)}_{n,k_j}$ and $\mathsf{c}$.
	
	For all $k\ge 1$, we apply Fatou's theorem for the atomic measure $\nu$ on $\mathbb N$ with $\nu(\{n\})=n(n-1)$ for all $n\in \mathbb N$ and for $f_j(n)$ defined in \eqref{limitcompetn}, so by summing first over $n$ the positive terms coming from \eqref{limitcompetn} 
	\begin{eqnarray} 
	\lefteqn{\sum_{n=2}^\infty\int_{\mathbb{R}^d\times \mathbb{R}^d}n(n-1)\lambda^*_{n}\mathsf{c}(x,y)\ d(\gamma^*_{n})^{(2)}(x,y)\le\sum_{n=2}^\infty\liminf_{j\to\infty}\left(n(n-1)\int_{\mathbb{R}^d\times \mathbb{R}^d}\lambda_{n,k_j}\mathsf{c}(x,y)\ d\gamma_{n, k_j}^{(2)}(x,y)\right)}\nonumber\\
	&=&\int \liminf_{j\to\infty}f_j(n) d\nu(n)\le\liminf_{j\to\infty}\int f_j(n)d\nu(n)=\liminf_{j\to\infty}\left(\sum_{n=2}^\infty n(n-1)\int_{\mathbb{R}^d\times \mathbb{R}^d}\lambda_{n,k_j}\mathsf{c}(x,y)\ d\gamma_{n, k_j}^{(2)}(x,y)\right)\nonumber\\
	&=&\liminf_{j\to\infty}F_{\mathrm{GC},N,\mathsf{c}}^\mathrm{OT}(\mu_{k_j}).\label{limitn1}
	\end{eqnarray}
	Then, as  by Step 6 we have that $\gamma_{\mathrm{GC},N}^*$ is a competitor for $F^{\mathrm{OT}}_{\mathrm{GC}, N}(\mu)$, we get
	\begin{equation}\label{competineq}
	F_{\mathrm{GC},N,\mathsf{c}}^\mathrm{OT}(\mu)\le \sum_{n=2}^\infty\int_{\mathbb{R}^d\times \mathbb{R}^d}n(n-1)\ \lambda^*_{n}\ \mathsf{c}(x,y)\ d(\gamma^*_{n})^{(2)}(x,y)\le \liminf_{j\to\infty}F_{\mathrm{GC},N,\mathsf{c}}^\mathrm{OT}(\mu_{k_j}),\end{equation}
	where for the second inequality we applied \eqref{limitn1}. The \eqref{competineq} thus implies that \eqref{grail2bis} holds.
\end{proof}
We now show
\begin{lemma}\label{existgc} (Existence of an optimal solution for $F^{\mathrm{OT}}_{\mathrm{GC},N,\mathsf{c}}$)
Let $\mathsf{c}$ be as in \eqref{ass_c_main} and let $\mu\in {\mathcal P}({\mathbb{R}}^d)$ such that the integrals required to define the quantities involved in the definition of $F^{\mathrm{OT}}_{\mathrm{GC},N,\mathsf{c}}(\mu)$ are finite. Then $F^{\mathrm{OT}}_{\mathrm{GC},N,\mathsf{c}}(\mu)$ has at least one solution.
\end{lemma}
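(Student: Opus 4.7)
The plan is to apply the direct method of the calculus of variations, where the entire technical work has essentially already been done in Lemma \ref{gcotlim}. I would begin by taking a minimizing sequence $\{(\vec\lambda_k, \vec\gamma_k)\}_{k\ge 1}$ for $F^{\mathrm{OT}}_{\mathrm{GC},N,\mathsf{c}}(\mu)$, namely coefficients $\lambda_{n,k}\ge 0$ and plans $\gamma_{n,k}\in\mathcal P^n_{sym}(\mathbb R^d)$ with marginals $\mu_{n,k}$ satisfying the constraints in \eqref{OTGC} (with every $\mu_k$ replaced by the fixed $\mu$), and such that $\sum_{n\ge 2}\lambda_{n,k}F_{n,\mathsf{c}}^\mathrm{OT}(\mu_{n,k})\downarrow F^{\mathrm{OT}}_{\mathrm{GC},N,\mathsf{c}}(\mu)$ as $k\to\infty$. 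By hypothesis the integrals in the definition of $F^{\mathrm{OT}}_{\mathrm{GC},N,\mathsf{c}}(\mu)$ are finite, so in particular the mean-field bound $N^2\int\mathsf{c}(x,y)d\mu(x)d\mu(y)<\infty$ provides a competitor and the infimum is finite, which forces $\sup_k\sum_{n\ge 2}n(n-1)\lambda_{n,k}\int\mathsf{c}\,d\gamma_{n,k}^{(2)}<\infty$.

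Next I would run verbatim Steps 1 through 5 of the proof of Lemma \ref{gcotlim}, using the constant sequence $\mu_k\equiv\mu$: the uniform energy bound replaces the hypothesis $\sup_k F^{\mathrm{OT}}_{\mathrm{GC},N,\mathsf{c}}(\mu_k)<\infty$, the single measure $\mu$ is obviously tight so $\Lambda_R$ can be chosen once and for all, and the Markov/volume-packing arguments yield the tightness of the sequence of measures $\lambda_k:=\sum_n\lambda_{n,k}\delta_n\in\mathcal P(\mathbb N_{\ge 0})$ on the discrete space $\mathbb N_{\ge 0}$ and of $\gamma_{\mathrm{GC},N,k}\in\mathcal M_+(X_{\mathrm{GC}})$ on the disjoint union $X_{\mathrm{GC}}=\coprod_{n\ge 2}(\mathbb R^d)^n$. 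Prokhorov's theorem then yields a subsequence $(k_j)$ along which $\lambda_{k_j}\stackrel{*}{\rightharpoonup}\lambda^*\in\mathcal P(\mathbb N_{\ge 0})$ and $\gamma_{\mathrm{GC},N,k_j}\stackrel{*}{\rightharpoonup}\gamma^*\in\mathcal M_+(X_{\mathrm{GC}})$, and through the identifications \eqref{encode_competitor}, \eqref{encode_competitorn01} one extracts coefficients $\lambda_n^*$ and symmetric probability plans $\gamma_n^*$ for each $n\ge 2$ with $\lambda_n^*>0$.

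The main step is to verify that $(\vec\lambda^*,\vec\gamma^*)$ is an admissible competitor: $\sum_{n\ge 0}\lambda_n^*=1$ follows because $\lambda^*\in\mathcal P(\mathbb N_{\ge 0})$, while the marginal constraint $\sum_{n\ge 1}n\lambda_n^*\mu_n^*=N\mu$ requires passing to the limit in $\sum_{n\ge 1}n\lambda_{n,k_j}\mu_{n,k_j}=N\mu$ in duality with $f\in C^0_c(\mathbb R^d)$. The crucial uniform summability estimate \eqref{unifboundgammainfty}, which used only the finiteness of $\sup_k F^{\mathrm{OT}}_{\mathrm{GC},N,\mathsf{c}}(\mu_k)$ and the pointwise bound $\mathsf{c}(x,y)\ge C_f|f(x)|$ on a compactly supported test function, applies in our setting as well and guarantees that the tails in $n$ give uniformly small contribution, allowing the weak-$*$ limits \eqref{lambdan}--\eqref{gamman} to be combined termwise with the summation. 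Finally, applying the Fatou-plus-lower-semicontinuity chain \eqref{limitn1}--\eqref{competineq} (which for us uses the trivial fact that $\mu_{k_j}=\mu$) gives
\[
F^{\mathrm{OT}}_{\mathrm{GC},N,\mathsf{c}}(\mu)\le\sum_{n\ge 2}n(n-1)\lambda_n^*\int_{\mathbb R^{2d}}\mathsf{c}(x,y)\,d(\gamma_n^*)^{(2)}(x,y)\le\liminf_{j\to\infty}\sum_{n\ge 2}n(n-1)\lambda_{n,k_j}\int\mathsf{c}\,d\gamma_{n,k_j}^{(2)}=F^{\mathrm{OT}}_{\mathrm{GC},N,\mathsf{c}}(\mu),
\]
so all inequalities are equalities and $(\vec\lambda^*,\vec\gamma^*)$ realizes the infimum.

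The only genuine obstacle, and the reason the proof cannot be reduced to a one-line invocation of Lemma \ref{gcotlim}, is the interplay between the infinite sum over $n$ and the weak limits: one must simultaneously control (i) mass potentially escaping to $n\to\infty$, handled by \eqref{bound_high_n}, and (ii) the preservation of the marginal constraint through the limit, handled by \eqref{unifboundgammainfty} and a Fatou argument on the counting measure on $\mathbb N$. Once these two ingredients are in place, the rest is standard direct-method machinery.
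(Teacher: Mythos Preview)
Your proposal is correct and follows essentially the same approach as the paper: take a minimizing sequence of competitors for $F^{\mathrm{OT}}_{\mathrm{GC},N,\mathsf{c}}(\mu)$ and rerun the tightness and lower-semicontinuity machinery of Lemma \ref{gcotlim} with the constant sequence $\mu_k\equiv\mu$ to extract a limiting competitor that realizes the infimum. The paper's proof is stated much more tersely (it simply says ``proceed similarly to the proof in Lemma \ref{gcotlim}''), but the substance---including the two key points you identify, control of mass escaping to $n\to\infty$ and preservation of the marginal constraint in the limit---is identical.
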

\begin{proof}
It suffices to take a sequence of competitors $((\lambda_{n,k})_{n\ge 0}, (\gamma_{n,k})_{n\ge 2})$ to $F^{\mathrm{OT}}_{\mathrm{GC},N,\mathsf{c}}(\mu)$ such that 
$$\sum_{n\ge 2}\lambda_{n,k}F^{\mathrm{OT}}_n(\mu_{n,k})\to F^{\mathrm{OT}}_{\mathrm{GC},N,\mathsf{c}}(\mu),$$ 
and then proceed similarly to the proof in Lemma \ref{gcotlim}: there exists a tight subsequence converging to a competitor $((\lambda_n^*)_{n\ge 0},(\gamma_n^*)_{n\ge 2})$, and the thesis follows by the same arguments as in Lemma \ref{gcotlim}. This will allow us to say that
$$F^{\mathrm{OT}}_{\mathrm{GC},N,\mathsf{c}}(\mu)=\liminf_{k\to\infty}\sum_{n\ge 2}\lambda_{n,k}F^{\mathrm{OT}}_n(\mu_{n,k})\ge F^{\mathrm{OT}}_{\mathrm{GC},N,\mathsf{c}}(\mu^*),$$
which will imply in particular that $((\lambda_n^*)_{n\ge 0},(\gamma_n^*)_{n\ge 2})$ is a minimizer.
\end{proof}

\section*{Acknowledgments}
We thank M. Lewin for discussing with us in January 2016 in IHP and in February 2017. Moreover, we are grateful to him for pointing out to us in January 2016  the useful tools from our Propositions \ref{subadd3} and \ref{unif}, and for giving to us at the time the slides of his November 2015 talk. We also thank M. Lewin for useful comments on a
preliminary version of the present work.

\par We are indebted to Gero Friesecke for suggesting to us the use of the Fefferman-de Llave decomposition, which in turn led us to the Fefferman-Gregg decomposition which we extended and used in our manuscript. Our heartfelt gratitude goes to David Brydges, whom we cannot thank enough for the numerous suggestions of tools involving positive-definiteness and else, and whose tireless encouragement helped us throughout.

\par Both authors thankfully acknowledge the support of a Royal Society International Exchanges Grant, and of an IHP Research-in-Pairs grant. Both authors also thank the organisers of the 2014 Fields Institute Thematic Program on
'Variational Problems in Physics, Economics and Geometry', where this research was first started. MP was funded by an EPDI fellowship and by the Forschungsinstitut f\"ur Mathematik at ETH Z\"urich, during different phases of this project.

\par Declarations of interest: none.


\begin{thebibliography}{99}

\bibitem[AKM13]{AdamsKoteckyMuller}{\sc S. Adams, R., Kotecky, S. M\"uller.} Finite range decomposition for families of gradient Gaussian measures. {\it JFA} {\bf 264}, 169-206 (2013).
\bibitem[B13]{Bauerschmidt}{ \sc R. Bauerschmidt.} A simple method for finite range decomposition of quadratic forms and Gaussian fields. {\it Probab. Theory Related Fields} {\bf 157}, 3-4, 817-845 (2013).
\bibitem[BBS15]{gordonwithroland}{\sc R. Bauerschmidt, D.C. Brydges, and G. Slade.} Logarithmic correction for the susceptibility of the 4-dimensional weakly self-avoiding walk: a renormalisation group analysis. {\it Commun. Math. Phys.} {\bf 337}, 2, 817-877 (2015).
\bibitem[BCC17]{BeCoCri17}{\sc C. Beltr\'an, N. Corral, J. G. Criado del Rey.} Discrete and Continuous Green Energy on Compact Manifolds. {\it preprint.} https://arxiv.org/abs/1702.00864.(2017).
\bibitem[BBL11]{BeBlLoss11}{\sc R.D. Benguria,  G.A. Bley, M. Loss} A new estimate on the indirect Coulomb energy. {\it Int Journ. Quant Chem.} {\bf 112},  6, 1579-1584 (2012).
\bibitem[BdP17]{BinddePasc17}{\sc U. Bindini, L. de Pascale.} Optimal transport with Coulomb cost and the semiclassical limit of Density Functional Theory, {\it preprint.} (2017).
\bibitem[B87]{Bre87}{\sc Y. Brenier.} Decomposition polaire et rearrangement monotone des champs
de vecteurs. {\it C.R. Acad. Sci. Pair. Ser. I Math.} {\bf 305}, 805-808 (1987).
\bibitem[BGM04]{david'sresult}{\sc D. C. Brydges, G. Guadagni, P. K. Mitter.} Finite range Decomposition of Gaussian Processes. {\it JSP} {\bf 115},  415-449 (2004).
\bibitem[BT06]{BrydgesTalar}{\sc D. C. Brydges, A. Talarczyk.}  Finite range decompositions of positive-definite functions. {\it Journ. Funct. Anal.} {\bf 236}, 2,  682-711 (2006).
\bibitem[BCdP16]{ButChampdePas16}{\sc G. Buttazzo, T.Champion, L. De Pascale.} Continuity and estimates for multimarginal optimal transportation problems with singular costs, {\it preprint} (2016).
\bibitem[BdPGG12]{ButdePasGG12}{\sc G. Buttazzo, L. De Pascale, P. Gori-Giorgi.} Optimal transport formulation of electronic density-functional theory. {\it Phys. Rev.
A} {\bf 85}, 062502 (2012).
\bibitem[CFM02]{CaffFelMccann}{\sc L. Caffarelli, M. Feldman, R. McCann.} Constructing optimal maps for Monge’s transport problem as a limit of strictly convex costs, {\it Journal of the American Mathematical Society} {\bf 15}, no. 1, 1-26 (2002).
\bibitem[CMC10]{CaffMcCann}{\sc L. Caffarelli, R. McCann.} Free boundaries in optimal transport and Monge-Amp\'{e}re obstacle problems, {\it Annals of Mathematics} {\bf 171}, 673-730 (2010).
\bibitem[CdPdM13]{ColdePascdiMar13} {\sc M. Colombo, L. De Pascale, S. Di Marino}  Multimarginal optimal transport maps for 1-dimensional repulsive costs. \textit{Canad. J. Math., to appear} (2013).
\bibitem[CLY89]{ConLiebYau89}{\sc J.G. Conlon, E.H. Liev, H.-T. Yau} The Coulomb Gas at Low Temperature and Low Density. {\it Comm. Math. Phys.} {\bf 125}, 153-180 (1989). 
\bibitem[CFK11]{CoFriKlu11}{\sc C. Cotar, G. Friesecke, C. Kl\"uppelberg.} Density functional theory and optimal transportation with  Coulomb cost, {\it Comm. Pure. Appl. Math.} {\bf 66}, 4, 548-599 (2013).
\bibitem[CFK17]{CoFriKlu17}{\sc C. Cotar, G. Friesecke, C. Kl\"uppelberg.} Smoothing of transport plans with fixed marginals and rigorous semiclassical limit of the Hohenberg-Kohn functional. {\it Arch. Rat. Mech. Anal.} {\bf 228}, 3, 891-922 (2018).
\bibitem[CFP15]{CFP}{\sc C. Cotar, G. Friesecke, B. Pass.} Infinite-body optimal transport with Coulomb cost. {\it Calc. Var. PDEs.} {\bf 54}, 1, 717-742 (2015).
\bibitem[CP17]{CotPet17}{\sc C.Cotar, M. Petrache.} Equality of the Jellium and Uniform Electron Gas next-order asymptotic terms for Riesz potentials. {\it preprint} (2017). 
\bibitem[dP15]{depascaleduality}{\sc L. de Pascale.} Optimal Transport with Coulomb cost. Approximation and duality. {\it ESAIM: Math. Mod. num. Anal.} {\bf 49},  6,  1643-1657 (2015).
\bibitem[dMGN15]{diMarGerNe15}{\sc S. di Marino, A. Gerolin, L. Nenna.} Optimal transportation theory for repulsive costs, \textit{preprint} (2015).
\bibitem[dM]{DimarPer}{\sc S. Di Marino.} Personal communication.
\bibitem[D30]{Dirac}{\sc P.A.M. Dirac.} Note on Exchange Phenomena in the Thomas Atom.  {\it Mathematical Proceedings of the
Cambridge Philosophical Society} {\bf 26}, 376-385 (1930).
\bibitem[DS58]{DunSch58.} {\sc N. Dunford, J.T. Schwartz.} Linear operators, Part I, {\it Wiley-Interscience} (1958).
\bibitem[F85]{Feff85}{\sc C. Fefferman.} The thermodynamic limit for a crystal. {\it Comm. Math. Phys.} {\bf 98}, 289-311 (1985).
\bibitem[FL86]{feffermanllave}{\sc  C.L. Fefferman, R. de la Llave.} Relativistic stability of matter I. {\it Revista Matematica
Iberoamericana} {\bf 2}, 119-161 (1986).
\bibitem[F64]{Fisher}{\sc M. E. Fisher.} The free energy of a macroscopic system. {\it Arch. Ration. Mech. Anal.}, {\bf 17}, 377-410 (1964).
\bibitem[FLS15]{FoLeSol15}{\sc S. Fournais, M. Lewin, J. Solovej.} The semi-classical limit of large fermionic systems. {\it preprint.} arXiv:1510.01124 (2015).
 \bibitem[GM96]{GM96} {\sc W. Gangbo, R. McCann.} The geometry of optimal transportation, {\it Acta Math.} {\bf 177}, 113-161 (1996).
\bibitem[GS98]{GS98} {\sc W. Gangbo, A. Swiech.} Optimal Maps for the Multidimensional Monge-Kantorovich Problem. {\it Comm. Pure Applied Math.} {\bf 1}, 23-45 (1998).
\bibitem[Ga08]{grafa}{\sc L. Grafakos.} {Modern Fourier Analysis.} {\it Springer.} 524pp (2008).
\bibitem[GS95]{Grafschenker95}{\sc G. M. Graf, D. Schenker.} \textit{On the molecular limit of Coulomb gases.} {\it Comm. Math. Phys.} {\bf 174}, 1, 215-227 (1995).
\bibitem[Ge89]{Gregg89}{\sc J.N. Gregg.} The Existence of the Thermodynamic Limit in Coulomb-like Systems. {\it Comm. Math. Phys.} {\bf 123}, 255-276 (1989). 
\bibitem[HLS09a]{HLSI09}{\sc C. Hainzl, M. Lewin, J. P. Solovej.} The thermodynamic limit of quantum Coulomb systems. Part I. General theory. {\it Advances in Math.} {\bf 221}, 454-487 (2009).
\bibitem[HLS09b]{HLSII09}{\sc C. Hainzl, M. Lewin, J. P. Solovej.} The thermodynamic limit of quantum Coulomb systems. Part II. Applications. {\it Advances in Math.} {\bf 221},488-546 (2009).
\bibitem[HK64]{HK64} {\sc P. Hohenberg, W. Kohn.} Inhomogeneous electron gas, {\it Phys. Rev. B} {\bf 136}, 864-871 (1964).
\bibitem[HS02]{HS}{C. Hainzl, R. Seiringer.} General Decomposition of Radial Functions on Rn and Applications to N-Body Quantum Systems. {\it Letters in Mathematical Physics.} {\bf 61}, 1, 75-84 (2002).
\bibitem[H89]{Hughes89}{ W. Hughes.} Thermodynamics for Coulomb Systems: A Problem at Vanishing Particle Densities. {\it Jour. Stat. Phys.} {\bf 41}, 5/6 (1985).
\bibitem[KS65]{KS65} {\sc W. Kohn,  L. J. Sham.} Self-consistent equations including exchange and correlation effects.
    {\it Phys. Rev. A} {\bf 140}, 1133-1138 (1965).
\bibitem[Ly79]{Le79} {\sc M. Levy.} Universal variational functionals of electron densities, first-order density matrices, and natural spin-orbitals and solution of the v-representability problem. {\it Proc. Natl. Acad. Sci. USA} {\bf 76}, 12, 6062-6065 (1979).
\bibitem[Lw15talk]{Lewin15talk}{\sc M. Lewin.} https://team.inria.fr/mokaplan/files/2015/09/Dauphine.pdf (2015).
\bibitem[Lw17]{Lewin17}{\sc M. Lewin.} Semi-classical limit of the Levy-Lieb functional in Density Functional Theory. {\it preprint.} https://arxiv.org/abs/1706.02199 (2017).
\bibitem[LwLi15]{LewLieb15}{\sc M. Lewin, E. H. Lieb.} Improved Lieb-Oxford exchange-correlation inequality with gradient correction. {\it Phys. Rev. A} {\bf 91}, 2, 022507 (2015).
\bibitem[LLS17]{LewLiebSeir17} {\sc M. Lewin, E. H. Lieb, R. Seiringer.} Statistical Mechanics of the Uniform Electron Gas. {\it preprint.} (2017). 
\bibitem[LeLi72]{Lieblebowitz}{\sc J.L.  Lebowitz, E. Lieb.} The constitution of matter: Existence of thermodynamics for systems composed of electrons and nuclei. In Statistical Mechanics, 17-24, Springer Berlin Heidelberg (1972).
\bibitem[Li79]{Lieb79}{\sc E.H. Lieb.} A Lower Bound for Coulomb Energies. {\it Physics Letters} {\bf 70 A}, 444-446 (1979).
\bibitem[Li83]{Li83}{\sc E.H. Lieb.} Density functionals for Coulomb systems, {\it International Journal of Quantum Chemistry} {\bf 24}, 243-277 (1983).
\bibitem[LiLo]{LiLo} {\sc E. H. Lieb and M. Loss}, {\em Analysis, vol. 14 of Graduate Studies in Mathematics},
American Mathematical Society, Providence, RI, second ed., (2001).
\bibitem[LSY94]{liebsolovejyngvason}{\sc E. H. Lieb, J. P. Solovej, J. Yngvason.} Asymptotics of heavy atoms in high magnetic fields. II. Semiclassical regions. {\it Comm. Math. Phys.} {\bf 161},  1, 77-124 (1994).
Asymptotics of heavy atoms in high magnetic fields. II. Semiclassical regions
\bibitem[LO81]{LO81} {\sc E.H. Lieb, S. Oxford.} Improved lower bound on the indirect Coulomb energy, {\it International Journal of Quantum Chemistry} {\bf 19}, 427-439 (1981).
\bibitem[LNP16]{LundNamPort}{\sc D. Lundholm, P. Thanh, F. Portmann.} Fractional Hardy-Lieb-Thirring and Related Inequalities for Interacting Systems. {\it Arch. Rat. Mech. Anal.} {\bf 219}, 3, 1343-1382 (2016).
\bibitem[P15]{Petr15}{\sc M. Petrache.} Decorrelation as an avatar of convexity. {\it preprint.} https://arxiv.org/abs/1507.00782 (2015). 
\bibitem[PRN]{rnp}{\sc M. Petrache, S. Rota Nodari.} {Equidistribution of jellium energy for Coulomb and Riesz Interactions.} {\it Constructive Approximation} {\bf47}, 1, 163--210 (2018).
\bibitem[PS]{ps} {\sc M. Petrache, S. Serfaty.} {Next-order asymptotics and renormalized energy for Riesz interactions.}  {\it J. Inst. Math. Jussieu.} {\bf 16},   501-569 (2017).
\bibitem[RR67]{RoRu}{\sc D. Robinson, D. Ruelle.} Mean entropy of states in classical statistical mechanics {\it Comm. Math. Phys.} {\bf  5}, 4, 288-300 (1967).
\bibitem[RoSer]{rns}{\sc S. Rota-Nodari, S. Serfaty.} Renormalized energy equidistribution and local charge balance in 2D Coulomb systems. {\it Int. Math. Res. Not. IMRN.} {\bf 11}, 3035-3093 (2015).
\bibitem[R69]{Ru}{\sc D. Ruelle.} Statistical Mechanics: Rigorous Results, {\it World Scientific} (1969).
\bibitem[SS15]{ss2d} {\sc E. Sandier, S. Serfaty.} 2D Coulomb gases and the renormalized energy, to appear in {\it Annals of Probability} {\bf 43}, 4, 2026-2083 (2015).
\bibitem[Sd99]{Seidl99}{\sc M. Seidl.} Strong-interaction limit of density-functional theory. {\it Physical Review A} {\bf 60}, 6, 4387 (1999).
\bibitem[SGGS07]{SGS07}{\sc M. Seidl, P. Gori-Giorgi, A. Savin.} Strictly correlated
electrons in density-functional theory: A general formulation with applications to spherical densities. {\it Phys. Rev. A}, {\bf 75}, 042511 (2007).
\bibitem[SPL99]{SPL99}
{\sc Michael Seidl, John~P. Perdew, and Mel Levy.}
\newblock {Strictly correlated electrons in density-functional theory}.
\newblock {\em Phys. Rev. A} {\bf 59}, 51-54 (1999).
\bibitem[Sm98]{Sma98}{\sc S. Smale.} Mathematical problems for the next century. {\it The Mathematical Intelligencer.}  {\bf 20}, 2, 7-15 (1998).
  \bibitem[Tao]{Tao}
  {\sc T.~Tao}, https://terrytao.wordpress.com/tag/hardy-littlewood-maximal-inequality/.
\bibitem[V09]{Vil}{\sc C. Villani.} Optimal Transport Old and New. {\it Springer.} 1000pp (2009).

\end{thebibliography}
\end{document}